\newcommand{\op}{\textrm{\texttt{(}}}
\newcommand{\cl}{\textrm{\texttt{)}}}
\newcommand{\B}{\mathcal{B}}
\newcommand{\sub}{\subseteq}
\newcommand{\sm}{\setminus}
\newcommand{\Oh}{\mathcal{O}}
\newcommand{\tOh}{\tilde{\Oh}}
\newcommand{\ed}{\mathsf{ed}}
\newcommand{\ted}{\mathsf{ted}}
\newcommand{\ded}{\mathsf{ded}}
\newcommand{\per}{\mathsf{per}}
\newcommand{\eps}{\varepsilon}
\newcommand{\poly}{\mathrm{poly}}
\newcommand{\rot}{\mathsf{rot}}
\renewcommand{\aa}{\mathsf{A}}
\newcommand{\mm}{\mathsf{M}}
\newcommand{\ta}{\mathsf{TA}}
\newcommand{\A}{\mathcal{A}}
\newcommand{\M}{\mathcal{M}}
\newcommand{\dd}{\mathinner{.\,.\allowbreak}}
\newcommand{\rev}[1]{\overline{#1}}
\newcommand{\Zz}{\mathbb{Z}_{\ge 0}}
\newcommand{\Zp}{\mathbb{Z}_{+}}
\newcommand{\DYCK}{\mathsf{Dyck}}
\newcommand{\dyck}{\mathsf{dyck}}
\newcommand{\Qf}{\mathcal{Q}}
\setlist[enumerate]{nosep, topsep=1ex}
\setlist[itemize]{nosep, topsep=1ex}
\setlist[description]{nosep,topsep=1ex}
\newtheorem{theorem}{Theorem}[section]
\newtheorem{corollary}[theorem]{Corollary}
\newtheorem{proposition}[theorem]{Proposition}
\newtheorem{lemma}[theorem]{Lemma}
\newtheorem{fact}[theorem]{Fact}
\newtheorem{claim}[theorem]{Claim}
\theoremstyle{definition}
\newtheorem{definition}[theorem]{Definition}
\theoremstyle{remark}
\newtheorem{example}[theorem]{Example}
\title{Weighted Edit Distance Computation: Strings, Trees and Dyck}
\author[1]{Debarati Das}
\author[2]{Jacob Gilbert}
\author[2]{MohammadTaghi Hajiaghayi}
\author[3]{Tomasz Kociumaka}
\author[4]{Barna Saha}
\affil[1]{Pennsylvania State University, United States}
\affil[ ]{\texttt{debaratix710@gmail.com}}
\affil[2]{University of Maryland, United States}
\affil[ ]{\texttt{jgilber8@umd.edu}\; \texttt{hajiaghayi@gmail.com}}
\affil[3]{Max Planck Institute for Informatics, Germany}
\affil[ ]{\texttt{tomasz.kociumaka@mpi-inf.mpg.de}}
\affil[4]{University of California, San Diego, United States}
\affil[ ]{\texttt{barnas@ucsd.edu}}
\date{}
\begin{document}

\maketitle

\begin{abstract}

Given two strings of length $n$ over alphabet $\Sigma$, and an upper bound $k$ on their edit distance, the algorithm of Myers (Algorithmica’86) and Landau and Vishkin (JCSS’88) from almost forty years back computes the unweighted string edit distance in $\Oh(n+k^2)$ time.
Till date, it remains the fastest algorithm for exact edit distance computation, and it is optimal under the Strong Exponential Hypothesis (STOC'15).
Over the years, this result has inspired many developments, including fast approximation algorithms for string edit distance as well as similar $\tOh(n+\poly(k))$-time algorithms for generalizations to tree and Dyck edit distances. 
Surprisingly, all these results hold only for unweighted instances. 

While unweighted edit distance is theoretically fundamental, almost all real-world applications require weighted edit distance, where different weights are assigned to different edit operations (insertions, deletions, and substitutions), and the weights may vary with the characters being edited.
Given a weight function $w: \Sigma \cup \{\eps \}\times \Sigma \cup \{\eps\} \rightarrow \mathbb{R}_{\ge 0}$ (such that $w(a,a)=0$ and $w(a,b)\ge 1$ for all $a,b\in \Sigma \cup \{\eps\}$ with $a\ne b$), the goal is to find an alignment that minimizes the total weight of edits.
Except for the vanilla $\Oh(n^2)$-time dynamic-programming algorithm and its almost trivial $\Oh(nk)$-time implementation ($k$ being an upper bound on the sought total weight), none of the aforementioned developments on the unweighted edit distance applies to the weighted variant. 
In this paper, we propose the first $\Oh(n+\poly(k))$-time algorithm that computes weighted string edit distance exactly, thus bridging a fundamental decades-old gap between our understanding of unweighted and weighted edit distance.
We then generalize this result to weighted tree and Dyck edit distances, bringing in several new techniques, which lead to a deterministic algorithm that improves upon the previous work even for unweighted tree edit distance.
Given how fundamental weighted edit distance is, we believe our $\Oh(n+\poly(k))$ algorithm for weighted edit distance will be instrumental for further significant developments in the area.

\end{abstract}

\section{Introduction}
String edit distance and its several variants have been studied for decades since the 1960s~\cite{L65,NW70,WF74}. Historically, most work on these problems assumed that the edit operations have unit weights in order to simplify the problem and streamline theoretical results.
Till date,  the fastest exact algorithm for unweighted edit distance is due to Myers~\cite{DBLP:journals/algorithmica/Meyers86} and Landau and Vishkin~\cite{LV88}, who obtained an $\Oh(n+k^2)$-time solution for two strings of length $n$ with an upper bound $k$ on their edit distance. 
This bound is now known to be optimal (up to subpolynomial factors) under the Strong Exponential Hypothesis~\cite{backurs2015edit}.
Over the years, the Holy-Grail result of~\cite{DBLP:journals/algorithmica/Meyers86,LV88} 
has inspired many developments on fast approximation algorithms for (unweighted) string edit distance~\cite{10.1145/3422823,DBLP:conf/focs/GoldenbergKS19,GRS:20,KS20a} and similar $\tOh(n+\poly(k))$-time\footnote{The $\tOh(\cdot)$ notation suppresses factors polylogarithmic in the input size $n$.} algorithms for generalizations such as the (unweighted) Dyck and tree edit distances~\cite{BO16,OtherSubmission,DGHKSS'22}. 
However, almost all real-world applications require weighted edit distance, where different weights are assigned to different edit operations (insertions, deletions, and substitutions), and the weights may vary with the characters being edited~\cite{WF74, zhang1989simple, kurtz1996approximate, 10.5555/262228, peris2002fast, jurafsky2008speech, fontan2016using, skiena1998algorithm, koide2020fast, gerlach2021paradigm,Chumachenko2022WeightedED}. 
As a result, there is a major gap between the theoretical results of prior research and real-world utility of these results. 
In this paper, we bridge this fundamental gap between the understanding of unweighted and weighted edit distance:
We provide the first non-trivial algorithm computing the weighted edit distance and its generalizations to weighted tree and Dyck tree edit distance.


More specifically, in this paper we propose the first $\Oh(n+\poly(k))$-time algorithm for exact weighted edit distance computation in which, given a weight function $w: \Sigma \cup \{\eps \}\times \Sigma \cup \{\eps\} \rightarrow \mathbb{R}_{\ge 0}$ (normalized so that $w(a,b)\ge 1$ for $a\ne b$),
the goal is to find an alignment that minimizes the total weight of edit operations (insertions, deletions, and substitutions) assuming that it does not exceed a provided threshold $k$.
Strikingly, except for the vanilla $\Oh(n^2)$-time dynamic-programming algorithm and its almost trivial $\Oh(nk)$-time implementation, none of the aforementioned developments on unweighted edit distance apply to this weighted variant.
We then generalize our result to weighted tree and Dyck edit distances, 
bringing in several new techniques that lead to improvements even for the unweighted tree edit distance problem:
As a byproduct of our results, we present a deterministic $\Oh(n+k^{7}\log k)$-time solution, which is much faster than the randomized $\Oh(n\log{n}+k^{15}\log^2k)$-time algorithm of Das, Gilbert, Hajiaghayi, Kociumaka, Saha, and Saleh~\cite{DGHKSS'22}.

Can this apparent lack of progress in weighted edit distance computation be explained? 
As we observe later, even basic properties like monotonicity, which was fundamental for efficient  computation of unweighted edit distance~\cite{DBLP:journals/algorithmica/Meyers86, LV88}, break down when considering weighted operations. 
This precludes any local matching approach, which seemed necessary for a linear-time algorithm for bounded (unweighted) edit distance~\cite{DBLP:journals/algorithmica/Meyers86, LV88,BO16,OtherSubmission,DGHKSS'22}; instead, a global view of the sequences is needed to find matching substrings and yet maintain the linear runtime.  Faced with such barriers, our biggest contribution is 
a kernelization method for weighted edit distance, not just for strings, but also for tree and Dyck edit distance instances. 
Interestingly, our kernels are weight-agnostic, that is, the kernelization algorithms do not need to know the weight function~$w$.
Given how fundamental weighted edit distance is, we believe our $\Oh(n+\poly(k))$ algorithm for weighted edit distance will be instrumental for further significant developments in the area.





\subsection{Related Work}
\paragraph{String Edit Distance:}
\emph{Edit distance} is one of the most fundamental problems in computer science studied since the 1960s~\cite{L65,NW70,WF74}. In the unweighted edit distance problem, given two strings of length at most $n$, the goal is to find the minimum number of edit operations (insertions, deletions, and substitutions) required to transform one string into the other.
Given a parameter $k$ as an upper bound on the  edit distance,
an algorithm proposed in the 1980s by Myers~\cite{DBLP:journals/algorithmica/Meyers86} and Landau and Vishkin~\cite{LV88} achieves this task in $\Oh(n+k^2)$ time by combining suffix trees with an elegant greedy approach.
As long as $k = \Oh(\sqrt{n})$, the running time of the above algorithm in linear in $n$. 
For larger values of $k$, approximation algorithms for edit distance have been studied extensively~\cite{10.5555/279082.279125,10.5555/874063.875596,10.1109/FOCS.2004.14,10.5555/1109557.1109644,10.1145/1536414.1536444,10.1109/FOCS.2010.43}, especially recently~\cite{10.1145/3313276.3316371,10.1145/3422823,GRS:20,KS20,BR19,DBLP:journals/jacm/BoroujeniEGHS21}.
This culminated with the currently best bound by Andoni and Nosatzki~\cite{DBLP:conf/focs/AndoniN20}, who obtained a constant-factor approximation algorithm  with running time $\Oh(n^{1+\epsilon})$ time for any constant $\epsilon > 0$. All of these works require monotonicity and assume that an optimal solution can be extended easily if matching suffixes are added to both strings, none of which may hold in weighted edit distance instances.
As a result, the state-of-the-art approximation algorithm for weighted edit distance, by Kuszmaul~\cite{K19}, offers much worse trade-off, with an $\Oh(n^{\tau})$-factor approximation in $\tOh(n^{2-\tau})$ time for any $0\le\tau\le 1$.

\paragraph{Tree Edit Distance:}
The \emph{tree edit distance} problem, first introduced by Selkow~\cite{SELKOW1977184}, is a generalization of edit distance in which the task is to compute a measure of dissimilarity
between two rooted ordered trees with node labels. In the unweighted version of tree edit distance, every node insertion, deletion, or
relabeling operation has unit cost. 
The problem has numerous applications in compiler optimization~\cite{10.1145/1644015.1644017}, structured data analysis~\cite{DBLP:conf/vldb/Chawathe99,10.5555/1315451.1315465,10.1145/1613676.1613680}, image analysis~\cite{10.1016/S0167-8655(97)00179-7}, and computational biology~\cite{10.1137/0213024,DBLP:journals/bioinformatics/ShapiroZ90,10.5555/262228,10.5555/279082.279125,10.1016/j.tcs.2004.12.030}. The current best bound on running time of an algorithm for finding exact tree edit distance is due to  D\"{u}rr~\cite{Duerr2022}  who obtained an $\Oh(n^{2.9149})$-time algorithm for the problem, after a long series of improvements from  $\Oh(n^6)$~\cite{10.1145/322139.322143} to  $\Oh(n^4)$~\cite{zhang1989simple},
to $\Oh(n^3\log n)$~\cite{10.5555/647908.740125}, to  $\Oh(n^3)$~\cite{10.1145/1644015.1644017}, and to
 $\Oh(n^{2.9546})$~\cite{Xiao21}. 
Moreover, there is a $(1+\epsilon)$-approximation algorithm for tree
edit distance with running time $\tOh(n^2)$ time due to Boroujeni, Ghodsi, Hajiaghayi, and Seddighin~\cite{DBLP:conf/stoc/BoroujeniGHS19}. 
Recently, Seddighin and Seddighin~\cite{DBLP:conf/innovations/SeddighinS22} gave an $\Oh(n^{1.99})$-time $(3+\epsilon)$-approximation algorithm
for tree~edit~distance (building on a previous $\tOh(n)$-time $\Oh(\sqrt{n})$-factor approximation algorithm of~\cite{DBLP:conf/stoc/BoroujeniGHS19}).
Furthermore, Das, Gilbert, Hajiaghayi, Kociumaka, Saha, and Saleh~\cite{DGHKSS'22} obtained an $\tOh(n+k^{15})$-time algorithm for exact tree edit distance with an upper bound $k$ on the distance (see also an $\tOh(nk^2)$-time algorithm of Akmal and Jin~\cite{DBLP:conf/icalp/AkmalJ21}, which improves upon a previous algorithm with running time $\Oh(nk^3)$ for the bounded tree edit distance problem~\cite{10.1007/11496656_29}). 

As far as the weighted tree edit distance is concerned, the fastest algorithm, by Demaine, Mozes, Rossman, and Weimann~\cite{10.1145/1644015.1644017},
takes $\Oh(n^3)$ time, which matches the conditional lower-bound of Bringmann, Gawrychowski, Mozes, and Weimann~\cite{10.1145/3381878} (earlier conjectured by Abboud~\cite{Amir14}). Specifically, there is no truly subcubic-time algorithm for weighted tree edit distance unless APSP has a truly subcubic-time solution. The lower bound still holds for trees over a constant-size alphabet unless the weighted $k$-clique problem admits an $\Oh(n^{k-\epsilon})$-time~algorithm.

\paragraph{Dyck Edit Distance:}
The \emph{Dyck edit distance} problem is another variation of edit distance which falls under the umbrella of general language edit distance~\cite{aho1972minimum, M95,saha2017fast,BGSW19} and has numerous practical applications, e.g., for fixing hierarchical data files, in particular XML and JSON files~\cite{h:78,k:12}.
In the unweighted version of this problem, given a string of $n$ parentheses, the goal is to find the minimum number of edits (character insertions, deletions, and substitutions)  to make the string well-balanced.
Several algorithms for both  exact~\cite{BGSW19,CDX22,Duerr2022} and approximation~\cite{s:14,DKS21} versions of the problem have been obtained.
Finding exact Dyck edit distance is at least as hard as Boolean matrix multiplication~\cite{abw:15}.
The bounded Dyck edit problem was subject to several recent studies as well:  
Backurs and Onak~\cite{BO16} obtained the first algorithm with running time $\Oh(n+k^{16})$, which was further improved to $\Oh(n+k^5)$ \cite{OtherSubmission}, and finally to $\Oh(n+k^{4.5442})$ using fast matrix multiplication~\cite{F22a,Duerr2022}.
Except for the $\Oh(n^3)$-time exact algorithm for language edit distance~\cite{M95}, these results are not applicable to the weighted setting.

\subsection{Our Contribution}

The main contributions of our paper are new algorithms for weighted string, tree, and Dyck edit distance. 
We define a \emph{weight function} as a function  $w: \Sigma \cup \{\eps \}\times \Sigma \cup \{\eps\} \rightarrow \mathbb{R}_{\ge 0}$
such that $w(a,a)=0$ and $w(a,b)\ge 1$ for $a\ne b$. If $a,b\in \Sigma$, then $w(a,\eps)$ is the cost of deleting $a$,
$w(\eps,b)$ is the cost of inserting $b$, whereas $w(a,b)$ is the cost of substituting $a$ for $b$. 
The assumption $w(a,a)=0$ indicates that matching symbols can be aligned at no cost, whereas the assumption $w(a,b)\ge 1$ for $a\ne b$ indicates that the weights are normalized so that every edit costs at least one.
A weight function is a \emph{quasimetric} if it also satisfies the triangle inequality (which we assume for tree and Dyck edit distance).
When it comes to computations on weights, we consider any uniform model in which real numbers are subject to only comparison and addition~\cite{pettie2005shortest}, e.g., the RAM model.

We define $\ed^w(X, Y)$ to be the minimum cost of an alignment of strings $X$ and $Y$ for weight function $w$. Furthermore, we define $\ed^w_{\le k}(X, Y)$ as $\ed^w(X, Y)$ (if it is at most $k$) or $\infty$ (otherwise). We give the first weighted bounded edit distance algorithm with runtime $\Oh(n + \poly(k))$.

\begin{restatable}{theorem}{weighted}\label{thm:weighted_ed}
   Given strings $X,Y$ of length at most $n$, an integer $k\in \Zp$, and a weight function~$w$,
the value $\ed_{\le k}^w(X,Y)$ can be computed in $\Oh(n+k^5)$ time.
\end{restatable}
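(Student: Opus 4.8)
The plan is to reduce the weighted problem to a kernel of size $\Oh(\poly(k))$ that can be handled by the trivial $\Oh(nk)$-time algorithm (here $n$ replaced by the kernel size), spending only $\Oh(n)$ extra time on the reduction, and to make this reduction weight-agnostic. First I would run a self-alignment / breakpoint style preprocessing: compute, in $\Oh(n)$ time (using suffix trees and constant-time LCE queries), a partition of $X$ and $Y$ into $\Oh(k)$ maximal fragments that are "long and periodic" together with the $\Oh(k)$ short "active" regions in between, using the key structural fact that any alignment of cost $\le k$ disagrees with the identity alignment on the periodic regions only in a bounded way. The intuition is that inside a long run with small period, an optimal weighted alignment behaves essentially like a shift by a bounded amount, so only $\Oh(k)$ period-lengths worth of each run can be "touched"; the rest can be contracted. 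This yields a kernel $X',Y'$ of total length $\Oh(\poly(k))$ with the guarantee $\ed^w_{\le k}(X,Y)=\ed^w_{\le k}(X',Y')$ for every weight function $w$ — the essential point being that the contraction only removes matched symbols of the identity alignment, which costs nothing, and the structural lemma forbids a cheaper alignment from exploiting them.

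The second ingredient is to bound the kernel size tightly enough to get $k^5$. A naive contraction gives each of the $\Oh(k)$ periodic runs a residual length of $\Oh(k^2)$ (period $\le k$, and up to $k$ "phases" need to be preserved on each side), for a kernel of size $\Oh(k^3)$, on which the $\Oh(nk)$ DP runs in $\Oh(k^4)$; to reach $k^5$ I would be more careful — either allow a slightly larger kernel of size $\Oh(k^4)$ because the periodic structure of $X$ and $Y$ need not be aligned and one must keep $\Oh(k)$ representatives for each of $\Oh(k)$ relative phases, still giving $\Oh(k^4)\cdot\Oh(k)=\Oh(k^5)$ — or, conversely, shave the DP. Either way, the target $\Oh(n+k^5)$ follows from "kernel of size $\poly(k)$" $+$ "$\Oh(\text{size}\cdot k)$ DP" once the polynomial is pinned down; I would aim to prove kernel size $\Oh(k^4)$.

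The main obstacle is the failure of monotonicity for weighted alignments, which is exactly what the introduction flags: in the unweighted case one argues greedily that extending a matched run never hurts, but with weights an alignment of cost $\le k$ can locally prefer to pay for a substitution inside a periodic run rather than shift, and these choices can cascade. So the heart of the proof is a purely combinatorial lemma: if $X$ and $Y$ each contain a fragment of length $>ck$ with period $p\le k$ at corresponding positions, then any $w$-optimal alignment of cost $\le k$ agrees with the "shift-by-$jp$" alignment (for some $|j|\le k$) on all but the first and last $\Oh(kp)$ symbols of that fragment. Proving this requires a careful exchange argument on the alignment graph — taking an optimal alignment, locating a long stretch inside the run where it neither inserts, deletes, nor substitutes "productively," and showing that stretch must be a pure diagonal of the right slope — and it must be done without reference to $w$ beyond $w(a,a)=0$ and $w(a,b)\ge 1$. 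Once this lemma is in hand, the kernelization, its correctness for all $w$, and the final running time are routine; I expect the lemma and the bookkeeping that turns $\Oh(k)$ periodic runs into an $\Oh(\poly(k))$-size kernel to be where essentially all the work lies.
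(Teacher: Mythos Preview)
Your high-level architecture is right — build a weight-agnostic kernel of size $\Oh(\poly(k))$ in linear time, then run the $\Oh(nk)$ DP — and that is exactly what the paper does, landing on a kernel of size $\Oh(k^4)$. But two concrete steps of your plan do not go through as written.

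First, the decomposition. You propose to partition $X$ and $Y$ into ``$\Oh(k)$ maximal fragments that are long and periodic'' plus $\Oh(k)$ short active regions, found directly by LCE queries. There is no reason such a partition exists: take $X=Y$ a long aperiodic string (say a de~Bruijn sequence). There are no periodic runs and no active regions; the whole string is one non-periodic block of length $n$, so your scheme produces no kernel at all. The bound ``$\Oh(k)$ pieces'' is not a property of the strings — it comes from an \emph{alignment} of cost $\le k$. The paper's fix is simple and clean: first compute an optimal \emph{unweighted} alignment in $\Oh(n+k^2)$ time via Landau--Vishkin, which (by \cref{fct:str_decomp}) decomposes $X$ and $Y$ into at most $k$ edited characters and at most $k{+}1$ pairs of perfectly matched fragments. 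Those matched pairs are the ``synchronized occurrences'' on which the reduction then operates. Your references to an ``identity alignment'' seem to be groping toward this, but the unweighted alignment is the missing concrete object.

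Second, and more seriously, your reduction lemma only addresses periodic fragments. Your key lemma says: inside a long run of period $p\le k$, an optimal weighted alignment is a bounded shift except near the ends. That is essentially the paper's \cref{lem:periodic_reduction} ($Q^e$ and $Q^{e'}$ are $\ed^w_{\le k}$-equivalent for $e,e'\ge 4k$). But a synchronized fragment can be arbitrarily long and have \emph{no} short period — the de~Bruijn example again. You have no mechanism to shrink such a fragment. The paper needs a second, separate lemma (\cref{lem:aperiodic_reduction}): if $P$ \emph{avoids} $k$-periodicity and $|P|\ge 42k^3$, then any optimal alignment must match the two synchronized copies of $P$ on all but the first and last $21k^3$ characters, so one can replace $P$ by its length-$21k^3$ prefix concatenated with its length-$21k^3$ suffix. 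The argument is a pigeonhole on the $\Oh(k)$ maximal perfectly-matched blocks inside the prefix: one of them has length $\ge 10k^2$, and if it were matched at a nonzero shift, that shift ($\le 2k$) would be a period, contradicting aperiodicity — so it is matched at shift zero, pinning the alignment. This two-phase reduction (eliminate periodicity, then truncate the aperiodic remainder) is the combinatorial core you are missing; with it, each of the $\Oh(k)$ synchronized fragments shrinks to $\Oh(k^3)$, giving the $\Oh(k^4)$ kernel and the $\Oh(k^5)$ DP bound.
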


Similarly to string edit distance, we define $\ted^w(F, G)$ as the minimum cost of a tree alignment of forests $F$ and $G$ for weight function $w$. We define $\ted^w_{\le k}(F, G)$ analogously and give the first weighted tree edit distance algorithm with runtime $\Oh(n + \poly(k))$. In the unweighted case, 
our deterministic algorithm is significantly faster than the state-of-the-art randomized algorithm from~\cite{DGHKSS'22}.

\begin{restatable}{theorem}{weightedtree}\label{thm:weighted_tree_ed}
Given forests $F,G$ of length at most $n$, an integer $k\in \Zp$, and a quasimetric $w$,
the value $\ted_{\le k}^w(F,G)$ can be computed in $\Oh(n+k^{15})$ time.
Moreover,  $\ted_{\le k}(F,G)$ can be computed in $\Oh(n+k^{7}\log k)$ time.
\end{restatable}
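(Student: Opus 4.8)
The plan is to reduce \Cref{thm:weighted_tree_ed} to the kernelization paradigm behind \Cref{thm:weighted_ed}, lifted from strings to forests. Concretely, I would build an $\Oh(n)$-time algorithm that, given $F$, $G$, and $k$, either certifies that $\ted_{\le k}(F,G)=\infty$ --- in which case $\ted^w_{\le k}(F,G)=\infty$ as well, since every edit of a weighted alignment of cost at most $k$ has weight at least $1$ --- or produces a \emph{weight-agnostic kernel}: a pair of forests $\hat F,\hat G$ of total size $\Oh(k^{5})$ such that $\ted^w_{\le k}(F,G)=\ted^w_{\le k}(\hat F,\hat G)$ for \emph{every} quasimetric $w$. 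Given such a kernel, the theorem follows immediately. For a general quasimetric $w$, running the $\Oh(m^{3})$-time weighted tree edit distance algorithm of Demaine, Mozes, Rossman, and Weimann~\cite{10.1145/1644015.1644017} on the kernel, where $m=\Oh(k^{5})$, costs $\Oh(n+k^{15})$ in total. For unit weights, running instead the $\tOh(mk^{2})$-time bounded tree edit distance algorithm of Akmal and Jin~\cite{DBLP:conf/icalp/AkmalJ21} on the kernel, whose distance bound is still $k$, costs $\Oh(n+k^{7}\log k)$.

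To construct the kernel, I would first obtain an optimal unweighted tree alignment $\A$ of $F$ and $G$ using at most $k$ edits (or detect that no such alignment exists). The alignment $\A$ partitions $F$ and $G$ into $\Oh(k)$ \emph{active} regions, surrounding the edits of $\A$, and $\Oh(k)$ \emph{inert} regions, each inert region being a contiguous subforest of $F$ mapped isomorphically by $\A$ onto a contiguous subforest of $G$. The crux is a structural lemma asserting that every inert region can be replaced, without knowledge of $w$, by a gadget of size $\poly(k)$ that preserves $\ted^w_{\le k}$ for all quasimetrics $w$ at once. The intuition is that a long inert region can be realigned by a weighted alignment of cost at most $k$ only when it carries a large repetitive structure, and then it suffices to retain $\Oh(k)$ periods adjacent to each of its two boundaries together with a counter recording how many periods were removed. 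Making this rigorous requires forest analogues of the string notions of primitivity and periodicity in both the horizontal (sibling) direction and, with no string counterpart, the vertical (ancestor) direction, together with a charging argument showing that no weighted alignment of cost at most $k$ can penetrate more than $\Oh(k)$ periods into an inert region, uniformly over $w$. Compressing the $\Oh(k)$ active regions analogously and summing all gadget sizes then yields the bound $\Oh(k^{5})$, the larger exponent compared with the string kernel of \Cref{thm:weighted_ed} reflecting the richer, two-dimensional repetitive structure of forests.

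I expect the main obstacle to be exactly this weight-agnostic realignment lemma for inert regions. Because weighted tree edit distance lacks monotonicity, a minimum-weight alignment may slide matched subforests along periodic backgrounds in ways that an unweighted-optimal alignment never would, so the gadget must retain precisely the portion of the repetitive structure that some weight function could exploit, and no more, lest the kernel blow up. Pinning down this classification, horizontally and vertically simultaneously, and proving that it holds for every quasimetric $w$, is the technical heart of the argument. A secondary obstacle is carrying out the whole reduction deterministically and in $\Oh(n)$ time: this needs deterministic forest analogues of the suffix-tree and $\LCE$-query toolkit behind the Landau--Vishkin string algorithm~\cite{LV88}, such as linear-time detection of maximal repeated subforests and constant-time subforest-isomorphism tests, which in passing also removes the randomization in the unweighted kernelization of~\cite{DGHKSS'22}.
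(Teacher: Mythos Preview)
Your high-level framework --- build a weight-agnostic $\Oh(k^5)$ kernel in linear time, then run \cite{10.1145/1644015.1644017} for general $w$ or \cite{DBLP:conf/icalp/AkmalJ21} for unit weights --- is exactly the paper's proof, and your identification of the structural core (horizontal periodicity for forests, vertical periodicity for contexts, shown $\ted^w_{\le k}$-equivalent uniformly over all quasimetrics) matches \cref{lem:horizontal_periodic_reduction,lem:horizontal_aperiodic_reduction,lem:vertical_periodic_reduction,lem:vertical_aperiodic_reduction}.

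The gap is in how you locate the synchronized occurrences. You plan to ``first obtain an optimal unweighted tree alignment $\A$ of $F$ and $G$ using at most $k$ edits'' and decompose along $\A$, mirroring the string case. But no deterministic $\Oh(n+\poly(k))$-time algorithm for unweighted tree edit distance is available to you here: the prior bounds are $\Oh(nk^2\log n)$~\cite{DBLP:conf/icalp/AkmalJ21}, $\Oh(n^3)$~\cite{10.1145/1644015.1644017}, and the randomized $\Oh(n\log n+\poly(k))$ of~\cite{DGHKSS'22}, and the deterministic $\Oh(n+k^7\log k)$ bound is precisely the second statement you are trying to prove, so invoking it is circular. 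You label this a ``secondary obstacle,'' but porting Landau--Vishkin to trees is an open-ended program, not a routine exercise; the paper explicitly declines to attempt it (see the discussion in \cref{subsec:overview}).

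Instead, the paper never computes a tree alignment. It decomposes $F$ into $\Oh(k)$ pieces of size at most $n/(2k)$ via a direct recursion (\cref{lem:decomp}), then runs an $\Oh(n+k^4)$-time DP (\cref{lem:dp}) that, over all width-$2k$ alignments, matches as many pieces as possible to identical pieces in $G$. If fewer than $|\D|-k$ pieces match, then $\ted(F,G)>k$; otherwise the matched pieces are replaced by $\ted^w_{\le k}$-equivalent pieces of size $\Oh(k^4)$, shrinking the instance to roughly $n/2+\Oh(k^5)$. Iterating this halving yields the linear-time $\Oh(k^5)$ kernel (\cref{thm:forestKernel,cor:forestKernel}). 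This iterative workaround --- not any tree analogue of Landau--Vishkin --- is the algorithmic idea your plan is missing.
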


Finally, we define $\dyck_{\le k}^w(X)$ to be the minimum distance $\ed_{\le k}^w(X,Y)$ between $X$ and a string $Y$ in the Dyck language.
We give the first algorithm for weighted Dyck edit distance with runtime $\Oh(n + \poly(k))$. In this setting, the alphabet consists of opening and closing parentheses, and we need to assume that the weight function, apart from satisfying the triangle inequality, treats opening and closing parentheses of the same type similarly. This is captured in the notion of a \emph{skewmetric} formally defined in \cref{sec:dyckweight}.

\begin{restatable}{theorem}{weightedDyck}\label{thm:weighted_dyck}
  Given a string $X$ of length $n$, an integer $k\in \Zp$, and a skewmetric $w$, the value $\dyck_{\le k}^w(X)$ can be computed in $\Oh(n+k^{12})$ time.
\end{restatable}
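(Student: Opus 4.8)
The plan is to mirror the kernelization strategy used for weighted string and tree edit distance: reduce the length-$n$ input $X$ to a \emph{weight-agnostic kernel} $\hat X$ of length $\Oh(k^{4})$ such that $\dyck_{\le k}^w(X)=\dyck_{\le k}^w(\hat X)$ for \emph{every} skewmetric $w$, and then compute $\dyck_{\le k}^w(\hat X)$ directly by a cubic-time dynamic program. Building the kernel takes $\Oh(n+\poly(k))$ time, and the dynamic program runs in time cubic in $|\hat X|$, so the overall bound is $\Oh(n+k^{12})$.

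To construct $\hat X$, the starting point is that $w(a,a)=0$ and $w(a,b)\ge 1$ for $a\ne b$ force any cost-$\le k$ alignment of $X$ onto a Dyck word $Y$ to edit only $\Oh(k)$ positions of $X$; all remaining characters of $X$ are threaded, unchanged, into the nested matching encoded by $Y$. Hence only $\Oh(k)$ ``disruptions'' of the parenthesis structure of $X$ matter, and---the key point---a $\poly(k)$-size superset of the positions that can carry such disruptions can be identified without ever looking at $w$. I would obtain $\hat X$ by composing two kernelizations: a \emph{horizontal} one, essentially the string kernel behind \cref{thm:weighted_ed}, which collapses long stretches that any bounded alignment must traverse monotonically; and a \emph{vertical} one, in the spirit of the forest kernel behind \cref{thm:weighted_tree_ed}, which collapses deep regions of uniform nesting, since such a region is matched as a zero-cost subtree. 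Carried out carefully and with the two compressions shown to be compatible, this should yield $|\hat X|=\Oh(k^{4})$ while simultaneously preserving $\dyck_{\le k}^w$ for all skewmetrics.

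On the kernel I would run the classical cubic dynamic program for Dyck (more generally, context-free language) edit distance, adapted to weights. For $1\le i\le j\le m:=|\hat X|$, let $D[i,j]$ be the minimum $w$-cost of editing $\hat X[i\dd j]$ into a Dyck word; then $D[i,j]$ is the minimum of: deleting or inserting a boundary parenthesis; pairing the two ends, $D[i+1,j-1]+c(\hat X[i],\hat X[j])$, where $c(a,b)$ is the least cost of relabeling $a$ to an opening parenthesis and $b$ to the matching closing parenthesis of the same type; and splitting, $\min_{i\le\ell<j}\bigl(D[i,\ell]+D[\ell+1,j]\bigr)$. The skewmetric axioms are exactly what make the pairing step correct and $c(\cdot,\cdot)$ trivial to evaluate---a genuinely matched same-type pair costs $0$, and the triangle inequality together with the open/close symmetry collapses the minimization over bracket types. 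The table has $\Oh(m^2)$ entries, each computed in $\Oh(m)$ time, giving $\Oh(m^3)=\Oh(k^{12})$.

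The crux is the kernelization, not the dynamic program. Since weighted edit distance is not monotone, the local greedy matching underlying the existing $\Oh(n+\poly(k))$ unweighted Dyck algorithms is unavailable, so I need a global argument that a small, weight-independent set of positions suffices to reconstruct an optimal alignment. The genuine difficulty beyond the string case is that an alignment to a Dyck word couples far-apart positions of $X$ through the nesting, so the horizontal and vertical compressions must be shown to interact cleanly and to jointly preserve $\dyck_{\le k}^w$; verifying that the skewmetric conditions suffice for both the reduction and the dynamic program is the remaining piece.
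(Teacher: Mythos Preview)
Your high-level strategy---build a weight-agnostic kernel of size $\Oh(k^4)$ and then run the cubic Dyck-edit-distance dynamic program of~\cite{M95}---matches the paper exactly, and your treatment of the DP on the kernel is fine. The gap is in the kernelization itself: the borrowed ``horizontal $+$ vertical'' template from trees is not what makes the Dyck reduction go through, and several Dyck-specific steps are missing.

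The paper's kernelization proceeds differently. First comes a \emph{greedy preprocessing} step: repeatedly delete adjacent pairs $X[i]X[i+1]$ with $X[i]\in T$ and $X[i+1]=\rev{X[i]}$. That this preserves $\dyck^w$ is trivial in the unweighted setting but genuinely requires the skewmetric axioms (triangle inequality together with $w(a,b)=w(\rev b,\rev a)$); the paper devotes a case analysis (\cref{claim:preprocess1}) to it, and this is where the skewmetric assumption is actually used. After preprocessing, the string has at most $2k$ valleys and hence is a concatenation of $\Oh(k)$ maximal blocks, each entirely in $T^*$ or in $\rev{T}^*$. Second, to decide what gets compressed, the paper runs the \emph{unweighted} Dyck algorithm in $\Oh(n+k^5)$ time to obtain an optimal unweighted alignment $\M$, and uses $\M$ to pair up those blocks into $\Oh(k)$ \emph{$k$-synchronized} pairs $(P,\rev P)$ with $P\in T^*$. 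There is no tree-style vertical (context-power) reduction; the long-range coupling you worry about is handled precisely by this pairing of an opening block with its matched closing block, not by a separate nesting-depth argument. Third, each pair $(P,\rev P)$ is shrunk to length $\Oh(k^3)$ by the string-style periodicity reduction followed by keeping only the first and last $\Oh(k^3)$ characters---the Dyck analogues of \cref{lem:periodic_reduction,lem:aperiodic_reduction}, reproved for the Dyck notion of equivalence in \cref{lem:periodic_reduction_dyck,lem:aperiodic_reduction_dyck}.

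So the concrete pieces you are missing are: (i) the greedy preprocessing and its skewmetric-based correctness, without which you never obtain the $\Oh(k)$-block structure; (ii) the use of the unweighted Dyck algorithm to locate the synchronized pairs; and (iii) the Dyck-specific equivalence notion (simultaneously replacing $P$ and $\rev P$ at $k$-synchronized positions) in place of the tree-context machinery. Your ``horizontal'' intuition is the right one within each block, but the ``vertical'' compression should be replaced by the synchronized-pair mechanism above.
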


We note that, although our algorithms assume $k$ is given, one can also obtain running times analogous to those of \cref{thm:weighted_ed,thm:weighted_tree_ed,thm:weighted_dyck} but with the sought distance instead of the threshold~$k$. 
For this, it suffices to start from the largest value $k$ that results in the running time of $\Oh(n)$, e.g., $k=\Theta(n^{1/5})$ for strings, and keep doubling the threshold $k$ as long as the algorithm outputs $\infty$.
The first finite outcome is guaranteed to be the sought distance and, since the running times of the subsequent iterations form a geometric progression,
the overall runtime is dominated by the last iteration, where $k$ is at most twice the sought distance.

\subsection{Overview}\label{subsec:overview}
The folklore algorithms to compute edit distance for unweighted and weighted instances use dynamic programming and runs in $\Oh(n^2)$ time. Given two strings $X$ and $Y$, the entry $D[i,j]$ of the dynamic programming table $D$ holds the weighted (unweighted) edit distance of prefixes of $X$ and $Y$ up to indices $i$ and $j$ respectively. That is $D[i, j] := \ed(X[0 \dd i), Y[0 \dd j))$. Then
\[
    D[i+1, j+1] = \min\{D[i, j+1] + 1, D[i+1, j] + 1, D[i, j] + \delta(X[i], Y[j])\} \, \textit{ :unweighted edit distance}
\]
\begin{multline*}
    D[i+1, j+1] = \min\{D[i, j+1] + w(X[i], \eps), D[i+1, j] + w(\eps, Y[j]), D[i, j] + w(X[i], Y[j])\}.\\ \textit{ :weighted edit distance}
\end{multline*}
The first entry in the recursive definition corresponds to deleting $X[i]$, the second entry corresponds to inserting $Y[j]$, and the third entry corresponds to either matching or substitution ($\delta(X[i], Y[j]) = 0$ if $X[i] = Y[j]$, otherwise $\delta(X[i], Y[j]) = 1$).
Clearly, $D[|X|, |Y|]$ equals the total weighted (unweighted) edit distance between $X$ and $Y$, and can be computed in $\Oh(n^2)$ time. 

It is possible to improve the running time to $\Oh(nk)$ if the weighted (unweighted) edit distance is bounded by $k < n$. In this case the entries corresponding to only $2k+1$ diagonals surrounding the main diagonal of $D$ need to be computed. However, the similarities between the developments on unweighted and weighted edit distance computations end here.

The first major breakthrough in the unweighted edit distance computation came in the late eighties~\cite{DBLP:journals/algorithmica/Meyers86,LV88}.
An $\Oh(n+k^2)$-time algorithm for unweighted edit distance was developed whenever edit distance is bounded by $k$, thereby giving a linear time algorithm for $k \leq \sqrt{n}$. The algorithm utilizes two simple but powerful properties of unweighted edit distance, namely (i) \emph{monotonicity}: $D[i+1,j+1] \geq D[i,j]$, and (ii) \emph{greedy extension}: if $X[i]=Y[j]$ then $D[i+1,j+1]=D[i,j]$.
These two properties together imply that if we can find maximal equal substrings in $X$ and $Y$ through a preprocessing step, only $\Oh(k^2)$ entries of $D$ need to be computed. More precisely, for each of the $2k+1$ diagonals, these are the at most $k+1$ entries with $k+1\ge D[i+1,j+1]>D[i,j]$. 
The preprocessing step utilizes a linear-time construction of a suffix tree to answer any maximal equal substring queries in constant time, leading to an overall running time of $\Oh(n + k^2)$. All subsequent developments on fast approximation algorithms for unweighted string edit distance rely on the above two properties without exception.

Unfortunately, none of the above two properties hold for weighted edit distance computation. The following simple examples will make this observation clear.
\renewcommand{\a}{\mathtt{a}}
\renewcommand{\b}{\mathtt{b}}
\renewcommand{\c}{\mathtt{c}}
\renewcommand{\d}{\mathtt{d}}

\begin{enumerate}
\item \emph{No monotonicity:}  Let $X = \a\b$, $Y = \c$, and $w(\a,\c) + w(\b,\eps) < w(\a, \eps)$. Then $D[1,0] = w(\a,\eps)$ and $D[2,1] = w(\a,\c) + w(\b,\eps) < w(\a, \eps)$.
\item  \emph{No greedy extension:} Let $X = \a\b$, $Y = \b$, and $w(\a, \b) + w(\b, \eps) < w(\a, \eps)$. Then substituting $\a$ to $\b$ and deleting $\b$ from $X$ is cheaper than deleting $\a$ and matching the subsequent $\b$.
\end{enumerate}
In some sense, this explains the lack of progress on weighted instances in this field. We need a very different approach and new ideas.

When $k$, the minimum weighted edit distance, is small for two input strings, clearly most characters of the input strings are perfectly matched and contribute no cost to the edit distance computation. The main idea of our algorithm is to find small representative instances for the input strings and then run the $\Oh(nk)$-time weighted edit distance solution on these representatives to find the original weighted edit distance. In fact, we prove that any instance of the bounded-weighted edit distance can be solved using strings of size $\Oh(k^4)$. Our algorithm constructs such an $\Oh(k^4)$-size kernel from strings of size $\Oh(n)$ in time $\Oh(n)$, and then the resulting small instances can be solved in time $\Oh(k^5)$ using the $\Oh(nk)$-time weighted extension of the dynamic programming.

We are also able to extend the idea of kernelization to weighted instances of tree and Dyck edit distances by giving the first $\Oh(n + \poly(k))$ algorithms for them. Notably, our algorithms are deterministic and give significant improvements over the recent randomized algorithms on unweighted tree edit distance~\cite{DGHKSS'22}.  We show it is possible to compute small $\Oh(\poly(k))$-size kernels from the original instances of each problem in linear time, and then run dynamic programming based algorithms to compute the final edit distance values. 

To find such kernels, we utilize substrings that have \emph{synchronized occurrences} in both input strings $X$ and $Y$,
that is, they occur in $X$ and $Y$ at positions $x$ and $y$, respectively, satisfying $|x-y|=\Oh(k)$.
Our kernelization algorithm first tries to cover the input strings (almost entirely) with $\Oh(k)$ pairs of synchronized occurrences.
If this is impossible, then we conclude that the edit distance must be large, that is, $\ed^w_{\le k}(X,Y)=\infty$.
Otherwise, we apply a novel notion of \emph{edit-distance equivalence} so that synchronized occurrences of a substring $P$
can be substituted with synchronized occurrences of an equivalent substring $P'$ without affecting the edit distance $\ed^w_{\le k}(X,Y)$. 
To this end, we provide a linear-time algorithm that, given any string $P$, computes an edit-distance equivalent string $P'$ of size $\poly(k)$.

A similar notion of equivalent pieces is also central to our algorithms for weighted tree and Dyck edit distance.
Our three algorithms all utilize the following high-level steps:
\begin{enumerate}
    \item Partition the input objects into $\Oh(k)$ pieces most of which can be paired up to form synchronized occurrences.
    \item If the algorithm failed to find sufficiently long synchronized occurrences, report that the edit distance exceeds $k$.
    \item Otherwise, for every pair of synchronized occurrences, substitute the original piece with a small equivalent replacement.
    \item Solve the resulting small instance with a known dynamic-programming algorithm.
\end{enumerate}

\subsubsection{Weighted String Edit Distance}
We now describe how to obtain~\cref{thm:weighted_ed} by implementing the aforementioned high-level scheme.

\paragraph{Edit-Distance Equivalent Strings}
The biggest technical contribution behind our weighted edit distance algorithm is a linear-time procedure (of~\cref{cor:string_reduction}) that, given a string $P$, computes an equivalent string of length $\Oh(k^3)$.
In the first phase, it eliminates \emph{$k$-periodicity}: as long as the processed string contains a fragment of the form $Q^{4k+1}$ with $|Q|\in [1\dd 2k]$, this fragment is replaced by $Q^{4k}$.
As shown in \cref{lem:periodic_reduction}, the strings $Q^{4k+1}$ and $Q^{4k}$ are equivalent,
so this step preserves equivalence with the input string~$P$. 
Eventually, the first phase results in a string that \emph{avoids $k$-periodicity} and is equivalent with $P$ (see \cref{fig:per_reduc} for example).
It is implemented in~\cref{lem:perred}, where the underlying algorithm processes the input string $P$ from left to right and removes the first copy of $Q$
for every encountered fragment of the form $Q^{4k+1}$ with $|Q|\in [1\dd 2k]$.

In \cref{lem:aperiodic_reduction}, we prove that if $P$ avoids $k$-periodicity and satisfies $|P|\ge 42k^3$, then it is equivalent to $P[0\dd  21k^3) \cdot P[|P| - 21k^3\dd |P|)$, that is, the concatenation of its prefix of length $21k^3$ and its suffix of length $21k^3$ (with the characters in the middle removed).
For this, we consider synchronized occurrences of $P$ in strings $X$ and $Y$ and an optimal alignment $\A$ of cost $\ed^w(X,Y)\le k$ that maps $X$ onto~$Y$.
We observe that $\A$ must perfectly match a length-$10k^2$ fragment within the length-$21k^3$ prefix of the occurrence of $P$ in $X$.
Moreover, since $P$ avoids $k$-periodicity, $\A$ can only match this fragment to the corresponding fragment of the occurrence of $P$ in $Y$.
Symmetrically, $\A$ must match the two copies of a length-$10k^2$ fragment within the length-$21k^3$ suffix of $P$.
We conclude that $\A$ aligns the two copies of $P[d\dd |P|-e)$ for some $d,e\in [0\dd 21k^3]$.
Since $\A$ is optimal, it must perfectly match the two copies of $P[d\dd |P|-e)$.
Thus, $P[21k^3\dd |P|-21k^3)$ can be removed from the synchronized occurrences of $P$ in $X$ and $Y$ without affecting the cost $\ed^w_{\le k}(X,Y)$.
Consequently, if the first phase returns a string of length at least $42k^3$,
then the algorithm of \cref{cor:string_reduction} removes all but the leading $21k^3$ and the trailing $21k^3$ characters of that string (see \cref{fig:ed_equiv} for example).

\paragraph{Linear-Time Kernel}
In order to apply the notion of edit-distance equivalence, we need to identify synchronized occurrences within $X$ and $Y$.
To this end, we check whether $\ed(X,Y)\le k$. 
If this is not the case, then $\ed^w(X,Y)\ge \ed(X,Y)>k$ holds for every normalized weight function~$w$, and thus we already know that $\ed_{\le k}^w(X,Y)=\infty$. If $\ed(X,Y)\le k$, on the other hand, then we construct an optimal unweighted $\A$ alignment mapping $X$ onto $Y$. 
As formally proved in \cref{fct:str_decomp}, the unedited characters of $X$ form at most $k+1$ fragments that $\A$ matches perfectly. Each of these fragments of $X$ forms a synchronized occurrence together with its image under $\A$ in $Y$.
Thus, we can replace the synchronized occurrences with occurrences of an equivalent string of length $\Oh(k^3)$.
Since we have partitioned $X$ and $Y$ into $\Oh(k)$ edited characters plus $\Oh(k)$ synchronized occurrences, 
this yields strings  $X'$ and $Y'$ of length $\Oh(k^4)$ satisfying $\ed^w_{\le k}(X', Y') = \ed^w_{\le k}(X, Y)$. In order to construct $\A$ efficiently, we use the $\Oh(n + k^2)$ unweighted edit distance algorithm of~\cite{DBLP:journals/algorithmica/Meyers86,LV88}.
However, if $n \leq k^4$, then we do not need to worry about reducing the size of $X$ and $Y$ in the first place and therefore do not construct an optimal unweighted alignment; otherwise, $\Oh(n + k^2) = \Oh(n)$ and constructing the $\Oh(k^4)$-size kernel takes linear time; see~\cref{thm:stringKernel} for details
on our kernel for weighted string edit distance.

As mentioned earlier, once we have a kernel $(X', Y')$ of size $\Oh(k^4)$, we can run the $\Oh(nk)$-time weighted edit-distance algorithm to compute 
$\ed^w_{\le k}(X', Y')=\ed^w_{\le k}(X,Y)$ in $\Oh(k^5)$ time.

\subsubsection{Weighted Tree Edit Distance}
Our algorithm for weighted tree edit distance follows the same high-level approach. However, compared to the string edit distance, two major challenges arise. First, the structure of periodicity is much richer and requires two notions: \emph{horizontal periodicity} of \emph{forests} and \emph{vertical periodicity} of \emph{contexts}. 
As a result, we need separate definitions of tree-edit-distance equivalence for forests and contexts.
Nevertheless, assuming that the weight function $w$ satisfies the triangle inequality, we can still construct equivalent
forests and contexts of size $\Oh(k^3)$ and $\Oh(k^4)$, respectively.
The second challenge is that the state-of-the-art algorithm for computing the unweighted tree edit distance is randomized and takes $\Oh(n\log n + \poly(k))$
time rather than $\Oh(n + \poly(k))$ time.
Thus, in order to achieve a deterministic linear-time kernel, we need another method for identifying large synchronizing occurrences.
Our workaround is to shrink the input in multiple iterations (essentially halving the size each time) rather than in a single shot.
This way, we can still obtain a kernel of size $\Oh(k^5)$, which is asymptotically as small as we would get from an optimum unweighted alignment.

\paragraph{Periodicity in Trees}
Intuitively, the two types of periodicity in trees correspond to the two ways to interpret strings as trees.
For a string $X$, the \emph{horizontal embedding} constructs a tree with $|X|$ leafs attached to the root and labeled by subsequent characters of $X$, whereas the \emph{vertical embedding} constructs a path with $|X|$ nodes labeled by subsequent characters of $X$.
Similarly, forest algebras (see~\cite{BW08} for a survey) in formal language theory involve two natural monoids: a \emph{horizontal monoid} of forests
(with concatenation, denoted $\cdot$) and a \emph{vertical monoid} of contexts (with composition, denoted $\star$).
A context can be defined as a tree with a single \emph{hole} in some leaf, and contexts can be composed by placing one of them
in the hole of the other. Moreover, placing a forest in the hole of a context yields a forest.
In order to formalize these notions and easily port combinatorial and algorithmic tools designed for strings,
we interpret forests as balanced strings of parentheses; see \cref{subsec:treeprelim}.

Following~\cite{DGHKSS'22}, a horizontal power is the concatenation of multiple copies of the same forest,
whereas a vertical power is the composition of multiple copies of the same context; see \cref{fig:tree_per} for an example.
More specifically, we say that a forest contains horizontal $k$-periodicity if it has a subforest of the form $Q^{4k+1}$ for some forest $Q$ of size $|Q|\le 4k$,
whereas a context contains vertical $k$-periodicity if it can be expressed as a composition of several contexts, including $Q^{6k+1}$
for some context $Q$ of size $|Q|\le 8k$.

\paragraph{Tree-Edit-Distance Equivalent Forests}
The first ingredient of our algorithm for weighted tree edit distance is a linear-time procedure that, given a forest $P$,
constructs an equivalent forest of size $\Oh(k^3)$. 
The first phase of this subroutine eliminates horizontal $k$-periodicity: as long as the processed forest contains a subforest of the form $Q^{4k+1}$ with $|Q|\in [1\dd 4k]$, this subforest is replaced by $Q^{4k}$.
As shown in \cref{lem:horizontal_periodic_reduction}, the forests $Q^{4k+1}$ and $Q^{4k}$ are equivalent,
so this step preserves equivalence with the input forest~$P$.  
An efficient implementation of this phase relies on the fact that, if $P$ is interpreted as a string,
then horizontal $k$-periodicity can be interpreted as a substring of the form $Q^{4k+1}$ for a sufficiently short \emph{balanced} string $Q$.
Thus, we can reuse \cref{lem:perred} to obtain a forest equivalent with $P$ that avoids horizontal $k$-periodicity.

In \cref{lem:horizontal_aperiodic_reduction}, we show the equivalence of any two forests of size at least $74k^3$ that avoid horizontal $k$-periodicity.%
\footnote{This statement is stronger that its counterpart for strings, \cref{lem:aperiodic_reduction}, because we now assume that the weight function $w$ satisfies the triangle inequality.}
Based on this result, if horizontal periodicity reduction yields a forest of size at least $74k^3$, we return a canonical forest of size exactly $74k^3$; see~\cref{lem:horizontal_reduction} for details.

\paragraph{Tree-Edit-Distance Equivalent Contexts}
Our next ingredient is a linear-time algorithm that, given a context $P$, constructs an equivalent context of size $\Oh(k^4)$.
First, we use the previous procedure for every maximal forest in $P$ (that does not contain the hole).
Then, we eliminate vertical $k$-periodicity: as long as $P$ contains a context of the form $Q^{6k+1}$ with $|Q|\in [1\dd 8k]$, this context is replaced by $Q^{6k}$. As shown in \cref{lem:vertical_periodic_reduction}, the contexts $Q^{6k+1}$ and $Q^{6k}$ are equivalent,
so this step preserves equivalence with the input context~$P$.  
For an efficient implementation, the \emph{spine}, i.e., the path from the root of $P$ to the hole, 
is interpreted as a string, with each character encoding the label of the underlying node and the subtrees attached there to the left and to the right of the spine.
This way, vertical-$k$ periodicity can be interpreted as periodicity in the constructed string, and hence \cref{lem:perred} can be used again.

In \cref{lem:vertical_aperiodic_reduction}, we show the equivalence of any two contexts of size at least $578k^4$ that avoid vertical $k$-periodicity
and subforests of size more than $74k^3$. Thus, if vertical periodicity reduction yields a context of size at least $578k^4$, we replace it with a canonical context of size exactly $578k^4$; see~\cref{lem:vertical_reduction} for details.

\paragraph{Linear-Time Kernel}
As for strings, in order to apply the notion of tree-edit-distance equivalence, we need to identify synchronized occurrences of forests and contexts
within the input forests $F$ and $G$. As mention above, in order to obtain a deterministic linear-time kernel, we cannot use the algorithm of~\cite{DGHKSS'22}
to obtain a tree alignment mapping $F$ to $G$ with at most $k$ edits.
Instead, we develop an iterative workaround. At each step, we decompose $F$ into $\Oh(k)$ contexts and forests (jointly called pieces)
of size at most $\frac{n}{2k}$ each; see \cref{lem:decomp} for details.
Next, we maximize the number of pieces (from the decomposition) that admit disjoint synchronized occurrences in $G$; 
\cref{lem:dp} implements this step in $\Oh(n+k^4)$ time using dynamic programming. 
If $\ted(F,G)\le k$, then no more than $k$ of the pieces are left unmatched (an optimal alignment may edit at most $k$ pieces).
We replace the matched pieces with equivalent pieces of size $\Oh(k^5)$, obtaining forests of size at most $\frac{n}{2}+\Oh(k^5)$,
where the first term corresponds to the unmatched pieces; see~\cref{thm:forestKernel}.
As long as $n = \omega(k^5)$, this procedure essentially halves the input size.
Hence, as shown in~\cref{cor:forestKernel}, this still yields a linear-time algorithm producing forests $F'$ and $G'$ of size $\Oh(k^5)$ such that $\ted_{\le k}^w(F', G') = \ted_{\le k}^w(F, G)$.

Once we have such a kernel $(F', G')$ of size $\Oh(k^5)$, we can run the cubic-time weighted edit-distance algorithm~\cite{10.1145/1644015.1644017} to compute 
$\ted^w_{\le k}(X', Y')=\ted^w_{\le k}(X,Y)$ in $\Oh(k^{15})$ time, for a total runtime of $\Oh(n+k^{15})$.
Additionally, we significantly improve the state-of-the-art of the unweighted tree edit distance problem by using the $\Oh(nk^2\log n)$-time algorithm from~\cite{DBLP:conf/icalp/AkmalJ21}, which gives us a total runtime of $\Oh(n + k^7\log k)$ for unweighted tree edit distance.

\subsubsection{Weighted Dyck Edit Distance}
In the final section of our paper, the weighted Dyck edit distance algorithm follows a similar approach to that of the string and tree edit distance algorithm.  However, many of the proofs and details are specific to Dyck edit distance problem and come with their own set of intricacies and difficulties that we outline in the following. 

Given a string $X$ over an alphabet $\Sigma=T\cup \rev{T}$ (where $T$ and $\rev{T}$ are the sets of opening and closing parentheses, respectively), an integer $k\in \Zp$, and a skewmetric weight function $w$ representing the cost of each edit operation (parenthesis insertion, deletion, and substitution), our objective is to compute the minimum weight of a sequence of edits that convert $X$ to a well-parenthesized expression over $\Sigma$ provided the total weight of all edits is bounded by $k$. In this work we design a deterministic algorithm that achieves this goal in $\Oh(n+k^{12})$ time. For the unweighted counterpart of this problem, the recent solution of~\cite{F22a,Duerr2022} computes the Dyck edit distance in time $\Oh(n+k^{4.5442})$. That algorithm,
consistently with its predecessors~\cite{BO16,OtherSubmission}, starts with a greedy preprocessing step that exhaustively
removes any two adjacent characters $X[i]X[i+1]$ such that $X[i]$ is an opening parenthesis and $X[i+1]$ is a closing parenthesis of the same type.
Following a simple argument, it can be shown that the Dyck edit distance of the preprocessed string stays exactly the same as the input string $X$. 

\paragraph{Preprocessing}
We tried to follow a similar approach for the weighted version, but it turns out that such a simple analysis is not enough to construct a reduced string.
For example, let the input string be $\texttt{(\{()}$. 
For a general weight function $w$, it is not evident that the optimal matching should always match the last two parentheses. 
In fact, if we consider a weight function where the cost of substituting $\texttt{\{}$ with $\texttt{)}$ is $10$ whereas the cost of substituting $\texttt{(}$ with $\texttt{\}}$ is 5, 
then any optimal matching should match the first and the last parentheses instead of the last two. 
Thus, in this work, we consider our weight function $w$ to be a skewmetric. 
Formally, we assume that $w$ satisfies the triangle inequality and skew-symmetry, that is, $w(p_1,p_2)=w(\rev{p_2},\rev{p_1})$ holds for all $p_1,p_2\in \Sigma\cup \{\eps\}$, where $\rev{p}$ is the parenthesis complementary to $p$ (and $\rev{\eps}=\eps$).
Following this property of $w$, we show that one can apply a similar greedy preprocessing (as described for the unweighted version) to reduce $X$ to a string $X'$ while preserving the weighted Dyck edit distance. 
Our argument is substantially more elaborate, though, and follows a case-by-case analysis depending on the structure of the other alternate alignments (\cref{claim:preprocess1}).
Nevertheless, it is trivial to observe the greedy preprocessing can be done in linear time. 

\paragraph{Dyck-Edit-Distance Equivalent Strings} 
Next, following a similar strategy as described for string edit distance, we further reduce $X'$ to generate a string $X''$ of length $\Oh(k^4)$ while preserving the weighted Dyck edit distance. For this first we introduce the concept of $k$-synchronicity. A substring $P$ containing only opening parentheses and a substring $\rev{P}$ containing only closing parentheses are $k$-synchronized if $\rev{P}$ appears after $P$, they are of same length and their height difference is at most $2k$. 
Following this and the non-crossing property of Dyck matching, first we argue that if the lengths of $P, \rev{P}$ are large and the distance is bounded by $k$, then there exist a substring $\ell\in P$ that is matched with a substring $\ell'\in \rev{P}$ in the optimal alignment (we fix one for the analysis purpose). Now if we replace $P$ with $P\setminus \ell$ and  $P'$ with $P'\setminus \ell'$ then in the resulting string the distance stays the same (Fact~\ref{fct:sync}). Following this, for any two $k$-synchronized substrings $P, \rev{P}$, we can reduce their periodicity as follows: if $P=Q^e$ and $\rev{P}=\rev{Q^e}$, (where $Q$ is a primitive string with large exponent $e$) then at least one occurrence of $Q$ is matched with its reverse complement counterpart $\rev{Q}$ in $\rev{P}$. Thus, we can remove the matched part while not changing the distance, and it reduces the exponent by one. Repeat this until $e$ become small (Lemma~\ref{lem:periodic_reduction_dyck}). 

Next assuming that $P, \rev{P}$ avoid periodicity, it can be shown that there exists a pair of indices $i,j\in [0\dd  78k^3]$ such that $P[i]$ is matched with, $\rev{P}[|P|-1-i]$ and $P[|P|-1-j]$ is matched with, $\rev{P}[j]$ in the optimal alignment. 
Thus, following the fact that $|P|=|\rev{P}|$ and the non-crossing property of the Dyck optimal alignment, all the indices between $i$ and $|P|-1-j$ are also matched, and thus removing these matched characters from both $P, \rev{P}$ does not affect the Dyck edit distance.
Consequently, we replace each $k$-synchronized pairs with substrings of length just $156k^3$ (replace $P,\rev{P}$ with their first and last $78k^3$ characters) to generate a string $X''$ such that the weighted Dyck edit distance of $X$ and $X''$ is the same (\cref{lem:aperiodic_reduction_dyck}, \cref{cor:dyck_reduction}). 

\paragraph{Linear-Time Kernel} 
Lastly, we show if the distance is bounded by $k$, then $X$ can be partitioned in time $\Oh(n+k^5)$ into $\Oh(k)$ disjoint $k$-synchronized pairs of substrings (plus $\Oh(k)$ individual characters) and thus the total length of $X''$ is bounded by $\Oh(k^4)$. Start by preprocessing input string $X$ to generate $X'$.
Next, we check if $\ded(X')\le k$ and, if so, we compute an unweighted optimal Dyck alignment $\M$ of $X'$ in time $\Oh(n+k^5)$~\cite{OtherSubmission}. Then, we argue any pair of substrings of $X'$ that are matched by $\M$ are $k$-synchronized. Thus, using $\M$, we identify the set of maximal substrings from $T^*$ and $\rev{T}^*$ that are matched by $\M$. A substring is maximal in a sense that either the substring itself or its matched counterpart can not be extended to the right or left without paying an edit. As unweighted Dyck edit distance is no more than the weighted version and hence, assuming cost of $\M$ is bounded by $k$, we can show string $X'$ can be partitioned into $\Oh(k)$ different $k$-synchronized pairs. 
Also, these maximal fragments can be found in linear time with a left-to-right scan of $X'$. Subsequently, we create a string $X''$ from $X'$ as follows: (i) for each $k$-synchronized pairs we reduce them following the algorithm as discussed above and add two corresponding strings each of length $O(k^3)$ (ii) add all the characters that are edited by $\M$ just the same to $X''$ (\cref{thm:dyckkernel}). 

Finally, we compute the weighted Dyck edit distance of $X''$ using the dynamic program algorithm of~\cite{M95} in time $\Oh(k^{12})$.

\section{String Edit Distance}
\label{sec:ed}

\subsection{Preliminaries}
A \emph{string} $Y\in \Sigma^n$ is a sequence of $|Y|:=n$ characters from an \emph{alphabet} $\Sigma$.
For $i\in [0\dd n)$, we denote the $i$th character of $Y$ with $Y[i]$. 
We say that a string $X$ \emph{occurs} as a \emph{substring} of a string $Y$ if $X=Y[i]\cdots Y[j-1]$ holds for some integers $0\le i \le j \le |Y|$.
We denote the underlying \emph{occurrence} of $X$ as $Y[i\dd j)$.
Formally, $Y[i\dd j)$ is a \emph{fragment} of $Y$ that can be represented using a reference to $Y$ as well as its endpoints $i,j$.
The fragment $Y[i\dd j)$ can be alternatively denoted as $Y[i\dd j-1]$, $Y(i-1\dd j-1]$, or $Y(i-1\dd j)$.
A fragment of the form  $Y[0\dd j)$ is a \emph{prefix} of $Y$, whereas a fragment of the form $Y[i\dd n)$ is a \emph{suffix} of $Y$.

\begin{theorem}[LCE queries~\cite{LV88,FFM00}]\label{thm:lce}
Strings $X,Y$ can be preprocessed in linear time so that the following \emph{longest common extension} (LCE) queries can be answered 
in $\Oh(1)$ time: given positions $x\in [0\dd |X|]$ and $y\in [0\dd |Y|]$, compute the largest 
$\ell$ such that $X[x\dd x+\ell)=Y[y\dd y+\ell)$.
\end{theorem}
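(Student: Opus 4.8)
The plan is to reduce LCE queries to lowest-common-ancestor (LCA) queries in a suffix tree. First I would form the string $Z = X\cdot\#\cdot Y$, where $\#$ is a sentinel symbol not occurring in $X$ or $Y$, and build the suffix tree $\mathcal{T}$ of $Z$ in $\Oh(|Z|)=\Oh(n)$ time (via Farach's or Ukkonen's construction, under the standard assumption that the alphabet is integer and polynomially bounded in $n$; otherwise one first sorts the distinct characters, but the paper's convention permits a linear-time construction). Each suffix of $Z$ — in particular each suffix $X[x\dd|X|)\cdot\#\cdot Y$ of the $X$-part and each suffix $Y[y\dd|Y|)$ of the $Y$-part — corresponds to a unique leaf of $\mathcal{T}$. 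In a single DFS over $\mathcal{T}$ I would record, for every node $v$, its \emph{string depth} $\mathrm{sd}(v)$ (the length of the label spelled from the root to $v$), and store pointers from the relevant starting positions of $Z$ to their leaves.

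Next I would observe that, for $x\in[0\dd|X|]$ and $y\in[0\dd|Y|]$, the answer to the LCE query equals $\mathrm{sd}(u)$, where $u$ is the LCA in $\mathcal{T}$ of the leaf for the suffix of $Z$ starting at position $x$ and the leaf for the suffix starting at position $|X|+1+y$. Indeed, the longest common prefix of these two suffixes of $Z$ is exactly the path label of their LCA; because $\#\notin\Sigma$, this common prefix cannot extend beyond the $\#$ that terminates the $X$-portion, so it is automatically at most $|X|-x$, and it is at most $|Y|-y$ since that is the full length of the $Y$-suffix. Hence $\mathrm{sd}(u)$ is precisely the largest $\ell$ with $X[x\dd x+\ell)=Y[y\dd y+\ell)$. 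The corner cases $x=|X|$ or $y=|Y|$ return $\ell=0$, which the leaf/LCA computation yields correctly (or can be special-cased).

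Finally I would make the LCA queries run in $\Oh(1)$ time after $\Oh(n)$ preprocessing, using the classical Bender–Farach-Colton reduction: an Euler tour of $\mathcal{T}$ records the visited nodes with their tree depths, an LCA query becomes a range-minimum query on the depth array, and consecutive entries of that array differ by $\pm 1$; such $\pm 1$ RMQ admits an $\Oh(n)$-space data structure built in $\Oh(n)$ time (block decomposition, a sparse table on block minima, and a lookup table over the $\Oh(\sqrt{n})$ possible block patterns). Composing everything — suffix tree, DFS for string depths, Euler tour, $\pm 1$ RMQ — the total preprocessing is $\Oh(n)$, and each query performs a constant number of array lookups. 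An alternative avoiding suffix trees is to build the suffix array and LCP array of $Z$ in linear time and answer LCE via RMQ on the LCP array, with the same analysis.

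Since the argument is an assembly of textbook components, there is no genuinely hard step; the only points that need care are the alphabet assumption underlying the linear-time suffix-tree (or suffix-array) construction and the verification that the sentinel $\#$ correctly truncates the common extension at the boundary between $X$ and $Y$.
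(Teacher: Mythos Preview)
Your proposal is correct and follows the standard textbook route (suffix tree of $X\cdot\#\cdot Y$, then LCA via Euler tour and $\pm 1$ RMQ). Note, however, that the paper does not prove this theorem at all: it is stated as a known result with citations to~\cite{LV88,FFM00}, so there is no ``paper's own proof'' to compare against.
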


As mentioned in \cref{subsec:overview}, high-power periodicity plays a key role in our algorithms, which we may now formally define for strings here. An integer $p\in [1\dd n]$ is a \emph{period} of a string $Y\in \Sigma^n$ if $Y[i]=Y[i+p]$ holds for all $i\in [0\dd n-p)$.
In this case, the prefix $Y[0\dd p)$ is called a \emph{string period} of $Y$.
By $\per(Y)$ we denote the smallest period of $Y$. 
The exponent of a string $Y$ is defined as $\exp(Y):=\frac{|Y|}{\per(Y)}$,
and we say that a string $Y$ is \emph{periodic} if $\exp(Y)\ge 2$.

\begin{theorem}[2-Period queries~\cite{KRRW15,DBLP:journals/siamcomp/BannaiIINTT17}]\label{thm:2per}
A string $X$ can be preprocessed in linear time so that
one can decide in constant time whether any given fragment $X[i\dd j)$ is periodic and, if so, compute its shortest period $\per(X[i\dd j))$.
\end{theorem}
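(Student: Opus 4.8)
\emph{Strategy.} The plan is to show that every periodic fragment is the ``core'' of a \emph{unique} run (maximal repetition) of $X$, and to answer each query by retrieving that run. In the preprocessing phase we build the LCE structure of \cref{thm:lce} and compute the set $\mathcal R$ of all runs of $X$, where a run is a fragment $X[\ell\dd r)$ that satisfies $2\per(X[\ell\dd r))\le r-\ell$ and is left- and right-maximal for its shortest period. It is classical that $|\mathcal R|<|X|$ and that $\mathcal R$ is computable in $\Oh(|X|)$ time (e.g., via the Lempel--Ziv factorization or the Lyndon array); we use this as a black box. All remaining work lies in turning $\mathcal R$ into a constant-time query structure.

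\emph{The host run.} The combinatorial core is: a fragment $X[i\dd j)$ of length $m:=j-i$ is periodic if and only if some run $X[\ell\dd r)\in\mathcal R$ has $\ell\le i$, $j\le r$, and $\per(X[\ell\dd r))\le m/2$; in that case the run is unique and $\per(X[i\dd j))=\per(X[\ell\dd r))$. The ``if'' direction is immediate except for the value of the period: $X[i\dd j)$ inherits the period $p:=\per(X[\ell\dd r))$, and since $m\ge 2p$, a Fine--Wilf argument shows that no shorter period can survive inside a run whose shortest period is $p$, so $\per(X[i\dd j))=p$. Conversely, if $\per(X[i\dd j))=p\le m/2$, extend $X[i\dd j)$ maximally on both sides while keeping period $p$; the result $R$ has period $p$ and length $\ge 2p$, and since $X[i\dd j)\subseteq R$ forces $\per(R)\ge\per(X[i\dd j))=p$, we get $\per(R)=p$, so $R\in\mathcal R$ is a hosting run. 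Uniqueness is a gluing argument: two runs of period $p$ that both contain $X[i\dd j)$ overlap on $\ge p$ positions, hence their union also has period $p$, contradicting maximality unless they coincide; and any hosting run of period $\le m/2$ must, by the ``if'' analysis applied to $X[i\dd j)$, have shortest period exactly $p$. Thus each query reduces to retrieving this host run, or certifying that none exists.

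\emph{Constant-time retrieval, and the main obstacle.} It remains to answer, in $\Oh(1)$ time, the query ``return the run $X[\ell\dd r)$ with $\ell\le i\le j\le r$ and $\per(X[\ell\dd r))\le m/2$, if any.'' A natural implementation anchors the query at the midpoint $c:=i+\lfloor m/2\rfloor$: a hosting run of period $p\le m/2$ must contain $[c-p,c+p)\subseteq[i,j)$, so it passes through $c$, has period in a known factor-two range, and reaches at least $\lfloor m/2\rfloor$ positions to each side of $c$ — and by the uniqueness above at most one run qualifies. Grouping the runs by period scale and combining the $\Oh(|X|)$-size set $\mathcal R$ with standard predecessor/range structures (using periodicity-lemma bounds on how many runs of one scale can meet a fixed position), followed by one LCE query to pin down the candidate's exact shortest period, yields the $\Oh(1)$-time query after $\Oh(|X|)$ preprocessing; alternatively, one reduces the whole query to an internal pattern matching query on fragments of $X$, for which $\Oh(|X|)$-preprocessing, $\Oh(1)$-query algorithms are known~\cite{KRRW15}. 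The genuinely delicate steps — and the part I expect to be the main obstacle — are the linear-time construction of $\mathcal R$ (used above as a black box) and engineering the retrieval structure so that a constant number of probes always suffices; the Fine--Wilf periodicity lemma is precisely what underlies both the uniqueness of the host run and the bounded per-position work. Full details appear in~\cite{KRRW15,DBLP:journals/siamcomp/BannaiIINTT17}.
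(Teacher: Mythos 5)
The paper does not prove \cref{thm:2per}: it imports the result as a black box from~\cite{KRRW15,DBLP:journals/siamcomp/BannaiIINTT17}, so there is no in-paper argument to compare your proposal against. Your sketch follows the standard route from those references: characterize a periodic fragment $X[i\dd j)$ as one that sits inside a unique run whose shortest period is at most $(j-i)/2$, then answer queries by retrieving that run. The combinatorial reductions you give are sound---period inheritance via Fine--Wilf for the ``if'' direction (a length-$\ge 2p$ subfragment of a run with shortest period $p$ itself has shortest period exactly $p$, else the run's root would be imprimitive), the extension-to-a-run argument for the converse (periods pass to substrings, so $\per(R)\ge\per(X[i\dd j))$), and the $\ge p$-overlap gluing for uniqueness. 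What you yourself flag as ``the main obstacle,'' namely the linear-time computation of all runs and the $\Oh(1)$-time retrieval structure anchored at the fragment midpoint, is exactly where the technical weight of~\cite{KRRW15,DBLP:journals/siamcomp/BannaiIINTT17} lies; you leave it as a pointer rather than a proof, which matches how the paper itself treats the theorem but does mean your write-up is a sketch rather than a self-contained argument.
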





For a string $Y$ and an integer $m\ge 0$, we define the $m$th power of $Y$, denoted $Y^m$, as the concatenation of $m$ copies of~$Y$. 
A non-empty string $Y\in \Sigma^n$ is \emph{primitive} if it cannot be expressed as $Y=X^m$ for some string $X$ and integer $m>1$.
For a string $Y\in \Sigma^n$, we define a \emph{forward rotation} $\rot(Y)=Y[1] \cdots Y[n-1]Y[0]$.
In general, a \emph{cyclic rotation} $\rot^s(Y)$ with \emph{shift} $s\in \mathbb{Z}$ is obtained by iterating $\rot$ or the inverse operation $\rot^{-1}$.
A string $Y$ is primitive if and only if it is distinct from its non-trivial rotations, i.e., if $Y=\rot^s(Y)$ holds only when $s$ is a multiple of $n$.

\subsection{Edit-Distance Alignments and Weighted Edit Distance}
\label{sec:editdistance}
In this subsection, we discuss alignments and their weighted cost, which provide a formal way to describe a sequence of edits needed to transform a string $X$ into $Y$.
\begin{definition}
  A sequence $\A = (x_t,y_t)_{t=0}^m$ is an \emph{alignment}
  of a fragment $X[x\dd x')$ onto a fragment $Y[y\dd y')$ if $(x_0,y_0)=(x,y)$,
  $(x_m,y_m)=(x',y')$, and $(x_{t+1},y_{t+1})\in \{(x_t+1,y_t+1),\allowbreak ({x_t+1},y_t),(x_t,y_t+1)\}$ for $t\in [0\dd m)$.
  The set of all alignments of $X[x\dd x')$ onto $Y[y\dd y')$ is denoted with $\aa(X[x\dd x'),Y[y\dd y'))$.
\end{definition}
Given an alignment $\A = (x_t,y_t)_{t=0}^m \in \aa(X[x\dd x'),Y[y\dd y'))$, for every $t\in [0\dd m)$, we say that
\begin{itemize}
  \item  $\A$ \emph{deletes} $X[x_t]$ if $(x_{t+1},y_{t+1})=(x_t+1,y_t)$.
  \item $\A$ \emph{inserts} $Y[y_t]$ if $(x_{t+1},y_{t+1})=(x_t,y_t+1)$.
  \item $\A$ \emph{aligns} $X[x_t]$ to $Y[y_t]$, denoted by $X[x_t] \sim_\A Y[y_t]$, if $(x_{t+1},y_{t+1})=(x_t+1,y_t+1)$. 
  \item $\A$ \emph{matches} $X[x_t]$ with $Y[y_t]$, denoted by $X[x_t] \simeq_\A Y[y_t]$, if $X[x_t]\sim_\A Y[y_t]$ and $X[x_t]=Y[y_t]$.
  \item $\A$  \emph{substitutes} $X[x_t]$ for $Y[y_t]$ if $X[x_t]\sim_\A Y[y_t]$ but $X[x_t]\ne Y[y_t]$.
\end{itemize}
Insertions, deletions, and substitutions are jointly called \emph{(character) edits}.

\begin{example}
For an example of an alignment, consider strings $X =\a\b\c$ and $Y= \b\d$.  One optimal alignment $\A$ might be $\{(0, 0), (1, 0), (2, 1), (3, 2)\}$.  The pairs $(0, 0), (1, 0)$ represent a deletion of $X[0] =\a$ by $\A$.  The pairs $(1, 0), (2, 1), (3, 2)$ signify that $\A$ aligns $X[1 \dd 2] \sim_\A Y[0 \dd 1]$, i.e. $\b\c \sim_\A \b\d$. Moreover, $X[1]$ is matched to $Y[0]$ since $X[1] = Y[0] = \b$ while $X[2]$ is substituted for $Y[1]$ since $X[2] = \c \ne \d = Y[1]$.
\end{example}

\newcommand{\eSigma}{\bar{\Sigma}}
For an alphabet $\Sigma$, we define $\eSigma := \Sigma\cup\{\eps\}$, where $\eps$ is the empty string over $\Sigma$.
We say that a function $w : \eSigma\times \eSigma \to \mathbb{R}_{\ge 0}\cup \{\infty\}$ is a \emph{weight function} if $w(a,a)=0$ holds for all $a\in \eSigma$. 
The \emph{cost} of an alignment $\A\in \aa(X[x\dd x'),Y[y\dd y'))$ with respect to a weight function $w$, denoted $\ed_\A^w(X[x\dd x'),Y[y\dd y'))$, is defined as the total cost of edits that $\A$ performs, where:
\begin{itemize}
    \item the cost of deleting $X[x]$ is $w(X[x],\eps)$,
    \item the cost of inserting $Y[y]$ is $w(\eps, Y[y])$,
    \item the cost of substituting $X[x]$ for $Y[y]$ is $w(X[x],Y[y])$.
\end{itemize}
The \emph{width} of an alignment $(x_t,y_t)_{t=0}^m\in \aa(X[x\dd x'),Y[y\dd y'))$ is defined as $\max_{t=0}^m |x_t-y_t|$.

We usually consider alignments of the entire string $X[0\dd |X|)$ onto the entire string $Y[0\dd |Y|)$,
and we denote the set of all such alignment with $\aa(X,Y)=\aa(X[0\dd |X|),Y[0\dd |Y|))$.
The weighted edit distance of strings $X,Y\in \Sigma^*$ with respect to a weight function $w$ is defined as $\ed^w(X,Y)=\min_{\A \in \aa(X,Y)} \ed^w_{\A}(X,Y)$. For $k\in \mathbb{R}_{\ge 0}$, we also denote
\[\ed^w_{\le k}(X,Y) = \begin{cases}
\ed^w(X,Y) & \text{if }\ed^w(X,Y)\le k,\\
\infty & \text{otherwise.}
\end{cases}\]

In the literature, the (weighted) edit distance of $X$ and $Y$ is sometimes defined as the minimum cost of a sequence of edits that transform $X$ into $Y$.
As shown in the following fact (whose technical proof is deferred to \cref{app:proofs}), this sequence-based view is equivalent to our alignment-based view provided that $w$ is a \emph{quasimetric}, that is, it satisfies the triangle inequality $w(a,b)+w(b,c)\ge w(a,c)$ for every $a,b,c\in \eSigma$.
The assumption of $w$ being quasimetric can be made without loss of generality in the sequence-based view (a single character can be edited multiple times, so one can replace $w$ by its distance closure without affecting the edit distances).
Our alignment-based view, on the other hand, is more general and captures weighted edit distances violating the triangle inequality.

\newcommand{\hx}{\hat{x}}
\newcommand{\hy}{\hat{y}}
\newcommand{\hm}{\hat{m}}
\newcommand{\hz}{\hat{z}}

\begin{restatable}{fact}{edtri}\label{fct:edtri}
If $w$ is a quasimetric on $\eSigma$, then $\ed^w$ is a quasimetric on $\Sigma^*$. In this case,
$\ed^w(X,Y)$ can be equivalently defined as the minimum cost of a sequence of edits transforming $X$ into $Y$.
\end{restatable}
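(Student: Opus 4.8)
The plan is to first prove that $\ed^w$ satisfies the triangle inequality, and then derive the equivalence with the sequence-based definition from it.

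\emph{Step 1 (quasimetric).} Nonnegativity is immediate, and reflexivity $\ed^w(X,X)=0$ is witnessed by the diagonal alignment $(t,t)_{t=0}^{|X|}\in\aa(X,X)$, which only matches characters, each at cost $w(a,a)=0$. For the triangle inequality I would prove a \emph{composition lemma}: for all strings $X,Y,Z$ and all $\A\in\aa(X,Y)$, $\B\in\aa(Y,Z)$ there is $\C\in\aa(X,Z)$ with $\ed_{\C}^w(X,Z)\le \ed_{\A}^w(X,Y)+\ed_{\B}^w(Y,Z)$; applying this to optimal $\A,\B$ gives $\ed^w(X,Z)\le\ed^w(X,Y)+\ed^w(Y,Z)$. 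I would prove the composition lemma by induction on the total number of steps of $\A$ and $\B$, examining their first moves. If $\A$ begins by deleting $X[0]$ (or, failing that, if $\B$ begins by inserting $Z[0]$), strip that move, recurse on $X[1\dd|X|)$ and $\B$ (resp.\ $X$ and $Z[1\dd|Z|)$), and prepend the same deletion (resp.\ insertion) to the resulting $\C$; the cost changes by exactly the cost of the stripped move. In the remaining cases both first moves involve $Y[0]$ (which forces $|Y|\ge1$, guaranteed precisely because the previous two cases were excluded), giving four sub-cases according to whether $\A$ inserts $Y[0]$ or aligns $X[0]$ to $Y[0]$, and whether $\B$ deletes $Y[0]$ or aligns $Y[0]$ to $Z[0]$; in each, strip both first moves, recurse, and prepend to $\C$ nothing, an insertion of $Z[0]$, a deletion of $X[0]$, or an alignment of $X[0]$ to $Z[0]$, as appropriate. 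This is exactly where the quasimetric hypothesis on $w$ enters: e.g.\ when $\A$ aligns $X[0]$ to $Y[0]$ and $\B$ aligns $Y[0]$ to $Z[0]$, the prepended alignment of $X[0]$ to $Z[0]$ costs $w(X[0],Z[0])\le w(X[0],Y[0])+w(Y[0],Z[0])$, at most the combined cost of the two stripped moves; the other two sub-cases use $w(X[0],\eps)\le w(X[0],Y[0])+w(Y[0],\eps)$ and $w(\eps,Z[0])\le w(\eps,Y[0])+w(Y[0],Z[0])$, while the fourth needs no inequality. The base case has both alignments empty, so $X=Y=Z=\eps$ and $\C$ is empty.

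\emph{Step 2 (equivalence).} Let $e(X,Y)$ denote the minimum cost of a sequence of edits transforming $X$ into $Y$. One inequality, $e(X,Y)\le\ed^w(X,Y)$, holds for an arbitrary weight function: any alignment $\A\in\aa(X,Y)$ can be executed as a sequence of edits by performing its prescribed deletions, insertions, and substitutions (matched pairs require nothing), of total cost exactly $\ed_{\A}^w(X,Y)$; minimizing over $\A$ gives the bound. For the reverse inequality I use Step 1 together with the observation that a single edit turning a string $U$ into a string $V$ is realized by an alignment of $U$ onto $V$ of cost equal to that edit's cost --- the diagonal alignment for a substitution, and the obvious ``matches, then one deletion (resp.\ insertion), then matches'' alignment for a deletion (resp.\ insertion). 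Hence, given any edit sequence $X=X_0\to X_1\to\cdots\to X_\ell=Y$, repeated application of the triangle inequality from Step 1 yields $\ed^w(X,Y)\le\sum_{t=0}^{\ell-1}\ed^w(X_t,X_{t+1})\le\sum_{t=0}^{\ell-1}c_t$, where $c_t$ is the cost of the $t$-th edit; minimizing over edit sequences gives $\ed^w(X,Y)\le e(X,Y)$. Combining the two inequalities yields $\ed^w(X,Y)=e(X,Y)$.

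The routine parts are the index bookkeeping when executing an alignment as an edit sequence and the per-sub-case cost arithmetic in Step 1. The main obstacle is making the case analysis in the composition lemma exhaustive with a well-founded induction --- in particular, verifying that after giving priority to a leading deletion of $X[0]$ and a leading insertion of $Z[0]$, the first moves of $\A$ and $\B$ necessarily both act on $Y[0]$, so that the four sub-cases cover all remaining possibilities, and tracking that each recursive call strictly decreases the total number of steps.
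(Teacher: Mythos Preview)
Your proof is correct. Step~1 is essentially the paper's argument: the paper also proves a composition lemma by induction on the total number of alignment steps, with the same case split (deletion on the $X$ side, insertion on the $Z$ side, then four sub-cases on how $\A$ and $\B$ handle a character of $Y$), only examining last moves instead of first moves---a purely cosmetic mirror.

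Step~2 is where you diverge. For the inequality $\ed^w(X,Y)\le e(X,Y)$, the paper gives a direct inductive construction: given an edit sequence, it inspects what happens to the last character of $X$ and builds an alignment recursively, with its own four-case analysis invoking the triangle inequality on $w$. You instead observe that a single edit $U\to V$ has $\ed^w(U,V)$ at most its cost, and then telescope using the triangle inequality for $\ed^w$ already established in Step~1. Your route is shorter and conceptually cleaner, since it avoids a second induction and reuses Step~1 as a black box; the paper's route is more self-contained in that it never appeals back to the composition lemma, but at the price of repeating a similar case analysis.
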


Although our algorithm for strings works for any weight function, its tree and Dyck counterparts
assume that $w$ is a quasimetric. Specifically, they rely on the following fact proved in \cref{app:proofs}.

\begin{restatable}{fact}{fctquasi}\label{fct:quasi}
    Consider a string $X$ and its fragment $X[i\dd j)$.
    Then, for every quasimetric $w$, we have $\ed^w(X,X[i\dd j))=\ed^w(X[0\dd i)\cdot X[j\dd |X|),\varepsilon)$.
\end{restatable}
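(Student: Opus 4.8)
The plan is to prove the two inequalities separately. Throughout, write $A=X[0\dd i)$, $B=X[i\dd j)$, and $C=X[j\dd|X|)$, so that $X=ABC$. I would first record the trivial observation that the only alignment of a string $S$ onto $\eps$ deletes every character, so $\ed^w(S,\eps)=\sum_{\ell}w(S[\ell],\eps)$; in particular this quantity is additive under concatenation, and $\ed^w(AC,\eps)=\ed^w(A,\eps)+\ed^w(C,\eps)$.

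For the inequality $\ed^w(ABC,B)\le\ed^w(AC,\eps)$, I would exhibit an explicit alignment of $ABC$ onto $B$: concatenate the (unique) alignment of $A$ onto $\eps$, the identity alignment of $B$ onto $B$ (which matches every character at cost $0$, since $B$ is literally a fragment of $X$), and the alignment of $C$ onto $\eps$. Its cost is $\ed^w(A,\eps)+0+\ed^w(C,\eps)=\ed^w(AC,\eps)$; this direction needs neither the triangle inequality nor induction.

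For the reverse inequality $\ed^w(ABC,B)\ge\ed^w(AC,\eps)$, I would induct on $|X|$. If $i=j$ or $[i\dd j)=[0\dd|X|)$ both sides coincide by inspection (in the second case both are $0$), so assume $|B|<|X|$. Fix an optimal $\A=(x_t,y_t)_{t=0}^m\in\aa(X,B)$ and set $s=\min\{t:x_t=i\}$ and $s'=\min\{t:x_t=j\}$; monotonicity of $(x_t)_t$ forces $s<s'$. Restricting $\A$ to the index ranges $[0\dd s]$, $[s\dd s']$, $[s'\dd m]$ produces alignments $\A_1\in\aa(A,B_1)$, $\A_2\in\aa(B,B_2)$, and $\A_3\in\aa(C,B_3)$, where $B=B_1B_2B_3$ with $B_1=B[0\dd y_s)$, $B_2=B[y_s\dd y_{s'})$, $B_3=B[y_{s'}\dd|B|)$. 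Since every edit of $\A$ is charged to a single step and these three ranges partition $[0\dd m)$, the cost splits as $\ed^w(X,B)=\ed^w_{\A_1}(A,B_1)+\ed^w_{\A_2}(B,B_2)+\ed^w_{\A_3}(C,B_3)\ge\ed^w(A,B_1)+\ed^w(B,B_2)+\ed^w(C,B_3)$. Applying the inductive hypothesis to the shorter string $B$ with the fragment $B_2$ gives $\ed^w(B,B_2)=\ed^w(B_1,\eps)+\ed^w(B_3,\eps)$, and the triangle inequality for the quasimetric $\ed^w$ (\cref{fct:edtri}) gives $\ed^w(A,B_1)+\ed^w(B_1,\eps)\ge\ed^w(A,\eps)$ and $\ed^w(C,B_3)+\ed^w(B_3,\eps)\ge\ed^w(C,\eps)$. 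Adding the three bounds yields $\ed^w(X,B)\ge\ed^w(A,\eps)+\ed^w(C,\eps)=\ed^w(AC,\eps)$, closing the induction.

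I expect the main obstacle to be choosing an induction parameter that makes the recursion well-founded: the tempting choice --- the number $|A|+|C|$ of removed characters --- fails, since the recursive call removes $|B|-|B_2|$ characters from $B$ and this can exceed $|A|+|C|$; inducting on $|X|$, which strictly decreases to $|B|$, repairs this. Everything else is routine bookkeeping: verifying that the three restrictions of $\A$ are genuine alignments of the indicated fragments whose costs partition $\ed^w_\A(X,B)$, and checking the degenerate cases so the recursion bottoms out. Finally, the triangle-inequality hypothesis is genuinely necessary: e.g.\ for $X$ the two-character string \texttt{ab} with fragment $X[0\dd 1)$ equal to \texttt{a}, a weight function with $w(\texttt{b},\eps)>w(\texttt{a},\eps)+w(\texttt{b},\texttt{a})$ makes the left-hand side strictly smaller than the right.
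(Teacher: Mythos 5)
Your proof is correct but follows a genuinely different route from the paper's for the nontrivial inequality $\ed^w(X,X[i\dd j))\ge\ed^w(X[0\dd i)\cdot X[j\dd|X|),\eps)$. The paper argues directly, without induction: fixing an arbitrary alignment $\B\in\aa(X,X[i\dd j))$, it traces from each position $u\in[0\dd i)\cup[j\dd|X|)$ a chain $u\to\B(u)+i\to\B(\B(u)+i)+i\to\cdots$ of aligned positions until a deleted one is reached, telescopes the character-level triangle inequality $w(a,\eps)\le w(a,b)+w(b,\eps)$ along each chain, and sums over $u$ using the fact that the chains are pairwise disjoint. You instead split an optimal alignment at the boundary indices $i$ and $j$ into three sub-alignments, recurse on the middle block using the statement itself (the induction on $|X|$ is well-founded because $|B|<|X|$ in the inductive case), and absorb the outer blocks via the string-level triangle inequality of \cref{fct:edtri}. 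Both are valid; the paper's chain argument is more self-contained and only uses the quasimetric property of $w$ at the character level, though it is somewhat terse about why the chains terminate (an order-preservation argument is needed to rule out cycles and fixed points being reached from outside $[i\dd j)$). Your version trades that subtlety for an explicit induction plus an appeal to \cref{fct:edtri}, a heavier ingredient that is itself proved by induction. Your remark about the pitfall of inducting on $|A|+|C|$ rather than $|X|$, and the counterexample showing the triangle inequality cannot be dropped, are both correct and worth retaining.
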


While our main results are on the weighted version of    edit distance, our algorithm relies on unweighted edit distance procedures as well. If $w$ is the discrete metric on $\eSigma$ (that is, for every $a,b\in \eSigma$, we have $w(a,b)=0$ if $a=b$ and $w(a,b)=1$ otherwise),
then we drop the superscript $w$ in $\ed^w$ and $\ed^w_{\A}$. This yields the unit-cost edit distance (also known as the unweighted edit distance or the Levenshtein distance).
We consider weight function $w$ to be \emph{normalized} that is $w(a,b)\ge 1$ holds for all $a,b\in \eSigma$ with $a\ne b$.
In this case, $\ed^w_{\A}(X,Y)\ge \ed_\A(X,Y)$ holds for all strings $X,Y\in \Sigma^*$ and alignments $\A \in \aa(X,Y)$.

Given an alignment $\A= (x_t,y_t)_{t=0}^m\in \aa(X,Y)$, for every $\ell,r\in [0\dd m]$ with $\ell\le r$,
we say that $\A$ \emph{aligns} $X[x_\ell\dd x_{r})$ to $Y[y_\ell\dd y_{r})$, denoted $X[x_\ell\dd x_{r})\sim_\A Y[y_{\ell}\dd y_{r})$. 
In this case, for any weight function $w$, we write $\ed^w_{\A}(X[x_\ell\dd x_{r}), Y[y_\ell\dd y_{r}))$
to denote the cost of the induced alignment of $X[x_\ell\dd x_{r})$ onto $Y[y_\ell\dd y_{r})$.
If $\ed^w_{\A}(X[x_\ell\dd x_{r}), Y[y_\ell\dd y_{r}))=0$, we say that $\A$ \emph{matches} $X[x_\ell\dd x_{r})$ with $Y[y_\ell\dd y_{r})$, denoted $X[x_\ell\dd x_{r})\simeq_\A Y[y_{\ell}\dd y_{r})$. 

\begin{fact}\label{fct:str_decomp}
    Consider $k\in \Zz$, strings $X,Y$, and an alignment $\A\in \aa(X,Y)$ of cost $\ed_\A(X,Y)\le k$.
    Then, the string $X$ can be partitioned into at most $k$ individual characters (that $\A$ deletes or substitutes)
    and at most $k+1$ fragments that $\A$ matches perfectly to fragments of $Y$.
\end{fact}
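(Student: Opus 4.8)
The plan is to use the fact that, for the discrete metric $w$, the cost of an alignment equals the number of its non-matching steps. Write $\A=(x_t,y_t)_{t=0}^m$. Each step $t\in[0\dd m)$ is a deletion, an insertion, a substitution, or a match; by the definition of $w$ (namely $w(a,\eps)=w(\eps,b)=1$ for $a,b\in\Sigma$ and $w(a,b)=1$ for distinct $a,b$), the first three kinds each cost exactly $1$, whereas matching steps cost $0$. Hence the hypothesis $\ed_\A(X,Y)\le k$ says precisely that at most $k$ steps of $\A$ are deletions, insertions, or substitutions.

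Next I would classify the positions of $X$ by the step that consumes them. Every $x\in[0\dd|X|)$ is consumed by exactly one step $t$ with $x_t=x$ and $x_{t+1}=x_t+1$; this step is a deletion, a substitution, or a match, since an insertion leaves the $X$-coordinate unchanged and consumes no position of $X$. The positions consumed by deletion or substitution steps number at most $k$, and I take each of them as a singleton piece; these are the at most $k$ ``individual characters that $\A$ deletes or substitutes.'' For the rest, I group maximal blocks of consecutive matching steps: a maximal block $t,t+1,\dots,t+\ell-1$ of matching steps consumes exactly the fragment $X[x_t\dd x_t+\ell)$ and, since each matching step advances both coordinates by one and aligns equal characters, matches it perfectly to $Y[y_t\dd y_t+\ell)$, i.e.\ $X[x_t\dd x_t+\ell)\simeq_\A Y[y_t\dd y_t+\ell)$. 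Maximal blocks of matching steps are separated by at least one non-matching step, and there are at most $k$ non-matching steps, so there are at most $k+1$ such blocks, hence at most $k+1$ matched fragments. These singletons and fragments are pairwise disjoint and cover $[0\dd|X|)$, giving the claimed partition.

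The argument is essentially routine; the one point that needs care is that an insertion may occur between two characters of $X$ that $\A$ matches, forcing them into different matched fragments even though they are adjacent in $X$. This is exactly why the count is taken over maximal runs of matching \emph{steps} rather than over matched \emph{positions} of $X$: organizing the bound around steps absorbs interspersed insertions automatically, and the estimate ``at most $k$ non-matching steps $\Rightarrow$ at most $k+1$ matched fragments'' falls out immediately.
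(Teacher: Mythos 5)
Your proof is correct and takes essentially the same approach as the paper: both identify the at most $k$ edit steps of $\A$, take the positions of $X$ consumed by deletions or substitutions as singleton pieces, and take the maximal runs of matching steps in between as the perfectly matched fragments, yielding at most $k+1$ of the latter.
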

        
\begin{proof}
Let $\A = (x_t,y_t)_{t=0}^m$ and let $t_1<\cdots<t_e$ be the indices in $[0\dd m)$ corresponding to edits in $\A$.
Then, the maximal fragments that $\A$ matches perfectly are $X[0\dd x_{t_1})$, $X[x_{t_i+1}\dd x_{t_{i+1}})$ for $i\in [1\dd e)$,
and $X[x_{t_e+1}\dd |X|)$.  Moreover, $\A$ deletes or substitutes $X[x_{t_i}]$ for every $i\in [1\dd e]$ such that $x_{t_{i}+1}>x_{t_i}$.
Each edit contributes one unit to the cost of $\A$, so the decomposition contains at most $e\le k$ edited characters and $e+1\le k+1$ fragments matched perfectly.
\end{proof}
\subsection{Combinatorial Foundations}\label{subsec:str_comb}

Before giving our algorithms for weighted string edit distance, we discuss edit distance equivalent substrings, one of our main technical contributions. 

\begin{definition}\label{def:strequi}
    For $k\in \Zz$ and a weight function $w$, strings $P,P'$ are called \emph{$\ed_{\le k}^w$-equivalent}
    if 
    \[\ed_{\le k}^w(X,Y) = \ed_{\le k}^w(X[0 \dd p_X) \cdot P' \cdot X[p_X+|P|\dd |X|),Y[0 \dd p_Y) \cdot P' \cdot Y[p_Y+|P|\dd |Y|))\]
    holds for all strings $X$ and $Y$ in which $P$ occurs at positions $p_X$ and $p_Y$, respectively, satisfying $|p_X-p_Y|\le k$. We say that such occurrences of $P$ in $X$ and $Y$ are \emph{$k$-synchronized occurrences}.
\end{definition}

First, we prove that changing the power of a periodic substring does not change the edit distance cost of any synchronized occurrences of that substring. Second, we prove that we only need to consider the small prefixes and suffixes of substrings when calculating the edit distance. In both cases, we are able to show that any optimal alignment must align such large substrings (first periodic and then non-periodic) that have synchronized occurrences in edit distance instances, and so, we do not have to worry about most of these large substrings when calculating edit distance. See \cref{fig:per_reduc} and \cref{fig:ed_equiv} for examples of these two edit distance equivalent steps.

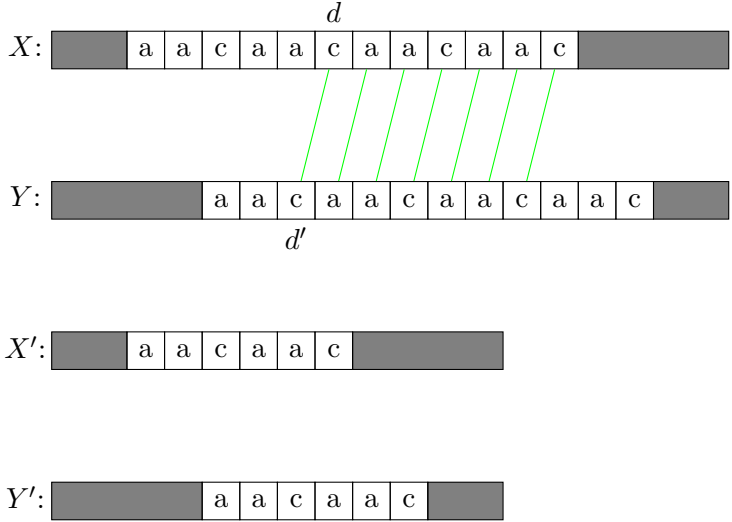
\begin{figure}
\begin{tikzpicture}
    \draw [fill=gray] (0, 2) rectangle (1, 2.5);
    \draw [fill=gray] (7, 2) rectangle (9, 2.5);    
    
    \node[draw,align=center, minimum size = .5cm] at  (1.25, 2.25) (top1) {a};
    \node[draw,align=center, minimum size = .5cm] at  (1.75, 2.25) (top2) {a};
    \node[draw,align=center, minimum size = .5cm] at  (2.25, 2.25) (top3) {c};
    \node[draw,align=center, minimum size = .5cm] at  (2.75, 2.25) (top4) {a};
    \node[draw,align=center, minimum size = .5cm] at  (3.25, 2.25) (top5) {a};
    \node[draw,align=center, minimum size = .5cm] at  (3.75, 2.25) (top6) {c};
    \node[draw,align=center, minimum size = .5cm] at  (4.25, 2.25) (top7) {a};
    \node[draw,align=center, minimum size = .5cm] at  (4.75, 2.25) (top8) {a};
    \node[draw,align=center, minimum size = .5cm] at  (5.25, 2.25) (top9) {c};
    \node[draw,align=center, minimum size = .5cm] at  (5.75, 2.25) (top10){a};
    \node[draw,align=center, minimum size = .5cm] at  (6.25, 2.25) (top11) {a};
    \node[draw,align=center, minimum size = .5cm] at  (6.75, 2.25) (top12) {c};

    \draw [fill=gray] (0, 0) rectangle (2, .5);
    \draw [fill=gray] (8, 0) rectangle (9, .5);    
    
    \node[draw,align=center, minimum size = .5cm] at  (2.25, .25) (bottom1) {a};
    \node[draw,align=center, minimum size = .5cm] at  (2.75, .25) (bottom2) {a};
    \node[draw,align=center, minimum size = .5cm] at  (3.25, .25) (bottom3) {c};
    \node[draw,align=center, minimum size = .5cm] at  (3.75, .25) (bottom4) {a};
    \node[draw,align=center, minimum size = .5cm] at  (4.25, .25) (bottom5) {a};
    \node[draw,align=center, minimum size = .5cm] at  (4.75, .25) (bottom6) {c};
    \node[draw,align=center, minimum size = .5cm] at  (5.25, .25) (bottom7) {a};
    \node[draw,align=center, minimum size = .5cm] at  (5.75, .25) (bottom8) {a};
    \node[draw,align=center, minimum size = .5cm] at  (6.25, .25) (bottom9) {c};
    \node[draw,align=center, minimum size = .5cm] at  (6.75, .25) (bottom10) {a};
    \node[draw,align=center, minimum size = .5cm] at  (7.25, .25) (bottom11) {a};
    \node[draw,align=center, minimum size = .5cm] at  (7.75, .25) (bottom12) {c};
    
    \draw [color=green] (top6) -- (bottom3);
    \draw [color=green] (top7) -- (bottom4);
    \draw [color=green] (top8) -- (bottom5);
    \draw [color=green] (top9) -- (bottom6);
    \draw [color=green] (top10) -- (bottom7);
    \draw [color=green] (top11) -- (bottom8);
    \draw [color=green] (top12) -- (bottom9);
    
    \draw [fill=gray] (0, -2) rectangle (1, -1.5);
    \draw [fill=gray] (4, -2) rectangle (6, -1.5);    
    
    \node[draw,align=center, minimum size = .5cm] at  (1.25, -1.75) (top1-2) {a};
    \node[draw,align=center, minimum size = .5cm] at  (1.75, -1.75) (top2-2) {a};
    \node[draw,align=center, minimum size = .5cm] at  (2.25, -1.75) (top3-2) {c};
    \node[draw,align=center, minimum size = .5cm] at  (2.75, -1.75) (top4-2) {a};
    \node[draw,align=center, minimum size = .5cm] at  (3.25, -1.75) (top5-2) {a};
    \node[draw,align=center, minimum size = .5cm] at  (3.75, -1.75) (top6-2) {c};

    \draw [fill=gray] (0, -4) rectangle (2, -3.5);
    \draw [fill=gray] (5, -4) rectangle (6, -3.5);    
    
    \node[draw,align=center, minimum size = .5cm] at  (2.25, -3.75) (bottom1-2) {a};
    \node[draw,align=center, minimum size = .5cm] at  (2.75, -3.75) (bottom2-2) {a};
    \node[draw,align=center, minimum size = .5cm] at  (3.25, -3.75) (bottom3-2) {c};
    \node[draw,align=center, minimum size = .5cm] at  (3.75, -3.75) (bottom4-2) {a};
    \node[draw,align=center, minimum size = .5cm] at  (4.25, -3.75) (bottom5-2) {a};
    \node[draw,align=center, minimum size = .5cm] at  (4.75, -3.75) (bottom6-2) {c};
    
    
    \node at (-.35,2.3) {$X$:};
    \node at (-.35, .3) {$Y$:};
    \node at (-.35,-1.7) {$X'$:};
    \node at (-.35,-3.7) {$Y'$:};
    
    \node at (3.75, 2.75) {$d$};
    \node at (3.25, -.25) {$d'$};
\end{tikzpicture}
\caption{Periodicity reduction in $X$ and $Y$ with an optimal alignment depicted by lines connecting characters of the two strings. At indices $d, d'$ in $X$ and $Y$ respectively, the periodic substring is fully aligned (depicted by green lines), and so, we may reduce the power of these periodic substrings to construct $X'$ and $Y'$ with $\ed(X, Y) = \ed(X', Y')$.}\label{fig:per_reduc}
\end{figure}

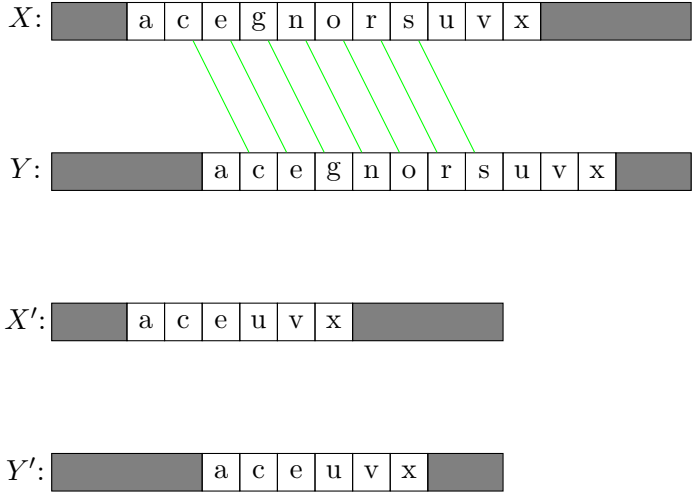
\begin{figure}
\begin{tikzpicture}
    \draw [fill=gray] (0, 2) rectangle (1, 2.5);
    \draw [fill=gray] (6.5, 2) rectangle (8.5, 2.5);    
    
    \node[draw,align=center, minimum size = .5cm] at  (1.25, 2.25) (top1) {a};
    \node[draw,align=center, minimum size = .5cm] at  (1.75, 2.25) (top2) {c};
    \node[draw,align=center, minimum size = .5cm] at  (2.25, 2.25) (top3) {e};
    \node[draw,align=center, minimum size = .5cm] at  (2.75, 2.25) (top4) {g};
    \node[draw,align=center, minimum size = .5cm] at  (3.25, 2.25) (top5) {n};
    \node[draw,align=center, minimum size = .5cm] at  (3.75, 2.25) (top6) {o};
    \node[draw,align=center, minimum size = .5cm] at  (4.25, 2.25) (top7) {r};
    \node[draw,align=center, minimum size = .5cm] at  (4.75, 2.25) (top8) {s};
    \node[draw,align=center, minimum size = .5cm] at  (5.25, 2.25) (top9) {u};
    \node[draw,align=center, minimum size = .5cm] at  (5.75, 2.25) (top10){v};
    \node[draw,align=center, minimum size = .5cm] at  (6.25, 2.25) (top11) {x};

    \draw [fill=gray] (0, 0) rectangle (2, .5);
    \draw [fill=gray] (7.5, 0) rectangle (8.5, .5);    
    
    \node[draw,align=center, minimum size = .5cm] at  (2.25, .25) (bottom1) {a};
    \node[draw,align=center, minimum size = .5cm] at  (2.75, .25) (bottom2) {c};
    \node[draw,align=center, minimum size = .5cm] at  (3.25, .25) (bottom3) {e};
    \node[draw,align=center, minimum size = .5cm] at  (3.75, .25) (bottom4) {g};
    \node[draw,align=center, minimum size = .5cm] at  (4.25, .25) (bottom5) {n};
    \node[draw,align=center, minimum size = .5cm] at  (4.75, .25) (bottom6) {o};
    \node[draw,align=center, minimum size = .5cm] at  (5.25, .25) (bottom7) {r};
    \node[draw,align=center, minimum size = .5cm] at  (5.75, .25) (bottom8) {s};
    \node[draw,align=center, minimum size = .5cm] at  (6.25, .25) (bottom9) {u};
    \node[draw,align=center, minimum size = .5cm] at  (6.75, .25) (bottom10) {v};
    \node[draw,align=center, minimum size = .5cm] at  (7.25, .25) (bottom11) {x};
    
    \draw [color=green] (top2) -- (bottom2);
    \draw [color=green] (top3) -- (bottom3);
    \draw [color=green] (top4) -- (bottom4);
    \draw [color=green] (top5) -- (bottom5);
    \draw [color=green] (top6) -- (bottom6);
    \draw [color=green] (top7) -- (bottom7);
    \draw [color=green] (top8) -- (bottom8);
    
    \draw [fill=gray] (0, -2) rectangle (1, -1.5);
    \draw [fill=gray] (4, -2) rectangle (6, -1.5);    
    
    \node[draw,align=center, minimum size = .5cm] at  (1.25, -1.75) (top1-2) {a};
    \node[draw,align=center, minimum size = .5cm] at  (1.75, -1.75) (top2-2) {c};
    \node[draw,align=center, minimum size = .5cm] at  (2.25, -1.75) (top3-2) {e};
    \node[draw,align=center, minimum size = .5cm] at  (2.75, -1.75) (top4-2) {u};
    \node[draw,align=center, minimum size = .5cm] at  (3.25, -1.75) (top5-2) {v};
    \node[draw,align=center, minimum size = .5cm] at  (3.75, -1.75) (top6-2) {x};

    \draw [fill=gray] (0, -4) rectangle (2, -3.5);
    \draw [fill=gray] (5, -4) rectangle (6, -3.5);    
    
    \node[draw,align=center, minimum size = .5cm] at  (2.25, -3.75) (bottom1-2) {a};
    \node[draw,align=center, minimum size = .5cm] at  (2.75, -3.75) (bottom2-2) {c};
    \node[draw,align=center, minimum size = .5cm] at  (3.25, -3.75) (bottom3-2) {e};
    \node[draw,align=center, minimum size = .5cm] at  (3.75, -3.75) (bottom4-2) {u};
    \node[draw,align=center, minimum size = .5cm] at  (4.25, -3.75) (bottom5-2) {v};
    \node[draw,align=center, minimum size = .5cm] at  (4.75, -3.75) (bottom6-2) {x};
    
    
    \node at (-.35,2.3) {$X$:};
    \node at (-.35, .3) {$Y$:};
    \node at (-.35,-1.7) {$X'$:};
    \node at (-.35,-3.7) {$Y'$:};

\end{tikzpicture}
\caption{For any synchronized occurrences of a substring $P$ that avoids $k$-periodicity, any optimal alignment (depicted by lines connecting characters of the two strings) must match most of the inner characters of $P$ (see green lines). We can construct strings $X', Y'$ removing these matched characters such that $\ed(X, Y) = \ed(X', Y')$.}\label{fig:ed_equiv}
\end{figure}

\begin{lemma}\label{lem:periodic_reduction}
    Let $k\in \Zp$, let $Q$ be a string, and let $e,e'\in \mathbb{Z}_{\ge 4k}$.
    Then, $Q^e$ and $Q^{e'}$ are $\ed_{\le k}^w$-equivalent for every weight function~$w$.
 \end{lemma}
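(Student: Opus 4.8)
The plan is to reduce the statement to a single combinatorial fact about optimal alignments and then obtain the lemma from it by alignment surgery. Unfolding \cref{def:strequi}, we must show that for all strings $X,Y$ in which $Q^e$ occurs at $k$-synchronized positions $p_X,p_Y$, the value $\ed^w_{\le k}(X,Y)$ equals $\ed^w_{\le k}(X',Y')$, where $X':=X[0\dd p_X)\cdot Q^{e'}\cdot X[p_X+e|Q|\dd |X|)$ and $Y'$ is defined analogously; note that $Q^{e'}$ then occurs in $X'$ and $Y'$ at the same positions $p_X,p_Y$. The combinatorial fact I would isolate is the following \emph{copy-matching} claim: whenever $Q^f$ occurs in strings $U,V$ at $k$-synchronized positions $p_U,p_V$ with $f\ge 4k$, every alignment $\A\in\aa(U,V)$ with $\ed^w_\A(U,V)\le k$ matches some copy $U[p_U+i|Q|\dd p_U+(i+1)|Q|)$ perfectly with some copy $V[p_V+i'|Q|\dd p_V+(i'+1)|Q|)$, for suitable $i,i'\in[0\dd f)$.

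Granting copy-matching, I would first establish a \emph{single-step identity}: if $Q^f$ occurs at $k$-synchronized positions in $U,V$ with $f\ge 4k$, then $\ed^w_{\le k}(U,V)=\ed^w_{\le k}(U_+,V_+)$, where $U_+:=U[0\dd p_U)\cdot Q^{f+1}\cdot U[p_U+f|Q|\dd |U|)$ and $V_+$ is analogous. Suppose first that $\ed^w(U,V)\le k$; to get $\ed^w(U_+,V_+)\le\ed^w(U,V)$, take an optimal $\A\in\aa(U,V)$, apply copy-matching to obtain $i,i'$, and splice into $\A$ — at the alignment position $(p_U+(i+1)|Q|,\,p_V+(i'+1)|Q|)$, which $\A$ visits because it matches copy $i$ of $U$ onto copy $i'$ of $V$ — a block of $|Q|$ diagonal match steps pairing a fresh copy of $Q$ inserted in $U$ right after copy $i$ with a fresh copy of $Q$ inserted in $V$ right after copy $i'$. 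Since $Q^i\cdot Q\cdot Q^{f-i}=Q^{f+1}$, the modified strings are exactly $U_+$ and $V_+$, and the modified alignment has the same cost. If instead $\ed^w(U_+,V_+)\le k$, apply the same surgery in reverse to an optimal alignment of $U_+$ onto $V_+$ (legitimate since $Q^{f+1}$ occurs at $k$-synchronized positions with $f+1\ge 4k$): copy-matching produces a perfectly matched copy pair, and deleting those $|Q|$ positions from $U_+$ and $V_+$ and gluing the alignment across the gap recovers $U,V$ with an alignment of cost $\le\ed^w(U_+,V_+)$, so $\ed^w(U,V)\le\ed^w(U_+,V_+)$. Applying both implications, whenever at least one of the two distances is $\le k$ then both are and they coincide, and when both exceed $k$ both $\ed^w_{\le k}$-values equal $\infty$; so the single-step identity holds. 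The lemma then follows by chaining this identity $|e-e'|$ times — raising the exponent from $e$ to $e'$ if $e<e'$ and lowering it from $e$ to $e'$ if $e>e'$ — always at exponents $\ge\min(e,e')\ge 4k$ and keeping the powers of $Q$ anchored at $p_X,p_Y$; transitivity of equality gives $\ed^w_{\le k}(X,Y)=\ed^w_{\le k}(X',Y')$.

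The main obstacle is copy-matching itself. The easy part is the setup: since $w$ is normalized, a cost-$\le k$ alignment performs at most $k$ edits, hence has width $\le k$; and by the decomposition in \cref{fct:str_decomp} applied inside $U$'s occurrence of $Q^f$, at most $k$ of its $f$ copies of $Q$ are touched by an edit, so at least $f-k\ge 3k$ of them are matched perfectly, each onto some length-$|Q|$ window $V[\psi_i\dd\psi_i+|Q|)=Q$ of $V$ with $|\psi_i-(p_V+i|Q|)|\le 2k$ by width and synchronization. The real work is to show that for at least one such $i$ this window lies inside $V$'s occurrence of $Q^f$, so that — after the harmless reduction to primitive $Q$ (writing $Q=R^d$ turns $Q^f$ into $R^{df}$ with $df\ge 4k$) — it coincides with a copy of $V$ and the surgery above can be carried out anchored at genuine copy boundaries. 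The dangerous copies are those within $2k$ of either end of the occurrence, whose matched windows could spill out of $V$'s occurrence; ruling out that \emph{all} the perfectly matched copies are of this kind is where the length bound $f\ge 4k$ is spent, and the accounting is most delicate when $Q$ is very short.
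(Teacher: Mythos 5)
You identify the correct high-level skeleton — isolate a claim that some copy of $Q$ in $X$'s occurrence is matched exactly onto some copy in $Y$'s occurrence, splice in (or cut out) a block of diagonal moves at that matched pair to change the exponent by one, chain these single steps, and reduce to primitive $Q$ — and all of that matches the paper. The surgery, the reverse direction, the chaining, and the primitivity reduction are sound. The gap is exactly where you flag it: the proof of the copy-matching claim itself. Your plan is to observe that at least $f-k\ge 3k$ copies are matched perfectly and then rule out that all of them have images spilling outside $Y$'s occurrence, by counting ``dangerous'' copies near the ends. That accounting does not close in the very regime you call delicate. Take $|Q|=1$ and $f=4k$: for copy $i$ the image $\psi_i$ satisfies only $|\psi_i-(p_V+i)|\le 2k$, and guaranteeing $\psi_i\in[p_V\dd p_V+4k)$ from this bound requires $i\ge 2k$ \emph{and} $i\le 2k-1$, which is vacuous — every one of the $4k$ copies is potentially dangerous, so counting gives no safe copy even though $3k$ are matched perfectly. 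The bound $e\ge 4k$ is simply not enough slack for a two-sided safe/dangerous count.

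The paper's proof of the corresponding statement (\cref{clm:periodic_reduction}) sidesteps this. It fixes the leftmost pair $(t_X,t_Y)\in\A$ with $t_X\ge p_X$ and $t_Y\ge p_Y$; since each coordinate of an alignment advances by at most one per step, $t_X=p_X$ or $t_Y=p_Y$, and WLOG $t_X=p_X$. Among the first $k+1$ copies of $Q$ in $X$'s occurrence, some copy $i_X\in[0\dd k]$ is matched exactly onto a fragment $Y[s_Y\dd s_Y+|Q|)$. The anchor $(t_X,t_Y)$ and the non-crossing property give $s_Y\ge t_Y\ge p_Y$ directly — no ``left spill'' concern at all — while the cost bound together with $|p_X-p_Y|\le k$ gives $s_Y\le p_X+i_X|Q|+k\le p_Y+3k|Q|$, so $s_Y+|Q|\le p_Y+(3k+1)|Q|\le p_Y+e|Q|$ and there is no ``right spill'' either. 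Primitivity then pins $s_Y$ to a copy boundary $p_Y+i_Y|Q|$ with $i_Y\in[0\dd 3k]$. This one-sided anchoring at the leftmost crossing of $(p_X,p_Y)$ is the ingredient your sketch is missing, and it is how $e\ge 4k$ is actually cashed in (only $3k+1\le e$ is needed once the anchor removes one side). With this claim in hand, the rest of your argument goes through essentially as you wrote it.
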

\begin{proof}
    We assume without loss of generality that $Q$ is primitive. (If $Q=R^m$ for $m\in \mathbb{Z}_{\ge 2}$, then $Q^e=R^{me}$ and $Q^{e'}=R^{me'}$
    can be interpreted as powers of $R$ rather than powers of $Q$.)
    Suppose that $Q^e$ occurs in strings $X$ and $Y$ at positions $p_X$ and $p_Y$, respectively, satisfying $|p_X-p_Y|\le k$.
    Denote $X' = X[0 \dd p_X) \cdot Q^{e'} \cdot X[p_X+|Q^e| \dd |X|)$ and 
    $Y' = Y[0 \dd p_Y) \cdot Q^{e'} \cdot Y[{p_Y+|Q^e|} \dd |Y|)$.
    Moreover, let $q=|Q|$ and let $\A\in \aa(X,Y)$ be an alignment such that $\ed^w(X,Y)= \ed^w_\A(X,Y) \le k$. 

    \begin{claim}\label{clm:periodic_reduction}
        There exist $i_X,i_Y\in [0\dd 3k]$ such that 
        \[X[p_X+i_X\cdot q\dd p_X+(i_X+1)\cdot q) \simeq_{\A} Y[p_Y+i_Y\cdot q\dd p_Y+(i_Y+1)\cdot q).\]
    \end{claim}
    \begin{proof} 
    Let $(t_X,t_Y)\in \A$ be the leftmost element of $\A$ such that $t_X \ge p_X$ and $t_Y \ge p_Y$.
    By symmetry between $X$ and $Y$, we assume without loss of generality that $t_X = p_X$.
    Consider the $k+1$ occurrences of $Q$ in $X$ starting at positions $p_X+i\cdot q$ for $i\in [0\dd k]$.
    The alignment $\A$ matches at least one of them exactly;
    we can thus define $i_X\in [0\dd k]$ so that $\A$ matches $X[p_X+i_X\cdot  q\dd p_X+(i_X+1)\cdot q)$ exactly to some fragment $Y[s_Y\dd s_Y+q)$.
    Due to $(t_X,t_Y)\in \A$, the non-crossing property of $\A$ implies that $s_Y \ge t_Y \ge p_Y$.
    Moreover, since $\ed_\A(X,Y) \leq k$ and $X[p_X + i_X \cdot q] \simeq_\A Y[s_Y]$, we have $s_Y \le (p_X+i_X \cdot q)+k \le p_X + kq+k \le p_Y + kq + 2k \le p_Y + 3kq$.
    Furthermore, since $Q$ is primitive (i.e., distinct from all its non-trivial cyclic rotations), we conclude that $s_Y = p_Y+i_Y \cdot q$ for some $i_Y\in [0\dd 3k]$. 
    \end{proof}

    Now, if $Q^e=X[p_X\dd p_X+e\cdot q)=Y[p_Y\dd p_Y + e\cdot q)$ is replaced with $Q^{e'}$ for $e'\ge e-1$,
    we can interpret this as replacing $Q=X[p_X+i_X\cdot q\dd p_X+(i_X+1)\cdot q)=Y[p_Y+i_Y\cdot q\dd p_Y+(i_Y+1)\cdot q)$ 
    with $Q^{1+e'-e}$. By \cref{clm:periodic_reduction}, $\A$ can be trivially adapted without modifying its cost, and hence $\ed^w(X',Y')\le \ed^w_{\A}(X,Y)=\ed^w(X,Y)$.
    If $e'< e-1$, we repeat the above argument to decrement the exponent $e$ one step at a time, still concluding that $\ed^w(X',Y')\le \ed^w(X,Y)$.
    In either case, the converse inequality follows by symmetry between $(X,Y,e)$ and $(X',Y',e')$.
\end{proof}

We say that a string avoids $k$-periodicity if it does not contain any substring of the form $Q^{4k+1}$ with $|Q|\in [1\dd 2k]$.
\begin{lemma}\label{lem:aperiodic_reduction}
    Let $k\in \Zp$ and let $P,P'$ be strings of lengths at least $42k^3$ such that $P[0\dd 21k^3)=P'[0\dd 21k^3)$ and $P[|P|-21k^3\dd |P|)=P'[|P'|-21k^3\dd |P'|)$ avoid $k$-periodicity. Then, $P$ and $P'$ are $\ed_{\le k}^w$-equivalent for every weight function~$w$.
\end{lemma}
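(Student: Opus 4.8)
The plan is to fix arbitrary strings $X,Y$ in which $P$ occurs at $k$-synchronized positions $p_X,p_Y$, let $X'',Y''$ be obtained by replacing these occurrences of $P$ with $P'$, and show $\ed_{\le k}^w(X'',Y'')=\ed_{\le k}^w(X,Y)$. The hypotheses are symmetric in $P$ and $P'$: the common length-$21k^3$ prefix and the common length-$21k^3$ suffix avoid $k$-periodicity, $|P|,|P'|\ge 42k^3$, and reinserting $P$ into $X'',Y''$ recovers $X,Y$. So it suffices to prove $\ed^w(X'',Y'')\le \ed^w(X,Y)$ whenever $\ed^w(X,Y)\le k$; applying this in both directions yields equality of the thresholded distances in all cases, including when both are $\infty$.

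First I would fix an optimal $\A\in\aa(X,Y)$ with $\ed_\A^w(X,Y)=\ed^w(X,Y)\le k$. Since $w$ is normalized, $\ed_\A(X,Y)\le\ed_\A^w(X,Y)\le k$, so $\A$ makes at most $k$ edits; as only insertions and deletions change $x_t-y_t$, the width of $\A$ is at most $k$. The core step is to find a long fragment of $P$ that $\A$ matches at the corresponding offset. Applying \cref{fct:str_decomp} and intersecting its pieces with the window $X[p_X\dd p_X+21k^3)$, this window is covered by at most $k$ edited characters together with at most $k+1$ intervals, each lying inside a fragment matched perfectly by $\A$; by pigeonhole one such interval has length at least $(21k^3-k)/(k+1)\ge 10k^2$, so $\A$ matches some $X[a\dd a+10k^2)\sub X[p_X\dd p_X+21k^3)$ perfectly with some $Y[b\dd b+10k^2)$. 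Writing $i=a-p_X$, both $Y[b\dd b+10k^2)$ and $Y[p_Y+i\dd p_Y+i+10k^2)$ equal $P[i\dd i+10k^2)$; if $b\ne p_Y+i$, then this string occurs at two positions of $Y$ at distance $\delta:=|(b-a)-(p_Y-p_X)|\in[1\dd 2k]$, so $\delta$ is a period of it and it contains $Q^{4k+1}$ for $Q:=P[i\dd i+\delta)$ of length $\delta\in[1\dd 2k]$, contradicting that $P[0\dd 21k^3)$ avoids $k$-periodicity. Hence $\A$ matches $X[p_X+d]$ with $Y[p_Y+d]$ for some $d\in[0\dd 21k^3]$. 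Mirroring the argument in the window $X[p_X+|P|-21k^3\dd p_X+|P|)$ and using that $P[|P|-21k^3\dd |P|)$ avoids $k$-periodicity, $\A$ matches $X[p_X+|P|-1-e]$ with $Y[p_Y+|P|-1-e]$ for some $e\in[0\dd 21k^3]$; here $d\le|P|-e$ because $|P|\ge 42k^3$.

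Then I would cut $\A$ at the two pairs $(p_X+d,p_Y+d)$ and $(p_X+|P|-e,p_Y+|P|-e)$ that it contains, decomposing it into $\A_L$ (an alignment of $X[0\dd p_X+d)$ onto $Y[0\dd p_Y+d)$), $\A_M$ (an alignment of $P[d\dd |P|-e)$ onto itself), and $\A_R$ (an alignment of $X[p_X+|P|-e\dd |X|)$ onto $Y[p_Y+|P|-e\dd |Y|)$). Since $d,e\le 21k^3$ and $P,P'$ share their length-$21k^3$ prefix and suffix, the substitution leaves the outer strings untouched: $X''[0\dd p_X+d)=X[0\dd p_X+d)$, $X''[p_X+|P'|-e\dd |X''|)=X[p_X+|P|-e\dd |X|)$, and likewise for $Y''$. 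Therefore $\A_L$, the all-match alignment of the identical fragments $X''[p_X+d\dd p_X+|P'|-e)=Y''[p_Y+d\dd p_Y+|P'|-e)=P'[d\dd |P'|-e)$, and $\A_R$ concatenate into an alignment of $X''$ onto $Y''$ of cost $\ed_{\A_L}^w+0+\ed_{\A_R}^w\le\ed_\A^w(X,Y)=\ed^w(X,Y)$, so $\ed^w(X'',Y'')\le\ed^w(X,Y)$, as needed.

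The hard part is the anchoring step: producing a perfectly matched fragment of length at least $10k^2$ inside the aperiodic prefix (and suffix), which needs \cref{fct:str_decomp} plus pigeonhole with the constants arranged so that $(21k^3-k)/(k+1)\ge 10k^2$ and $(4k+1)\cdot 2k\le 10k^2$ --- both valid for $k\ge 1$ --- and then ruling out a shifted image of that fragment by combining the width bound on $\A$ with the standard fact that two occurrences of a string at distance less than its length force that distance to be a period, which $k$-periodicity-avoidance forbids. Everything afterward --- splitting $\A$ at the two on-diagonal anchors and transplanting $\A_L$ and $\A_R$ across the $P\to P'$ replacement --- is routine bookkeeping, the only subtlety being that $|P|$ and $|P'|$ need not coincide, so one must work with the shared length-$21k^3$ prefix and suffix rather than with $P$ and $P'$ directly.
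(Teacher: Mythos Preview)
Your proof is correct and follows essentially the same approach as the paper: locate a perfectly matched fragment of length at least $10k^2$ inside the length-$21k^3$ prefix window via \cref{fct:str_decomp} and pigeonhole, rule out any shift of its image using the width bound together with $k$-periodicity avoidance, repeat for the suffix, then transplant $\A_L$ and $\A_R$ around a perfect match of $P'[d\dd |P'|-e)$. The only (minor) deviation is that the paper first invokes optimality of $\A$ to conclude that $\A$ already matches $P[d\dd |P|-e)$ perfectly before swapping in $P'$, whereas you bypass this and simply use $\ed^w_{\A_M}\ge 0$; your variant is equally valid and arguably a touch cleaner.
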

\begin{proof}
    Suppose that $P$ occurs in strings $X$ and $Y$ at positions $p_X$ and $p_Y$, respectively, satisfying $|p_X-p_Y|\le k$.
    Denote $X' = X[0 \dd p_X)\cdot P' \cdot X[p_X + |P|\dd |X|)$ and 
    $Y' = Y[0 \dd p_Y) \cdot P' \cdot Y[p_Y +|P| \dd |Y|)$.
    Moreover, let $\A\in \aa(X,Y)$ be an alignment such that $\ed^w(X,Y)= \ed^w_\A(X,Y) \le k$. 

    \begin{claim}\label{clm:aperiodic_reduction}
        There exist $d,e\in [0\dd 21k^3]$ such that \[X[p_X+d\dd p_X+|P|-e)\sim_{\A} Y[p_Y+d\dd p_Y+|P|-e).\]
    \end{claim}
    \begin{proof}
        Let us partition $X[p_X\dd p_X+21k^3)$ into individual characters representing deletions or substitutions of $\A$
        and maximal fragments that $\A$ matches perfectly (to fragments of $Y$).
        By \cref{fct:str_decomp}, the number of such maximal fragments is at most $k+1$ and their total length is at least $21k^3-k\ge 20k^3$.
        Hence, one of these fragments is of length at least $\frac{20k^3}{k+1}\ge 10k^2$.
        Thus, let $R:=X[r_X\dd r_X+|R|)$ be a fragment of length at least $10k^2$ contained in $X[p_X\dd p_X+21k^3)$
        that $\A$ matches perfectly to $Y[r_Y\dd r_Y+|R|)$.
        Moreover, let $r'_Y := r_X + p_Y - p_X$.
        If $r_Y = r'_Y$, then we set $d := r_X-p_X = r_Y-p_Y$ so that $(p_X+d,p_Y+d)\in \A$.
        Otherwise, both $Y[r_Y\dd r_Y+|R|)$ and $Y[r'_Y\dd r'_Y+|R|)$ are occurrences of $R$ in $Y$.
        Moreover, $0<|r_Y-r'_Y| \le |r_Y - r_X| + |r'_Y - r_X| \le \ed^{w}_{\A}(X,Y)+|p_Y-p_X|\le 2k$.
        Hence, $\per(R)\le |r_Y-r'_Y|\le 2k$ and $\exp(R)\ge \frac{|R|}{2k}\ge 4k+1$.
        Since $Y[r'_Y\dd r'_Y+|R|)$ is contained in $Y[p_Y\dd p_Y+21k^3)=P[0\dd 21k^3)$, this contradicts the assumption about $P[0\dd 21k^3)$
        avoiding $k$-periodicity.

        A symmetric argument shows that $(p_X+|P|-e,p_Y+|P|-e)$ holds for some $e\in [0\dd 21k^3]$,
        which lets us conclude that $X[p_X+d\dd p_X+|P|-e)\sim_{\A} Y[p_Y+d\dd p_Y+|P|-e)$.
    \end{proof}

    By \cref{clm:aperiodic_reduction}, we have $X[p_X+d\dd p_X+|P|-e)\sim_\A Y[p_Y+d\dd p_Y+|P|-e)$.
    Both fragments match $P[d\dd |P|-e)$, so the optimality of $\A$ guarantees $X[p_X+d\dd p_X+|P|-e)\simeq_\A Y[p_Y+d\dd p_Y+|P|-e)$.
    Hence, if $P=X[p_X\dd p_X+|P|)=Y[p_Y\dd p_Y+|P|)$ is replaced with $P'$,
    we can interpret this as $P[d\dd |P|-e)=X[p_X+d\dd p_X+|P|-e)=Y[p_Y+d\dd p_Y+|P|-e)$ with $P'[d\dd |P'|-e)$. 
    Since $X[p_X+d\dd p_X+|P|-e)\simeq_\A Y[p_Y+d\dd p_Y+|P|-e)$, the alignment $\A$ can be trivially adapted without modifying its cost,
    and therefore $\ed^w(X',Y')\le \ed^w_{\A}(X,Y)=\ed^w(X,Y)$.
    The converse inequality follows by symmetry between $(X,Y,P)$ and $(X',Y',P')$.
\end{proof}


\subsection{Algorithm}\label{subsec:stralg}
The following lemma lets us transform any string $P$ to a string $P'$ that avoids $k$-periodicity
and is $\ed_{\le k}^w$-equivalent to $P$ for every  weight function~$w$.
It is stated in a general form so that it can be reused in subsequent sections. 

\begin{lemma}\label{lem:perred}
Let $e\in \Zp$ and let $\Qf$ be a family of primitive strings of length at most $e$.
There is an algorithm that repeatedly transforms an input string $P$
by replacing an occurrence of $Q^{e+1}$ (for some $Q\in \Qf$) with an occurrence of $Q^e$,
arriving at a string $P'$ that does not contain any occurrence of $Q^{e+1}$ (for any $Q\in \Qf$).
Moreover, this algorithm can be implemented in linear time using a constant-time oracle that tests whether a given primitive fragment of $P$
belongs to $\Qf$.
\end{lemma}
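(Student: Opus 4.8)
The plan is to process $P$ in a single left-to-right scan, maintaining a pointer to the current position and greedily removing one copy of $Q$ whenever we detect a fragment $Q^{e+1}$ ending at the current position, where $Q$ is primitive, $|Q|\le e$, and $Q\in\Qf$. First I would preprocess $P$ with the $2$-Period queries structure of \cref{thm:2per}, so that for any fragment $P[i\dd j)$ we can decide in $\Oh(1)$ time whether it is periodic and, if so, retrieve its shortest period $p=\per(P[i\dd j))$; the string period $P[i\dd i+p)$ is then primitive by minimality of $p$. Combined with LCE queries (\cref{thm:lce}) we can, from any starting position $i$, test in $\Oh(1)$ time whether $P[i\dd i+(e+1)p)$ equals $(P[i\dd i+p))^{e+1}$ for a candidate period $p$, and the constant-time oracle tells us whether $P[i\dd i+p)\in\Qf$.

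The scan proceeds as follows. We keep a variable $j$ sweeping over the (current, partially reduced) string. At each step, before advancing $j$, we ask: does there exist $Q\in\Qf$ and an occurrence of $Q^{e+1}$ ending exactly at position $j$ (equivalently, starting at $j-(e+1)|Q|$)? To answer this, it suffices to look at the longest run of a single period ending at $j$: take the maximal fragment $P[i\dd j)$ that is periodic with shortest period $p$ (found via a $2$-Period query on a suitably long suffix ending at $j$, extended leftward by LCE), let $Q=P[j-p\dd j)$, and check whether $|Q|\le e$, whether $Q$ is primitive (automatic, as $p$ is the shortest period), whether the run is long enough that $Q^{e+1}$ fits (i.e., $j-i\ge (e+1)p$), and whether $Q\in\Qf$ via the oracle. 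If all hold, we delete one copy of $Q$ — i.e., we splice out $P[j-p\dd j)$ — which decrements the exponent of this run by one; we then re-examine the same position (the run is now one copy shorter, so we may need to iterate, but each iteration permanently removes $p\ge1$ characters). If not, we advance $j$ by one. A standard amortization argument bounds the total work: each character of the original $P$ is either removed once (charged to a deletion) or scanned a constant number of times as $j$ passes it, so the whole procedure runs in $\Oh(|P|)$ time.

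Two points need care, and I expect the correctness argument — rather than the running time — to be the main obstacle. First, I must argue that when the scan terminates, the output $P'$ contains no occurrence of $Q^{e+1}$ for any $Q\in\Qf$: the subtlety is that removing a copy of $Q$ can in principle create a new forbidden occurrence elsewhere, possibly to the left of the current pointer. I would handle this by showing that any occurrence of some $R^{e+1}$ ($R\in\Qf$ primitive, $|R|\le e$) in the final string lies inside a single maximal periodic run with period $|R|$ (by primitivity and the Fine--Wilf periodicity lemma, distinct primitive periods cannot both be short relative to the overlap), and that our greedy rule, by the time $j$ reaches the right end of any such run, would have forced its exponent down to at most $e$; crucially, deletions performed strictly to the left of a run with a \emph{different} primitive period do not lengthen that run, and deletions within the same run only shorten it, so no forbidden occurrence can survive or be recreated behind the pointer. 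Second, I must confirm the invariant that every replacement the algorithm performs is of the licensed form ``replace $Q^{e+1}$ by $Q^{e}$ for some $Q\in\Qf$'', which is immediate from the checks above. Finally, termination is clear since every replacement strictly decreases $|P|$.
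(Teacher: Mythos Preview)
Your high-level plan (left-to-right greedy scan, detect high powers via $2$-Period and LCE queries, remove one copy at a time, argue via Fine--Wilf that no forbidden power is recreated) is exactly the paper's plan. However, there is one implementation choice where you diverge from the paper, and it is precisely the point that makes the linear-time implementation go through.

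You check whether some $Q^{e+1}$ \emph{ends} at the current position $j$ and then splice out the \emph{last} copy $[j-p\dd j)$. The paper does the mirror-image: it checks whether some $Q^{e+1}$ \emph{starts} at the current position $r$ and removes the \emph{first} copy, i.e., it simply advances $r\gets r+q$ while leaving the already-emitted prefix $R$ untouched. The point of the paper's choice is that the current string is always of the form $R\cdot P[r\dd |P|)$: the part still to be examined is a genuine suffix of the \emph{original} $P$, so every $2$-Period query, every LCE query, and every oracle call is literally a query on $P$. In your version, after the first splice the ``current string'' is a concatenation of non-adjacent fragments of $P$; when you next ask ``does some $R^{e+1}$ end at $j$?'', the window $[j-2e\dd j)$ may straddle a splice boundary, and the $2$-Period/LCE structures you built on $P$ no longer answer that question. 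You never address how to query across splices in $\Oh(1)$ time, and this is the genuine gap.

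The same forward/first-copy choice is also what makes the paper's correctness invariant clean. Because $P[r\dd r+m)=P[r+q\dd r+q+m)$ where $m\ge eq$, deleting the first copy leaves the next $m$ characters after $R$ \emph{unchanged}; hence any new occurrence of $\hat{Q}^{e+1}$ starting inside $R$ must reach past position $|R|+m$, giving a length-$m$ overlap with period $q$ \emph{and} period $\hat q$, whence Fine--Wilf forces $\hat q=q$ and contradicts the maximality of $m$. Your backward variant creates a genuinely new adjacency $P[j-p-1],P[j]$ at the splice point, so your one-line claim that ``deletions with a different primitive period do not lengthen that run'' is not enough: a new $\hat Q^{e+1}$ could cross the splice, and you would need a Fine--Wilf argument of the same shape as the paper's---but stated for the spliced string, which you cannot query. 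The fix is simply to flip the direction: check for $Q^{e+1}$ starting at $r$ and drop the first copy.
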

\SetKwFunction{PeriodicityReduction}{PeriodicityReduction}
\begin{algorithm}[ht]

    \Fn{$\PeriodicityReduction(P,e,\Qf)$}{
    $R \gets \eps$\;
    $r \gets 0$\;
    \While{$r < |P|$}{
        \lIf{$r+2e \le |P|$ \KwSty{and} $P[r\dd r+2e)$ is periodic}{$q \gets \per(P[r\dd r+2e))$}
        \lElse{$q \gets 1$}
        $m \gets \max\{\ell : P[r\dd r+\ell)=P[r+q\dd r+q+\ell)\}$\;
        \If{$m \ge eq$ \KwSty{and} $P[r\dd r+q)\in \Qf$}{
            $r \gets r + q$\;
        }\Else{
            $R \gets R\cdot P[r]$\;
            $r \gets r+1$\;
        }
    }
    \KwRet{$R$}\;
    }    \caption{Caps the exponent of every power of $Q\in \Qf$ occurring in $P$ to at most $e$.}\label{alg:str_per_reduc}
\end{algorithm}
\begin{proof}
    At preprocessing, we construct data structures for LCE and 2-Period queries in $P$; see~\cref{thm:lce,thm:2per}.
    In the main phase, our algorithm scans the string $P$ from left to right maintaining a string $R$ and an index $r\in [0\dd |P|]$
    such that $R\cdot P[r\dd |P|)$:
    \begin{itemize}
        \item is obtained from $P$ by repeatedly replacing an occurrence of $Q^{e+1}$ (for some $Q\in \Qf$) with an occurrence of $Q^e$,
        \item does not contain any occurrence of $Q^{e+1}$ (for $Q\in \Qf$) starting at position smaller than $|R|$.
    \end{itemize}
    We initialize the process with $R:=\eps$ and $r:=0$.
    At each step, we test if $P[r\dd r+2q)$ is periodic; if so, we retrieve its shortest period $q$;
    otherwise, we set $q:=1$. 
    Then, we further check whether $P[r\dd r+q)\in \Qf$ and $P[r\dd r+eq)=P[r+q\dd r+q+eq)$.
    If both tests are successful, we move the index $r$ to position $r+q$.
    Otherwise, we append $P[r]$ to $R$ and increment $r$.

    Let us analyze the correctness of this algorithm.
    First, suppose that $P[r\dd |P|)$ does not have a prefix of the form $Q^{e+1}$ for any $Q\in \Qf$.
    In particular, $P[r\dd r+q)\notin \Qf$ or $m<eq$. Thus, our algorithm appends $P[r]$ to $R$ and increments $r$.
    The invariant remains satisfied because $R\cdot P[r\dd |P|)$ did not change and $P[r\dd |P|)$ had no prefix of the form  $Q^{e+1}$ for any $Q\in \Qf$.

    Next, suppose that $P[r\dd |P|)$ has a prefix of the form $Q^{e+1}$ for some $Q\in \Qf$.
    If $|Q|\ne 1$, then $|Q|$ is the shortest period of $P[r\dd r+2e)$ because $Q$ is primitive and $|Q|\le e$.
    If $|Q|=1$, on the other hand, then $P[r\dd r+2e)$ either has period 1 or at least $e+2$.
    In all cases, the algorithm correctly identifies $q=|Q|$.
    Moreover, the subsequent tests whether $P[r\dd r+q)$ belongs to $\Qf$ and  $m\ge eq$
    are successful. Hence, the algorithm transforms $R\cdot P[r\dd |P|)$ into $R\cdot P[r+q\dd |P|)$,
    which is a valid operation because the prefix $Q^{e+1}$ of $P[r\dd |P|)$ is replaced with the prefix $Q^e$ of $P[r+q\dd |P|)$.
    Thus, it remains to prove that $R\cdot P[r+q\dd |P|)$ does not contain any occurrence of $\hat{Q}^{e+1}$ (for any $\hat{Q}\in \Qf$)
    starting at position smaller than $|R|$.
    Since  $R\cdot P[r\dd |P|)$ did not contain such an occurrence, the occurrence of $\hat{Q}^{e+1}$ would need to end at position $|R|+m$ or larger.
    The fragment $P[r+q\dd r+q+m)$ thus has periods $q$ and $\hat{q}:=|\hat{Q}|$.
    Moreover, by primitivity of $Q$ and $\hat{Q}$, the Periodicity Lemma~\cite{FW65} implies $q=\hat{q}$ due to $m \ge eq \ge e+q-1 \ge \hat{q}+q-1$.
    However, this means that $q$ is a period of $P[r\dd r+q+m]$, contradicting the definition of $m$.

    The overall running time is linear, including the preprocessing and the query time of the data structures
    of \cref{thm:lce,thm:2per}, because each iteration of the \KwSty{while} loop costs constant time.
\end{proof}

\begin{corollary}\label{cor:string_reduction}
    There exists a linear-time algorithm that, given a string $P$ and an integer $k\in \Zp$,
    constructs a string of length at most $42k^3$ that is $\ed_{\le k}^w$-equivalent to $P$ for every  weight function~$w$.
\end{corollary}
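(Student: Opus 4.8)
The plan is to combine the two combinatorial reduction steps from \cref{subsec:str_comb} in two phases, using \cref{lem:perred} for the algorithmic implementation of the first phase. In the first phase, I would eliminate $k$-periodicity: invoke the algorithm of \cref{lem:perred} with parameter $e := 4k$ and the family $\Qf$ of all primitive strings of length at most $2k$. The required constant-time membership oracle is easy to provide: given a primitive fragment $P[i \dd j)$, it belongs to $\Qf$ if and only if $j - i \le 2k$, which is a single comparison. By \cref{lem:perred}, this runs in linear time and produces a string $P_1$ obtained from $P$ by repeatedly replacing occurrences of $Q^{4k+1}$ with $Q^{4k}$ (for primitive $Q$ with $|Q| \le 2k$), such that $P_1$ contains no such occurrence, i.e., $P_1$ avoids $k$-periodicity. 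Each individual replacement is of the form $Q^{4k+1} \to Q^{4k}$ with $Q$ primitive, so by \cref{lem:periodic_reduction} (applied with $e = 4k+1$ and $e' = 4k$, both $\ge 4k$) it preserves $\ed_{\le k}^w$-equivalence for every weight function $w$; chaining these finitely many equivalences, $P_1$ is $\ed_{\le k}^w$-equivalent to $P$. (One should note that the reductions of \cref{lem:perred} need not be applied in left-to-right textual order as its proof does, but since each step individually preserves equivalence by \cref{lem:periodic_reduction}, the order is irrelevant for correctness.)

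In the second phase, I would handle the length. If $|P_1| < 42k^3$, simply return $P_1$, which already satisfies the length bound and is $\ed_{\le k}^w$-equivalent to $P$. Otherwise, return $P' := P_1[0 \dd 21k^3) \cdot P_1[|P_1| - 21k^3 \dd |P_1|)$, the concatenation of the length-$21k^3$ prefix and length-$21k^3$ suffix of $P_1$; this has length exactly $42k^3$. To verify equivalence, apply \cref{lem:aperiodic_reduction} with the pair $(P_1, P')$: both have length at least $42k^3$; by construction $P_1[0\dd 21k^3) = P'[0 \dd 21k^3)$ and $P_1[|P_1|-21k^3\dd|P_1|) = P'[|P'|-21k^3\dd|P'|)$; and these prefix and suffix fragments avoid $k$-periodicity because they are fragments of $P_1$, which avoids $k$-periodicity entirely (a substring of a string that avoids $k$-periodicity also avoids it). Hence $P_1$ and $P'$ are $\ed_{\le k}^w$-equivalent for every $w$, and by transitivity of the equivalence relation (which follows directly from \cref{def:strequi}, since the defining identity composes), $P'$ is $\ed_{\le k}^w$-equivalent to $P$.

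For the running time, the first phase is linear by \cref{lem:perred}, and the second phase only copies $\Oh(k^3)$ characters (or returns $P_1$ unchanged), which is $\Oh(n)$ since $P_1$ has length at most $|P| = \Oh(n)$; so the total is linear. The main subtlety — and the one place I would be careful — is matching the parameter conventions: \cref{lem:perred} is phrased in terms of $Q^{e+1} \to Q^e$, \cref{lem:periodic_reduction} in terms of exponents $e, e' \ge 4k$, and the definition "avoids $k$-periodicity" in terms of $Q^{4k+1}$ with $|Q| \in [1 \dd 2k]$; the plan is to set $e := 4k$ in \cref{lem:perred} precisely so that $Q^{e+1} = Q^{4k+1}$ aligns with the periodicity-avoidance definition, while noting that the exponents $4k+1$ and $4k$ appearing in the per-step replacement are both at least $4k$ so \cref{lem:periodic_reduction} applies. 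No genuine difficulty is expected beyond this bookkeeping; the two lemmas were designed exactly for this composition.
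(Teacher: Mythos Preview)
Your proposal is correct and follows essentially the same approach as the paper's proof: apply \cref{lem:perred} with $e=4k$ and $\Qf$ the primitive strings of length at most $2k$ (oracle being a length check), invoke \cref{lem:periodic_reduction} for per-step equivalence, then, if the result is still long, keep only the first and last $21k^3$ characters and appeal to \cref{lem:aperiodic_reduction}. Your extra care in matching the exponent conventions and noting that each replacement step individually preserves equivalence is accurate and mirrors what the paper leaves implicit.
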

\SetKwFunction{StringReduction}{StringReduction}
\begin{algorithm}[ht]
    \caption{Construct a string of length at most $42k^3$ that is $\ed_{\le k}^w$-equivalent to $P$.}\label{alg:str_nonper_reduc}
    \Fn{$\StringReduction(P,k)$}{
        $P' \gets \PeriodicityReduction(P, 4k, \{Q\in \Sigma^+ : |Q|\le 2k \text{ and $Q$ is primitive}\})$\;
        \lIf{$|P'| \geq 42k^3$}{\KwRet{$P'[0\dd 21k^3) \cdot P'[|P'| - 21k^3 \dd |P'|)$}}
        \lElse{\KwRet{$P'$}}
    }
\end{algorithm}
\begin{proof}
    We set $P':=\PeriodicityReduction(P,4k,\Qf)$ with $\Qf$ consisting of all primitive strings of length in $[1\dd 2k]$. 
    We return $P'':=P'[0\dd 21k^3) \cdot P'[|P'| - 21k^3 \dd |P'|)$ or $P'$ depending on whether $|P'|\ge 42k^3$ or not.
    By \cref{lem:periodic_reduction,lem:perred}, the string $P'$ is $\ed_{\le k}^w$-equivalent to $P$ and avoids $k$-periodicity. 
    Thus, if  $|P'| \le 42k^3$, then the algorithm is correct.
    Otherwise, \cref{lem:aperiodic_reduction} implies that $P''$ is  $\ed_{\le k}^w$-equivalent to $P'$ (and, by transitivity, to $P$)
    because $P'[0\dd 21k^3)$  and $P'[|P'| - 21k^3 \dd |P'|)$ avoid $k$-periodicity.
    Due to \cref{lem:perred}, the running time is linear (a primitive fragment belongs to $\Qf$ if and only if its length does not exceed $2k$,
    which takes $\Oh(1)$ time to test).
\end{proof}

\begin{theorem}\label{thm:stringKernel}
    There exists a linear-time algorithm that, given strings $X$, $Y$ and an integer $k\in \Zp$,
    constructs strings $X'$, $Y'$ of lengths at most $85k^4$ such that $\ed^w_{\le k}(X,Y)=\ed^w_{\le k}(X',Y')$ holds for every weight function $w$.
\end{theorem}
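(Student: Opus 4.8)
The plan is to follow the kernelization scheme from \cref{subsec:overview}: dispose of small inputs, use an optimal \emph{unweighted} alignment to expose $k$-synchronized occurrences inside $X$ and $Y$, and replace each of them by a short equivalent piece via \cref{cor:string_reduction}.

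First I would handle the easy cases. If $n:=\max(|X|,|Y|)\le 85k^4$, I return $(X',Y')=(X,Y)$, which trivially satisfies both the length bound and the required equality, and reading the input takes $\Oh(n)$ time. Otherwise $n>85k^4\ge k^2$, so the $\Oh(n+k^2)$-time algorithm of \cite{DBLP:journals/algorithmica/Meyers86,LV88} runs in $\Oh(n)$ time; I use it to decide whether $\ed(X,Y)\le k$. If $\ed(X,Y)>k$, then $\ed^w(X,Y)\ge\ed(X,Y)>k$ for every normalized weight function $w$, so $\ed^w_{\le k}(X,Y)=\infty$, and I output a fixed trivial pair with the same property, e.g.\ $X'=\varepsilon$ and $Y'$ equal to any length-$(k+1)$ string over $\Sigma$, for which $\ed^w(X',Y')\ge\ed(X',Y')=k+1>k$ while the length bound is immediate since $k\ge 1$.

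In the main case $\ed(X,Y)\le k$, I additionally have the $\Oh(n+k^2)$-time algorithm output an optimal alignment $\A\in\aa(X,Y)$ with $\ed_\A(X,Y)\le k$. By \cref{fct:str_decomp}, $X$ decomposes into at most $k$ individual characters that $\A$ edits and $m\le k+1$ fragments $P_i:=X[a_i\dd b_i)$ that $\A$ matches perfectly to fragments $Y[a_i'\dd b_i')$. Since $\A$ is monotone, these fragments occur in the same left-to-right order and are pairwise disjoint in both strings, and since $a_i'-a_i$ is the number of insertions minus the number of deletions $\A$ performs before position $a_i$ of $X$, we get $|a_i-a_i'|\le\ed_\A(X,Y)\le k$; hence $X[a_i\dd b_i)$ and $Y[a_i'\dd b_i')$ form a pair of $k$-synchronized occurrences of $P_i$. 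I run \cref{cor:string_reduction} on each $P_i$ to obtain a string $P_i'$ of length at most $42k^3$ that is $\ed^w_{\le k}$-equivalent to $P_i$ for \emph{every} weight function $w$, and then, processing $P_1,\dots,P_m$ in left-to-right order, I replace each $P_i$ (and its $Y$-image) by $P_i'$ in both strings. Each such replacement shifts everything to its right by the same amount $|P_i'|-|P_i|$ in $X$ and in $Y$, so the offsets of the not-yet-processed pairs stay $\le k$; hence \cref{def:strequi} applies at every step, and chaining the $m\le k+1$ equalities gives $\ed^w_{\le k}(X',Y')=\ed^w_{\le k}(X,Y)$ for every $w$. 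Finally, $X'$ consists of at most $k$ characters together with at most $k+1$ strings of length at most $42k^3$, so $|X'|\le k+(k+1)\cdot 42k^3\le 85k^4$ (tight at $k=1$); the same bound holds for $Y'$ since, outside the $\le k+1$ matched fragments, $Y$ has at most as many characters as the number of substitutions plus insertions that $\A$ performs, which is at most $k$. For the running time, the call to the $\Oh(n+k^2)$-time algorithm and the extraction of $\A$ cost $\Oh(n)$; the calls to \cref{cor:string_reduction} take time linear in the $|P_i|$ with $\sum_i|P_i|\le n$; and assembling $X'$ and $Y'$ costs $\Oh(k^4)=\Oh(n)$; altogether $\Oh(n)$.

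The step I expect to require the most care is the bookkeeping for the iterated replacement: one must check that after substituting the first several pieces the remaining pieces are still $k$-synchronized, so that \cref{def:strequi} can be invoked again — this rests on the fact that both strings are edited identically on each of the disjoint, order-preserving fragments. The only other delicate points are the constant in the length bound and the choice of the $\Oh(k^4)$-size cutoff, which is exactly what makes the $\Oh(n+k^2)$ unweighted subroutine run in linear time.
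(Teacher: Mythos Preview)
Your proposal is correct and follows essentially the same approach as the paper: handle trivially small inputs, use the Landau--Vishkin algorithm to certify $\ed(X,Y)\le k$ (or report a no-instance), decompose via \cref{fct:str_decomp}, and replace each matched block by its \cref{cor:string_reduction} surrogate. The paper presents this as an explicit loop (\cref{alg:str_kernel}) with the invariant $|X'|-|Y'|=x_t-y_t$ and $\ed^w_{\le k}(X,Y)=\ed^w_{\le k}(X'\cdot P\cdot X[x_t\dd x_m),Y'\cdot P\cdot Y[y_t\dd y_m))$, which is exactly the bookkeeping you singled out as the delicate step; your argument that each replacement shifts both strings by the same amount, preserving $k$-synchronization of the remaining pairs, is an equivalent (and arguably cleaner) way of verifying that invariant.
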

\SetKwFunction{StringKernel}{StringKernel}
\begin{algorithm}[ht]
    \caption{Construct strings $X', Y'$ of length at most $85k^4$ such that $\ed^w_{\le k}(X,Y)=\ed^w_{\le k}(X',Y')$}\label{alg:str_kernel}
    \Fn{$\StringKernel(X,Y,k)$}{
    \lIf{$|X|\le 85k^4$ \KwSty{and} $|Y| \leq 85k^4$}{\KwRet{$(X, Y)$}}\label{ln:trivial}
    \lIf{$\ed(X, Y) > k$}{\KwRet{$(a^{k+1}, \eps)$} for some $a\in \Sigma$}\label{ln:large}
    Let $(x_t,y_t)_{t=0}^m\in \aa(X,Y)$ be an alignment satisfying $\ed_{\A}(X,Y)\le k$\;
    $X',Y',P\gets \varepsilon$\;
    \For{$t \gets 0$ \KwSty{to} $m$}{
        \If{$t <m$ \KwSty{and} $x_{t+1}>x_{t}$ \KwSty{and} $y_{t+1}>y_t$ \KwSty{and} $X[x_t]= Y[y_t]$}{
            $P \gets P\cdot X[x_t]$
        }\Else{
            $P \gets \StringReduction(P,k)$\;\label{ln:reduce}
            $X' \gets X'\cdot P$\;
            $Y' \gets Y'\cdot P$\;
            $P \gets \varepsilon$\;\label{ln:peps}
            \lIf{$t < m$ \KwSty{and} $x_{t+1}>x_t$}{$X' \gets X'\cdot X[x_t]$}
            \lIf{$t < m$ \KwSty{and} $y_{t+1}>y_t$}{$Y' \gets Y'\cdot Y[y_t]$}
        }  
    }
    \KwRet{$(X', Y')$} 
    }
    \end{algorithm}
\begin{proof}
Our procedure is implemented as \cref{alg:str_kernel}. First, if $X$ and $Y$ are already of length at most $85k^4$, then we return $X$ and $Y$ unchanged.
If $\ed(X, Y)>k$, we return strings $a^{k+1}$ and $\eps$, where $a\in \Sigma$ is an arbitrary character.
If $\ed(X,Y)\le k$, we construct an alignment $\A := (x_t,y_t)_{t=0}^m\in \aa(X,Y)$ of (unweighted) cost at most $k$.
We then build the output strings $X'$ and $Y'$ during a left-to-right scan of the alignment $\A$:
We append to $X'$ and $Y'$ every character of $X$ and $Y$ (respectively) that $\A$ edits.
Moreover, for every pair of maximal fragments in $X$ and $Y$ that $\A$ matches perfectly,
we apply the reduction of \cref{cor:string_reduction} and append the resulting string to both $X'$ and $Y'$.

Let us now prove that the resulting instance $(X',Y')$ satisfies $\ed^w_{\le k}(X,Y)=\ed^w_{\le k}(X',Y')$.
This is trivial when the algorithm returns $(X,Y)$ in \cref{ln:trivial}.
If $\ed(X, Y)>k$, then $\ed_{\le k}(X,Y)=\infty = \ed_{\le k}(a^{k+1},\eps)$
and thus also $\ed^w_{\le k}(X,Y)= \infty = \ed^w_{\le k}(a^{k+1},\eps)$ because the weighted edit distance with a normalized weight function is at least as large as the unweighted edit distance.
In the remaining case of $\ed(X,Y)\le k$, we maintain an invariant that  $|X'|-|Y'|=x_t-y_t$ and $\ed_{\le k}^w(X,Y)=\ed_{\le k}^w(X'\cdot P \cdot X[x_t\dd x_m),Y'\cdot P\cdot Y[y_t\dd y_m))$ hold at the beginning of every iteration of the \KwSty{for} loop as well as after every execution of Line~\ref{ln:reduce} and Line~\ref{ln:peps}. It is easy to see that the strings $X'\cdot P \cdot X[x_t\dd x_m)$ and $Y'\cdot P\cdot Y[y_t\dd y_m)$ change only at Line~\ref{ln:reduce}, when $P$ is replaced with $\StringReduction(P,k)$. The correctness of this step follows directly from the definition of $\ed_{\le k}^w$-equivalence (\cref{def:strequi}) since $\StringReduction(P,k)$ is $\ed_{\le k}^w$-equivalent to $P$.

Next, we show that the returned strings are of length at most $85k^4$.
This is clear when the algorithm terminates at Line~\ref{ln:trivial} or~\ref{ln:large}.
Otherwise, we apply \cref{fct:str_decomp} to observe that $X$ is decomposed into at most $k$ characters that $\A$ deletes or substitutes
(which are copied to $X'$) and at most $k+1$ maximal fragments that $\A$ matches perfectly to fragments of $Y$ (which are copied to $X'$ after applying $\StringReduction$). By the guarantee of \cref{cor:string_reduction}, we conclude that $|X'|\le k + (k+1)\cdot 42k^3 \le 85k^4$.
Symmetrically, we have $|Y'|\le 85k^4$.

It remains to analyze the time complexity of our procedure.
We use the Landau--Vishkin algorithm~\cite{LV88} to check whether $\ed(X,Y)\le k$ and, if so, construct the alignment $\A$.
This costs $\Oh(n+k^2)$ time, which is $\Oh(n)$ because we perform this step only if $n \ge k^4 \ge k^2$.
The scan of the alignment $\A$ takes $\Oh(m)=\Oh(n)$ time, including the applications of \cref{cor:string_reduction},
which operate on strings of total length at most $n$.
\end{proof}

Having reduced the string lengths to $\Oh(k^4)$, we can use the classic dynamic programming~\cite{WF74} to compute $\ed_{\le k}^w(X,Y)$
in $\Oh(k^8)$ time. However, since $w$ is a normalized, the running of~\cite{WF74} can be reduced to $\Oh(nk)$.
For completeness, we describe this improvement below.

\begin{proposition}\label{prp:nk}
Given strings $X,Y$ of length at most $n$, an integer $k\in \Zp$, and a weight function $w$,
the value $\ed_{\le k}^w(X,Y)$ can be computed in $\Oh(nk)$ time.
\end{proposition}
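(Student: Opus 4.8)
The plan is to implement the classical Needleman–Wunsch dynamic program, restricted to the band of diagonals $|i-j|\le k$, and argue correctness via the normalization assumption $w(a,b)\ge 1$ for $a\ne b$. First I would set up the table $D[i,j]=\ed^w(X[0\dd i),Y[0\dd j))$ for all pairs $(i,j)$ with $0\le i\le|X|$, $0\le j\le|Y|$, and $|i-j|\le k$; this is the only region we store, with every out-of-band entry treated as $\infty$. The base cases are $D[0,0]=0$ and, along the first row and column, $D[i,0]=\sum_{u\in[0\dd i)}w(X[u],\eps)$ and $D[0,j]=\sum_{u\in[0\dd j)}w(\eps,Y[u])$ (only defined while $i\le k$, resp.\ $j\le k$). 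The recurrence is the weighted one stated in \cref{subsec:overview},
\[
D[i+1,j+1]=\min\{D[i,j+1]+w(X[i],\eps),\ D[i+1,j]+w(\eps,Y[j]),\ D[i,j]+w(X[i],Y[j])\},
\]
where any summand involving an out-of-band entry is read as $\infty$. We fill the cells in order of increasing $i$ (equivalently, increasing $i+j$), and at the end return $D[|X|,|Y|]$ if it is finite and at most $k$, and $\infty$ otherwise; note $D[|X|,|Y|]$ is in the band only if $\big||X|-|Y|\big|\le k$, and if it is not, the answer is immediately $\infty$ since any alignment has width at least $\big||X|-|Y|\big|$ and hence cost at least that (each of the $\big||X|-|Y|\big|$ forced insertions/deletions costs $\ge 1$).

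The correctness of the recurrence itself is the standard argument: an optimal alignment of $X[0\dd i{+}1)$ onto $Y[0\dd j{+}1)$ ends in exactly one of a deletion of $X[i]$, an insertion of $Y[j]$, or an alignment of $X[i]$ to $Y[j]$, and removing that last step yields an optimal alignment of the corresponding shorter prefixes; this gives both inequalities. The one point needing the normalization hypothesis is that confining the computation to the band $|i-j|\le k$ loses nothing: if $\ed^w(X,Y)\le k$, then an optimal alignment $\A=(x_t,y_t)_{t=0}^m$ has $\ed^w_\A(X,Y)\ge\ed_\A(X,Y)\ge\max_t|x_t-y_t|$ (the last inequality because reaching a cell with $|x_t-y_t|=d$ requires at least $d$ insertions or deletions, each contributing $\ge 1$), so every cell visited by $\A$ satisfies $|x_t-y_t|\le k$. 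Hence the optimal alignment is entirely contained in the band, and a simple induction on $i+j$ shows the banded $D$ agrees with the true $\ed^w$ on every in-band cell, with $\infty$ correctly recorded wherever the true value exceeds $k$ would force leaving the band — more precisely, we prove that $D[i,j]=\min(\ed^w(X[0\dd i),Y[0\dd j)),\ \text{(min cost over band-confined alignments)})$ and that this min equals $\ed^w(X[0\dd i),Y[0\dd j))$ whenever the latter is at most $k$. For the running time, the band contains $\Oh(nk)$ cells, each computed in $\Oh(1)$ time in the assumed model (a constant number of additions and comparisons of reals), and the first row and column are prefix sums over $\Oh(k)$ entries, so the total is $\Oh(nk)$.

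I do not expect a genuine obstacle here — this is a routine folklore fact — but the one place to be careful is the bookkeeping at the band boundary: when computing $D[i+1,j+1]$ with $|(i{+}1)-(j{+}1)|=k$, one of the three predecessor cells ($D[i,j+1]$ or $D[i+1,j]$) lies on diagonal $k+1$ and must be treated as $\infty$ rather than looked up, and symmetrically at the $-k$ boundary. One should also double-check the edge cases $|X|<k$ or $|Y|<k$ (the band may be truncated by the rectangle boundary, which is harmless) and the degenerate cases $X=\eps$ or $Y=\eps$. None of this changes the $\Oh(nk)$ bound. Finally, if one wants to recover an optimal alignment and not merely its cost, one stores a back-pointer per cell; this is not needed for the statement but is used later when this proposition is invoked, and it does not affect the time bound.
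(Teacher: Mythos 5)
Your proposal is correct and follows essentially the same approach as the paper: the banded Wagner--Fischer dynamic program over the diagonals $|i-j|\le k$, with out-of-band entries treated as $\infty$, and correctness justified by the normalization $w(a,b)\ge 1$ for $a\ne b$, which forces any alignment of cost at most $k$ to stay within width $k$. The paper phrases the invariant slightly differently (it shows $D'[i,j]\ge D[i,j]$ always, with equality whenever $D[i,j]\le k$, and establishes the out-of-band case directly from $D[i,j]\ge\ed(X[0\dd i),Y[0\dd j))\ge|i-j|>k$), but this is the same argument you give.
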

\begin{proof}
Recall that the algorithm of~\cite{WF74} maintains a table $D[0\dd |X|,0\dd |Y|]$ such that $D[i,j]=\ed^w(X[0\dd i),Y[0\dd j))$
holds for each $i\in [0\dd |X|]$ and $j\in [0\dd |Y|]$.
We have $D[0,0]=0$, whereas the remaining entries are constructed in $\Oh(1)$ time each using the following formula:
\begin{equation}\label{eq:DP} D[i,j] = \min \begin{cases}
    D[i-1,j] + w(X[i-1],\eps) & \text{if }i>0,\\
    D[i,j-1] + w(\eps, Y[j-1]) & \text{if }j>0,\\
    D[i-1,j-1] + w(X[i-1],Y[j-1]) & \text{if }i,j>0.\end{cases}\end{equation}
In order to compute $\ed^w_{\le k}(X,Y)$, we use a modified table $D'[0\dd |X|,0\dd |Y|]$
such that $D'[0,0]=0$, $D'[i,j] = \infty$ if $|i-j|>k$, whereas the remaining entries are computed using~\eqref{eq:DP} (with $D$ replaced by $D'$).
A straightforward inductive argument shows that $D'[i,j]\ge D[i,j]$ holds for all $i\in [0\dd |X|]$ and $j\in [0\dd |Y|]$
and, moreover, $D[i,j]\le k$ implies $D'[i,j]= D[i,j]$.
For $|i-j|>k$, this is true because $w$ is normalized and thus $D[i,j]=\ed^w(X[0\dd i),Y[0\dd j))\ge \ed(X[0\dd i),Y[0\dd j)) \ge |i-j|>k$.
For $|i-j|\le k$, on the other hand, the argument is based on the inductive hypothesis and the fact that the weight function $w$ has non-negative values.
The entries $D'[i,j]=\infty$ for $|i-j|>k$ can be set implicitly, which reduces the running time to $\Oh(nk)$.
\end{proof}

\weighted*

\begin{proof}
We first apply \cref{thm:stringKernel} to build strings $X',Y'$ of length $\Oh(k^4)$ such that 
$\ed_{\le k}^w(X',Y')=\ed_{\le k}^w(X,Y)$. Then, we compute $\ed_{\le k}^w(X',Y')$ using \cref{prp:nk}.
The running times of these two steps are $\Oh(n)$ and $\Oh(k^4 \cdot k)=\Oh(k^5)$, respectively,
for a total of $\Oh(n+k^5)$.
\end{proof}

\section{Tree Edit Distance}
\label{sec:ted}

\subsection{Preliminaries}\label{subsec:treeprelim}

\newcommand{\PSigma}{\mathsf{P}_\Sigma}
\newcommand{\FSigma}{\mathcal{F}_\Sigma}

For an alphabet $\Sigma$, we define a set $\PSigma := \bigcup_{a\in \Sigma}\{\op_a,\cl_a\}$ of parentheses with labels over $\Sigma$.
A \emph{forest} with node labels over $\Sigma$ is a \emph{balanced} string of parentheses over $\Sigma$. Formally, the set of forests with labels over $\Sigma$ is defined as the smallest subset $\FSigma\sub \PSigma^*$ satisfying the following conditions:
\begin{itemize}
    \item $\eps \in \FSigma$,
    \item $F\cdot G \in \FSigma$ for every $F,G\in \FSigma$,
    \item $\op_a \cdot F \cdot \cl_a \in \FSigma$ for every $F\in \FSigma$ and $a\in \Sigma$.
\end{itemize}

For a forest $F$, we define the set of \emph{nodes} $V_F$ as the set of pairs $(i,j)\in [0\dd |F|)$
such that $F[i]$ is an opening parenthesis, $F[j]$ is a closing parenthesis, and $F[i\dd j]$ is balanced.
For a node $u= (i,j)\in V_F$, we denote the positions of the opening and the closing parenthesis by $o(u):=i$ and $c(u):= j$.
A forest $F$ is a tree if $(0,|F|-1)\in V_F$.

\begin{fact}\label{fct:balanced}
A forest $F$ can be preprocessed in linear time so that one can test in constant time whether any given fragment $F[i\dd j)$
is balanced.
\end{fact}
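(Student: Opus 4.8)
The plan is to reduce the balancedness test to a prefix-sum comparison together with one range-minimum query. During preprocessing I would compute the \emph{height array} $h[0\dd |F|]$ given by $h[0]=0$ and $h[t+1]=h[t]+1$ if $F[t]$ is an opening parenthesis, $h[t+1]=h[t]-1$ if $F[t]$ is a closing parenthesis; this takes $\Oh(|F|)$ time. Within the same budget I would build a data structure answering range-minimum queries on $h$ in $\Oh(1)$ time, which is classical. To test a fragment $F[i\dd j)$, the algorithm returns true exactly when $h[i]=h[j]$ and $\min_{t\in[i\dd j]}h[t]=h[i]$; the latter equality is equivalent to $h[t]\ge h[i]$ for all $t\in[i\dd j]$ because $t=i$ already attains the value $h[i]$. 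Both checks run in constant time, so the total preprocessing is $\Oh(|F|)$ and each query is $\Oh(1)$.

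What remains is to justify that these two conditions characterize membership of $F[i\dd j)$ in $\FSigma$. One direction is immediate: every string in $\FSigma$ is a Dyck word under the $\pm 1$ encoding, and after translating by $h[i]$ this says precisely $h[j]=h[i]$ and $h[t]\ge h[i]$ on $[i\dd j]$. For the converse, the key point is that the height condition alone forces the labels to agree, so no separate label check is needed, and this is exactly where the hypothesis that $F$ is a \emph{forest} (rather than an arbitrary string over $\PSigma$) is used. Concretely, if the height condition holds and $F[\ell]$ is an opening parenthesis with $\ell\in[i\dd j)$, then its partner in $F$ — the first position $m>\ell$ with $h[m+1]=h[\ell]$, for which $h[t]>h[\ell]$ holds for all $t\in(\ell\dd m]$ — must satisfy $m<j$: otherwise $\ell<j\le m$ would give $h[j]>h[\ell]$, while $h[\ell]\ge h[i]=h[j]$, a contradiction. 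Symmetrically, the partner of any closing parenthesis of $F[i\dd j)$ lies inside $F[i\dd j)$. Hence the parenthesis matching induced inside $F[i\dd j)$ coincides with the restriction of the matching of $F$, and since $F\in\FSigma$ is built by the rule $\op_a\cdot F'\cdot\cl_a$ with $F'\in\FSigma$, every matched pair of $F$ — and therefore every matched pair of $F[i\dd j)$ — has equal labels. A routine induction on length (splitting a nonempty Dyck string as $\op_a\cdot F''\cdot\cl_a$ when position $0$ matches the last position, and as a concatenation of two shorter Dyck strings otherwise) then shows that a Dyck string all of whose matched pairs have equal labels lies in $\FSigma$, which completes the characterization.

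I expect the only genuine subtlety to be this last equivalence, specifically the observation that label consistency is inherited for free from the assumption $F\in\FSigma$; without that assumption one would need an additional mechanism (for instance a hashing scheme, or lowest-common-ancestor queries on the matching) to verify that closing parentheses agree in label with their partners. The remaining ingredients — linear-time prefix sums and $\Oh(1)$-time range-minimum queries after linear preprocessing — are entirely standard and contribute no difficulty.
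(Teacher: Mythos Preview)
Your proposal is correct and takes essentially the same approach as the paper: compute the height (prefix-sum) array, build a linear-time/constant-query RMQ structure on it, and test balancedness via $h[i]=h[j]=\min_{t\in[i\dd j]}h[t]$. The paper simply asserts this characterization in one line, whereas you additionally justify why label consistency comes for free from $F\in\FSigma$; this extra argument is sound and fills in a detail the paper leaves implicit.
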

\begin{proof}
Let us define the height function $H:[0\dd |F|]\to \mathbb{Z}$ so that $H(i)$ equals the number of opening parentheses in $F[0\dd i)$
    minus the number of closing parentheses in $F[0\dd i)$. Since $F$ is balanced, the fragment $F[i\dd j)$
    is balanced if and only if $H(i)=H(j)=\min_{m\in [i\dd j]} H(m)$. This condition can be tested in $\Oh(1)$
    time after linear-time preprocessing using range minimum queries (RMQ)~\cite{10.1137/0213024}.
\end{proof}

A context with node labels over $\Sigma$ is a pair $C=\langle C_L;C_R \rangle\in \PSigma \times \PSigma$ such that $C_L\cdot C_R$ is a tree.
The node set $V_C$ of a context $C$ is identified with the node set of the underlying tree $C_L\cdot C_R$.
The \emph{depth} of a context $C$ is the number of nodes $u\in V_C$ whose opening parenthesis belongs to $C_L$ and closing parenthesis belongs to $C_R$,
that is, $o(u) < |C_L| < c(u)$.

The (vertical) composition of contexts $C,D$ results in a context $C\star D := \langle C_L\cdot D_L; D_R\cdot C_R \rangle$.
Moreover, vertical composition of a context $C$ and a forest $F$ results in a tree $C \star F := C_L\cdot F \cdot C_R$.
A context $C$ is primitive if it cannot be expressed as vertical composition of $e\ge 2$ copies of the same context.

A context $C$ \emph{occurs} in a forest $F$ at node $u\in V_F$
if $C_L=F[o(u)\dd o(u)+|C_L|)$ and $C_R = F(c(u)-|C_R|\dd c(u)]$, or equivalently, $F[o(u)\dd c(u)]= C\star G$ for some forest $G$

\subsection{Forest Alignments and Weighted Forest Edit Distance}

We begin our discussion of weighted tree edit distance by formally defining forest alignments, which are similar to alignments on strings with just a few additional restrictions to make sure the alignments make valid edits on forests.

\begin{definition}\label{def:ta}
    We say that an alignment $\A\in \aa(F,G)$ is a \emph{forest alignment} of forests $F$ and $G$
    if the following \emph{consistency} conditions are satisfied for each $u\in V_F$:
    \begin{itemize}
        \item either $\A$ deletes both $F[o(u)]$ and $F[c(u)]$, or
        \item there exists $v\in V_G$ such that $F[o(u)] \sim_\A G[o(v)]$ and $F[c(u)] \sim_\A G[c(v)]$.
    \end{itemize}
    The set of all forests alignments of $F$ onto $G$ is denoted with $\ta(F,G)\sub \aa(F,G)$.
\end{definition}

\newcommand{\EPSigma}{\overline{\PSigma}}
\newcommand{\PW}{\tilde{w}}

Define $\EPSigma = \PSigma \cup \eps$ and a mapping
$\lambda : \EPSigma \to \eSigma$ such that $\lambda(\op_a)=\lambda(\cl_a)=a$ for each $a\in \Sigma$, and $\lambda(\eps)=\eps$.
For a weight function $w: \eSigma\times \eSigma \to \mathbb{R}_{\ge 0}$, we define a corresponding weight function $\PW: \EPSigma \times \EPSigma \to \mathbb{R}_{\ge 0}$ so that $\PW(p,q)=w(\lambda(p),\lambda(q))$ for all $p,q\in \EPSigma$.
The \emph{cost} of a forest alignment $\A\in \ta(F,G)$ with respect to a weight function $w$
is defined as $\ted^w_{\A}(F,G) := \frac12 \ed^{\PW}_{\A}(F,G)$.
Moreover, for any two forests $F,G$, we define the \emph{weighted tree edit distance} $\ted^w(F,G)=\min_{\A\in \ta(F,G)}\ted^w_\A(F,G)$,
and for a threshold $k\in \mathbb{R}_\ge 0$, we set 
    \[\ted^w_{\le k}(F,G) = \begin{cases}
        \ted^w(F,G) & \text{if }\ted^w(F,G)\le k,\\
        \infty & \text{otherwise.}
    \end{cases}
    \]
The superscript is omitted if $w$ is the discrete metric over $\eSigma$.

\begin{fact}\label{fct:tedtri}
If $w$ is a quasimetric on $\eSigma$, then $\ted^w$ is a quasimetric on $\FSigma$. In that case, $\ted^w(F,G)$ can be equivalently defined as the minimum cost of a sequence of edits transforming $F$ into $G$, where inserting a node with label $b$ costs $w(\eps, b)$, deleting a node with label $a$ costs $w(a,\eps)$,
and changing a node label from $a$ to $b$ costs $w(a,b)$.
\end{fact}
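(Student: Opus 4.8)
The plan is to follow the same two-step strategy as in \cref{fct:edtri}, lifting it from parentheses to nodes and tracking the factor $\tfrac12$ in $\ted^w_{\A}(F,G)=\tfrac12\ed^{\PW}_{\A}(F,G)$. Observe first that if $w$ is a quasimetric on $\eSigma$, then $\PW$ is a quasimetric on $\EPSigma$, since $\PW(p,p)=w(\lambda(p),\lambda(p))=0$ and $\PW(p,q)+\PW(q,r)=w(\lambda(p),\lambda(q))+w(\lambda(q),\lambda(r))\ge w(\lambda(p),\lambda(r))=\PW(p,r)$. For the triangle inequality, take forests $F,G,H$ and forest alignments $\A\in\ta(F,G)$ and $\B\in\ta(G,H)$, and reuse the inductive product-alignment construction from the proof of \cref{fct:edtri}, which yields $\A\otimes\B\in\aa(F,H)$ with $\ed^{\PW}_{\A\otimes\B}(F,H)\le\ed^{\PW}_{\A}(F,G)+\ed^{\PW}_{\B}(G,H)$ and with the property that every deletion of $\A$ stays a deletion in $\A\otimes\B$, every alignment of $\A$ that sends $F[x]$ to a position deleted by $\B$ becomes a deletion in $\A\otimes\B$, and every alignment of $\A$ sending $F[x]$ to a position aligned by $\B$ to $H[z]$ becomes an alignment of $F[x]$ to $H[z]$. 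The only new point is to check $\A\otimes\B\in\ta(F,H)$: fix $u\in V_F$; if $\A$ deletes $F[o(u)]$ and $F[c(u)]$ then so does $\A\otimes\B$; otherwise $F[o(u)]\sim_\A G[o(v)]$ and $F[c(u)]\sim_\A G[c(v)]$ for some $v\in V_G$, and applying consistency of $\B$ to $v$ shows that either $\A\otimes\B$ deletes both $F[o(u)]$ and $F[c(u)]$ (if $\B$ deletes $G[o(v)]$ and $G[c(v)]$) or $F[o(u)]\sim_{\A\otimes\B}H[o(z)]$ and $F[c(u)]\sim_{\A\otimes\B}H[c(z)]$ for some $z\in V_H$. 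Hence $\ted^w(F,H)\le\tfrac12\ed^{\PW}_{\A\otimes\B}(F,H)\le\ted^w_\A(F,G)+\ted^w_\B(G,H)$, and minimising over $\A,\B$ gives the triangle inequality; together with $\ted^w(F,F)=0$ (via the identity alignment, which is trivially consistent) and non-negativity of costs, $\ted^w$ is a quasimetric on $\FSigma$.

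For the equivalence with edit sequences, the direction ``$\ted^w(F,G)$ is at most the cost of any edit sequence'' follows from the triangle inequality once we note that a single node edit turning $G$ into $G'$ is realised by a forest alignment that touches only the two parentheses of that node and matches everything else perfectly; this alignment is consistent and its cost is exactly $\tfrac12$ of the two equal parenthesis-level contributions, i.e.\ $w(a,\eps)$, $w(a,b)$, or $w(\eps,b)$, as claimed. Chaining such alignments along the edit sequence and using the triangle inequality gives the bound. For the reverse direction, start from an optimal $\A\in\ta(F,G)$ and first record that consistency of $\A$ is symmetric: if $G[o(v)]$ is aligned by $\A$, then since $F$ is balanced the position aligned to it is the opening parenthesis of a unique $u_1\in V_F$, and consistency of $\A$ at $u_1$ forces $F[c(u_1)]\sim_\A G[c(v)]$. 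Consequently, the aligned positions of $\A$ split into pairs $(o(u),c(u))\sim(o(v),c(v))$ that define a partial bijection $M$ between $V_F$ and $V_G$, the deleted positions are exactly the parentheses of the nodes of $F$ outside $\mathrm{dom}(M)$, and the inserted positions are exactly the parentheses of the nodes of $G$ outside the image of $M$. Reading this off produces a set of node edits (delete the unmatched nodes of $F$, relabel each matched node whose label changes, insert the unmatched nodes of $G$) whose total cost equals $\tfrac12\ed^{\PW}_\A(F,G)=\ted^w_\A(F,G)$; performing the deletions from the deepest node upward, then the relabellings, then the insertions from the shallowest node downward, turns this into a genuine sequence of edits from $F$ to $G$ of the same cost, so $\ted^w(F,G)$ is at least the optimal edit-sequence cost.

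The step I expect to be the main obstacle is the bookkeeping in the second paragraph: confirming that consistency of $\A$ is genuinely symmetric (so that aligned parentheses pair up into node-to-node correspondences), and that the prescribed order of single node edits --- deletions bottom-up, then relabellings, then insertions top-down --- actually produces $G$ and never passes through an ill-formed forest. Verifying that consistency is preserved by the product construction is the conceptual heart of the first part, but it reduces to the short case analysis above once one recalls how $\A\otimes\B$ is assembled in the proof of \cref{fct:edtri}.
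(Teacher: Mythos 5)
Your proof of the quasimetric claim coincides with the paper's: take the product alignment $\A\otimes\B$ from Fact~\ref{fct:edtri} and verify, by the same three-case analysis on a node $u\in V_F$, that it satisfies the forest-consistency condition; the cost bound together with the factor $\frac12$ then gives the triangle inequality. The paper's proof ends there --- it never actually argues the edit-sequence characterisation --- whereas you go on to sketch both directions, and your sketch is sound. In particular, your observation that consistency is symmetric (so aligned parentheses pair up into a partial bijection $M$ between $V_F$ and $V_G$), combined with the monotonicity of the underlying string alignment, forces $M$ to preserve ancestry and sibling order; hence the matched subforests of $F$ and $G$ are isomorphic after relabelling, and the natural deletions, relabellings, and insertions realise exactly cost $\ted^w_\A(F,G)$. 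Your worry about ill-formed intermediates is unfounded (deleting, relabelling, or inserting a node in a forest always yields a forest), though the order-preservation property of $M$ is what guarantees the top-down insertions are executable with consecutive designated children. Overall your argument is the more complete one, filling in a step the paper leaves implicit.
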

\begin{proof}
    Consider arbitrary forests $F, G, H \in \FSigma$ as well as alignments $A = (x_t, y_t)_{t=0}^m \in \ta(F, G)$ and $B = (\hy_t, \hz_t)_{t=0}^{\hm} \in \ta(G, H)$.  We can construct the product alignment $\A \otimes \B$ as in the proof of Fact~\ref{fct:edtri}, which has $\ed_{\A \otimes \B}^w (F, H) \leq \ed_{\A}^w (F, G) + \ed_{\B}^w (G, H)$. Therefore, it remains to prove that $\A \otimes \B$ is a tree alignment.
        
    Consider an arbitrary node $u_F \in V_F$. If $\A$ deletes $u_F$ (that is, it deletes both characters $F[o(u_F)]$ and $F[c(u_F)]$), then $\A \otimes \B$
    also deletes $u_F$; see~Case~\ref{cs:del} in the recursive definition of $\A\otimes \B$.
    The other possibility is that $\A$ aligns $u_F$ with some node $u_G\in V_G$ (that is, it aligns $F[o(u_F)]$ with $G[o(u_G)]$ and $F[c(u_F)]$ with $G[c(u_G)]$).
    If $\B$ deletes $u_G$, then $\A \otimes \B$ deletes $u_F$; see Case~\ref{cs:ad} in the recursive definition of $\A\otimes \B$.
    Finally, if $\B$ aligns $u_G$ with some node $u_H\in V_H$, then $\A \otimes \B$ aligns $u_F$ with $u_H$; see Case~\ref{cs:aa} in the recursive definition of $\A\otimes \B$.
\end{proof}
\subsection{Combinatorial Foundations}

\subsubsection{Forests}

Similar to our discussion of weighted string edit distance, before giving our tree edit distance algorithms we prove the existence of small edit distance equivalent forests for synchronized occurrences of large subforests in the input instance forests.

\begin{definition}
    For $k\in \Zz$ and a weight function $w$, forests $P,P'$ are called \emph{$\ted_{\le k}^w$-equivalent}
    if 
    \[\ted_{\le k}^w(F,G) = \ted_{\le k}^w(F[0 \dd p_F) \cdot P' \cdot F[p_F+|P|\dd |F|),G[0 \dd p_G) \cdot P' \cdot G[p_G+|P|\dd |G|))\]
    holds for all forests $F$ and $G$ in which $P$ occurs at positions $p_F$ and $p_G$, respectively, satisfying $|p_F-p_G|\le 2k$.
\end{definition}

\begin{lemma}\label{lem:horizontal_periodic_reduction}
    Let $k\in \Zp$, let $Q$ be a forest, and let $e,e'\in \mathbb{Z}_{\ge 4k}$.
    Then, $Q^e$ and $Q^{e'}$ are $\ted_{\le k}^w$-equivalent for every normalized weight function $w$.
\end{lemma}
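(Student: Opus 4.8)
The plan is to mirror the proof of \cref{lem:periodic_reduction} for strings, adapting it to forests viewed as balanced parenthesis strings and to forest alignments. First I would reduce to the case that $Q$ is primitive (as a string): if $Q=R^m$ with $m\ge 2$, then $Q^e=R^{me}$, $Q^{e'}=R^{me'}$, and since a balanced string that is a power of $R$ forces $R$ itself to be balanced (its height function has period $|R|$ and zero net change), $R$ is a forest with $me,me'\ge 4k$, so it suffices to reason about powers of $R$. Fix a primitive forest $Q$ with $q:=|Q|$ (necessarily even, so $q\ge 2$), and suppose $Q^e$ occurs in forests $F$ and $G$ at positions $p_F$ and $p_G$ with $|p_F-p_G|\le 2k$; set $F'=F[0\dd p_F)\cdot Q^{e'}\cdot F[p_F+eq\dd|F|)$ and $G'$ analogously. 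If $\ted^w(F,G)>k$ I am done by the symmetric argument started from $(F',G',e')$ (it will give $\ted^w(F',G')>k$ too, so both thresholded distances are $\infty$), so I may fix an optimal forest alignment $\A\in\ta(F,G)$ with $\ted^w_\A(F,G)=\ted^w(F,G)\le k$. Since $w$ is normalized, $\ted_\A(F,G)\le\ted^w_\A(F,G)\le k$, hence $\A$ performs at most $k$ node edits, i.e.\ at most $2k$ character edits on the parenthesis strings.

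The core step is an analogue of \cref{clm:periodic_reduction}: I would show there are indices $i_F,i_G$ lying in a range of size $\Oh(k)$ with $F[p_F+i_Fq\dd p_F+(i_F+1)q)\simeq_\A G[p_G+i_Gq\dd p_G+(i_G+1)q)$, i.e.\ $\A$ matches some whole copy of $Q$ inside $F$'s block perfectly onto some whole copy of $Q$ inside $G$'s block. Among any $2k+1$ consecutive copies of $Q$ in $F$'s block, one is untouched by the $\le 2k$ character edits of $\A$ and is therefore matched perfectly to some fragment $G[s_G\dd s_G+q)=Q$. Taking the leftmost element of $\A$ whose coordinates are both at least $(p_F,p_G)$ and using the symmetry between $F$ and $G$, the non-crossing property of $\A$ gives $s_G\ge p_G$; the displacement bound $|x_t-y_t|\le 2k$ combined with $|p_F-p_G|\le 2k$ bounds $s_G$ from above, and the constants (this is exactly where the hypotheses $e\ge 4k$ and $|p_F-p_G|\le 2k$ are spent) guarantee that $[s_G\dd s_G+q)$ lies inside $G$'s copy of $Q^e$. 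Since $Q$ is primitive, an occurrence of $Q$ inside $Q^e$ can only begin at a multiple of $q$, so $s_G=p_G+i_Gq$; and because $Q$ is balanced, this perfect match identifies the node substructure of this copy of $Q$ in $F$ with that in $G$.

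With this matched copy in hand, I would pass from $Q^e$ to $Q^{e'}$ one unit of exponent at a time, decrementing from $\max(e,e')$ down to $\min(e,e')$ (every intermediate exponent is $\ge 4k$, so the claim applies at each step). A single decrement deletes the matched copy $F[p_F+i_Fq\dd p_F+(i_F+1)q)$ from $F$ and the matched copy $G[p_G+i_Gq\dd p_G+(i_G+1)q)$ from $G$, and I adapt $\A$ by deleting the $q$ consecutive diagonal match steps that realized the perfect match; this yields an alignment of the shortened forests of the same cost. The one genuinely new point — and the step I expect to require the most care, since it has no string counterpart — is verifying that the adapted alignment is still a \emph{forest} alignment in the sense of \cref{def:ta}. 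This holds because the excised chunk is balanced and hence, for every node $u$ of the shortened $F$, is either disjoint from $u$'s span or strictly inside it; using that each position of $G$ is $\sim_\A$-aligned to at most one position of $F$ together with non-crossing, the $\A$-partner $v$ of such a $u$ is a node of $G$ whose span relates to the excised chunk of $G$ in the same way, so after the uniform leftward shift of all positions past the excised chunks, the consistency conditions are preserved (deleted nodes stay deleted, aligned nodes stay aligned to their partners).

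Iterating the decrement produces $\A'\in\ta(F',G')$ with $\ted^w_{\A'}(F',G')=\ted^w_\A(F,G)$, hence $\ted^w(F',G')\le\ted^w(F,G)$. The reverse inequality follows by the symmetry between $(F,G,e)$ and $(F',G',e')$ (both exponents are $\ge 4k$, and the setup in \cref{def:ta} and the equivalence definition is symmetric), so $\ted^w(F,G)=\ted^w(F',G')$; thresholding both sides at $k$ gives $\ted^w_{\le k}(F,G)=\ted^w_{\le k}(F',G')$, which is exactly $\ted^w_{\le k}$-equivalence of $Q^e$ and $Q^{e'}$.
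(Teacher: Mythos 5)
Your proof follows the paper's approach: reduce to primitive $Q$ (correctly observing that a balanced power of $R$ forces $R$ itself to be balanced), locate a perfectly matched copy of $Q$ via pigeonhole plus the non-crossing property, then change the exponent one step at a time by excising or reinserting that copy. Your explicit verification that the excised-and-shifted alignment remains a \emph{forest} alignment in the sense of Definition~\ref{def:ta} is a useful addition; the paper only asserts the adaptation is trivial.

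One constant is off, and it matters at the boundary $e=4k$. You pigeonhole over ``$\le 2k$ character edits'' and take $2k+1$ consecutive copies of $Q$, giving $i_F\le 2k$; propagating through $|p_F-p_G|\le 2k$ and the displacement bound $\le 2k$ (with $q\ge 2$) then gives $i_G\le 4k$, so the matched copy could end at $p_G+(4k+1)q$, which sticks out of $Q^{4k}$. The paper instead notes that because $Q$ is balanced, each of the $\le k$ \emph{node} edits (not $\le 2k$ character edits) is wholly confined to a single copy, so $k+1$ copies suffice, giving $i_F\le k$ and $i_G\le 3k$, and the matched copy ends at $p_G+(3k+1)q\le p_G+4kq$. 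You do record that $\A$ makes at most $k$ node edits, so this is a slip rather than a missing idea, but it does bite: your ``decrement from $\max(e,e')$ to $\min(e,e')$'' scheme only fires the claim at exponents $\ge\min(e,e')+1\ge 4k+1$ in the direction you spell out, and the reverse inequality cannot really be discharged by ``symmetry'' of a decrement-only scheme — it needs a growth step starting from the shorter pair, which fires the claim at exponent exactly $\min(e,e')=4k$. Swapping the character-edit count for the node-edit count closes the gap.
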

\begin{proof}
    We assume without loss of generality that $Q$ is primitive. (If $Q=R^m$ for $m\in \mathbb{Z}_{\ge 2}$, then $Q^e=R^{me}$ and $Q^{e'}=R^{me'}$
    can be interpreted as powers of $R$ rather than powers of $Q$.)
    Suppose that $Q^e$ occurs in forests $F$ and $G$ at positions $p_F$ and $p_G$, respectively, satisfying $|p_F-p_G|\le 2k$.
    Denote $F' = F[0\dd p_F)\cdot Q^{e'} \cdot F[p_F+|Q^e|\dd |F|)$ and $G' = G[0\dd p_G)\cdot Q^{e'}\cdot G[p_G + |Q^e|\dd |G|)$.
    Moreover, let $q=|Q|$ and let $\A$ be a forest alignment such that $\ted^{w}(F,G)=\ted^{w}_{\A}(F,G)\le k$.
    \begin{claim}\label{clm:horizontal_periodic_reduction}
        There exist $i_F,i_G\in [0\dd 3k]$ such that 
        \[F[p_F + i_F \cdot q \dd p_F + (i_F+1)\cdot q) \simeq_{\A} G[p_G + i_G\cdot q\dd p_G + (i_G+1)\cdot q).\]
    \end{claim}
    \begin{proof} Let $(f_b,g_b)\in \A$ be the leftmost element of $\A$ such that $f_b \ge p_F$ and $g_b \ge p_G$.
    By symmetry between $F$ and $G$, we assume without loss of generality that $f_b = p_F$.
    Consider the $k+1$ occurrences of $Q$ in $F$ starting at positions $p_F+i\cdot q$ for $i\in [0\dd k]$.
    Since $Q$ is balanced, the alignment $\A$ (of unweighted cost at most $k$) matches at least one of them exactly;
    we can thus define $i_F\in [0\dd k]$ so that $\A$ matches $F[p_F+i_F\cdot  q\dd p_F+(i_F+1)\cdot q)$ exactly to some fragment $G[g_a\dd g_a+q)$.
    By definition of $b$, we have $a \ge b$ and thus $g_a \ge g_b \ge p_G$.
    Moreover, since $\ted_\A(F, G) \leq k$ and $F[p_F + i_F \cdot q] \sim_\A G[g_a]$, we have $g_a \le (p_F+i_F \cdot q)+2k \le p_F + kq+2k \le p_G + kq + 4k \le p_G + 3kq$,
    where the last inequality follows from $q\ge 2$ (recall that $Q$ is balanced, so its length is even).
    Furthermore, since $Q$ is primitive (i.e., distinct from all its non-trivial cyclic rotations), we conclude that $g_a = p_G+i_G \cdot q$ for some $i_G\in [0\dd 3k]$. 
    \end{proof}

    Now, if $Q^e = F[p_F\dd p_F+e\cdot q)=G[p_G\dd p_G+e\cdot q)$ is replaced with $Q^{e'}$ for $e'\ge e-1$,
    we can interpret this as replacing $Q=F[p_F + i_F \cdot q \dd p_F + (i_F+1)\cdot q) = G[p_G + i_G\cdot q\dd p_G + (i_G+1)\cdot q)$
    with $Q^{1+e'-e}$. By \cref{clm:horizontal_periodic_reduction}, $\A$ can be trivially adapted without modifying its cost,
    and hence $\ted^w(F',G')\le \ted^w_{\A}(F,G)=\ted^w(F,G)$.
    If $e'< e-1$, we repeat the above argument to decrement the exponent one step at a time, still concluding that $\ted^w(F',G')\le \ted^w(F,G)$.
    In either case, the converse inequality follows by symmetry between $(F,G,e)$ and $(F',G',e')$.
\end{proof}

We say that a forest $F$ avoids horizontal $k$-periodicity
if there is no forest $Q$ of length $|Q|\in [1\dd 4k]$ such that $Q^{4k+1}$ occurs in $F$. 

\begin{figure}
\begin{tikzpicture}
\draw [fill=cyan] (2.375,3) circle (.25) node (root) {()};
\draw [fill=violet] (0,0) -- (.5,1.5) -- (1, 0) -- cycle;
\draw [fill=violet] (1.25,0) -- (1.75,1.5) -- (2.25, 0) -- cycle;
\draw [fill=violet] (2.5,0) -- (3,1.5) -- (3.5, 0) -- cycle;
\draw [fill=violet] (3.75,0) -- (4.25,1.5) -- (4.75, 0) -- cycle;
\draw (root) -- (.5,1.5);
\draw (root) -- (1.75, 1.5);
\draw (root) -- (3, 1.5);
\draw (root) -- (4.25, 1.5);
\draw [fill=cyan] (10, 3) circle (.25) node (2-1) {()};
\draw [fill=green] (10, 1) circle (.25) node (2-2) {\{\}};
\draw [fill=cyan] (10, -1) circle (.25) node (2-3) {()};
\draw [fill=green] (10, -3) circle (.25) node (2-4) {\{\}};
\draw [fill=violet] (8,1) -- (8.5,2.5) -- (9, 1) -- cycle;
\draw [fill=violet] (8,-3) -- (8.5,-1.5) -- (9, -3) -- cycle;
\draw (2-1) -- (8.5, 2.5);
\draw (2-3) -- (8.5, -1.5);
\draw [fill=orange] (11, 1.5) -- (11.25, 2.5) -- (11.5, 1.5) -- cycle;
\draw [fill=orange] (11, -.5) -- (11.25, .5) -- (11.5, -.5) -- cycle;
\draw [fill=orange] (11, -2.5) -- (11.25, -1.5) -- (11.5, -2.5) -- cycle;
\draw [fill=orange] (11, -4.5) -- (11.25, -3.5) -- (11.5, -4.5) -- cycle;
\draw (2-1) -- (11.25, 2.5);
\draw (2-2) --  (11.25, .5);
\draw (2-3) -- (11.25, -1.5);
\draw (2-4) -- (11.25, -3.5);
\draw [fill=gray] (9, -6) -- (10, -4) -- (11, -6) -- cycle;
\draw (2-1) -- (2-2) -- (2-3) -- (2-4) -- (10, -4);

\node[draw,align=left] at (2.4,4) {Horizontal Periodicity:};
\node[draw,align=right] at (9.8, 4) {Vertical Periodicity:};
\end{tikzpicture}

\caption{Pictured left: horizontal periodicity with string representation ``\textbf{\textcolor{cyan}{(}\textcolor{violet}{[\ldots][\ldots][\ldots][\ldots]}\textcolor{cyan}{)}}''. Pictured right: vertical periodicity with string representation ``\textbf{\textcolor{cyan}{(}\textcolor{violet}{[\ldots]}\textcolor{green}{\{}\textcolor{cyan}{(}\textcolor{violet}{[\ldots]}\textcolor{green}{\{}\textcolor{gray}{[\ldots]}\textcolor{orange}{[\ldots]}\textcolor{green}{\}}\textcolor{orange}{[\ldots]}\textcolor{cyan}{)}\textcolor{orange}{[\ldots]}\textcolor{green}{\}}\textcolor{orange}{[\ldots]}\textcolor{cyan}{)}}''.} 
\label{fig:tree_per}
\end{figure}
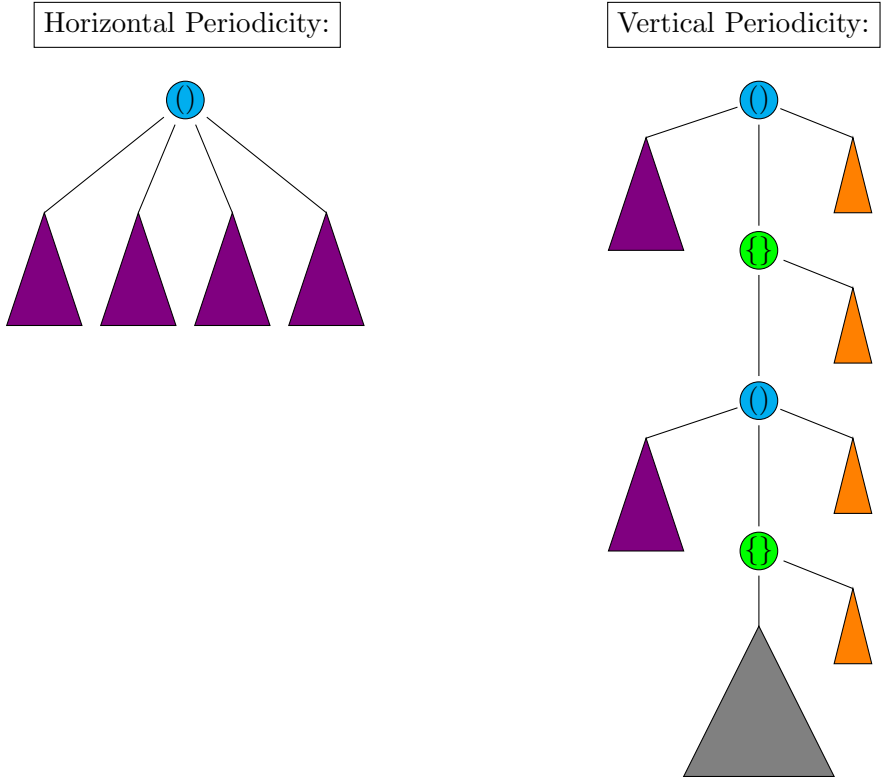

\begin{lemma}\label{lem:horizontal_aperiodic_reduction}
    Let $k\in \Zp$ and let $P,P'$ be forests of length $|P|,|P'|\ge 74k^3$ avoiding horizontal $k$-periodicity.
    Then, $P$ and $P'$ are $\ted_{\le k}^w$-equivalent for every normalized quasimetric $w$.
    \end{lemma}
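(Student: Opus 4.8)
The plan is to mimic the structure of the proof of \cref{lem:aperiodic_reduction}, but exploit the fact that $w$ is a quasimetric to compensate for the missing hypothesis (present for strings) that $P$ and $P'$ agree on a long prefix and a long suffix. Concretely, it suffices to fix forests $F,G$ together with occurrences of $P$ at positions $p_F$ and $p_G$ with $|p_F-p_G|\le 2k$, to write $F'=F[0\dd p_F)\cdot P'\cdot F[p_F+|P|\dd|F|)$ and $G'=G[0\dd p_G)\cdot P'\cdot G[p_G+|P|\dd|G|)$, and to prove $\ted^w_{\le k}(F,G)=\ted^w_{\le k}(F',G')$. Since $P$ and $P'$ both avoid horizontal $k$-periodicity and both have length $\ge 74k^3$, the roles of $(F,G,P)$ and $(F',G',P')$ are symmetric, so it is enough to show that $\ted^w(F,G)\le k$ implies $\ted^w(F',G')\le\ted^w(F,G)$; the reverse inequality and the handling of the value $\infty$ then follow formally. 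The heart of the argument is the following claim: \emph{an optimal forest alignment $\A$ of $(F,G)$ of cost at most $k$ can be modified, without increasing its cost, so that it perfectly matches the whole occurrence $F[p_F\dd p_F+|P|)$ to the whole occurrence $G[p_G\dd p_G+|P|)$ and has $(p_F,p_G)\in\A$ and $(p_F+|P|,p_G+|P|)\in\A$.} Granting this, the restriction of $\A$ to the part lying outside the two occurrences of $P$ is a forest alignment of the forests obtained from $F$ and $G$ by contracting those occurrences to a single fresh leaf; re-expanding that leaf to a perfectly matched copy of $P'$ (instead of $P$) yields a forest alignment of $(F',G')$ of the same cost, which gives the desired inequality.

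To prove the claim, I first establish a \emph{central-fragment} statement analogous to \cref{clm:aperiodic_reduction}. Since $w$ is normalized, the unweighted cost $\ed^{\PW}_\A(F,G)=2\ted^w_\A(F,G)$ is at most $2k$, so $\A$ makes at most $2k$ edits inside the length-$37k^3$ prefix of the occurrence of $P$ in $F$; by a forest-aware version of \cref{fct:str_decomp} this prefix contains a balanced fragment $R$ of length $\ge 10k^2$ that $\A$ matches perfectly to a balanced fragment of $G$. Using that a matched opener and its matched closer can have shifts differing by at most the total number of edits ($\le 2k$), together with the hypothesis that $P$ avoids horizontal $k$-periodicity, this match is forced to occur at the ``correct'' offset $p_G-p_F$: otherwise $G$ would contain two occurrences of $R$ at distance in $[1\dd 4k]$, so $R$ would have a period $\le 4k$ and exponent $\ge 4k+1$, and — here one needs the auxiliary fact that a balanced string with short period and large exponent has a \emph{balanced} string period $Q$ of length $\le 4k$ (its primitive root is height-balanced because no non-height-balanced string has a balanced power, and it never dips below $0$ because the whole fragment $R$ is balanced) — this would place an occurrence of $Q^{4k+1}$ inside $P$, a contradiction. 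Hence $(p_F+a,p_G+a)\in\A$ for some $a\in[0\dd 37k^3)$; a symmetric argument in the length-$37k^3$ suffix gives $(p_F+|P|-b,p_G+|P|-b)\in\A$ for some $b\in[0\dd 37k^3)$. As $|P|\ge 74k^3>a+b$, the alignment $\A$ aligns $F[p_F+a\dd p_F+|P|-b)$ with $G[p_G+a\dd p_G+|P|-b)$, both equal to $P[a\dd|P|-b)$, so by optimality it matches them perfectly.

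It remains to absorb the residual prefix $P[0\dd a)$ and suffix $P[|P|-b\dd|P|)$ of the two occurrences, and this is exactly where the quasimetric hypothesis enters. The point is that the residual prefix of the occurrence of $P$ in $F$ and the residual prefix of the occurrence of $P$ in $G$ are the \emph{same} string $P[0\dd a)$ (because $P$ occurs in both $F$ and $G$), and likewise for the suffix $P[|P|-b\dd|P|)$. Invoking the quasimetric identity that prepending or appending a common string leaves the (forest) edit distance unchanged — the forest analogue of \cref{fct:quasi}, itself proved from the triangle inequality — I re-route $\A$ on the part to the left of $(p_F+a,p_G+a)$ so that it perfectly matches the trailing copies of $P[0\dd a)$ and has $(p_F,p_G)\in\A$, without increasing its cost, and symmetrically on the right; combined with the perfectly matched central fragment this upgrades $\A$ to a perfect match of the entire occurrences of $P$ with aligned endpoints, as required by the claim. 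The main obstacle I anticipate is making this last step respect the rigidity of forest alignments: the re-routing must not violate the consistency condition for nodes of $F$ (and $G$) whose occurrence straddles that of $P$ (opener strictly before $p_F$, closer strictly after $p_F+|P|$), which forces one to carry out the re-routing on the forests with the occurrence of $P$ contracted to a fresh leaf $\langle\op_\$,\cl_\$\rangle$ (extending $w$ by a large cost on $\$$), rather than on raw string fragments; the accompanying bookkeeping of the constants $10k^2$, $37k^3$, $74k^3$ across the counting step, the trimming to a balanced fragment, and the period/exponent estimates is the other place where care is needed.
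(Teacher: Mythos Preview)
Your overall skeleton (modify $\A$ so that it matches the two occurrences of $P$ perfectly, then swap $P$ for $P'$) is exactly the paper's strategy, and your use of \cref{fct:quasi} at the ``absorption'' step is the right idea. The gap is in the central-fragment step: your argument for producing a point $(p_F+a,p_G+a)\in\A$ with $a\in[0\dd 37k^3)$ does not go through for forests.

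You need a \emph{balanced} matched fragment $R$ inside the length-$37k^3$ prefix of the occurrence of $P$ in $F$, so that a wrong offset would force a balanced periodic substring of $P$. But the prefix $P[0\dd 37k^3)$ need not contain any non-trivial balanced fragment at all: take $P=\op_a^{37k^3}\cl_a^{37k^3}$ (a single deep path). This $P$ avoids horizontal $k$-periodicity (its balanced substrings are only the nested blocks $\op_a^i\cl_a^i$, none of which is a $(4k{+}1)$st power of a short forest), yet its length-$37k^3$ prefix is $\op_a^{37k^3}$, which has no balanced sub-fragment. So no ``forest-aware \cref{fct:str_decomp}'' can hand you a balanced $R$ in the prefix. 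Worse, the localized conclusion itself can fail: with $F=\op_a^{N}\cl_a^{N}$, $G=\op_a^{N+1}\cl_a^{N+1}$, $N=37k^3$, $p_F=0$, $p_G=1$, the optimal forest alignment that inserts the innermost node of $G$ has offset $f_t-g_t\in\{0,-1,-2\}$ with the unique point of offset $-1=p_F-p_G$ at $(N,N+1)$, i.e.\ $a=N\notin[0\dd 37k^3)$. So there is no correct-offset point in the prefix.

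The paper avoids this by not localizing. It takes \emph{one} long matched fragment $R$ anywhere inside the occurrence of $P$ (length $\ge\frac{|P|-2k}{2k+1}\ge 24k^2$, via \cref{fct:str_decomp}); if $R$ sits at the wrong offset it has a period $q\le 4k$, and horizontal $k$-aperiodicity of $P$ only tells you that no \emph{rotation} of $Q:=R[0\dd q)$ is balanced. The paper then uses this last fact structurally: $Q$ has nonzero net height, hence there is a node $u$ of $F$ with $o(u)$ in the first period of $R$ and $c(u)$ beyond $R$, so $c(u)-o(u)>8k$. Because $\A$ is a \emph{forest} alignment, $u$ is aligned to a node $v$ of $G$, while the corresponding node $v'$ in $G$'s copy of $P$ satisfies $|o(v)-o(v')|,|c(v)-c(v')|\le 4k$; nesting forces $(o(v')-o(v))$ and $(c(v')-c(v))$ to have opposite signs, and an intermediate-value argument on $f_t-g_t$ then yields a single index $t$ anywhere in the span of $P$ with $f_t-g_t=p_F-p_G$. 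With that one point in hand, the paper builds $\A'$ explicitly (delete/insert the short overhangs at the boundaries, match the whole $P$ to $P$ in between), checks node-consistency by cases, and bounds the cost via \cref{fct:quasi}. In short: the missing idea is that horizontal $k$-aperiodicity is too weak to rule out periodicity of an arbitrary matched fragment, but it is strong enough to force that fragment's period to be height-unbalanced, which in turn lets you exploit the forest-alignment constraint on a long spanning node rather than a balanced sub-fragment.
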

    \begin{proof}
        Suppose that $P$ occurs in forests $F$ and $G$ at positions $p_F$ and $p_G$, respectively, satisfying $|p_F-p_G|\le 2k$.
        Denote $F' = F[0\dd p_F)\cdot P' \cdot F[p_F+|P|\dd |F|)$  and $G' = G[0\dd p_G)\cdot P' \cdot G[p_G + |P|\dd |G|)$.

        Let $\A=(f_t,g_t)_{t=0}^m$ be an alignment such that $\ted^w(F,G)= \ted^w_\A(F,G) \le k$. 
        Moreover, let $(f_a,g_a)\in \A$ be the leftmost element of $\A$ such that $f_a \ge p_F$ or $g_a \ge p_G$,
        and let $(f_b,g_b) \in \A$ be the leftmost element of $\A$ such that $f_b \ge p_F+|P|$ and $g_b \ge p_G+|P|$.
        We construct an alignment $\A'$ so that it:
        \begin{itemize}
            \item aligns $F[0\dd f_a)$ with $G[0\dd g_a)$ in the same way as $\A$ does;
            \item deletes $F[f_a\dd p_F)$ and inserts $G[g_a\dd p_G)$ (at least one of these fragments is empty);
            \item matches $F[p_F\dd p_F+|P|)=P$ with $G[p_G\dd p_G+|P|)=P$;
            \item deletes $F[p_F+|P|\dd f_b)$ and inserts $G[p_G+|P|\dd g_b)$  (at least one of these fragments is empty);
            \item aligns $F[f_b\dd |F|)$ with $G[g_b\dd |G|)$ in the same way as $\A$ does.
        \end{itemize}
        To prove that $\A'$ is a forest alignment, let us consider several possibilities for a node $u$ in $F$.
        \begin{itemize}
            \item If $u$ is inside $P=F[p_F\dd p_F+|P|)$, then $\A'$ matches $u$ to the corresponding node inside $P= G[p_G\dd p_G+|P|)$.
            \item If $u$ is outside $P=F[p_F\dd p_F+|P|)$ and $\A$ aligns $u$ to a node $v$ of $G$ inside $P= G[p_G\dd p_G+|P|)$, then $o(u),c(u)\in [f_a\dd p_F)\cup [p_F+|P|\dd f_b)$ because of the non-crossing property of $\A \ni (f_a,g_a),(f_b,g_b)$. Hence, $\A'$ deletes $u$.
            \item If $u$ is outside $P=F[p_F\dd p_F+|P|)$ and $\A$ aligns $u$ to a node $v$ of $G$ outside $P= G[p_G\dd p_G+|P|)$, then $o(u),c(u)\in [0\dd f_a)\cup [f_b\dd |F|)$ because of the non-crossing property of $\A \ni (f_a,g_a),(f_b,g_b)$. Hence, $\A'$ also aligns $u$ to $v$.
            \item If $u$ is outside $P=F[p_F\dd p_F+|P|)$ and $\A$ deletes $u$, then $\A'$ also deletes $u$. 
        \end{itemize}
    
        Our next goal is to prove that $\ted_{\A'}^w(F,G)\le \ted_\A^w(F,G)$. This relies on the following claim.
        \begin{claim}
            There exists $t\in [a\dd b]$ such that $f_t-g_t = p_F - p_G$.
        \end{claim}
        \begin{proof}
            Let us partition $P=F[p_F\dd p_F+|P|)$ into individual characters representing deletions or substitutions of $\A$
            and maximal fragments that $\A$ matches perfectly (to fragments of $G$).
            By \cref{fct:str_decomp}, the number of such fragments is at most $2k+1$ and their total length is at least $|P|-2k$.
            Hence, one of these fragments, denoted $R=F[r_F\dd r_F+|R|)$, is of length at least $\frac{|P|-2k}{2k+1}\ge 24k^2$.
            Suppose that the fragment of $G$ matched perfectly to $R$ is $G[r_G\dd r_G+|R|)$.
            If $r_F-p_F = r_G-p_G$, the claim holds for $t$ such that $(f_t,g_t)=(r_F,r_G)$. 
            Otherwise, we note that $R$ has period $q:=|(r_F-p_F)-(r_G-p_G)|\in [1\dd 4k]$. 
            Let $q=R[0\dd q)$ and observe that $\frac{|R|}{q}\ge \frac{24k^2}{4k}\ge 4k+2$.
            Hence, $Q^{4k+2}$ is a substring of $P$; since $P$ avoids horizontal $k$-periodicity, we conclude that no cyclic rotation of $Q$ is balanced. 

            As $Q$ is a substring of a balanced string $P$, this means that the number of opening parentheses in $Q$
            does not match the number of closing parentheses in $Q$.
            By symmetry (up to reversal), we assume without loss of generality that $Q$ has more opening than closing parentheses.
            Thus, there exists a node $u$ in $F$ such that $o(u)\in [r_F\dd r_F+q)$
            yet $c(u)\ge r_F+|R|$. In particular, $c(u)-o(u) \ge |R|-q \ge 24k^2-4k > 8k$.
            Let $v,v'$ be the nodes in $G$ matched with $u$ by $\A$ and $\A'$, respectively.
            Note that $|o(v)-o(v')|\le 4k$ and $|c(v)-c(v')|\le 4k$.
            Due to $c(v')-o(v')=c(u)-o(u) > 8k$, we conclude that $v$ is ancestor of $v'$ or vice versa.
            In either case, we have $0 \ge (o(v')-o(v))\cdot (c(v')-c(v)) = ((o(u)-o(v))-(p_F-p_G))\cdot  ((c(u)-c(v))-(p_F-p_G))$.
            The value $(f_t-g_t)-(p_F-p_G)$ can change by at most one for subsequent indices $t$.
            The sign of this value is different when $(f_t,g_t)=(o(u),o(v))$ and $(f_t,g_t)=(c(u),c(v))$,
            so it must be equal to $0$ at some intermediate index $t$.
            \end{proof}
    
        The alignments $\A$  and $\A'$ only differ in how they align $F[f_a\dd f_t)$ with $G[g_a\dd g_t)$
        and $F[f_t\dd f_b)$ with $G[g_t\dd g_b)$, and, by \cref{fct:quasi}, $\A'$ provides an optimum alignment of these fragments. 
        Now, if $P=F[p_F\dd p_F+|P|)=G[p_G\dd p_G+|P|)$ is modified to $P'$,
        then $\A'$ can be trivially adapted without modifying its cost and hence $\ted^w(F',G')\le \ted^w_{\A'}(F,G)=\ted^w(F,G)$.
        The converse inequality follows by symmetry between $(F,G,P)$ and $(F',G',P')$.
    \end{proof}


\subsubsection{Contexts}

\begin{definition}
    For $k\in \Zz$ and a weight function $w$, contexts $P=\langle P_L;P_R\rangle$ and $P'=\langle P'_L;P'_R\rangle$ are called \emph{$\ted_{\le k}^w$-equivalent}
    if 
    \begin{multline*}\ted_{\le k}^w(F,G) = \ted_{\le k}^w(F[0 \dd o(u)) \cdot P'_L \cdot F[o(u)+|P_L| \dd c(u)-|P_R|] \cdot P'_R \cdot F(c(u) \dd |F|),\\ G[0 \dd o(v)) \cdot P'_L \cdot G[o(v)+|P_L| \dd c(u)-|P_R|] \cdot P'_R \cdot G(c(v) \dd |G|))\end{multline*}
    holds for all forests $F$ and $G$ in which $P$ occurs at nodes $u$ and $v$, respectively, satisfying $|o(u)-o(v)|\le 2k$ and  $|c(u)-c(v)|\le 2k$.
\end{definition}

\begin{lemma}\label{lem:vertical_periodic_reduction}
    Let $k\in \Zp$, let $Q$ be a context, and let $e,e'\in \mathbb{Z}_{\ge 6k}$.
    Then, $Q^e$ and $Q^{e'}$ are $\ted_{\le k}^w$-equivalent for every normalized weight function $w$.
\end{lemma}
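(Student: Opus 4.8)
The plan is to prove this exactly as its analogues for strings (\cref{lem:periodic_reduction}) and for horizontal concatenation of forests (\cref{lem:horizontal_periodic_reduction}): exhibit one copy of $Q$ inside the two synchronized occurrences of $Q^e$ that an optimal forest alignment matches \emph{perfectly}, so that replacing $Q^e$ by $Q^{e'}$ just replaces that copy by a power of $Q$, which the alignment absorbs at no extra cost. First I would reduce to the case that $Q$ is primitive as a context: if $Q=R^m$ with $m\ge 2$, then $Q^e=R^{me}$ and $Q^{e'}=R^{me'}$ with $me,me'\ge 6k$, so it suffices to argue for $R$. To transfer the string machinery I would use the spine encoding from the overview: with $h$ the depth of $Q$, the spine of $Q^e$ is encoded by a string $\hat Q^e$ over an enriched alphabet, the $i$-th symbol recording the label of the $i$-th spine node together with the forests hanging to its left and to its right; one checks that $Q$ is primitive as a context iff $\hat Q$ is primitive as a string, so we may assume $\hat Q$ primitive of length $h$. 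Now suppose $Q^e$ occurs in forests $F$ and $G$ at nodes $u$ and $v$ with $|o(u)-o(v)|\le 2k$ and $|c(u)-c(v)|\le 2k$; let $u_0\supset u_1\supset\cdots\supset u_e$ be the nested nodes with $o(u_i)=o(u)+i|Q_L|$ and $c(u_i)=c(u)-i|Q_R|$ (so the $j$-th copy of $Q$ on the spine contributes $F[o(u_{j-1})\dd o(u_j))=Q_L$ on the left and $F(c(u_j)\dd c(u_{j-1})]=Q_R$ on the right), define $v_0\supset\cdots\supset v_e$ analogously in $G$, and let $\A$ be a forest alignment with $\ted^w_\A(F,G)=\ted^w(F,G)\le k$; finally let $F',G'$ be obtained by substituting $Q^{e'}$ for these occurrences of $Q^e$ as in the definition of $\ted^w_{\le k}$-equivalence for contexts.

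The key claim (the vertical counterpart of \cref{clm:periodic_reduction}) is that there exist $i_F,i_G=\Oh(k)$, small enough to lie within the $e,e'\ge 6k$ copies, such that $\A$ matches the $i_F$-th copy of $Q$ on the spine of the occurrence at $u$ perfectly onto the $i_G$-th copy of $Q$ on the spine of the occurrence at $v$. Since $\ted_\A=\frac12\ed_\A$, the alignment $\A$ performs at most $k$ node operations and hence edits at most $2k$ parentheses. The left spine parts $F[o(u_{j-1})\dd o(u_j))$ are pairwise disjoint, so are the right spine parts $F(c(u_j)\dd c(u_{j-1})]$, and all these intervals are disjoint, so each edited parenthesis lies in at most one of them; hence all but at most $2k$ of the $e$ copies are \emph{clean}, with their entire $Q_L$-part and $Q_R$-part matched by $\A$. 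Fix a clean copy $i_F\le 2k$. Then $o(u_{i_F-1})$ and $c(u_{i_F-1})$ are matched (they are the first, resp.\ last, characters of the clean intervals), so by consistency of forest alignments they are sent to $o(v')$ and $c(v')$ of a common node $v'\in V_G$; and since a matched run is mapped to a contiguous block, $G[o(v')\dd o(v')+|Q_L|)=Q_L$ and $G(c(v')-|Q_R|\dd c(v')]=Q_R$, i.e.\ $Q$ occurs at $v'$. Mirroring \cref{lem:periodic_reduction}, I would then pin down $v'$: taking the leftmost $(f_b,g_b)\in\A$ with $f_b\ge o(u)$ and $g_b\ge o(v)$ and assuming $f_b=o(u)$ by the symmetry between $F$ and $G$, monotonicity of $\A$ gives $o(v')\ge g_b\ge o(v)$, while $o(v')\le o(u_{i_F-1})+2k$ together with $|o(u)-o(v)|\le 2k$ forces $o(v')-o(v)=\Oh(k)\cdot|Q_L|$; since $\hat Q$ is primitive and the occurrence $G[o(v)\dd o(v)+e|Q_L|)=Q_L^e$ is long, this yields $o(v')=o(v)+i_G|Q_L|$ for some $i_G=\Oh(k)$, i.e.\ $v'=v_{i_G}$, proving the claim (the hanging subtrees of the copy are matched because they sit inside the clean $Q_L$- and $Q_R$-parts, so no triangle inequality is needed, unlike in \cref{lem:horizontal_aperiodic_reduction}).

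Given the claim, replacing the occurrences of $Q^e$ by $Q^{e'}$ with $e'\ge e-1$ is the same as replacing the perfectly matched $i_F$-th copy of $Q$ by $Q^{1+e'-e}$; $\A$ is adapted without changing its cost, so $\ted^w(F',G')\le\ted^w_\A(F,G)=\ted^w(F,G)$, and for $e'<e-1$ one iterates, decrementing the exponent one step at a time; the converse inequality follows by symmetry between $(F,G,e)$ and $(F',G',e')$. The genuine obstacle is the claim, and within it the passage from ``$Q$ occurs at $v'$'' to ``$v'=v_{i_G}$ is one of the nested spine nodes of $G$'s occurrence'': unlike the string case, a forest alignment may send a spine node of $F$'s occurrence into a hanging subtree of $G$, or outside the occurrence of $G$, so this step has to be carried out directly on parentheses, balancing the length of the occurrences against the displacement bound $2k$ and the primitivity of $\hat Q$ — and it is precisely this balancing that the constant $6k$ (rather than the $4k$ sufficient in the horizontal case) is tuned to accommodate.
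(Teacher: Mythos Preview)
Your outline is the paper's: reduce to primitive $Q$, pigeonhole a perfectly-matched copy $i_F$ in $F$, show its image sits at some $v_{i_G}$ in $G$, then adapt~$\A$. You correctly flag the passage from ``$Q$ occurs at $v'$'' to ``$v'=v_{i_G}$'' as the crux, but your attempt at it does not go through. Two concrete problems. First, you only bound $o(v')$; without a matching bound on $c(v')$ you cannot even place $v'$ on the spine of $G$'s occurrence rather than in a hanging subtree. Second, the inference ``$\hat Q$ primitive and $G[o(v)\dd o(v)+e|Q_L|)=Q_L^{e}$ long, hence $o(v')-o(v)$ is a multiple of $|Q_L|$'' is false: primitivity of $\hat Q$ does \emph{not} imply primitivity of $Q_L$ as a string. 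For instance, $Q=\langle \op_a\op_a;\,\cl_a\op_b\cl_b\cl_a\rangle$ has $Q_L=\op_a\op_a$ of period~$1$, yet its two depth-$1$ layers differ, so $\hat Q$ is primitive and $Q$ is a primitive context. Thus you cannot run a string-primitivity argument on $Q_L^{e}$.

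The paper fills the gap differently. It bounds $c(v')$ symmetrically to $o(v')$, obtaining $o(v_0)\le o(v')\le o(v_{5k})$ and $c(v_{5k})\le c(v')\le c(v_0)$; this forces $v'$ onto the spine path between $v_0$ and $v_{5k}$. Assuming $o(v_i)<o(v')<o(v_{i+1})$ for some $i$, the two occurrences of $Q_L$ starting at $o(v_i)$ and at $o(v')$ (the shifted one still lies inside $Q_L^{e}$ precisely because $e\ge 6k$) yield $AB=BA$ with $A=G[o(v_i)\dd o(v'))$ and $B=G[o(v')\dd o(v_{i+1}))$; the analogous commutation holds for $Q_R$. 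Commuting strings are powers of a common primitive, and counting unmatched parentheses in these primitive roots shows that $Q_L$ and $Q_R$ are string powers of the \emph{same} exponent $\ell/\gcd(\ell,\ell')>1$ (with $\ell$ the depth of $Q$ and $\ell'$ the spine-depth of $v'$ below $v_i$), so $Q$ is a nontrivial context power---contradiction. Your spine-encoding route can be salvaged, but only after first pinning $v'$ to the spine via the $c(v')$ bound and then arguing with $\hat Q^{e}$ rather than $Q_L^{e}$. A minor secondary point: your clean-copy pigeonhole ignores insertions of $G$-characters, so ``every $F$-character matched'' does not force the image to equal $Q_L$; count total string edits (also $\le 2k$), or count node edits as the paper does.
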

\begin{proof}
    We assume without loss of generality that $Q$ is primitive. (If $Q=R^m$ for $m\in \mathbb{Z}_{\ge 2}$, then $Q^e=R^{me}$ and $Q^{e'}=R^{me'}$
    can be interpreted as powers of $R$ rather than powers of $Q$.)
    Let $Q=\langle Q_L;Q_R\rangle$ with $q_L = |Q_L|$ and $q_R = |Q_R|$.
    Suppose that $Q^e$ occurs in forests $F$ and $G$ at nodes $u$ and $v$, respectively,  satisfying $|o(u)-o(v)|\le 2k$ and  $|c(u)-c(v)|\le 2k$.
Denote
\begin{align*}
    F' = F[0 \dd o(u)) \cdot Q_L^{e'} \cdot F[o(u)+|Q_L^e| \dd c(u)-|Q_R^e|] \cdot Q_R^{e'} \cdot F(c(u) \dd |F|), \\ 
    G' = G[0 \dd o(v)) \cdot Q_L^{e'} \cdot G[o(v)+|Q_L^e| \dd c(u)-|Q_R^e|] \cdot Q_R^{e'} \cdot G(c(v) \dd |G|).
\end{align*}
    For $i\in [0\dd e)$, let $u_i$ be the node of $F$
    with $o(u_i)=o(u)+i\cdot q_L$ (and $c(u_i)=c(u)-i\cdot q_R$) and let $v_i$ be the node of $G$ with $o(v_i)=o(v)+i\cdot q_L$ (and $c(v_i)=c(v)-i\cdot q_R$).
    Moreover, let $\A$ be an optimal forest alignment such that $\ted(F, G) = \ted_{\A} (F, G) \le k$.
   
    \begin{claim}\label{clm:vertical_periodic_reduction}
        There exist $i_F,i_G\in [0\dd 5k]$ such that \begin{align*}
            F[o(u)+i_F\cdot q_L\dd o(u)+(i_F+1)\cdot q_L)&\simeq_{\A} G[o(v)+i_G\cdot q_L\dd o(v)+(i_G+1)\cdot q_L),\\
            F(c(u)-(i_F+1)\cdot q_R\dd c(u)-i_F\cdot q_R]&\simeq_{\A} G(c(v)-(i_F+1)\cdot q_R\dd c(v)-i_G\cdot q_R].\end{align*}
    \end{claim}
    \begin{proof}
        Let $(f_b,g_b)\in \A$ be the leftmost element of $\A$ such that $f_b \ge o(u)$ and $g_b \ge o(v)$.
        By symmetry between $F$ and $G$, we may assume without loss of generality that $f_b = o(u)$.
        Consider the $k+1$ disjoint occurrences of $C$ in $F$ at positions $(o(u)+i\cdot q_L,c(u)+i\cdot q_R)$ for $i\in [0\dd k]$.
        The alignment $\A$ (of unweighted cost at most $k$) must match one of these occurrences perfectly to a context within $G$.
        We pick the index $i_F \in [0\dd k]$ of one such perfectly matched occurrence and suppose that it occurs at a node $v'$ of $G$.

        In particular, \begin{align*}F[o(u_{i_F})\dd o(u_{i_F})+q_L)&\simeq_{\A} G[o(v')\dd o(v')+q_L),\\
            F(c(u_{i_F})-q_R\dd c(u_{i_F})]&\simeq_{\A} G(c(v')-q_R\dd c(v')].\end{align*}
        Since $(f_b,g_b)\in \A$, we must have $o(v')\ge g_b \ge o(v)$ by the non-crossing property of $\A$.
        At the same time, since the unweighted cost of $\A$ does not exceed $k$, 
        we have $o(v')\le o(u_{i_F}) + 2k \le o(u)+kq_L + 2k \le o(v)+kq_L + 4k \le o(v)+5kq_L$.
        Similarly, $c(v') \ge c(v)-5kq_R$, which also implies $c(v') \le c(v)$.

        Our next goal is to show that $v'=v_{i_G}$ for some $i_G\in [0\dd 5k]$.
        For a proof by contradiction, suppose that $o(v_i)<o(v')<o(v_{i+1})$ for some $i\in [0\dd 5k)$.
        Due to $c(v')> c(v)-5kq_R$, this also implies that $c(v_i)>c(v')>c(v_{i+1})$,
        i.e., that $v'$ is a node on the path between $v_i$ and $v_{i+1}$.
        Suppose that the length of this path is $\ell$ and the node $v'$ is at distance $\ell'$ from $v_i$.
        Hence,  $G[o(v_i)\dd o(v'))$ has $\ell'$ unmatched opening parentheses out of the $\ell$  unmatched opening parentheses in $Q_L$. 
        Moreover, $G[o(v_i)\dd o(v'))\cdot G[o(v')\dd o(v_{i+1}))=Q_L = G[o(v')\dd o(v_{i+1}))\cdot G[o(v_i)\dd o(v'))$,
        and thus there is a primitive string $Q_L$ such that $G[o(v')\dd o(v_{i+1}))$ and $G[o(v_i)\dd o(v'))$ are both powers of $Q_L$.
        The number of unmatched opening parentheses is $Q_L$ must be a common divisor of $\ell$ and $\ell'$,
        i.e., $Q_L$ can be expressed as a string power with exponent $\ell/\gcd(\ell,\ell')$.
        A symmetric argument shows that $Q_R$ can be expressed as a string power with exponent $\ell/\gcd(\ell,\ell')$.
        Overall, we conclude that $C$ can be expressed as a context power with exponent $\ell/\gcd(\ell,\ell')$,
        contradicting the primitivity of $C$.
        Hence, $v'=v_{i_G}$ for some $i_G\in [0\dd 5k]$ holds as claimed and, in particular,
        $o(v') = o(v)+i_G q_L$ and $c(v')=c(v)-i_Gq_R$.
    \end{proof}

    Now, if the occurrences of $Q^e$ at nodes $u,v$ are replaced with $Q^{e'}$ for $e'\ge e-1$,
    we can interpret this as replacing the occurrences of $Q$ at nodes $u_{i_F},v_{i_G}$
    with $Q^{1+e'-e}$. By \cref{clm:vertical_periodic_reduction}, $\A$ can be trivially adapted without modifying its cost,
    and hence $\ted^w(F',G')\le \ted^w_{\A}(F,G)=\ted^w(F,G)$.
    If $e'< e-1$, we repeat the above argument to decrement the exponent one step at a time, still concluding that $\ted^w(F',G')\le \ted^w(F,G)$.
    In either case, the converse inequality follows by symmetry between $(F,G,e)$ and $(F',G',e')$.
\end{proof}

We say that a context $P=\langle P_L; P_R\rangle$ avoids vertical $k$-periodicity if 
it cannot be expressed as $P = C \star Q^{6k+1}\star D$ for some contexts $C,Q,D$
satisfying $|Q| \in [1\dd 8k]$.

\begin{lemma}\label{lem:vertical_aperiodic_reduction}
    Let $k\in \Zp$, let $P=\langle P_L;P_R\rangle,P'=\langle P'_L;P'_R\rangle$ be contexts of length $|P_L|+|P_R|,|P'_L|+|P'_R|\ge 578k^4$ that avoid
    vertical $k$-periodicity and whose halves do not contain any balanced substring of length more than $74k^3$.
    Then, $P$ and $P'$ are $\ted_{\le k}^w$-equivalent for every normalized weight function $w$.
\end{lemma}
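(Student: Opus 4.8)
The plan is to imitate the proof of \cref{lem:horizontal_aperiodic_reduction}, but now a context occurrence has two ``halves'': $P_L$, the descending part reading the spine opening parentheses from the root down to the hole interleaved with the left-hanging forests, and $P_R$, the ascending part of spine closing parentheses interleaved with the right-hanging forests. Accordingly we must anchor the alignment at the hole-side end of each half (not just at one place as for forests). Concretely, suppose $P$ occurs at a node $u$ of a forest $F$ and at a node $v$ of a forest $G$ with $|o(u)-o(v)|\le 2k$ and $|c(u)-c(v)|\le 2k$, set $F' = F[0\dd o(u))\cdot P'_L\cdot F[o(u)+|P_L|\dd c(u)-|P_R|]\cdot P'_R\cdot F(c(u)\dd|F|)$ and define $G'$ symmetrically, and let $\A=(f_t,g_t)_{t=0}^m$ be a forest alignment with $\ted^w_\A(F,G)=\ted^w(F,G)\le k$. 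I would build a forest alignment $\A'$ of $F$ and $G$ with $\ted^w_{\A'}(F,G)\le\ted^w_\A(F,G)$ that (i) matches the occurrence of $P_L$ in $F$ perfectly with the occurrence of $P_L$ in $G$, (ii) matches the two occurrences of $P_R$ perfectly, and (iii) aligns no node of $F$'s hole forest $F[o(u)+|P_L|\dd c(u)-|P_R|]$ to a node outside $G$'s hole forest, and vice versa. Given such an $\A'$, replacing $P_L,P_R$ by $P'_L,P'_R$ in both forests turns $\A'$ into an alignment of $F'$ and $G'$ of the same cost (the matched halves remain matched, and the hole forests and everything outside $[o(u)\dd c(u)]$ are untouched), so $\ted^w_{\le k}(F',G')\le\ted^w_{\le k}(F,G)$; the reverse inequality follows by the symmetry between $(F,G,P)$ and $(F',G',P')$, which is where the hypotheses are also needed for $P'$.

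The core is the \emph{anchor claim}: there are indices of $\A$ at which it is consistent with the offsets, namely an index in the hole end of $P_L$ with $f-g=o(u)-o(v)$ and an index in the hole end of $P_R$ with $f-g=c(u)-c(v)$ (together with the natural ``ports'' just before $o(u),o(v)$ and just after $c(u),c(v)$). For each, I would apply \cref{fct:str_decomp} to $\A$ restricted to a window of length $\Theta(k^4)$ at the appropriate end of the appropriate half; since $\A$ has unweighted cost at most $2k$, this window contains a fragment $R$ of length more than $74k^3$ that $\A$ matches perfectly to a fragment of the other forest. If this matched occurrence sits at the expected offset, it directly supplies the anchor index. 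Otherwise $R$ has a period $q$ with $q\le |o(u)-o(v)|+2k\le 6k$ (or $\le c(u)-c(v)$-side analogue), so $R$ contains $Q^{6k+1}$ with $|Q|=q$; then either some cyclic rotation of $Q$ is balanced, in which case the half contains a balanced substring of length $\ge |R|-q>74k^3$, contradicting the hypothesis on the halves of $P$, or no rotation of $Q$ is balanced, in which case $Q$ has unequally many opening and closing parentheses, its unmatched spine parentheses recur periodically, and---transporting this periodic structure to the complementary half through the matched occurrence of $R$ in the other forest (whose copy of $P$ has the same halves)---one recognizes a repeated context $Q'$ of size $|Q'|\le 8k$ sitting inside $P$, so that $P=C\star Q'^{6k+1}\star D$, contradicting that $P$ avoids vertical $k$-periodicity. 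The residual possibility that the offset genuinely changes is ruled out by a node-counting argument as in the forest case. The constant $578k^4$ is spent on guaranteeing the four such windows (two per half), and the auxiliary ``no balanced substring longer than $74k^3$'' hypothesis is exactly what kills the balanced-rotation subcase.

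Once the anchors are in hand, I would cut $\A$ at the two ports and the two anchors; between consecutive cut points $\A$ and $\A'$ may differ only on three stretches: before $P_L$ (a short shift plus a prefix of $P_L$), between the two hole-side anchors (a suffix of $P_L$, then the two hole forests, then a prefix of $P_R$---where in $F$ and $G$ the $P_L$- and $P_R$-pieces are literally the same strings), and after $P_R$ (a suffix of $P_R$ plus a short shift). On each stretch $\A'$ matches the common $P_L$/$P_R$ pieces perfectly and aligns the leftover optimally; since on the two outer stretches the short side is a prefix/suffix of the long side, \cref{fct:quasi} (together with $\ted^w$ being a quasimetric, \cref{fct:tedtri}) shows $\A'$ is optimal there, and on the middle stretch matching the common $P_L$- and $P_R$-pieces and aligning the hole forests optimally is likewise no costlier than $\A$; hence $\ted^w_{\A'}(F,G)\le\ted^w_\A(F,G)$. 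The last routine check is that $\A'$ is a genuine forest alignment: the four cut points are non-crossing, and every node of $F$ is handled consistently, by cases according to whether it lies inside $P_L$, inside $P_R$, inside the hole forest, or outside $[o(u)\dd c(u)]$. The step I expect to fight with is the unbalanced-$Q$ subcase of the anchor claim: unlike for forests, a short period of the matched parenthesis fragment does not by itself witness vertical $k$-periodicity, since the left- and right-hanging forests of a spine node live in different halves, so one must genuinely coordinate $P_L$ and $P_R$---both to extract the repeated context of bounded size and to make the two anchors compatible enough that $\A'$ remains a valid forest alignment.
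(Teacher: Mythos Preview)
Your overall architecture---build an alignment $\A'$ that matches both halves of $P$ perfectly, prove an anchor claim giving indices with the correct offsets, then compare costs via \cref{fct:quasi}---matches the paper's proof. The divergence, and the genuine gap, is in the anchor claim.

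You try to prove the anchor by finding a long perfectly-matched \emph{string} fragment $R$ inside $P_L$ (and separately inside $P_R$) and arguing that a wrong offset forces a short string period $q$ in $R$. Your case split is fine when some rotation of the period is balanced (this indeed yields a long balanced substring inside the half). The gap is the other branch: a string period in $P_L$ only tells you $P_{j,L}=P_{j+d,L}$ for a run of spine indices~$j$; it says nothing about $P_{j,R}$. Your ``transport'' via forest-alignment consistency gives $F[c(u_j)]\sim_\A G[c(v_{j-d})]$ for those $j$, but that is a single aligned character per spine level---it does not force the right-hanging forests to agree, and hence does not force $P_{j,R}=P_{j+d,R}$. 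So you cannot conclude $P=C\star Q'^{6k+1}\star D$ with $|Q'|\le 8k$ from a one-sided string argument. (A second, smaller issue: you need a window of length $\Theta(k^4)$ in \emph{each} half, but the hypothesis only bounds $|P_L|+|P_R|$, so one half may be too short to host such a window.)

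The paper sidesteps both problems by never working with string fragments in a single half. Instead it writes $P=P_0\star\cdots\star P_{e-1}$ as a composition of depth-$1$ contexts; the balanced-substring hypothesis then bounds each $|P_i|\le 150k^3$, and since $\A$ has at most $k$ tree edits inside $G$'s copy of $P$, some run $P_i\star\cdots\star P_{i'-1}$ of total length $\ge 214k^3$ is matched \emph{as a context} to a context in $F$. Because the whole context (both halves at once) is matched, the image $u'_j$ of each spine node $v_j$ is forced to lie on $F$'s spine with a constant shift $u'_j=u_{j+\delta}$; this immediately yields $P_j=P_{j+\delta}$ on both halves simultaneously, and the vertical-periodicity contradiction follows. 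The key idea you are missing is to count edits against depth-$1$ \emph{contexts} (each of which straddles both halves) rather than against characters in one half.
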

\begin{proof}
    Suppose that $P$ occurs in forests $F$ and $G$ at nodes $u$ and $v$, respectively, satisfying $|o(u)-o(v)|\le 2k$ and  $|c(u)-c(v)|\le 2k$,
    Denote
    \begin{align*}
        F' = F[0 \dd o(u)) \cdot P'_L \cdot F[o(u)+|P_L| \dd c(u)-|P_R|] \cdot P'_R \cdot F(c(u) \dd |F|), \\ 
        G' = G[0 \dd o(v)) \cdot P'_R \cdot G[o(v)+|P_L| \dd c(v)-|P_R|] \cdot P'_R \cdot G(c(v) \dd |G|).
    \end{align*}
    Let $\A=(f_t,g_t)_{t=0}^m$ be an optimal forest alignment such that $\ted(F, G) = \ted_{\A} (F, G) \le k$.
    Moreover, let $(f_a,g_a)\in \A$ be the leftmost element of $\A$ such that $f_a \ge o(u)$ or $g_a \ge o(v)$,
    $(f_b,g_b)\in \A$ be the leftmost element of $\A$ such that $f_b \ge o(u)+|P_L|$ and $g_b \ge o(v)+|P_L|$,
    $(f_c,g_c)\in \A$ be the leftmost element of $\A$ such that $f_c > c(u)-|P_R|$ or $g_c > c(v)-|P_R|$,
    and let $(f_d, g_d)$ be the leftmost element of $\A$ such that $f_d > c(u)$ and $g_d > c(v)$.
    We construct an alignment $\A'$ so that it:
    \begin{itemize}
        \item aligns $F[0\dd f_a)$ with $G[0\dd g_a)$ in the same way as $\A$ does;
        \item deletes $F[f_a\dd o(u))$ and inserts $G[g_a\dd o(v))$ (at least one of these fragments is empty);
        \item matches $F[o(u)\dd o(u)+|P_L|)=P_L$ with $G[o(v)\dd o(v)+|P_L|)=P_L$;
        \item if $b > c$, deletes $F[o(u)+|P_L|\dd c(u)-|P_R|]$ and inserts $G[o(v)+|P_L|\dd c(v)-|P_R|]$;
        \item if $b \le c$, deletes $F[o(u)+|P_L|\dd f_b)$ and inserts $G[o(v)+|P_L|\dd g_b)$  (at least one of these fragments is empty);
        \item if $b \le c$, aligns $F[f_b\dd f_c)$ with $G[g_b\dd g_c)$ in the same way as $\A$ does;
        \item if $b \le c$, deletes $F[f_c\dd c(u)-|P_R|]$ and inserts $G[g_c\dd c(v)-|P_R|]$  (at least one of these fragments is empty);
        \item matches $F(c(u)-|P_R|\dd c(u)]=P_R$ with $G(c(v)-|P_R|\dd c(v)]=P_R$;
        \item deletes $F(c(u)\dd f_d)$ and inserts $G(c(v)\dd g_d)$  (at least one of these fragments is empty);
        \item aligns $F[f_d\dd |F|)$ with $G[g_d\dd |G|)$ in the same way as $\A$ does.
    \end{itemize}
    To prove that $\A'$ is a forest alignment, let us consider several possibilities for a node $u'$ in $F$.
    \begin{itemize}
        \item If $u'$ belongs to $P=\langle F[o(u)\dd o(u)+|P_L|); F(c(u)-|P_R|\dd c(u)]\rangle$, then $\A'$ matches $u'$ to the corresponding node that belongs to $P= \langle G[o(v)\dd o(v)+|P_L|); G(c(v)-|P_R|\dd c(v)]\rangle$.
        \item If $u'$ is outside $F[o(u)\dd c(u)]$ and $\A$ aligns $u'$ to a node $v'$ of $G$ inside $G[o(v)\dd c(v)]$,
        then $o(u'),c(u') \in [f_a\dd o(u))\cup (c(u)\dd f_d)$ because of the non-crossing property of $\A \ni (f_a,g_a),(f_d,g_d)$. Hence, $\A'$ deletes $u'$.
        \item If $u'$ is outside $F[o(u)\dd c(u)]$ and $\A$ aligns $u'$ to a node $v'$ of $G$ outside $G[o(v)\dd c(v)]$, then $o(u'),c(u')\in [0\dd f_a)\cup [f_d\dd |F|)$ because of the non-crossing property of $\A \ni (f_a,g_a),\allowbreak(f_d,g_d)$. Hence, $\A'$ also aligns $u'$ to $v'$.
        \item If $u'$ is outside $F[o(u)\dd c(u)]$ and $\A$ deletes $u'$, then $\A'$ also deletes $u'$. 
        \item If $b > c$ and $u'$ is inside $F[o(u)+|P_L|\dd c(u)-|P_R|]$, then $\A'$ deletes $u'$.
        \item If $b\le c$, $u'$ is inside $F[o(u)+|P_L|\dd c(u)-|P_R|]$, and $\A$ aligns $u'$ to a node $v'$ of $G$ outside $G[o(v)+|P_L|\dd c(v)-|P_R|]$,
        then $o(u'),c(u') \in [o(u)+|P_L|\dd f_b)\cup [f_c\dd c(u)-|P_R|]$ because of the non-crossing property of $\A \ni (f_b,g_b),(f_c,g_c)$. Hence, $\A'$ deletes $u'$.
        \item If $b\le c$,  $u'$ is inside $F[o(u)+|P_L|\dd c(u)-|P_R|]$, and $\A$ aligns $u'$ to a node $v'$ of $G$ inside $G[o(v)+|P_L|\dd c(v)-|P_R|]$, then $o(u'),c(u')\in [f_b\dd f_c)$ because of the non-crossing property of $\A \ni (f_b,g_b),(f_c,g_c)$. Hence, $\A'$ also aligns $u'$ to $v'$.
        \item If $b\le c$, $u'$ is inside $F[o(u)+|P_L|\dd c(u)-|P_R|]$, and $\A$ deletes $u'$, then $\A'$ also deletes $u'$. 
    \end{itemize}
    Let us now prove that $\ted^w_{\A'}(F,G)\le \ted^w_{\A}(F,G)$. This relies on the following claim.
    \begin{claim}\label{clm:vertical_aperiodic_reduction}
        There exist $t_L\in [a\dd b]$ such that $f_{t_L}-g_{t_L} = o(u)-o(u)$
        and $t_R\in [c\dd d]$ such that $f_{t_L}-g_{t_L} = c(u)-c(u)$.
    \end{claim}
    \begin{proof}
        By symmetry (up to reversal), we can focus without loss of generality on the first claim.
        Moreover, by symmetry between $F$ and $G$, we can assume without loss of generality that $f_a = o(u)$;
        in particular, this implies $f_a - g_a \ge o(u)-o(v)$.
        If there exists $t\in [a\dd b]$ such that $f_t - g_t \le o(u)-o(v)$,
        then, since $f_t-g_t$ may change by at most one for subsequent positions,
        there is also $t_L\in [a\dd b]$ such that $f_{t_L} - g_{t_L} = o(u)-o(v)$,
        Consequently, it remains to consider the case when $f_t - g_t > o(u)-o(v)$ holds for all $t\in [a\dd b]$.

        Let us express $P$ as a vertical composition of $e$ contexts $P=P_0\star \cdots \star P_{e-1}$, where $e$ is the depth of $P$.
        Observe that the occurrences of $P$ at node $u$ in $F$ and $v$ in $G$,
        for each $i\in [0\dd e)$, induce occurrences of $P_i$ at some nodes $u_i$ in $F$ and $v_i$ in $G$.
        Since $F(o(u_i)\dd o(u_i)+|P_{i,L}|)$ and $F(c(u_i)-|P_{i,R}|\dd c(u_i))$ are balanced,
        we conclude that $|P_i| \le 2\cdot (74k^3+1)\le 150k^3$.
        We can decompose $[0\dd e)$ into at most $k$ individual indices $i$ such that $\A$ does not match perfectly
        the occurrence of $P_i$ at $v_i$ and at most $k+1$ intervals $[i\dd i')$ such that $\A$ matches the occurrence $P_i\star \cdots \star P_{i'-1}$ at $v_i$ perfectly to a context in $F$.
        Let us choose such an interval $[i\dd i')$ maximizing $|P_i\star \cdots \star P_{i'-1}|$;
        this length is at least $\frac{578k^4-k\cdot 150k^3}{k+1}\ge 214k^3$.
        Let $i''\in [i\dd i')$ be the maximum index such that $|P_{i''}\star \cdots \star P_{i'-1}|> 8k$;
        note that $|P_i\star \cdots \star P_{i''-1}|\ge 214k^3 - (150k^3+8k) \ge 56k^2$.

        For each $j\in [i\dd i'']$, denote by $u'_j$ be the node of matched with $v_j$ by $\A$.
        Note that $|o(u'_j)-o(u_j)|\le 4k$ and $|c(u'_j)-c(u_j)|\le 4k$.
        Moreover, $o(u'_j) - o(v_j) > o(u)-o(v)=o(u_j)-o(v_j)$ implies $o(u'_j) > o(u_j)$.
        Since $|P_j\star \cdots \star P_{i'-1}| > 8k$, we conclude that $u'_j = u_{j'}$ for some $j' \in [j\dd i')$.
        Moreover, if $j>i$, then $u'_j$ must be a child of $u'_{j-1}$.
        Hence, there exists $\delta > 0$ such that $u'_{j}=u_{j+\delta}$ holds for all $j\in [i\dd i'']$.
        For $j\in [i\dd i'')$, this implies $P_j = P_{j+\delta}$ and that both halves of $P_j\star \cdots \star P_{j+\delta-1}$
        are of length at most $4k$. 
        In particular, if we define $Q=P_i\star \cdots \star P_{i+\delta-1}$, then,
        due to $\frac{|P_i\star \cdots \star P_{i''-1}|}{8k} \ge \frac{56k^2}{8k} \ge 6k+1$,
        we conclude that $Q^{6k+1}$  occurs in $F$ and $G$ at positions $u_i$ and $v_i$, respectively.
        This contradicts the assumption that $P$ avoids vertical periodicity.
    \end{proof}

    The alignments $\A$ and $\A'$ only differ in how they align the following fragments:
    \begin{itemize}
        \item $F[f_a\dd f_{t_L})$ with $G[g_a\dd g_{t_L})$: here, $\A'$ matches one fragment perfectly with a suffix of the other; by \cref{fct:quasi}, this is optimal.
        \item $F[f_{t_L}\dd f_{t_R})$ with $G[g_{t_L}\dd g_{t_R})$ if $b > c$: here, the cost of $\A'$ is equal to the cost of deleting $F[o(u)+|P_L|\dd c(u)-|P_R|]$ and inserting $G[o(v)+|P_L|\dd c(v)-|P_R|]$. By \cref{fct:quasi}, these two costs do not exceed the cost of $\A$
        aligning $F[f_{t_L}\dd f_c)$ with $G[g_{t_L}\dd g_c)$ and aligning $F[f_b\dd f_{t_R})$ with $G[g_b\dd g_{t_R})$.
        \item $F[f_{t_L}\dd f_b)$ with $G[g_{t_L}\dd g_b)$ if $b \le c$: here, $\A'$ matches one fragment perfectly with a prefix of the other; by \cref{fct:quasi}, this is optimal.
        \item $F[f_c\dd f_{t_R})$ with $G[g_c\dd f_{t_R})$ if $b \le c$: here, $\A'$ matches one fragment perfectly with a suffix of the other; by \cref{fct:quasi}, this is optimal.
        \item $F[f_{t_R}\dd f_d)$ with $G[g_{t_R}\dd g_d)$: here, $\A'$ matches one fragment perfectly with a prefix of the other; by \cref{fct:quasi}, this is optimal.
    \end{itemize}
    
    If the occurrences of $P$ at nodes $u$ in $F$ and $v$ in $G$ are modified to occurrences of $P'$,
    then $\A'$ can be trivially adapted without modifying its cost and hence $\ted^w(F',G')\le \ted^w_{\A'}(F,G)=\ted^w(F,G)$.
    The converse inequality follows by symmetry between $(F,G,P)$ and $(F',G',P')$.
\end{proof}

\subsection{Algorithms}\label{subsec:treealg}

\newcommand{\Pc}{\mathcal{P}}
We say a \emph{piece} of a forest $F$ is a balanced fragment $F[i\dd j)$ or a pair of fragments $\langle F[i\dd i');F[j'\dd j)\rangle$ that form a context,
that is, $F[i\dd j)$ is a tree and $F[i'\dd j')$ is balanced. For a fragment $F[i\dd j)$, we denote the set of pieces contained in $F[i\dd j)$ by $\Pc(F[i\dd j))$. Moreover, let $\Pc(F)=\Pc(F[0\dd |F|))$.

\newcommand{\D}{\mathsf{D}}
\begin{definition}\label{def:decomp}
A set $\D\sub \Pc(F[i\dd j))$ is a \emph{piece decomposition} of 
a balanced fragment $F[i\dd j)$ of a forest $F$ if it satisfies one of the following conditions:
\begin{itemize}
    \item $\D=\emptyset$ and $i=j$;
    \item $\D = \{F[i\dd j)\}$ and $i<j$;
    \item $\D = \D_L\cup \D_R$ for some piece decompositions $\D_L$ of $F[i\dd m)$ and $\D_R$ of $F[m\dd j)$,
    where $m\in (i\dd j)$.
    \item $\D = \{\langle F[i\dd i');F[j'\dd j)\rangle\}\cup \D'$ for a context $\langle F[i\dd i');F[j'\dd j)\rangle\in \Pc(F)$
    and a piece decomposition $\D'$ of $F[i'\dd j')$.
\end{itemize}
\end{definition}

\newcommand{\DC}{\mathcal{D}}
\begin{lemma}\label{lem:decomp}
There exists a linear-time algorithm that, given a forest $F$ and an integer $t\ge 2$,
constructs a piece decomposition $\D$ of $F$ consisting of at most $\max(1,\frac{6|F|}{t}-1)$ pieces of length at most $t$
each.
\end{lemma}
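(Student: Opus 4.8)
The plan is to build $\D$ by a recursive top-down procedure that repeatedly cuts the current balanced fragment into shorter ones, possibly splitting off one output piece, and then to bound the number of pieces by charging each against a distinct chunk of $F$ of length about $t$. I would start with one scan of $F$ to build the matching-parenthesis structure (the node set $V_F$, the pointer $u\mapsto c(u)$, and the size of every subtree), to set up the data structure of \cref{fct:balanced} for $\Oh(1)$-time balancedness tests, and to record a heavy-child pointer (pointing to the child with the largest subtree) at every node; all of this takes $\Oh(|F|)$ time.

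\textbf{The recursion.} Define $\mathrm{Decompose}(B)$ on a balanced fragment $B$ of $F$. If $|B|=0$, return $\emptyset$; if $0<|B|\le t$, return $\{B\}$. Otherwise $|B|>t$, and there are two cases. If $B$ is a single tree $T$ with root $r$, follow heavy-child pointers from $r$ down to the deepest node $w$ with subtree size at least $|B|-t$. If $w\neq r$, delete the balanced block $T_w$ from $T$; what remains is a context of length $|B|-|T_w|\le t$, which becomes a piece, and recurse on the strictly shorter tree $T_w$. If $w=r$ (all children of $r$ have small subtrees), take a \emph{shortest} run of consecutive children-subtrees of $r$ of total size at least $|B|-t$, output the context consisting of $T$ with that run removed (again of length $\le t$), and recurse on the run, a strictly shorter balanced sub-forest. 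If instead $B$ is a forest $T_1\cdots T_r$ with $r\ge 2$: when no $T_i$ exceeds $t$, split $B$ at a tree boundary into two balanced sub-forests chosen so that either both have length $>t/6$ or both have length $\le t$, and recurse on both; when some $T_j$ exceeds $t$, recurse separately on $T_1\cdots T_{j-1}$, on $T_j$, and on $T_{j+1}\cdots T_r$ (skipping empty parts). Every call is on a strictly shorter balanced fragment, so the recursion halts, and each piece it outputs is a piece of $F$ of length at most $t$.

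\textbf{Counting.} I would prove $|\D|\le\max(1,\tfrac{6|F|}{t}-1)$ by strong induction on $|F|$; for $|F|\le t$ it reads $1\le\max(1,\cdot)$. For $|F|>t$, the output pieces partition $F$, so their lengths sum to $|F|$ and it suffices to certify $|F|\ge\tfrac t6(|\D|+1)$, i.e.\ to give every piece and one extra ``phantom'' a disjoint region of length $\ge t/6$. The factor $6$ is what makes this affordable: a peeled context has length close to $t$ whenever the heavy path below the root is long, the split-into-two branch keeps both parts above $t/6$, and whenever a short ``framing'' sub-forest is peeled off beside a long tree $T_j$, the length $|T_j|$ is large enough that $T_j$'s own share of the budget leaves slack to cover the $\Oh(1)$ framing pieces. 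To make the induction compose across these branches, I would carry a strengthened hypothesis in the single-tree case (a larger additive constant in place of $1$, harmless because of the $\max(1,\cdot)$), so that decomposing a long tree always leaves spare credit for the pieces placed around it.

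\textbf{Running time and the main obstacle.} Each call does $\Oh(1)$ work plus one heavy-path descent (or a scan over a node's children, or a search over prefix sums of children sizes); since a chain of recursive calls on nested trees continues down the same heavy path, and every split, framing step, and descent can be amortized against a piece or a fresh portion of $F$, the total is $\Oh(|F|)$ after preprocessing. The delicate part is the counting: the naive claim ``$|\D(B)|\le\tfrac{6|B|}{t}-1$'' does not survive a forced horizontal split that shaves a tiny sub-forest off next to a huge tree, since the huge tree's bound alone cannot pay for the extra piece; the fix is the strengthened single-tree hypothesis together with choosing every peel/split point so that two ``small'' pieces are never created adjacently, and checking this invariant through the context-peeling recursion is the crux of the proof.
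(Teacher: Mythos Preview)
Your overall shape (recursive peel-and-split, linear preprocessing, a strengthened inductive hypothesis for trees) is reasonable and is in the same spirit as the paper, but there is a genuine gap in the counting that your sketch does not close.

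Consider a tree $B$ whose root $r$ has a single child $v_1$, where $v_1$ has one heavy child $v_2$ with $|T_{v_2}|<|B|-t$ together with many light leaves. Then your rule ``follow heavy-child pointers to the deepest $w$ with $|T_w|\ge |B|-t$'' stops at $w=v_1$, and the peeled context $B\setminus T_{v_1}$ has length exactly~$2$. You then recurse on the tree $T_{v_1}$. If your strengthened tree bound is $|\D(T)|\le \tfrac{6|T|}{t}-c$ for some constant $c$, the step gives
\[
1+\Bigl(\tfrac{6(|B|-2)}{t}-c\Bigr)=\tfrac{6|B|}{t}-c+1-\tfrac{12}{t},
\]
and you need this to be at most $\tfrac{6|B|}{t}-c$, i.e.\ $1\le \tfrac{12}{t}$, which fails for every $t>12$. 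No choice of additive constant rescues this; the loss per such ``short peel'' is $1-O(1/t)$, not a constant. Your own diagnosis (``two small pieces are never created adjacently'') does not apply here: only one piece is created, it is the tiny context itself, and the next recursion is on a tree only two characters shorter. An amortization across a chain of nested tree calls could work in principle, but nothing in your plan sets up or discharges the needed credit.

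The paper sidesteps this entirely by not jumping to a deepest node. It grows the context \emph{incrementally from both ends} (Lines~\ref{ln:l}--\ref{ln:root} of \cref{alg:decomp}): at each step it tries to absorb the next tree on the left, then on the right, and if the remainder is a single tree it descends one level, always checking that the context stays within the budget~$t$. Hence when it finally stops with a single tree remaining (Line~\ref{ln:tree}, case $m=j'$), the context has size $>t-2$, so the remaining tree has size at most $j-i-t+1$ and the induction (with constant $3$ for trees and $1$ for forests) goes through via a short case analysis on whether the two recursive subproblems exceed $\tfrac{2t}{3}$ and $\tfrac t3$. If you want to keep a heavy-path flavour, the fix is the same: do not peel at the first qualifying $w$; continue down the spine (absorbing light subtrees into the context) until the context is about to exceed $t$, so that every tree-case peel removes $\Theta(t)$ characters.
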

\begin{algorithm}
    \lIf{$j=i$}{\KwRet{$\emptyset$}}\label{ln:empty}
    \lIf{$j\le i+t$}{\KwRet{$\{F[i\dd j)\}$}}\label{ln:small}
    $i'\gets i; j'\gets j$\;
    \While(\tcp*[f]{$F[i'\dd j')$ is balanced and $(i'-i)+(j-j')\le t$}){\KwSty{true}}{   
        Let $m\in [i'\dd j']$ be such that $F[i'\dd m)$ is a tree\;
        \lIf{$(m-i)+(j-j')\le t$}{$i'\gets m$}\label{ln:l}
        \lElseIf{$m<j'$ \KwSty{and} $(i'-i)+(j-m)\le t$}{$j'\gets m$}\label{ln:r}
        \lElseIf{$F[i\dd j)$ is not a tree}{\KwRet{$\DC(i,i')\cup \DC(i',m)\cup \DC(m,j')\cup \DC(j',j)$}}\label{ln:forest}
        \lElseIf{$m=j'$ \KwSty{and} $(i'+1-i)\le (j-j'+1)\le t$}{ $i'\gets i'+1; j'\gets j'-1$}\label{ln:root}
        \lElse{\KwRet{$\{\langle F[i\dd i'); F[j'\dd j)\rangle\}\cup \DC(i',m)\cup \DC(m,j')$}}\label{ln:tree}
    }
\caption{$\DC(i,j)$: Construct a decomposition of a balanced fragment~$F[i\dd j)$.}\label{alg:decomp}
\end{algorithm}
\begin{proof}
\cref{alg:decomp} provides a recursive procedure that, for every balanced fragment $F[i\dd j)$ of $F$,
constructs a piece decomposition $\DC(i,j)$ of $F[i\dd j)$ consisting of pieces of size at most~$t$.
In the corner cases of $j=i$ and $j\in (i\dd i+t]$, we return $\DC(i,j)=\emptyset$ and $\DC(i,j)=\{F[i\dd j)\}$, respectively.
Otherwise, we iteratively grow fragments $F[i\dd i')$ and $F[j'\dd j)$ (initially empty) maintaining the following invariants:
\begin{enumerate}[label=(\alph*)]
    \item\label{inv:bal} $F[i'\dd j')$ is balanced;
    \item\label{inv:size} $|F[i\dd i')| + |F[j'\dd j)|\le t$;
    \item\label{inv:tree} if $F[i\dd j)$ is a tree, then $F[i\dd i')=F[j'\dd j)=\eps$
    or $\langle F[i\dd i'); F[j'\dd j)\rangle$ is a context;
    \item\label{inv:nottree} if $F[i\dd j)$ is not a tree, then $F[i\dd i')$ and $F[j'\dd j)$ are balanced.
\end{enumerate}
At each iteration, we identify a position $m\in (i'\dd j']$ such that $F[i'\dd m)$ is a tree
(such a position always exists due to $j-i>t$ and by invariants~\ref{inv:bal},~\ref{inv:size}).
\begin{enumerate}[label=({\arabic*})]
    \item\label{cs:l} We set $i'\gets m$ as long as it would not violate invariant~\ref{inv:size}.
    \item\label{cs:r} If $m\ne j'$, we set $j\gets m$ as long as it would not violate invariant~\ref{inv:size}.
    \item\label{cs:t} If $m=j'$ and $F[i\dd j)$ is a tree, we set $(i',j')\gets(i'+1,j'-1)$ as long as it would not violate invariant~\ref{inv:size}.
    \item\label{cs:f} Otherwise, we return $\DC(i,j):= \{\langle F[i\dd i'); F[j'\dd j)\rangle\}\cup \DC(i',m)\cup \DC(m,j')$
    if $F[i\dd j)$ is a tree and $\DC(i,j):=\DC(i,i')\cup \DC(i',m)\cup \DC(m,j')\cup \DC(j',j)$ if $F[i\dd j)$ is not a tree.
\end{enumerate}
It is easy to see that cases~\ref{cs:l}--\ref{cs:t} preserve the invariants and hence
case~\ref{cs:f} results in a valid piece decomposition with pieces of size at most $t$.

Next, we prove that the number of pieces is at most $\max(1,\frac{6(j-i)}{t}-3)$ if $F[i\dd j)$ is a tree and at most $\max(1,\frac{6(j-i)}{t}-1)$ otherwise. This holds trivially if $j\le i+t$, where \cref{alg:decomp} terminates at Line~\ref{ln:empty} or~\ref{ln:small}.
If $F[i\dd j)$ is a tree of size $j-i>t$, then \cref{alg:decomp} terminates at Line~\ref{ln:tree}.
We consider several sub-cases:
    \begin{enumerate}
        \item If $m-i' \le \frac{2t}{3}$ and $j'-m \le \frac{t}{3}$, then 
        $|\DC(i,j)|\le 3 < \frac{6(j-i)}{t}-3$ because $j-i>t$.
        \item If $m-i' \le \frac{2t}{3}$ and $j'-m > \frac{t}{3}$, then $(m-i)+(j-j')>t$ because the test in Line~\ref{ln:l} failed.
        Hence, $j'-m<j-i-t$ and $|\DC(i,j)|\le 2+|\DC(m,j')|\le 2 + \frac{6(j'-m)}{t} - 1 < \frac{6(j-i-t)}{t} + 1 < \frac{6(j-i)}{t}-3$.
        \item If $m-i' > \frac{2t}{3}$ and $j'-m=0$, then $(i'+1-i)+(j-j'+1)>t$ because the test in Line~\ref{ln:root} failed.
        Hence, $j'-i'\le j-i-t+1$ and $|\DC(i,j)|\le 1+|\DC(i',j')| \le 1 + \frac{6(j'-i')}{t}-3 \le \frac{6(j-i-t+1)}{t}-2 < \frac{6(j-i)}{t}-3$.
        \item If $m-i' > \frac{2t}{3}$ and $0 < j'-m \le \frac{t}{3}$, then $(i'-i)+(j-m)>t$ because the test in Line~\ref{ln:r} failed.
        Hence, $m-i'<j-i-t$ and $|\DC(i,j)|\le 2+|\DC(i',m)|\le 2 + \frac{6(m-i')}{t} - 3 < \frac{6(j-i-t)}{t} -1 < \frac{6(j-i)}{t}-3$.
        \item If $m-i' > \frac{2t}{3}$ and $j'-m > \frac{t}{3}$, 
        then $|\DC(i,j)|\le 1+|\DC(i',m)|+|\DC(m,j')| \le 1 + \frac{6(m-i')}{t}-3+\frac{6(j'-m)}{t}-1 \le \frac{6(j-i)}{t}-3$.
    \end{enumerate}
If $F[i\dd j)$ is not a tree, then  \cref{alg:decomp} terminates at Line~\ref{ln:forest}.
We consider several sub-cases:
    \begin{enumerate}
        \item If $m-i'\le \frac{2t}{3}$ and $j'-m \le \frac{t}{3}$, then 
        $|\DC(i,j)|\le 4 < \frac{6(j-i)}{t}-1$ because $j-i>t$.
        \item If $m-i'\le \frac{2t}{3}$ and $j'-m > \frac{t}{3}$, then $(m-i)+(j-j')>t$ because the test in Line~\ref{ln:l} failed.
        Hence, $j'-m<j-i-t$ and  $|\DC(i,j)|\le 3+|\DC(m,j')|\le 3 + \frac{6(j'-m)}{t} - 1 < \frac{6(j-i-t)}{t} + 2 < \frac{6(j-i)}{t}-1$.
        \item If $m-i'> \frac{2t}{3}$ and $j'-m=0$, then $|\DC(i,j)|\le 2+|\DC(i',j')| \le 2 + \frac{6(j'-i')}{t}-3 \le \frac{6(j-i)}{t}-1$.
        \item If $m-i'> \frac{2t}{3}$ and $0 < j'-m \le \frac{t}{3}$, then $(i'-i)+(j-m)>t$ because the test in Line~\ref{ln:r} failed.
        Hence, $m-i'<j-i-t$ and  $|\DC(i,j)|\le 3+|\DC(i',m)|\le 3 + \frac{6(m-i')}{t} - 3 < \frac{6(j-i-t)}{t} < \frac{6(j-i)}{t}-1$.
        \item If $m-i'> \frac{2t}{3}$ and $j'-m> \frac{t}{3}$, 
        then $|\DC(i,j)|\le2+|\DC(i',m)|+|\DC(m,j')| \le 2 + \frac{6(m-i')}{t}-3+\frac{6(j'-m)}{t}-1 \le \frac{6(j-i)}{t}-2 < \frac{6(j-i)}{t}-1$.
        \end{enumerate}

It remains to provide a linear-time implementation of our algorithm.
We assume that there are bidirectional pointers between the opening and the closing parentheses representing the same node.
Such pointers can be constructed using a linear-time stack-based preprocessing of the input forest $F$.
Each iteration of the \KwSty{while} loop increases $j-j'+i'-i$ (except for the final one), so a single call to the $\DC(i,j)$ function
costs $\Oh(t)$ due to invariant~\ref{inv:size}.
The total number of calls is $\Oh(|\DC(0,|F|)|)=\Oh(\frac1t\cdot |F|)$, so the overall running time, including preprocessing, is $\Oh(|F|)$.
\end{proof}

\begin{lemma}\label{lem:dp}
Given forests $F$ and $G$ of total size $n$, a piece decomposition $\D$ of $F$, and an integer $s\in \Zp$,
one can find in $\Oh(n+|\D|s^3)$ time a maximum-size set $S\sub \D \times \Pc(G)$ that, for some alignment $\A \in \ta(F,G)$ of width at most $s$,
contains only pairs of pieces that $\A$ matches perfectly.
\end{lemma}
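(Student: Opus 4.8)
The plan is to turn the problem into a dynamic program over the recursive structure of the piece decomposition $\D$ (as in \cref{def:decomp}), carrying the \emph{diagonal offset} $d = (\text{$G$-position})-(\text{$F$-position})$ of the width-$\le s$ alignment as part of the state. The key structural observation, which I would prove first, is that in any piece decomposition every node $u\in V_F$ has \emph{both} of its parentheses $F[o(u)]$ and $F[c(u)]$ inside a single piece of $\D$: the split case of \cref{def:decomp} only cuts a balanced fragment into consecutive balanced sub-fragments, so no node straddles such a cut, and the context case keeps a spine node's two parentheses in the context piece itself. From this it follows that, once we fix a subfamily $T\sub\D$ of pieces and a target occurrence in $G$ for each of them, there exists $\A\in\ta(F,G)$ of width $\le s$ matching exactly these pieces \emph{iff} the targets are ``geometrically consistent'': each target starts and ends within distance $s$ of the corresponding $F$-position, the targets respect the left-to-right and nesting order induced by $\D$, and, for a context piece $\langle F[i\dd i');F[j'\dd j)\rangle$ matched to $\langle G[a\dd a');G[b'\dd b)\rangle$, the fragment $G[a\dd b)$ is a tree (equivalently, the image $G[a'\dd b')$ of the hole is balanced); this last condition is \emph{forced}, because $\A$ must align the root parentheses of the context occurrence to a matching pair of parentheses in $G$. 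For sufficiency: match the chosen pieces diagonally and, in every unmatched piece and every free gap, delete all parentheses of $F$ and insert all parentheses of $G$ — deletions are consistent by the structural observation, and a short ``staircase'' argument shows that between two lattice points both lying in the band $\{|p-q|\le s\}$ there is always a monotone path staying inside that band.

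\textbf{The dynamic program.} With this equivalence, I would proceed as follows. In $\Oh(n)$ time build LCE data structures between $F$ and $G$ (\cref{thm:lce}), a balanced-fragment oracle on $G$ (\cref{fct:balanced}) together with pointers matching each parenthesis of $G$ to its partner, and recover some recursive decomposition of $\D$ as in \cref{def:decomp}; any binary split order works, since the quantity computed below depends only on $\D$ and the fragment, not on how balanced fragments are split. For each tree node $\phi$ representing a balanced fragment $F[i\dd j)$ and each pair $d_1,d_2\in[-s\dd s]$ with $0\le i+d_1\le j+d_2\le |G|$, let $M[\phi,d_1,d_2]$ be the maximum number of $\D$-pieces inside $F[i\dd j)$ that can be matched perfectly by a monotone, band-$s$ path segment from $(i,i+d_1)$ to $(j,j+d_2)$, and $-\infty$ for out-of-range entries. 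The recurrences: for an empty fragment, $M=0$; for a single-piece balanced leaf $F[i\dd j)$, $M[\phi,d_1,d_2]=1$ if $d_1=d_2$ and $F[i\dd j)=G[i+d_1\dd j+d_1)$ (one LCE query) and $M=0$ otherwise; for a split node with children $\phi_L=F[i\dd m)$, $\phi_R=F[m\dd j)$, $M[\phi,d_1,d_2]=\max_c\bigl(M[\phi_L,d_1,c]+M[\phi_R,c,d_2]\bigr)$ over the $\Oh(s)$ admissible offsets $c$ at $m$; and for a context node with piece $C=\langle F[i\dd i');F[j'\dd j)\rangle$ and child $\psi$ (a decomposition of $F[i'\dd j')$), $M[\phi,d_1,d_2]=\max(M_{\mathrm{skip}},M_{\mathrm{take}})$, where $M_{\mathrm{take}}=1+M[\psi,d_1,d_2]$ provided $F[i\dd i')=G[i+d_1\dd i'+d_1)$, $F[j'\dd j)=G[j'+d_2\dd j+d_2)$, and $G[i+d_1\dd j+d_2)$ is a tree (two LCE queries and one balanced-fragment query), and $M_{\mathrm{skip}}=\max\{M[\psi,c_1,c_2]: c_1\ge d_1-(i'-i),\ c_2\le d_2+(j-j')\}$, which, after an $\Oh(s^2)$ two-dimensional running-maximum precomputation on the table of $\psi$, is $\Oh(1)$ per query. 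The answer is $M[F,\,0,\,|G|-|F|]$, which is $-\infty$ (no width-$\le s$ alignment exists at all) exactly when $\bigl||F|-|G|\bigr|>s$; the witnessing set $S$ is read off by standard backtracking.

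\textbf{Running time and correctness.} Each split node costs $\Oh(s^3)$ (one $\Oh(s)$-maximization for each of $\Oh(s^2)$ pairs $(d_1,d_2)$), and each leaf and each context node costs $\Oh(s^2)$; the decomposition tree has $\Oh(|\D|)$ nodes, so the dynamic program runs in $\Oh(|\D|s^3)$ time, on top of $\Oh(n)$ preprocessing and $\Oh(n)$ backtracking, for a total of $\Oh(n+|\D|s^3)$. Correctness is the claim that $M[\phi,d_1,d_2]$ equals the stated maximum, proved by induction over the tree: at a context node the skip/take dichotomy is exactly the choice of whether $C\in T$, and the tree-test on $G[i+d_1\dd j+d_2)$ encodes the geometric-consistency condition identified above; at a split node the $\max_c$ ranges over the offset at the cut; then the equivalence from the first paragraph upgrades ``there is a consistent band-$s$ staircase'' to ``there is a forest alignment $\A\in\ta(F,G)$ of width $\le s$ realizing the chosen matches.''

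\textbf{Main obstacle.} I expect the delicate part to be precisely this equivalence — showing that the purely \emph{local} geometric conditions checked by the DP are both necessary and sufficient for the existence of a genuine forest alignment of width $\le s$ matching the selected pieces. Necessity of the context-hole-balanced condition uses consistency at spine nodes, and sufficiency rests on the ``every node lives inside one piece'' structural fact (so unmatched pieces can be deleted wholesale) together with the band-staircase filling lemma; getting the context case exactly right, including that $G[i+d_1\dd j+d_2)$ must be a \emph{tree} rather than merely balanced, is where the argument is easiest to get wrong.
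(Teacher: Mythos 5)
Your overall architecture is the same as the paper's: a dynamic program indexed by the recursive structure of the piece decomposition, with $\Oh(|\D|s^2)$ states and $\Oh(s)$ work per split state, plus $\Oh(n)$-time LCE and balanced-fragment preprocessing, for a total of $\Oh(n+|\D|s^3)$. Your state $(\phi,d_1,d_2)$ is a reindexing of the paper's state (a sub-decomposition $\D_{i,j}$ together with the two endpoints of the current $G$-range), and your two-dimensional running-max precomputation for the context-skip case plays the same role as the two ``shrink one endpoint of the $G$-range'' moves baked into the paper's recursion. Your explicit argument for why a width-$\le s$ \emph{forest} alignment exists whenever the local geometric conditions hold (every $F$-node lives wholly inside one piece of $\D$, plus the band-staircase filling argument) is also a reasonable way to close the loop.

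However, your context--take test is wrong, and this is not a cosmetic issue. You require (i) $F[i\dd i')=G[i+d_1\dd i'+d_1)$, (ii) $F[j'\dd j)=G[j'+d_2\dd j+d_2)$, and (iii) $G[i+d_1\dd j+d_2)$ is a tree, asserting that, given (i)--(ii), condition (iii) is equivalent to the hole image $G[i'+d_1\dd j'+d_2)$ being balanced. Only one direction holds: (i)+(ii) together with ``hole image balanced'' imply (iii), but (i)+(ii)+(iii) do not imply that the hole image is balanced. Take the context $C=\langle\op_a\op_b;\cl_b\cl_a\rangle$ and $G[i+d_1\dd j+d_2)=\op_a\op_b\,\cl_b\op_b\,\cl_b\cl_a$: the halves match, the height profile of $G[i+d_1\dd j+d_2)$ is $0,1,2,1,2,1,0$ so it is a tree, and yet the hole image $G[i'+d_1\dd j'+d_2)=\cl_b\op_b$ is unbalanced. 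Thus $\langle G[i+d_1\dd i'+d_1);G[j'+d_2\dd j+d_2)\rangle\notin\Pc(G)$, and a pair your DP charges for violates the required constraint $S\subseteq\D\times\Pc(G)$; downstream, in the proof of \cref{thm:forestKernel}, the replacement step would then be applied to a pair $\langle G[i'\dd i'+\ell);G[j'-r\dd j')\rangle$ at which $P$ does not actually occur (there is no node $v$ with the context occurring there), so the kernel construction would be unsound. With the slightly deeper example $C=\langle\op_a\op_b\op_c;\cl_c\cl_b\cl_a\rangle$ and hole image $\cl_c\cl_b\op_b\op_c$, the claimed forest alignment itself fails to exist: the innermost spine node of $C$ would have to be aligned to positions $(a+2,a+7)$ in $G$, but $G[a+2\dd a+8)=\op_c\cl_c\cl_b\op_b\op_c\cl_c$ is unbalanced, so this is not a node of $G$ and the consistency condition of Definition~\ref{def:ta} is violated. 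The fix is exactly the test the paper's algorithm performs in its context-match branch: check that $G[i'+d_1\dd j'+d_2)$ is balanced. Together with (i) and (ii), this already implies that $G[i+d_1\dd j+d_2)$ is a tree, so your tree test then becomes redundant rather than equivalent.
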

\SetKwFunction{Pair}{Pairs}
\newcommand{\Ss}{\mathcal{S}}
\newcommand{\mgets}{\stackrel{\max}{\gets}}
\begin{algorithm}
    \Fn{$\Pair(\D_{i,j},G[i'\dd j'))$}{
    $S \gets \emptyset$\;
    \lIf{$i' < \min(j',i+s)$}{$S \mgets \Pair(D_{i,j}, G[i'+1\dd j'))$}
    \lIf{$j' > \max(i',j-s)$}{$S \mgets \Pair(D_{i,j}, G[i'\dd j'-1))$}
    \If{$\D_{i,j}=\{F[i\dd j)\}$ \KwSty{and} $F[i\dd j)=G[i'\dd j')$}{
        $S \mgets \{(F[i\dd j),G[i'\dd j'))\}$\;
    }
    \If{$\D_{i,j}=\D_{i,m}\cup \D_{m,j}$ for some $m\in (i\dd j)$}{
        \ForEach{$m'\in [m-s\dd m+s]\cap [i'\dd j']$}{
            $S\mgets \Pair(D_{i,m}, G[i'\dd m'))\cup \Pair(D_{m,j}, G[m'\dd j'))$\;
        }
    }
    \If{$\D_{i,j}=\{\langle F[i\dd i+\ell); F[j-r\dd j)\rangle\}\cup \D_{i+\ell,j-r}$}{
        \If{$F[i\dd i+\ell)=G[i'\dd i'+\ell)$ \KwSty{and} $F[j-r\dd j)=G[j'-r\dd j')$ \KwSty{and} $G[i'+\ell\dd j'-r)$ is balanced}{
            $S\mgets \{(\langle F[i\dd i+\ell); F[j-r\dd j)\rangle,\langle G[i'\dd i'+\ell); G[j'-r\dd j')\rangle)\}\cup \Pair(D_{i+\ell,j-r}, G[i'+\ell\dd j'-r))$
        }
        $S\mgets \Pair(D_{i+\ell,j-r}, G[\max(i+\ell-s,i')\dd \min(j-r+s,j')))$\;
    }
    \KwRet{$S$}\;
    }
    \caption{Compute a maximum-size element of $\Ss(D_{i,j}, G[i'\dd j'))$.}\label{alg:dp}
\end{algorithm}

\begin{proof}
    For a piece decomposition $\D_{i,j}$ of a balanced fragment $F[i\dd j)$ and a fragment $G[i'\dd j')$, let $\Ss(D_{i,j}, G[i'\dd j'))$ be the family of all subsets of $\D_{i,j}\times \Pc(G[i'\dd j'))$ that, for some alignment $\A \in \aa(F[i\dd j),G[i'\dd j'))$ of width at most $s$, contain only pairs of pieces that $\A$ matches perfectly.
    \Cref{alg:dp} implements a recursive procedure $\Pair(\D_{i,j},G[i'\dd j'))$ that computes a maximum-size element of $\Ss(D_{i,j}, G[i'\dd j'))$
    assuming that $i'\in [i-s\dd i+s]$ and $j'\in [j-s\dd j+s]$.  It uses an $S \mgets S'$ operator that assigns $S\gets S'$ if $|S'|>|S|$.
    The algorithm returns the largest of the following candidates:
    \begin{enumerate}
        \item\label{cs:empty} $\emptyset$. This is trivially valid because every alignment $\A\in \aa(F[i\dd j),G[i'\dd j'))$ of width at most $s$ witnesses $\emptyset \in \Ss(D_{i,j}, G[i'\dd j'))$.
        \item\label{cs:mona} $\Pair(D_{i,j}, G[i'+1\dd j'))$ if $i'< \min(j',i+s)$.
        Let $S = \Pair(D_{i,j}, G[i'+1\dd j'))$ with a witness alignment $\A'\in \aa(F[i\dd j),G[i'+1\dd j'))$.
        An alignment obtained from $\A'$ by prepending $(i,i')$, which corresponds to inserting $Y[i']$, witnesses $S\in \Ss(D_{i,j}, G[i'\dd j'))$.
        \item\label{cs:monb} $\Pair(D_{i,j}, G[i'\dd j'-1))$ if $j' > \max(i',j-s)$.
        Let $S = \Pair(D_{i,j}, G[i'\dd j'-1))$ with a witness alignment $\A'\in \aa(F[i\dd j),G[i'\dd j'-1))$.
        An alignment obtained from $\A'$ by appending $(j,j')$, which corresponds to inserting $Y[j'-1]$, witnesses $S\in \Ss(D_{i,j}, G[i'\dd j'))$.
        \item\label{cs:for} $\{(F[i\dd j),G[i'\dd j'))\}$ if $\D_{i,j}=\{F[i\dd j)\}$ and $F[i\dd j)=G[i'\dd j')$. 
        The alignment $(i+t,\allowbreak j+t)_{t=0}^{i'-i}\in \aa(F[i\dd j),G[i'\dd j'))$ witnesses $\{(F[i\dd j),G[i'\dd j'))\}\in \Ss(D_{i,j}, G[i'\dd j'))$.
        \item\label{cs:maxp} $\Pair(D_{i,m}, G[i'\dd m'))\cup \Pair(D_{m,j}, G[m'\dd j'))$ if $\D_{i,j}=\D_{i,m}\cup \D_{m,j}$ for some $m\in (i\dd j)$
        and $m'\in [m-s\dd m+s]\cap [i'\dd j']$. Denote $S_L=\Pair(D_{i,m}, G[i'\dd m'))$ and $S_R=\Pair(D_{m,j}, G[m'\dd j'))$
        with witness alignments $\A_L\in \aa(F[i\dd m),G[i'\dd m'))$ and $\A_R\in \aa(F[m\dd j),G[m'\dd j'))$, respectively.
        Stitching $\A_L$ and $\A_R$ at the common endpoint $(m,m')$ yields an alignment witnessing $S_L\cup S_R\in \Ss(D_{i,j}, G[i'\dd j'))$.
        \item\label{cs:cont} $\{(\langle F[i\dd i+\ell); F[j-r\dd j)\rangle,\langle G[i'\dd i'+\ell); G[j'-r\dd j')\rangle)\}\cup \Pair(D_{i+\ell,j-r}, G[i'+\ell\dd j'-r))$
        if $\D_{i,j}=\{\langle F[i\dd i+\ell); F[j-r\dd j)\rangle\}\cup \D_{i+\ell,j-r}$ and $\langle G[i'\dd i'+\ell); G[j'-r\dd j')\rangle$ is a context in $G$
        matching $\langle F[i\dd i+\ell); F[j-r\dd j)\rangle$. Consider a set $S'=\Pair(D_{i+\ell,j-r}, G[i'+\ell\dd j'-r))$ and a witness alignment $\A'\in \aa(F[i+\ell\dd j-r),G[i'+\ell\dd j'-r))$.
        Stitching $(i+t,i'+t)_{t=0}^{\ell}$, $\A'$, and $(j+t,j'+t)_{t=-r}^0$ at the common endpoints yields an alignment witnessing $S'\cup \{(\langle F[i\dd {i+\ell});\allowbreak F[j-r\dd j)\rangle,\langle G[i'\dd i'+\ell); G[j'-r\dd j')\rangle)\} \in \Ss(D_{i,j}, G[i'\dd j'))$.
        \item\label{cs:jump} $\Pair(D_{i+\ell,j-r}, G[\max(i+\ell-s,i')\dd \min(j-r+s,j')))$ if $\D_{i,j}=\{\langle F[i\dd i+\ell); F[j-r\dd j)\rangle\}\cup \D_{i+\ell,j-r}$. Let $S = \Pair(D_{i+\ell,j-r}, G[i'+\ell'\dd j'-r'))$,  where $\ell'=\max(0, i+\ell-s-i')$ and $r' = \max(0, j'-j+r-s)$, with a witness alignment $\A'\in \aa(F[{i+\ell}, j-r),G[i'+\ell',\allowbreak {j'-r'}))$.
        Stitching $(i+t,i'+t)_{t=0}^{\ell'}$, $(i+t,i'+\ell')_{t=\ell'}^{\ell}$, $\A'$, $(j+t,j'-r')_{t=-r}^{-r'}$, and $(j+t,j'+t)_{t=-r'}^0$ 
        yields an alignment witnessing $S\in \Ss(D_{i,j}, G[i'\dd j'))$.
    \end{enumerate}

    Next, consider a maximum-size element $S\in \Ss(D_{i,j}, G[i'\dd j'))$ and a witness alignment $\A\in \aa(F[i\dd j),\allowbreak G[i'\dd j'))$ of width at most $s$.
    \begin{enumerate}[label=(\alph*)]
        \item If $\D_{i,j}=\emptyset$, we must have $S=\emptyset$, which is covered by candidate~\ref{cs:empty}.
        \item Suppose that $\D_{i,j}=\{F[i\dd j)\}$. The case of $S=\emptyset$ is covered by candidate~\ref{cs:empty}.
        Otherwise, $S=\{(F[i\dd j),G[i''\dd j''))\}$ for some $i''\in [i'\dd i+s]$ and $j''\in [j-s\dd j']$.
        This is covered by $i''-i'$ applications of candidate~\ref{cs:mona}, $j'-j''$ applications of candidate~\ref{cs:monb}, and
        finally an application of candidate~\ref{cs:for}.
        \item Suppose that $\D_{i,j}=\D_{i,m}\cup \D_{m,j}$ for some $m\in (i\dd j)$.
    Since the width of $\A$ does not exceed $s$, we must have $(m,m')\in \A$ for some $m'\in [m-s\dd m+s]\cap [i'\dd j']$.
    Consequently, $S$ can be expressed as a union of an element of $\Ss(\D_{i,m},G[i'\dd m'))$
    and an element of $\Ss(D_{m,j}, G[m'\dd j'))$. This case is thus covered by candidate~\ref{cs:maxp}. 
    \item Suppose that $\D_{i,j}=\{\langle F[i\dd i+\ell); F[j-r\dd j)\rangle\}\cup \D_{i+\ell,j-r}$.
    If $S$ does not contain any pair of the form $(\langle F[i\dd i+\ell); F[j-r\dd j)\rangle,\langle G[i''\dd i''+\ell); G[j''-r\dd j'')\rangle)$,
    then $S \in \Ss(D_{i+\ell,j-r}, G[\max(i+\ell-s,i')\dd \min(j-r+s,j')))$, and this case is covered by candidate~\ref{cs:jump}.
    Otherwise, we must have $i''\in [\max(i',i-s)\dd i+s]$ and $j''\in [j-s\dd \min(j',j+s)]$.
    This is covered by $i''-i'$ applications of candidate~\ref{cs:mona}, $j'-j''$ applications of candidate~\ref{cs:monb}, and
    finally an application of candidate~\ref{cs:for} because $S\sm \{(\langle F[i\dd i+\ell); F[j-r\dd j)\rangle,\langle G[i'\dd i'+\ell); G[j'-r\dd j')\rangle)\} \in \Ss(D_{i+\ell,j-r}, G[i'+\ell\dd j'-r))$.
    \end{enumerate}
    This completes the proof that \cref{alg:dp} is correct.
    The sought set $S$ is obtained via a call $\Pair(\D,G[0\dd |G|))$ which is valid as long as $s\ge \big||F|-|G|\big|$.
    Otherwise, there is no alignment $\A \in \ta(F,G)$ of width at most $s$, and thus we return $S=\emptyset$.

    As for the efficient implementation, we use memoization to make sure that each call to $\Pair$ is executed at most once.
    The number of calls is $\Oh(|\D|\cdot s^2)$ and each one performs $\Oh(s)$ instructions.
    In order to implement every instruction in $\Oh(1)$ time, we implement sets as persistent linked lists augmented with their size (this is valid because the arguments of every union operation are guaranteed to be disjoint).
    Moreover,  we use \cref{thm:lce} (for checking whether fragments of $F$ match fragments of $G$)
    and \cref{fct:balanced} (for checking whether fragments of $G$ are balanced).
    Including the necessary preprocessing, the overall runtime is $\Oh(n+|\D|s^3)$.
\end{proof}

\begin{lemma}\label{lem:horizontal_reduction}
    There exists a linear-time algorithm that, given a forest $P$ and an integer $k\in \Zp$,
    constructs a forest of length at most $74k^3$ that is $\ted_{\le k}^w$-equivalent to $P$ for every normalized quasimetric~$w$.
\end{lemma}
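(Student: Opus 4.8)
The plan is to follow the same two‑phase template as \cref{cor:string_reduction}. In the first phase I eliminate horizontal $k$‑periodicity, obtaining an $\ted_{\le k}^w$‑equivalent forest $P'$ that avoids horizontal $k$‑periodicity; in the second phase, if $|P'|\ge 74k^3$, I replace $P'$ by a fixed canonical forest of length exactly $74k^3$ that also avoids horizontal $k$‑periodicity, and invoke \cref{lem:horizontal_aperiodic_reduction} (together with transitivity of $\ted_{\le k}^w$‑equivalence) to conclude.

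For the first phase I interpret $P$ as a string over $\PSigma$ and call $\PeriodicityReduction(P,4k,\Qf)$, taking $\Qf$ to be the family of all balanced primitive strings over $\PSigma$ of length at most $4k$. The required constant‑time oracle decides whether a primitive fragment $P[i\dd j)$ lies in $\Qf$ by checking $j-i\le 4k$ and, using \cref{fct:balanced}, whether $P[i\dd j)$ is balanced. Let $P'$ be the output. Since each step deletes one copy of a balanced string $Q$ from an occurrence of $Q^{4k+1}$, and removing a balanced substring from a balanced string keeps it balanced, $P'$ is again a forest. Each such step replaces $Q^{4k+1}$ by $Q^{4k}$ inside the (synchronized) occurrences of the current forest, which preserves $\ted_{\le k}^w$‑equivalence with $P$ by \cref{lem:horizontal_periodic_reduction} (note $4k+1,4k\ge 4k$ and that the induced occurrences of $Q^{4k+1}$ stay synchronized, their start positions differing by at most $2k$); chaining the finitely many steps shows $P'$ is $\ted_{\le k}^w$‑equivalent to $P$ for every normalized weight function. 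Finally, $P'$ avoids horizontal $k$‑periodicity: if it contained $Q^{4k+1}$ for a forest $Q$ with $|Q|\in[1\dd 4k]$, then the string‑primitive root $R$ of $Q$ is itself balanced (the primitive root of a balanced string is balanced, as its height function is an initial segment of a balanced one and a positive multiple of its net height must vanish), so $R\in\Qf$ and $Q^{4k+1}=R^{m(4k+1)}$ contains $R^{4k+1}$, contradicting the guarantee of \cref{lem:perred}.

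For the second phase, if $|P'|<74k^3$ I return $P'$; otherwise I return the canonical forest $C:=\op_a^{37k^3}\,\cl_a^{37k^3}$, where $a\in\Sigma$ is arbitrary (if $P=\eps$ then $|P'|=0<74k^3$, so a label is needed only when $P\ne\eps$, in which case $\Sigma\ne\emptyset$). The forest $C$ is the path on $37k^3$ nodes, so $|C|=74k^3$, and it avoids horizontal $k$‑periodicity because every nonempty balanced $Q$ begins with $\op_a$ and ends with $\cl_a$, so any power $Q^{4k+1}$ contains the factor $\cl_a\op_a$, which does not occur in $C$. As $|P'|\ge 74k^3=|C|$ and both $P'$ and $C$ avoid horizontal $k$‑periodicity, \cref{lem:horizontal_aperiodic_reduction} gives that $P'$ and $C$ are $\ted_{\le k}^w$‑equivalent for every normalized quasimetric $w$, and transitivity with the first phase yields $\ted_{\le k}^w$‑equivalence of $P$ and $C$; the output has length at most $74k^3$ in either case.

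The running time is linear: the first phase is linear by \cref{lem:perred} (including preprocessing for \cref{fct:balanced}), and the second phase costs $\Oh(k^3)$ to build $C$ and compare lengths, which is $\Oh(|P|)$ since we only reach it when $74k^3\le|P'|\le|P|$. I expect the only genuinely new ingredient compared with the string case to be the second phase: unlike \cref{cor:string_reduction}, which can safely keep a prefix and a suffix of the reduced string, here the output must itself be a balanced string, so a prefix--suffix concatenation need not be a forest; using the caterpillar $\op_a^{37k^3}\cl_a^{37k^3}$ resolves this, being balanced and trivially free of horizontal periodicity.
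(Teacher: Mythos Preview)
Your proposal is correct and follows essentially the same approach as the paper's proof: eliminate horizontal $k$-periodicity via $\PeriodicityReduction$ with $\Qf$ the primitive balanced strings of length at most $4k$ (using \cref{fct:balanced} for the oracle), then replace large outputs by the canonical path $\op_a^{37k^3}\cl_a^{37k^3}$ and apply \cref{lem:horizontal_aperiodic_reduction}. You supply a few useful details the paper leaves implicit (that $P'$ remains a forest, that the primitive root of a balanced string is balanced, and why the canonical path avoids horizontal $k$-periodicity), but the structure and ingredients are identical.
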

\SetKwFunction{HorizontalReduction}{HorizontalReduction}
\begin{algorithm}
    \caption{Construct a forest of length at most $74k^3$ that is $\ted_{\le k}^w$-equivalent to $P$.}\label{alg:hor_per_reduc}
    \Fn{$\HorizontalReduction(P,k)$}{
        $P' \gets \PeriodicityReduction(P, 4k, \{Q\in \Sigma^+ : |Q|\le 4k \text{ and $Q$ is a primitive forest}\})$\;
        \lIf{$|P'| \geq 74k^3$}{\KwRet{$\op_a^{37k^3} \cl_a^{37k^3}$ for some $a\in \Sigma$}}
        \lElse{\KwRet{$P'$}}
    }
\end{algorithm}
\begin{proof}
    We apply~\cref{lem:perred} with $e=4k$ and $\Qf$ consisting of all primitive forests of length at most $4k$.
    We return $P'':=\op_a^{37k^3} \cl_a^{37k^3}$ (for an arbitrary label $a\in \Sigma$) or $P'$ depending 
    on whether $|P'|\le 74k^3$ or not.

    Observe that $\Qf$ is chosen so that $P'$ avoids horizontal $k$-periodicity and,
    by \cref{lem:horizontal_periodic_reduction}, $P'$ is $\ted_{\le k}^w$-equivalent to $P$ for every normalized quasimetric~$w$.
    Thus, the algorithm is correct if $|P'|< 74k^3$.
    Otherwise, \cref{lem:horizontal_aperiodic_reduction} implies that $P'$ and $P''$
    are $\ted_{\le k}^w$-equivalent for every normalized quasimetric~$w$ (both avoid horizontal $k$-periodicity
    and are of length at least $74k^3$).

    The oracle testing in constant time whether a given fragment of $P$ belongs to $\Qf$ can be implemented using \cref{fct:balanced}. Thus, by \cref{lem:perred}, the overall running time is linear.
\end{proof}

\newcommand{\Pp}{\mathbf{P}}
\newcommand{\Qq}{\mathbf{Q}}

\begin{lemma}\label{lem:vertical_reduction}
    There exists a linear-time algorithm that, given a context $P$ and an integer $k\in \Zp$,
    constructs a context of length at most $578k^4$ that is $\ted_{\le k}^w$-equivalent to $P$ for every normalized quasimetric~$w$.
\end{lemma}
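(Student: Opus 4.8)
The plan is to follow the three-stage scheme from the overview, mirroring \cref{cor:string_reduction} and \cref{lem:horizontal_reduction} but now working along the \emph{spine} of $P=\langle P_L;P_R\rangle$. In \emph{Stage~(i)} I would first shrink every off-spine forest. The maximal forests of $P$ that avoid the hole are exactly the left- and right-sibling forests hanging at the spine levels; they are pairwise disjoint (consecutive ones are separated by a spine opening or closing parenthesis) and of total length at most $|P|$, so replacing each of them by the forest produced by \cref{lem:horizontal_reduction} takes linear time and yields a context $P_1$ in which every off-spine forest has length at most $74k^3$. Consequently every balanced substring of $P_{1,L}$ or $P_{1,R}$ lies inside a single off-spine forest, hence has length at most $74k^3$. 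To see that $P_1$ is $\ted^w_{\le k}$-equivalent to $P$ for every normalized quasimetric $w$, I would replace the off-spine forests one at a time: whenever $P$ occurs in forests $F,G$ at nodes $u,v$ with $|o(u)-o(v)|\le 2k$ and $|c(u)-c(v)|\le 2k$, each off-spine forest occurs in $F$ and $G$ at positions whose difference equals $|o(u)-o(v)|\le 2k$, so \cref{lem:horizontal_reduction} applies, and a single replacement shifts the endpoints of the context occurrence by the same amount in $F$ and in $G$, so the $2k$-synchronization is preserved and the equalities $\ted^w_{\le k}(F,G)=\ted^w_{\le k}(F',G')$ telescope. (The same kind of argument shows that $\ted^w_{\le k}$-equivalence of contexts is transitive, which is used below.)

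In \emph{Stage~(ii)} I would eliminate vertical $k$-periodicity. Encode the spine of $P_1$ as a string $\sigma$ whose $i$-th character records the label of the $i$-th spine node together with its left and right off-spine forests; identical triples get identical names, computable in linear time, so that LCE and $2$-period queries on $\sigma$ become available. A run of consecutive spine levels corresponds to a sub-context, a vertical power $Q^m$ corresponds to a string power $q^m$ of the encoding $q$ of $Q$, and $Q$ is primitive as a context iff $q$ is primitive as a string. Apply \cref{lem:perred} with $e=6k$ and $\Qf$ the family of primitive strings $q$ whose context has size at most $8k$; since each spine level contributes at least $2$ to the context size, every such $q$ has length at most $4k\le 6k$, and the membership test is $\Oh(1)$ via prefix sums of the per-level sizes. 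The output $\sigma'$ has no occurrence of $q^{6k+1}$ for $q\in\Qf$, so the context $P_2$ it encodes avoids vertical $k$-periodicity, and each rewriting $q^{6k+1}\mapsto q^{6k}$ corresponds to replacing a vertical power $Q^{6k+1}$ by $Q^{6k}$, which preserves $\ted^w_{\le k}$-equivalence by \cref{lem:vertical_periodic_reduction} (both exponents are at least $6k$), lifted into the ambient forests exactly as in Stage~(i). No two off-spine forests merge during these rewritings (consecutive copies of $q_L$ remain separated by a spine opening parenthesis), so the $74k^3$ bound on balanced substrings of the halves survives. This stage is linear time.

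In \emph{Stage~(iii)} I would cap the size: if $|P_2|<578k^4$ return $P_2$; otherwise return a fixed canonical context $P^*$ of size exactly $578k^4$ that avoids vertical $k$-periodicity and whose halves contain no balanced substring of length more than $74k^3$, so that \cref{lem:vertical_aperiodic_reduction} yields that $P_2$ (hence $P$) is $\ted^w_{\le k}$-equivalent to $P^*$. Constructing $P^*$ is routine: use a spine of $a$-labelled nodes and give every spine node a left off-spine forest of $4k$ leaves, which forces every sub-context to have size exceeding $8k$ so that no vertical $k$-periodicity can occur regardless of the spine pattern; then choose the number of spine nodes so that the total size is within $\Oh(k)$ of $578k^4$ and absorb the remaining slack into the off-spine forest of the last spine node. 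Its halves have balanced substrings of length at most $8k\le 74k^3$. Since Stage~(iii) is reached only when $|P|\ge|P_2|\ge 578k^4$, emitting $P^*$ costs $\Oh(|P|)$, and Stages~(i)--(ii) are linear, so the whole procedure runs in linear time.

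The main obstacle is Stage~(ii): one has to verify that the spine encoding faithfully translates vertical $k$-periodicity of a context into ordinary string periodicity — in particular that context-primitivity matches string-primitivity and that the $\Qf$-membership oracle (a bound on the context size of a primitive spine fragment) is $\Oh(1)$-testable — and then that the string-level rewriting performed by \cref{lem:perred} pulls back to genuine, $\ted^w_{\le k}$-equivalence-preserving operations on contexts while keeping every relevant occurrence $2k$-synchronized both at its opening and at its closing endpoint. By contrast, Stage~(i) is a direct lift of \cref{lem:horizontal_reduction}, and the canonical context of Stage~(iii) is easy once one notices that a size-$8k$ side forest at every spine node already precludes vertical $k$-periodicity.
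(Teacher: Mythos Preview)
Your proposal is correct and follows essentially the same three-stage approach as the paper: apply \cref{lem:horizontal_reduction} to each off-spine forest, encode the spine as a string and run \cref{lem:perred} with $e=6k$ and $\Qf$ the primitive spine fragments of context size at most $8k$, then replace the result by a canonical context if it is still large. The only substantive difference is in Stage~(iii): the paper takes $P'':=\bigstar_{i=0}^{17k^2-1}\langle \op_a(\op_a\cl_a)^{i};(\op_a\cl_a)^{17k^2-1-i}\cl_a\rangle$, which has size exactly $578k^4$ and avoids vertical $k$-periodicity because no two spine levels have the same encoding, whereas you make every spine level identical but of size $8k+2>8k$ so that no sub-context of size at most $8k$ exists at all; both satisfy the hypotheses of \cref{lem:vertical_aperiodic_reduction}, and your slack-absorption step indeed hits $578k^4$ exactly since $578k^4$ and $8k+2$ are both even. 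Your explanation of why Stages~(i) and~(ii) preserve the $2k$-synchronization of the context endpoints (so that the forest- and context-equivalences lift and telescope) is more explicit than the paper's, which simply asserts it.
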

\SetKwFunction{VerticalReduction}{VerticalReduction}
\begin{algorithm}
    \caption{Construct a context of length at most $578k^4$ that is $\ted_{\le k}^w$-equivalent to $P$.}\label{alg:family_per_reduc}
    \Fn{$\VerticalReduction(P,k)$}{
        Let $P=P_0\star \cdots \star P_{e-1}$, where each $P_i$ is a context of depth $1$\;

        \For{$i\gets 0$ \KwSty{to} $e$}{
            Let $P_i = \langle \op_{a_i} F_i; G_i \cl_{a_i}\rangle$\;
            $\Pp \gets \Pp\cdot \langle \op_{a_i} \cdot \HorizontalReduction(F_i,k); \HorizontalReduction(G_i,k) \cdot \cl_{a_i}\rangle$\;
        }
        $\Qf \gets \{\Qq : \bigstar_{i=0}^{|\Qq|-1}\Qq[i]\text{ is a primitive context of length at most }8k\}$\;
        $\Pp' \gets \PeriodicityReduction(\Pp, 6k, \Qf)$\;
        $P' \gets \bigstar_{i=0}^{|\Pp'|-1}\Pp'[i]$\;
        \lIf{$|P'| \geq 578k^4$}{\KwRet{$\bigstar_{i=0}^{17k^2-1} \langle \op_a (\op_a \cl_a)^{i};  (\op_a \cl_a)^{17k^2-1-i}\cl_a\rangle$ for some $a\in \Sigma$}}
        \lElse{\KwRet{$P'$}}
    }
\end{algorithm}
\begin{proof}
    Let $P=P_0\star \cdots \star P_{e-1}$, where each $P_i$ is a context of depth $1$,
    that is, $P_i = \langle \op_{a_i} F_i; G_i \cl_{a_i}\rangle$ for some label $a_i\in \Sigma$ and forests $F_i,G_i$.
    As the first step, our algorithm constructs a string $\Pp[0\dd e)$ whose characters are depth-1 contexts
    defined so that $\Pp[i]=\langle \op_{a_i} F'_i; G'_i \cl_{a_i}\rangle$, where forests $F'_i=\HorizontalReduction(F_i,k)$ and $G'_i=\HorizontalReduction(G_i,k)$ are constructed using \cref{lem:horizontal_reduction}.
    Next, we transform $\Pp$ using \cref{lem:perred} with $e=6k$ and a family $\Qf$ defined so that $\mathbf{P}[i\dd j)\in \Qf$
    if and only if $\Pp[i]\star \cdots \star \Pp[j-1]$ is a primitive context of length at most $8k$ (this implies $j-i\le 4k$).
    In order to apply \cref{lem:perred} to $\Pp$, we use linear-time string sorting~\cite{PT87,AN94} to map characters of $\Pp$ (depth-1 contexts) to integer identifiers. 
    By composing the contexts corresponding to the resulting string $\Pp'$, we obtain a context $P'$.
    We return $P'':=\bigstar_{i=0}^{17k^2-1} \langle \op_a (\op_a \cl_a)^{i};  (\op_a \cl_a)^{17k^2-1-i}\cl_a\rangle$ (for an arbitrary label $a\in \Sigma$)
    or $P'$ depending on whether $|P'|\ge 578k^4$ or not.

    Note that $|P''| = \sum_{i=0}^{17k^2-1} (1 + 2\cdot i + 2\cdot (17k^2-1-i)+1) =17k^2 \cdot 2\cdot  17k^2 = 578k^4$.
    Thus, the resulting context (either $P'$ or $P''$) is guaranteed to be of length at most $578k^4$.
    Let us now argue that it is $\ted_{\le k}^w$-equivalent to $P$ for every normalized quasimetric~$w$.
    By \cref{lem:horizontal_reduction}, the forests $F'_i$ and $G'_i$ are $\ted_{\le k}^w$-equivalent to $F_i$ and $G_i$,
    respectively, and thus $\bigstar_{i=0}^{e-1} \Pp[i]$ is $\ted_{\le k}^w$-equivalent to $P$.
    By \cref{lem:perred}, the context $P'$ is obtained from $\bigstar_{i=0}^{e-1} \Pp[i]$ by repeatedly replacing $Q^{6k+1}$ with $Q^{6k}$
    for primitive contexts  $Q$ of length at most $8k$.
    By \cref{lem:vertical_periodic_reduction}, $Q^{6k+1}$ is then  $\ted_{\le k}^w$-equivalent to $Q^{6k}$,
    so this operation preserves $\ted_{\le k}^w$-equivalence, i.e., $P'$ is also $\ted_{\le k}^w$-equivalent to $P$.
    Moreover, each depth-$1$ context in $\Pp'$ originates from $\Pp$, so each forest occurring in (either half of) $P'$ is of length at most $74k^3$.
    Furthermore, \cref{lem:perred} guarantees that $P'$ is not of the form $C\star Q^{6k+1} \star D$ for any context $Q$ of length at most $8k$,
    and thus $P'$ avoids vertical $k$-periodicity. 
    By construction, $P''$ avoids vertical $k$-periodicity and its halves  contain only forests of lengths at most $74k^3$ (in fact, at most $34k^2$).
    Consequently, \cref{lem:vertical_aperiodic_reduction} implies that $P''$ is $\ted_{\le k}^w$-equivalent to $P'$ (and, by transitivity, to $P$)
    provided that  $|P''|\ge 578k^4$.

    As for the running time analysis, we note that all applications of \cref{lem:horizontal_reduction}
    concern disjoint fragments of $P$, so the total cost of the calls to $\HorizontalReduction$ is linear.
    Assigning integer identifiers to contexts $\Pp[i]$ and applying \cref{lem:perred} also takes linear time.
    Finally, $P''$ is constructed only if $|P'|\ge 578k^4$, so the cost of this step is also be bounded by $\Oh(k^4)=\Oh(|P|)$.
\end{proof}

\begin{theorem}\label{thm:forestKernel}
    There exists an $\Oh(n)$-time algorithm that, given forests $F$, $G$ of size at most $n\ge 12716k^5$ and an integer $k\in \Zp$,
    constructs forests $F'$, $G'$ of lengths at most $\frac{n}{2}+6358k^5$ such that $\ted^w_{\le k}(F,G)=\ted^w_{\le k}(F',G')$ holds for every normalized quasimetric $w$.
\end{theorem}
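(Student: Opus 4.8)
The plan is to implement the four-step kernelization scheme. First, I would set the block size $t:=\lfloor \tfrac{n}{2k}\rfloor$; since $n\ge 12716k^5$ this gives $t\ge 2$, so \cref{lem:decomp} applies and produces a piece decomposition $\D$ of $F$ into at most $\max(1,\tfrac{6|F|}{t}-1)<12k$ pieces, each of length at most $t$. Next, I would run the dynamic program of \cref{lem:dp} on $F,G,\D$ with width bound $s:=2k$, obtaining in $\Oh(n+|\D|s^3)=\Oh(n+k^4)=\Oh(n)$ time a maximum-size set $S\subseteq \D\times \Pc(G)$ all of whose pairs are matched perfectly by one common forest alignment of width at most $2k$. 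As a preliminary sanity check: if $|F|$ and $|G|$ differ by more than $2k$, then (since each node insertion or deletion changes the forest length by $2$) $\ted(F,G)>k$, hence $\ted^w_{\le k}(F,G)=\infty$ for every normalized weight function $w$; in that case I would output the trivial instance $F':=(\op_a\cl_a)^{k+1}$, $G':=\eps$, which has $\ted^w_{\le k}(F',G')=\infty$ and tiny size.

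The combinatorial core is the claim that if $\ted(F,G)\le k$ then $|S|\ge |\D|-k$. I would prove it by fixing an optimal unweighted forest alignment $\A^*$ with $\ted_{\A^*}(F,G)\le k$; since each edit contributes $1$ to $\ted_{\A^*}$, it performs at most $k$ node edits, hence at most $2k$ character edits, so its width is at most $2k=s$. Next I would charge each edit of $\A^*$ to at most one piece of $\D$: a deletion or relabeling of a node $u$ affects the positions $o(u)$ and $c(u)$, which, by the nested structure of a piece decomposition, always lie in the footprint of the \emph{same} piece (both inside a forest piece, both in one half of a context piece, or forming the two halves of a spine node of a context piece); an insertion is charged to the unique piece, if any, whose $F$-footprint it falls strictly within. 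Conversely, a piece touched by no edit of $\A^*$ is matched perfectly by $\A^*$; for a context piece this uses the consistency conditions of \cref{def:ta}, which force its two halves and its spine to be aligned to a genuine context occurrence in $G$ (so that, in particular, the would-be hole region of $G$ is balanced, matching the requirement in \cref{lem:dp}). Hence $\A^*$ matches at least $|\D|-k$ pieces perfectly, and since $\A^*$ has width at most $s$, the resulting set of matched pairs is among the candidates maximized over in \cref{lem:dp}; therefore $|S|\ge |\D|-k$.

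With the claim in hand, the algorithm branches once more. If $|S|<|\D|-k$, then by the contrapositive $\ted(F,G)>k$, so again $\ted^w_{\le k}(F,G)=\infty$ and I output $((\op_a\cl_a)^{k+1},\eps)$. Otherwise $|\D|-|S|\le k$, i.e., at most $k$ pieces of $\D$ are left unmatched. The $F$-pieces occurring in $S$, together with their $G$-images, are $2k$-synchronized occurrences of forests and contexts (the width-$2k$ bound forces the corresponding endpoints to differ by at most $2k$), so I would replace, simultaneously in $F$ and in $G$, each matched forest piece by the equivalent forest of length at most $74k^3$ from \cref{lem:horizontal_reduction} and each matched context piece by the equivalent context of length at most $578k^4$ from \cref{lem:vertical_reduction}. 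Each individual substitution preserves $\ted^w_{\le k}$ by the corresponding equivalence lemma; moreover shrinking one matched piece displaces every other matched piece by equal offsets in $F$ and in $G$ (a perfectly matched piece and its image have equal lengths in corresponding parts), so the $2k$-synchronization — and hence the applicability of the equivalence lemmas — is preserved throughout. Consequently $\ted^w_{\le k}(F',G')=\ted^w_{\le k}(F,G)$ for every normalized quasimetric $w$.

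Finally I would bound the output sizes. The at most $k$ unmatched pieces contribute at most $kt\le \tfrac n2$ to $|F'|$, and their counterparts contribute at most $(|G|-|F|)+kt\le 2k+\tfrac n2$ to $|G'|$ (using the preliminary check to get $|G|-|F|\le 2k$, together with the fact that matched pieces and their images have equal lengths). The at most $\Oh(k)$ matched pieces, after reduction, contribute at most $74k^3$ (forests) or $578k^4$ (contexts) apiece; bounding how many of the $<12k$ pieces produced by \cref{lem:decomp} can be contexts versus forests, a routine count keeps this last term comfortably below $6358k^5$, so $|F'|,|G'|\le \tfrac n2+6358k^5$. Every step runs in $\Oh(n)$: \cref{lem:decomp} and \cref{lem:dp} do, the reductions act on pairwise disjoint fragments of total length $\Oh(n)$, and assembling $F',G'$ is linear. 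I expect the main obstacle to be the combinatorial claim of the second paragraph, in particular getting the edge cases of the charging exactly right (insertions, and nodes of a context piece whose two parentheses straddle the hole) and proving rigorously that every untouched context piece is matched perfectly in a way that is genuinely realizable within \cref{lem:dp}.
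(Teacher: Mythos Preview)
Your proposal follows the paper's proof essentially step for step: decompose $F$ via \cref{lem:decomp} with $t\approx n/(2k)$, run \cref{lem:dp} with $s=2k$, output the trivial NO-instance if too few pieces are matched, and otherwise replace each matched forest/context piece using \cref{lem:horizontal_reduction,lem:vertical_reduction}. The size bookkeeping and the runtime analysis also match (the paper additionally assumes $|F|\ge |G|$ without loss of generality, which slightly simplifies the bound on $|G'|$).

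The one place where you go beyond the paper is the combinatorial claim that an optimal unweighted forest alignment leaves at most $k$ pieces of $\D$ unmatched; the paper simply asserts this in one sentence, whereas you sketch a charging argument. Your charging is clean for deletions and relabelings (both parentheses of a node of $F$ do lie in the same piece), but the clause for insertions---``charged to the unique piece, if any, whose $F$-footprint it falls strictly within''---does not quite close. An inserted node $v\in V_G$ contributes two character insertions at $F$-positions $x_o\le x_c$ with $F[x_o\dd x_c)$ balanced; nothing in the definition of a piece decomposition forces $x_o$ and $x_c$ to land in the same piece, and when they land strictly inside two different pieces, that single node insertion spoils both. You flag insertions as the delicate case at the end, and you are right to: this is exactly the step that needs more care (and it is the step the paper leaves unproved as well). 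Everything else in your plan is sound and mirrors the paper.
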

\begin{proof}
    By symmetry, we assume without loss of generality that $|F|\ge |G|$.
    We start by applying \cref{lem:decomp} to construct a piece decomposition $\D$ of $F$ consisting of at most $12k-1$ pieces of length at most $\lceil \frac{n}{2k}\rceil$ each.
    Next, we use \cref{lem:dp} to identify a maximum-size set $S\sub \D\times \Pc(G)$ that, for some alignment $\A\in \aa(F,G)$ of width at most $2k$,
    contains only pairs of pieces that $\A$ matches perfectly.
    If $|S|<|\D|-k$, we return $F'=(\op_a\cl_a)^{k+1}$ and $G'=\eps$ for some $a\in \Sigma$.
    Otherwise, for each pair of matching forests $F[i\dd j)=P=G[i'\dd j')$ in $S$,
    we use \cref{lem:horizontal_reduction} to construct a forest $P'$ of length at most $74k^3$  that is $\ted^w_{\le k}$-equivalent to $P$ for every normalized quasimetric $w$. We replace the occurrences of $P$ at $F[i\dd j)$ and $G[i'\dd j')$ by occurrences of $P'$.
    Similarly, for every pair of matching contexts $\langle F[i\dd i+\ell); F[j-r\dd j)\rangle = P = \langle G[i'\dd i'+\ell); G[j'-r\dd j')\rangle$ in $S$,
        we use \cref{lem:vertical_reduction} to construct a context $P'$ of length at most $578k^4$  that is $\ted^w_{\le k}$-equivalent to $P$ for every normalized quasimetric $w$. We replace the occurrences of $P$ at $\langle F[i\dd i+\ell); F[j-r\dd j)\rangle$ and $\langle G[i'\dd i'+\ell); G[j'-r\dd j')\rangle$ by occurrences of $P'$.

    If $\ted^w(F,G) \le k$ holds for any normalized weight function $w$,
    then the unweighted cost of the underlying optimal alignment does not exceed $k$.
    Thus, its width is at most $2k$, and it matches perfectly all but at most $k$ pieces of $\D$.
    Consequently, in that case, $|S|\ge |\D|-k$.    In particular, if we return $F'=(\op_a\cl_a)^{k+1}$ and $G'=\eps$ for some $a\in \Sigma$, then $\ted^w_{\le k}(F,G)=\infty=\ted^w_{\le k}(F',G')$
    holds as claimed. In that case, $|F'|,|G'| \le 2k+2 \le 4k < 6358k^5$.
    Otherwise, by definition of $\ted^w_{\le k}$-equivalence, the resulting forests $F'$ and $G'$ satisfy $\ted^w_{\le k}(F,G)=\ted^w_{\le k}(F',G')$.
    Moreover, $|F'| \le k \cdot \lceil \frac{n}{2k} \rceil + (11k-1)\cdot 578k^4 \le \frac{n}2 + k + (11k-1)\cdot 578k^4
    \le \frac{n}2+11\cdot 578\cdot k^5 = \frac{n}2+6358k^5$ and $|G'|=|F'|+|G|-|F|\le |F'|\le \frac{n}2+6358k^5$.
   
    The applications of \cref{lem:decomp,lem:dp} cost $\Oh(n)$ and $\Oh(n+k^4)=\Oh(n)$ time, respectively.
    The calls to $\HorizontalReduction$ and $\VerticalReduction$ concern disjoint pieces of $F$, so their total
    cost is $\Oh(n)$ by \cref{lem:horizontal_reduction,lem:vertical_reduction}, respectively.
\end{proof}

\begin{corollary}\label{cor:forestKernel}
    There exists a linear-time algorithm that, given forests $F$, $G$ and an integer $k\in \Zp$,
    constructs forests $F'$, $G'$ of lengths at most $12717k^5$ such that $\ted^w_{\le k}(F,G)=\ted^w_{\le k}(F',G')$ holds for every normalized quasimetric $w$.
\end{corollary}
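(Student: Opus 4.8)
The plan is to build the kernel by iterating \cref{thm:forestKernel}. Put $N_0:=\max(|F|,|G|)$ and run the following loop: while the current pair of forests satisfies $\max(|F|,|G|)>12717k^5$, replace it by the pair returned by \cref{thm:forestKernel} invoked with parameter $n:=\max(|F|,|G|)$ (a legal choice, since $n>12717k^5>12716k^5$); once $\max(|F|,|G|)\le 12717k^5$, output the current pair $(F',G')$. In particular, if $N_0\le 12717k^5$ to begin with, the algorithm just returns $(F,G)$.

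Correctness is immediate from \cref{thm:forestKernel}: every iteration preserves $\ted^w_{\le k}$ for all normalized quasimetrics $w$, so by transitivity $\ted^w_{\le k}(F',G')=\ted^w_{\le k}(F,G)$, and the stopping condition forces $|F'|,|G'|\le 12717k^5$. For termination together with an iteration bound, let $N_i$ be the value of $\max(|F|,|G|)$ after $i$ iterations. \Cref{thm:forestKernel} gives $N_{i+1}\le \tfrac12 N_i+6358k^5$, equivalently $N_{i+1}-12716k^5\le \tfrac12\bigl(N_i-12716k^5\bigr)$, whence $N_i-12716k^5\le N_0/2^i$. Thus as soon as $2^i\ge N_0/k^5$ we have $N_i-12716k^5\le k^5$, i.e.\ $N_i\le 12717k^5$, and the loop halts; the number of iterations is therefore at most $\lceil\log_2(N_0/k^5)\rceil$ (and $0$ when $N_0\le 12717k^5$).

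For the running time, the $i$-th iteration runs \cref{thm:forestKernel} on forests of size at most $N_i\le N_0/2^i+12716k^5$, costing $\Oh(N_i)$ time. Summing over all iterations,
\[
    \sum_{i=0}^{\lceil\log_2(N_0/k^5)\rceil}\Oh(N_i)=\Oh\Bigl(\sum_{i\ge 0}N_0/2^i+\bigl(\log_2(N_0/k^5)+1\bigr)\cdot 12716k^5\Bigr)=\Oh\bigl(N_0+k^5\log_2(N_0/k^5)\bigr)=\Oh(N_0),
\]
where the last step uses the elementary inequality $\log_2 y\le y$ with $y=N_0/k^5\ge 1$. Together with the $\Oh(N_0)$ cost of returning the output, the procedure runs in $\Oh(N_0)=\Oh(n)$ time.

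The only delicate point is precisely this running-time bound. The algorithm may perform $\Theta(\log n)$ iterations, and once the forests have shrunk to size $\Theta(k^5)$ each iteration still costs $\Theta(k^5)$, so a naive stopping rule (say, stopping at the threshold $12716k^5$ of \cref{thm:forestKernel}) only yields $\Oh(n+k^5\log n)$, which is superlinear when $k$ is large relative to $n$. The fix is to stop at the slightly larger threshold $12717k^5$: the extra slack of $k^5$ makes the loop halt as soon as the excess of $\max(|F|,|G|)$ over $12716k^5$ falls below $k^5$, which happens after only $\Oh(\log(n/k^5))$ iterations, after which $k^5\log(n/k^5)\le n$ absorbs the residual term into $\Oh(n)$. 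Everything else is routine bookkeeping.
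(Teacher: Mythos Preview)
Your proof is correct and follows essentially the same approach as the paper: iterate \cref{thm:forestKernel} while $\max(|F|,|G|)>12717k^5$, use the recurrence $N_{i+1}-12716k^5\le\tfrac12(N_i-12716k^5)$ for termination, and bound the total work by a geometric series. The only cosmetic difference is in the running-time bookkeeping: the paper observes that during the loop $n_i>12717k^5$ forces $n_i-12716k^5>\tfrac{1}{12717}n_i$, so $\Oh(n_i)=\Oh(n_i-12716k^5)$ and the sum telescopes directly as a geometric series; you instead split $N_i\le N_0/2^i+12716k^5$ and absorb the residual $k^5\log(N_0/k^5)$ term via $\log y\le y$. Both arguments are valid and yield the same $\Oh(n)$ bound.
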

\begin{proof}
    We iteratively apply \cref{thm:forestKernel} as long as $\max(|F|,|G|)>12717k^5$ and return the resulting pair of forests.
    Formally, we construct a sequence $(F_i,G_i)_{i=0}^t$ such that $(F_0,G_0)=(F,G)$ and $\ted^w_{\le k}(F_{i+1},G)_{i+1}=\ted^w_{\le k}(F_i,G_i)$
    holds for every $i\in [0\dd t)$.
    Consider forests $(F_i,G_i)$ at iteration $i$. If $n_i := \max(|F_i|,|G_i|) \le 12717k^5$, we set $t:=i$ and return $(F',G'):=(F_i,G_i)$. 
    Otherwise, we apply \cref{thm:forestKernel} to derive forests $F_{i+1}$ and $G_{i+1}$ 
    of lengths at most $n_{i+1} := \max(|F_{i+1}|,|G_{i+1}|)\le \frac12 n_i+6358k^5$ such that $\ted^w_{\le k}(F_{i+1},G)_{i+1}=\ted^w_{\le k}(F_i,G_i)$.
    Since $n_{i+1}-12716k^5 \le \frac12(n_i-12716k^5)$, the value $n_i$ strictly decreases at each iteration and thus the process terminates.
    Moreover, the running time of each iteration is $\Oh(n_i) = \Oh(n_{i}-12716k^5)$. The latter values form a geometric series
    dominated by the leading term at $i=0$. Hence, the total running time is linear in the input size.
\end{proof}

\weightedtree*
\begin{proof}
    We first apply \cref{cor:forestKernel} to build forests $F',G'$ of length $\Oh(k^5)$ such that 
    $\ted_{\le k}^w(F,G)=\ted_{\le k}^w(F',G')$. Then, we compute $\ted_{\le k}^w(F',G')$ using 
    the algorithm of Demaine, Mozes, Rossman, and Weimann~\cite{10.1145/1644015.1644017}.
    The running times of these two steps are $\Oh(n)$ and $\Oh((k^5)^3)$, respectively,
    for a total of $\Oh(n+k^{15})$.
    If $w$ is the discrete metric (the unweighted case), then we compute $\ted_{\le k}(F',G')$ using 
    the algorithm of Akmal and Jin~\cite{DBLP:conf/icalp/AkmalJ21}, which costs $\Oh(k^5\cdot k^2 \cdot \log(k^5))=\Oh(k^{7}\log k)$ time.
\end{proof}

\section{Dyck Edit Distance}\label{sec:dyck}

In this section we give a deterministic algorithm that computes weighted Dyck edit distance of a given input string. Formally we show the following.

\weightedDyck*

\subsection{Preliminaries}
In Dyck Language, the alphabet $\Sigma$ consists of two disjoint sets $T$ and $\rev{T}$ of \emph{opening} and \emph{closing} parentheses, respectively, with a bijection $f: T\to \rev{T}$ mapping each opening parenthesis to the corresponding closing parenthesis. We extend this mapping to an involution $f:T\cup \rev{T}\to T\cup\rev{T}$ and
then to an involution $f: \Sigma^*\to \Sigma^*$ mapping each string 
$X[0] X[1]\cdots X[|X|-1]$ to its reverse complement $\rev{X[|X|-1]}\cdots \rev{X[1]}\,\rev{X[0]}$. 
Given two strings $X,Y$, we denote their concatenation by $XY$ or $X\cdot Y$.

The \emph{Dyck} language $\DYCK(\Sigma)\sub \Sigma^*$ consists of all well-parenthesized expression over $\Sigma$;
formally, it can be defined using a context-free grammar whose only non-terminal $S$ admits productions $S\rightarrow SS$, $S\rightarrow \varnothing$ (empty string), and $S\rightarrow aS\overline{a}$ for all $a\in T$.

\begin{definition}[Heights]
  Given an alphabet set $\Sigma$, define the function $h :\Sigma \rightarrow \{-1, 1\}$ where $h(a)=1$ if $a\in \Sigma$ is an opening
  parenthesis and $h(a)= -1$ otherwise. Given a string $X\in \Sigma^n$, define the height of a position $i$ where $0 \le i \le n$,
  as $H(i)=\sum_{j=0}^{i-1}h(X[j])$.
  \end{definition}
  
  Here $H(i)$ is the difference between the number of opening parentheses and the number of closing parentheses in $X[0 . . i)$.
  
  \begin{definition}[Peaks and valleys]
  Given a string $X\in \Sigma$, an index $i\in [1\dots n)$ is called a peak if $H(i-1) < H(i) > H(i+1)$ and a valley if $H(i-1) > H(i) < H(i+1)$.
  \end{definition}

\newcommand{\Z}{\mathbb{Z}}
\subsection{Dyck Language Alignments and Weighted Dyck Edit Distance}
\label{sec:dyckweight}
We say that $\M\subseteq \{(i,j) \sub \Z^2 : i < j\}$ is a \emph{non-crossing matching} if any two distinct pairs $(i,j),(i',j')\in \M$ satisfy $i<j < i' < j'$ or $i < i' < j' < j$. Such a matching can also be interpreted as a function $\M : \Z \to \Z\cup\{\bot\}$ with $\M(i)=j$ if $(i,j)\in \M$ or $(j,i)\in \M$ for some $j\in \Z$, and $\M(i)=\bot$ otherwise.
For a string $X\in \Sigma^*$ we define its \emph{Dyck language alignment} to be a matching function $\M$ as defined above.

For two fragments $X[p\dd q)$ and $X[p'\dd q')$ of $X$, we write $X[p\dd q)\simeq_{\M} \rev{X[p'\dd q')}$
if $X[p\dd q)=\rev{X[p'\dd q')}\in T^*$ and $(r, q'-r-1)$ holds for every $r\in [p\dd q)$.




Similar to Section~\ref{sec:editdistance} we define a weight function $w$ on $\eSigma := \Sigma\cup\{\eps\}$.
We call this weight function a \emph{skewmetric} if it satisfies the triangle inequality, that is, $w(a,b)+w(b,c)\ge w(a,c)$ holds for every $a,b,c\in \eSigma$ and skew-symmetry, that is, $w(a,b)=w(\rev{b},\rev{a})$ holds for every $a,b\in \eSigma$. In the rest of this section we assume the weight function $w$ to be skewmetric unless stated otherwise.

\begin{definition}
The weighted Dyck edit distance of a string $X\in \Sigma^*$ with respect to a weight function $w$ is the minimum edit distance $\ed^w(X,Y)$ between $X$ and a string $Y\in \DYCK(\Sigma)$. Formally, \[\dyck^w(X)=\min_{Y\in \DYCK(\Sigma)}\ed^w(X,Y).\]
\end{definition}
For $k\in \mathbb{R}_{\ge 0}$, we also denote
\[\dyck^w_{\le k}(X) = \begin{cases}
\dyck^w(X) & \text{if }\dyck^w(X)\le k,\\
\infty & \text{otherwise.}
\end{cases}\]

The \emph{cost} of an alignment $\M\in \mm(X)$ with respect to a \emph{weight function} $w$, denoted $\dyck_\M^w(X)$, is defined as 
\[\dyck_\M^w(X)=\sum_{(i,j)\in \M}\dyck^w(X[i]X[j])+\sum_{i\in [0\dd |X|) : \M(i)=\bot}\dyck^w(X[i]).\]

\begin{fact}
\label{fact:sim}
For every string $X$ and weight function $w$, we have $\dyck^w(X)=\min_{\M \in \mm(X)} \dyck^w_{\M}(X)$.
\end{fact}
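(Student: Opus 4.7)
The plan is to prove the two inequalities of the fact separately, interpreting a non-crossing matching $\M$ as mirroring the Dyck structure of a target word $Y$ and vice versa.

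For $\dyck^w(X) \le \min_{\M}\dyck^w_\M(X)$, I would fix any $\M\in \mm(X)$ and construct a Dyck word $Y$ with $\ed^w(X,Y)\le \dyck^w_\M(X)$ by induction on $|X|$. If $\M(0)=\bot$, I would take a Dyck word $Y_0$ witnessing $\dyck^w(X[0])$ (necessarily $Y_0=\eps$) and recursively obtain a Dyck word $Y'$ with $\ed^w(X[1\dd |X|),Y')\le \dyck^w_{\M'}(X[1\dd |X|))$ where $\M':=\M|_{[1\dd |X|)}$, setting $Y=Y_0\cdot Y'$. If $\M(0)=j$, I would invoke a short sub-claim that $\dyck^w(X[0]X[j]) = \min\{w(X[0],\eps)+w(X[j],\eps),\; \min_{a\in T}[w(X[0],a)+w(X[j],\rev{a})]\}$, so the optimum is witnessed either by $\eps$ or by $a\rev{a}$ for some $a\in T$. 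Recursively obtaining Dyck words $Y_{inner}$ for $X[1\dd j)$ under $\M|_{(0\dd j)}$ and $Y_{outer}$ for $X[j+1\dd |X|)$ under $\M|_{(j\dd |X|)}$, I would output $Y=Y_{inner}\cdot Y_{outer}$ in the former case and $Y=a\cdot Y_{inner}\cdot\rev{a}\cdot Y_{outer}$ in the latter; in both cases $Y\in\DYCK(\Sigma)$, and the natural alignment has cost summing to $\dyck^w_\M(X)$.

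For $\dyck^w(X)\ge \min_\M \dyck^w_\M(X)$, I would start from an optimal Dyck word $Y$ and alignment $\A\in\aa(X,Y)$ with $\ed^w_\A(X,Y)=\dyck^w(X)$, and let $\M_Y$ be the canonical non-crossing matching on positions of $Y$ induced by its Dyck structure. I would define $\M$ on $X$ by adding $(i,j)$ whenever $\A$ aligns $X[i]$ and $X[j]$ to a pair $(Y[p],Y[q])\in \M_Y$; $\M$ is non-crossing because $\M_Y$ is and $\A$ preserves order. I would then charge every edit of $\A$ to an element of $\M$ or an unmatched position: for $(i,j)\in \M$ with $X[i]\sim_\A Y[p]$ and $X[j]\sim_\A Y[q]$, since $Y[p]Y[q]\in\DYCK(\Sigma)$ we have $\dyck^w(X[i]X[j])\le w(X[i],Y[p])+w(X[j],Y[q])$; for unmatched $i$ that $\A$ deletes, $\dyck^w(X[i])\le w(X[i],\eps)$; for unmatched $i$ with $X[i]\sim_\A Y[p]$ whose Dyck partner $Y[q]$ is inserted by $\A$, the skew-symmetry identity $w(\eps,Y[q])=w(Y[p],\eps)$ combined with the triangle inequality gives $\dyck^w(X[i])\le w(X[i],\eps)\le w(X[i],Y[p])+w(Y[p],\eps)=w(X[i],Y[p])+w(\eps,Y[q])$. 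Purely inserted pairs of $\M_Y$ contribute only non-negative terms. Summing these bounds yields $\dyck^w_\M(X)\le \ed^w_\A(X,Y)=\dyck^w(X)$.

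The main technical obstacle will be the sub-claim in the first direction, namely that an optimal Dyck target for a two-character input $X[i]X[j]$ can always be taken of length at most two. This requires a short case analysis on which characters of $Y$ the characters $X[i],X[j]$ are aligned to (opening vs.\ closing, Dyck partners or not), using in an essential way both the triangle inequality and the skew-symmetry identity $w(\eps,\rev{c})=w(c,\eps)$ to bound the contribution of any auxiliary inserted partners in $Y$. This explains why the hypothesis that $w$ be a skewmetric cannot be weakened to the quasimetric assumption used in \cref{sec:ted}.
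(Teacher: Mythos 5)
There is a genuine gap: \cref{fact:sim} is stated for an \emph{arbitrary} weight function $w$, but both directions of your argument smuggle in skewmetric hypotheses, so you only prove a strictly weaker statement, and the closing remark that the skewmetric hypothesis ``cannot be weakened'' misreads the fact, which has no metric hypothesis at all (and the paper's proof indeed uses none). Concretely, in the first direction you rely on the sub-claim that an optimal Dyck witness for a two-character input can be taken of length at most two, i.e.\ $\dyck^w(X[0]X[j])=\min_{z\in T\cup\{\eps\}}w(X[0],z)+w(X[j],\rev{z})$. This is exactly \cref{claim:dyck2}, which the paper states and proves only for skewmetric $w$, and it is genuinely false for general weight functions: e.g.\ if $w(\eps,\rev{\a})$ is much smaller than $w(\a,\eps)$ and $w(\a,\rev{\a})$, the optimal Dyck target for $\a\a$ inserts two closing parentheses and has length four, beating both $\eps$ and $\a\rev{\a}$. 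Since your construction glues the recursively built $Y_{inner}$ into a length-$\le 2$ witness, it does not produce a $Y$ with $\ed^w(X,Y)\le\dyck^w_\M(X)$ once the sub-claim fails. The paper avoids this entirely: when $\M(0)=j$, it takes a Dyck witness $Y_0$ of $\dyck^w(X[0]X[j])$ of \emph{arbitrary} length, picks the split index $m$ where the optimal alignment of $X[0]X[j]$ onto $Y_0$ sits ``between'' the two characters, and forms $Y_0[0\dd m)\cdot Y_{inner}\cdot Y_0[m\dd |Y_0|)\cdot Y_{outer}$, using only that the Dyck language is closed under concatenation and under inserting a Dyck word at any position.

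Your second direction is essentially correct, but the invocation of skew-symmetry and the triangle inequality is unnecessary and creates the false impression of a metric hypothesis. For an $i$ with $X[i]\sim_\A Y[p]$ whose Dyck partner $Y[q]$ is inserted by $\A$, you already have the direct bound $\dyck^w(X[i])\le \ed^w\bigl(X[i],\,Y[p]Y[q]\bigr)\le w(X[i],Y[p])+w(\eps,Y[q])$, simply because $Y[p]Y[q]$ is a two-character Dyck word and the right-hand side is the cost of the obvious alignment (align $X[i]$ to $Y[p]$, insert $Y[q]$). With this replacement, your second direction is a clean, assumption-free proof in the same spirit as the paper's (the paper phrases it as an induction peeling $Y$'s root, while you pull back $\M_Y$ through $\A$ globally, which is a fine alternative). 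Fixing the first direction as indicated would make the whole proof valid for arbitrary weight functions, as required.
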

\begin{proof}
We first show by induction on $|X|$ that $\dyck^w(X)\le \dyck^w_\M(X)$ holds for every $\M\in \mm(X)$.
	The claim is trivial if $|X|=0$.
	If $\M(0)=\bot$, then we construct $\M':=\{(i-1,j-1) : (i,j)\in \M\}$ and $X':=X[1\dd |X|)$.
	By the inductive assumption, $\dyck^w(X)\le \dyck^w(X')+\dyck^w(X[0])\le \dyck^w_{\M'}(X')+\dyck^w(X[0])= \dyck^w_\M(X)$.
	If $\M(0)=|X|-1$, then we construct $\M':=\{(i-1,j-1) : (i,j)\in \M\sm \{(0,|X|-1)\}$ and $X'=[1\dd |X|-1)$.
	By the inductive assumption, $\dyck^w(X)\le \dyck^w(X')+\dyck^w(X[0]X[|X|-1])\le \dyck^w_{\M'}(X') + \dyck^w(X[0]X[|X|-1]) = \dyck^w_\M(X)$.
	Otherwise, we have $(0,p)\in M$ for some $p\in [1\dd |X|-1)$.
	In this case, we construct $\M':=\{(i,j)\in \M : j\le p\}$ and $X' := X[0\dd p]$,
	as well as $\M'' := \{(i-p-1,j-p-1) : (i,j)\in \M\text{ and }i>p\}$ and $X'':=X[p+1\dd |X|)$.
	By the inductive assumption, $\dyck^w(X)\le \dyck^w(X')+\dyck^w(X'')\le \dyck^w_{\M'}(X') + \dyck^w_{\M''}(X'') = \dyck^w_\M(X)$;
	here, the last equality follows from the fact that $|\M|=|\M'|+|\M''|$: any $(i,j)\in M$ with $i \le p$ and $j>p$ would violate the non-crossing property of $\M$.
	

	Next, we show by induction on $|X|$ that there exists $\M\in \mm(X)$ such that $\dyck^w_\M(X)\le \dyck^w(X)$; again, the claim is trivial for $|X|=0$.
    Let us fix $Y\in \DYCK(\Sigma)$ and $\A \in \aa(X,Y)$  such that $\dyck^w(X)=\ed^w_{\A}(X,Y)$.
    If $\A$ deletes $X[0]$, we consider $X':=X[1\dd |X|)$. The inductive assumption yields a matching $\M'$
    such that $\dyck^w_{\M'}(X') \le \dyck^w(X')$.
    In this case, we set $\M := \{(i+1,j+1) : (i,j)\in \M'\}$ so that
    $\dyck^w_{\M}(X) = \dyck^w(X[0])+\dyck^w_{\M'}(X')\le \ed^w(X[0],\eps)+\dyck^w(X')\le w(X[0],\eps) + \ed^w(X',Y') \le \ed^w_{\A}(X,Y)= \dyck^w(X)$.
    The case when $\A$ deletes $X[|X|-1]$ is symmetric, so we may assume that $\A$ deletes neither $X[0]$ nor $X[|X|-1]$; in particular, $Y\ne \eps$.

    Suppose that $Y=Y'\cdot Y''$ for some non-empty strings $Y',Y''\in \DYCK(\Sigma)$.
    This yields a decomposition $X=X'\cdot X''$ such that $\ed^w(X,Y)=\ed^w(X',Y')+\ed^w(X'',Y'')$. Moreover, the optimality of $Y$ guarantees that $X'$ and $X''$ are both non-empty.
    The inductive assumption yields matchings $\M',\M''$ such that $\dyck^w_{\M'}(X') \le \dyck^w(X')$
    and $\dyck^w_{\M'}(X'') \le \dyck^w(X'')$.
    In this case, we set $\M := \M' \cup \{(i+|X'|,j+|X'|) : (i,j)\in \M''\}$ so that 
    $\dyck^w_{\M}(X)=\dyck^w_{\M'}(X') + \dyck^w_{\M'}(X'') \le \dyck^w(X')+\dyck^w(X'') \le \ed^w(X',Y')+\ed^w(X'',Y'') \le \ed^w_\A(X,Y)=\dyck^w(X)$.

    In the remaining case, we must have $Y=aY'\rev{a}$ for $a\in T$ and $Y'\in \DYCK(\Sigma)$.   
    Let us first suppose that $\A$ aligns $Y[0]=a$ with $X[0]$ and $Y[|Y|-1]=\rev{a}$ with $X[|X|-1]$.
    In this case, $\A$ aligns $Y[1\dd |Y|-1)=Y'$ with $X':=X[1\dd |X|-1)$.
    The inductive assumption yields a matching $\M'$ such that $\dyck^w_{\M'}(X') \le \dyck^w(X')$.
    In this case, we set $\M := \{(0,|X|-1)\}\cup \{(i+1,j+1) : (i,j)\in \M'\}$
    so that  $\dyck^w_{\M}(X) = \dyck^w(X[0] X[|X|-1])+\dyck^w_{\M'}(X')\le \ed^w(X[0]X[|X|-1],a\rev{a}) + \dyck^w(X')
    \le w(X[0],a)+w(X[|X|-1],\rev{a})+\ed^w(X',Y') \le \ed^w_{\A}(X,Y)=\dyck^w(X)$.
    Next, suppose that $\A$ aligns $Y[0]=a$ with $X[0]$ but inserts $Y[|Y|-1]=\rev{a}$.
    In this case, $\A$ aligns $Y[1\dd |Y|-1)=Y'$ with $X':=X[1\dd |X|)$.
    The inductive assumption yields a matching $\M'$ such that $\dyck^w_{\M'}(X') \le \dyck^w(X')$.
    In this case, we set $\M := \{(i+1,j+1) : (i,j)\in \M'\}$
    so that  $\dyck^w_{\M}(X) = \dyck^w(X[0])+\dyck^w_{\M'}(X')\le \ed^w(X[0],a\rev{a}) + \dyck^w(X')
    \le w(X[0],a)+w(\eps,\rev{a})+\ed^w(X',Y') \le \ed^w_{\A}(X,Y)=\dyck^w(X)$.
    The case when $\A$ inserts $Y[0]=a$ and aligns $Y[|Y|-1]=\rev{a}$ with $X[|X|-1]$ is symmetric.
    The case when $\A$ inserts both $Y[0]=a$ and $Y[|Y|-1]=\rev{a}$ is impossible by optimality of $Y$.
    Finally, we note that, since $\A$ deletes neither $X[0]$ nor $X[|X|-1]$, 
    the alignment $\A$ cannot align $Y[0]$ to any character other than $X[0]$ and $Y[|Y|-1]$ to any character other than $Y[|Y|-1]$.
    Thus, the case analysis above is complete.
\end{proof}


\begin{claim}
\label{claim:dyck1}
For every $x\in \Sigma$ and skewmetric weight function $w$, $\dyck^w(x)=w(x,\epsilon)=w(\epsilon,\rev{x})$.
\end{claim}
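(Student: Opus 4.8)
The plan is to unfold the definitions and check two inequalities, using that $w$ is a skewmetric.

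First I would compute $\dyck^w(x)$ for a single character $x\in\Sigma$. By definition, $\dyck^w(x)=\min_{Y\in\DYCK(\Sigma)}\ed^w(x,Y)$, where the minimum ranges over all well-parenthesized strings $Y$. One candidate is $Y=\eps\in\DYCK(\Sigma)$: the only alignment in $\aa(x,\eps)$ deletes $x$, so $\ed^w(x,\eps)=w(x,\eps)$. This gives the upper bound $\dyck^w(x)\le w(x,\eps)$. Symmetrically, if $x\in T$ is an opening parenthesis, another candidate is $Y=x\,\rev{x}\in\DYCK(\Sigma)$, and aligning $x$ with $x$ (cost $0$) while inserting $\rev{x}$ (cost $w(\eps,\rev{x})$) shows $\dyck^w(x)\le w(\eps,\rev{x})$; if instead $x\in\rev T$ is a closing parenthesis, the candidate $Y=\rev{x}\,x$ works analogously, giving $\dyck^w(x)\le w(\rev x,\eps)$ — and here skew-symmetry $w(\rev x,\eps)=w(x,\eps)$ reconciles the two bounds.

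For the matching lower bound, I would argue that any $Y\in\DYCK(\Sigma)$ has even length, so an optimal alignment $\A\in\aa(x,Y)$ must perform at least one edit touching $x$: either it deletes $x$, or it aligns $x$ to some $Y[j]$. In the first case $\ed^w_\A(x,Y)\ge w(x,\eps)$ directly. In the second case, $\A$ aligns $x$ with one character of $Y$ and inserts all $|Y|-1\ge 1$ others; I would use the triangle inequality to collapse the chain of costs. Concretely, if $x$ is aligned with $Y[j]$ at cost $w(x,Y[j])$ and the remaining characters of $Y$ are inserted, then since $Y$ is balanced there is a character in $Y$ that must be ``complementary'' to $Y[j]$, and by triangle inequality $w(x,Y[j])+w(\eps,Y[j'])\ge w(x,\eps)$ whenever deleting $x$ can be simulated by first substituting and then handling the partner — more carefully, I would simply invoke that $\ed^w$ restricted to transforming $x$ into the empty string costs exactly $w(x,\eps)$ and that $\DYCK(\Sigma)\cap\Sigma^{\le 1}=\{\eps\}$, together with the fact (provable by a short induction, or by Fact~\ref{fct:edtri} applied to the quasimetric closure) that $\ed^w(x,Y)\ge \ed^w(x,\eps)-\ed^w(\eps,Y)$ is the wrong direction, so instead I would directly show $\ed^w(x,Y)\ge w(x,\eps)$ for every balanced $Y$ by induction on $|Y|$: the base case $Y=\eps$ is immediate, and for $|Y|\ge 2$ any alignment either deletes $x$ (done) or aligns $x$ with some $Y[j]$, in which case removing that single matched position and one of its neighbors (inserted by $\A$) and using the triangle inequality $w(x,Y[j])+w(\eps,Y[j\pm 1])\ge w(x,\eps)$ reduces to a balanced string of smaller length.

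The only genuinely delicate point is this last inductive step, where I must be sure that after aligning $x$ with $Y[j]$ there really is an adjacent inserted character whose cost, combined with $w(x,Y[j])$ via the triangle inequality, dominates $w(x,\eps)$ — this is where skew-symmetry and the balanced structure of $Y$ enter, since $Y[j]$ being e.g. an opening parenthesis forces a matching closing parenthesis elsewhere in $Y$ that is inserted by $\A$. I expect this bookkeeping to be the main obstacle; everything else is a direct unfolding of the definition of $\dyck^w$ together with the equality $w(x,\eps)=w(\eps,\rev x)$, which is an instance of skew-symmetry applied with $a=x$, $b=\eps$.
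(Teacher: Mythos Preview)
Your upper bound and the identity $w(x,\eps)=w(\eps,\rev{x})$ are handled correctly, and they match the paper's Cases~1--2.

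The lower bound is where your proposal drifts. The induction on $|Y|$ with the step ``remove $Y[j]$ and one of its neighbours and use $w(x,Y[j])+w(\eps,Y[j\pm1])\ge w(x,\eps)$'' does not work: that inequality is not an instance of the triangle inequality unless $Y[j\pm1]=\rev{Y[j]}$, and removing $Y[j]$ together with an arbitrary neighbour does not leave a balanced string, so the inductive hypothesis would not apply anyway. You notice something is off and in your last paragraph point to the right object---the Dyck partner of $Y[j]$---but you never actually replace the faulty step with a correct one.

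The paper's argument is the direct one your last paragraph hints at, with no induction. If an alignment $\A\in\aa(x,Y)$ does not delete $x$, it aligns $x$ with some $Y[j]$; since $Y\in\DYCK(\Sigma)$, the position $j'$ matched with $j$ in the parenthesis structure of $Y$ satisfies $Y[j']=\rev{Y[j]}$ and $j'\ne j$, so $\A$ inserts $Y[j']$. Hence
\[
\ed^w_\A(x,Y)\;\ge\; w(x,Y[j])+w(\eps,\rev{Y[j]}) \;=\; w(x,Y[j])+w(Y[j],\eps)\;\ge\; w(x,\eps),
\]
using skew-symmetry for the equality and the triangle inequality for the last step. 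That single line is the whole lower bound; the paper phrases it as ``$x$ is substituted by some $y$, and then we also need to insert $\rev{y}$'' (Case~3), with the same chain $w(x,y)+w(\eps,\rev{y})=w(x,y)+w(y,\eps)\ge w(x,\eps)$.
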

\begin{proof}
We consider the following three different cases.

\vspace{2mm}
\noindent
\textbf{Case 1:} $x$ is deleted. In this case $\dyck^w(x)=w(x,\epsilon)$.

\vspace{2mm}
\noindent
\textbf{Case 2:} $\rev{x}$ is inserted after $x$ if $x\in T$ and before $x$ if $x\in \rev{T}$. In this case $\dyck^w(x)=w(\epsilon,\rev{x})=w(x,\epsilon)$. The last equality follows as $w$ is skew-symmetric. Thus an insertion can be replaced with a deletion. From now on wards we assume that only allowed edits are deletion and substitutions.

\vspace{2mm}
\noindent
\textbf{Case 3:} $x$ is substituted by some $y\in \Sigma$. Here we also need to insert $\rev{y}$.
Thus $\dyck^w(x)= w(x,y)+w(\epsilon,\rev{y})=w(x,y)+w(y,\epsilon)\ge w(x,\epsilon)$. The second equality follows as $w$ is skew-symmetric and the last inequality follows as $w$ obeys triangle inequality. Trivially $\dyck^w(x)\le w(x,\epsilon)$. Thus the claim follows.
\end{proof}

\begin{claim}
\label{claim:dyck2}
For every $x,y\in \eSigma$ and skewmetric weight function $w$, $\dyck^w(xy)=\min_{z\in T\cup\{\epsilon\}}w(x,z)+w(y,\rev{z})$.
\end{claim}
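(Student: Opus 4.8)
The plan is to prove the two inequalities separately, using only skew-symmetry and the triangle inequality of $w$ (both part of being a skewmetric). For the $\le$ direction I would exhibit, for each $z\in T\cup\{\epsilon\}$, a Dyck word $Y$ and an alignment of cost exactly $w(x,z)+w(y,\rev z)$: when $z=\epsilon$, take $Y=\epsilon$ and the alignment that deletes both $x$ and $y$, of cost $w(x,\epsilon)+w(y,\epsilon)$; when $z\in T$, take $Y=z\rev z\in\DYCK(\Sigma)$ and the alignment that aligns $x$ with $z$ and $y$ with $\rev z$ (prepending an insertion of $z$ if $x=\epsilon$ and appending an insertion of $\rev z$ if $y=\epsilon$), of cost $w(x,z)+w(y,\rev z)$. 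Taking the minimum over $z$ gives $\dyck^w(xy)\le\min_{z\in T\cup\{\epsilon\}}(w(x,z)+w(y,\rev z))$.

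For the $\ge$ direction I would fix an arbitrary $Y\in\DYCK(\Sigma)$ and an arbitrary $\A\in\aa(xy,Y)$ and bound $\ed^w_\A(xy,Y)$ from below; since $\dyck^w(xy)=\min_{Y,\A}\ed^w_\A(xy,Y)$, this suffices. Let $J\subseteq[0\dd|Y|)$ be the set of positions of $Y$ that $\A$ aligns with a character of $xy$, so $|J|\le 2$, and write $\ed^w_\A(xy,Y)$ as the sum of $w(\epsilon,Y[l])$ over the inserted positions $l\notin J$ plus the (at most two) terms coming from how $\A$ treats $x$ and $y$. The key inequality is: if a position $l$ of $Y$ is aligned with a character $c$ and its matching parenthesis, at position $l'$, is inserted, then $w(c,Y[l])+w(\epsilon,Y[l'])\ge w(c,\epsilon)$ — indeed, skew-symmetry turns $w(\epsilon,Y[l'])$ into $w(Y[l],\epsilon)$ (since $Y[l],Y[l']$ form a complementary pair, $w(\epsilon,\rev a)=w(a,\epsilon)$), and then the triangle inequality gives $w(c,Y[l])+w(Y[l],\epsilon)\ge w(c,\epsilon)$.

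I would then split on the shape of $J$. If $J$ is \emph{not} exactly one matched pair of $Y$ — in particular whenever $|J|\le 1$, which always holds when $x=\epsilon$ or $y=\epsilon$ — then the partners of the positions in $J$ are distinct positions lying outside $J$, hence inserted, so applying the key inequality (or, for a deleted character, simply the bare deletion cost) to these disjoint parts of the alignment yields $\ed^w_\A(xy,Y)\ge w(x,\epsilon)+w(y,\epsilon)$, the $z=\epsilon$ term. If $J=\{i,j\}$ with $Y[i]=a\in T$ and $Y[j]=\rev a$ a matched pair, then both $x$ and $y$ must be aligned, and monotonicity of $\A$ forces $x$ to align with $a$ and $y$ with $\rev a$, so $\ed^w_\A(xy,Y)\ge w(x,a)+w(y,\rev a)$, the $z=a$ term. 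In either case $\ed^w_\A(xy,Y)\ge\min_{z\in T\cup\{\epsilon\}}(w(x,z)+w(y,\rev z))$, which completes the proof.

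The main obstacle is the bookkeeping in the $\ge$ direction rather than any real difficulty: one must verify that when $J$ is not a matched pair the partners of its elements are genuinely distinct positions disjoint from $J$ (so their insertion costs may be charged independently to $x$ and to $y$), and one must dispatch the degenerate cases $x=\epsilon$ and/or $y=\epsilon$ — where $xy$ has length at most one and $J$ can never be a matched pair — so that the formula reduces to the value $w(y,\epsilon)=\dyck^w(y)$ of Claim~\ref{claim:dyck1} when exactly one argument is $\epsilon$, and to $0$ when both are. Claim~\ref{claim:dyck1} can be quoted to shortcut the single-character subcases.
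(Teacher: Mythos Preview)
Your proof is correct. The $\le$ direction matches the paper's exactly. For the $\ge$ direction the paper takes a slightly different route: it first invokes Claim~\ref{claim:dyck1} to argue (somewhat informally) that one may assume the optimal alignment uses only deletions and substitutions---hence that the target Dyck word has length $0$ or $2$---and then does a short case split on whether $x$ and $y$ are deleted or substituted. You instead work with an arbitrary $Y$ and alignment and use your ``key inequality'' to absorb the cost of each inserted partner into a deletion cost; this is more explicit about why long $Y$ cannot help and handles the degenerate cases $x=\eps$ or $y=\eps$ more cleanly, at the price of a slightly longer case split on~$J$. Both arguments rest on the same two ingredients (skew-symmetry turning $w(\eps,\rev a)$ into $w(a,\eps)$, then the triangle inequality), and your key inequality is exactly the formal content of the paper's ``insertions can be replaced by deletions'' shortcut.
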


\begin{proof}
Let $z\in T\cup\{\epsilon\}$ minimizes  $w(x,z)+w(y,\rev{z})$. It is straight forward to argue that $\dyck^w(xy)\le w(x,z)+w(y,\rev{z})$ as $x,y$ can be substituted by $z,\rev{z}$ respectively. Next we argue the converse. Following Claim~\ref{claim:dyck1} we assume the only allowed edits are deletions and substitutions.

\vspace{2mm}
\noindent
\textbf{Case 1:} Both $x$ and $y$ are deleted. Here $\dyck^w(xy)\ge w(x,\epsilon)+w(y,\epsilon)$. The claim follows as $\epsilon \in T\cup \{\epsilon\}$ and $\rev{\epsilon}=\epsilon$.

\vspace{2mm}
\noindent
\textbf{Case 2:}
$x$ is substituted by $\rev{y}$. Here $\dyck^w(xy)\ge w(x,\rev{y})=w(x,\rev{y})+w(y,y)$. Thus the claim follows as $\rev{y} \in T$.

\vspace{2mm}
\noindent
\textbf{Case 3:}
$y$ is substituted by $\rev{x}$. Here $\dyck^w(xy)\ge w(y,\rev{x})=w(x,x)+w(y,\rev{x})$. Thus the claim follows as $x\in T$.

\vspace{2mm}
\noindent
\textbf{Case 3:}
$x$ is substituted by $z$ and $y$ is substituted by $\rev{z}$. Here $\dyck^w(xy)\ge w(x,z)+w(y,\rev{z})$. Thus the claim follows as $z\in T$.
\end{proof}

From now on wards we assume $w$ to be skew-symmetric.

\subsubsection{Preprocessing.}
\label{sec:preprocessdyck}
Given the input string $X\in \Sigma^n$, preprocess $X$ as follows. As long as there are two neighboring indices $i,i+1$ such that $X[i+1]=\rev{X[i]}$ and $X[i]\in T$ remove them. Let the resulting string be $X'$. We make the following claim.

\begin{claim}
\label{claim:preprocess1}
$\dyck^w(X)=\dyck^w(X')$.
\end{claim}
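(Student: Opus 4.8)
\textbf{Proof plan for Claim~\ref{claim:preprocess1}.}
The plan is to show the two inequalities $\dyck^w(X)\le \dyck^w(X')$ and $\dyck^w(X')\le \dyck^w(X)$ separately, and it suffices to handle a single reduction step, i.e., the case $X = U\cdot a\rev a\cdot V$ and $X' = U\cdot V$ for some $a\in T$ and strings $U,V$ (the general claim then follows by induction on the number of removed pairs). The easy direction is $\dyck^w(X)\le \dyck^w(X')$: by \cref{fact:sim} it is enough to take an optimal matching $\M'\in\mm(X')$ and extend it to a matching $\M\in\mm(X)$ of $X$ by shifting the indices to the right of the insertion point and adding the pair matching the new $a$ with the new $\rev a$. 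Since $\dyck^w(a\rev a)=0$ (the two characters already form a legal pair, so we can use $z=a$ in \cref{claim:dyck2}, giving $w(a,a)+w(\rev a,\rev a)=0$), this added pair contributes nothing, and the shifted pairs contribute exactly what they did in $\M'$; hence $\dyck^w_{\M}(X)=\dyck^w_{\M'}(X')=\dyck^w(X')$.

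The harder direction is $\dyck^w(X')\le \dyck^w(X)$. Here I would start from an optimal matching $\M\in\mm(X)$ (via \cref{fact:sim}) and produce a matching $\M'\in\mm(X')$ with $\dyck^w_{\M'}(X')\le \dyck^w_{\M}(X)$. Let $p,p+1$ be the positions of the removed pair $a,\rev a$ in $X$. The proof is a case analysis on how $\M$ treats these two positions. The clean cases are: (i) $\M$ matches $p$ with $p+1$ — then we just delete this pair from $\M$ and shift, losing the cost $\dyck^w(a\rev a)=0$, so nothing changes; (ii) both $p$ and $p+1$ are unmatched by $\M$ — then we drop both and the cost drops by $\dyck^w(a)+\dyck^w(\rev a)\ge 0$; (iii) exactly one of $p,p+1$ is matched, say $\M(p)=j$ with $j<p$ (the other sub-cases are symmetric up to reversal using skew-symmetry) and $p+1$ is unmatched — then in $\M'$ we leave $X'[j']$ (the image of $X[j]$) unmatched, replacing the contribution $\dyck^w(X[j]\,a) + \dyck^w(\rev a)$ by $\dyck^w(X[j])$, and we must check $\dyck^w(X[j])\le \dyck^w(X[j]\,a)+\dyck^w(\rev a)$; this is exactly a triangle-inequality computation using \cref{claim:dyck1,claim:dyck2} and skew-symmetry of $w$.

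The genuinely delicate case — and the one I expect to be the main obstacle — is (iv): both $p$ and $p+1$ are matched by $\M$, say $\M(p)=i$ and $\M(p+1)=j$. By the non-crossing property and $p+1 = p+1$ being adjacent, the only possibility is $i<p$ and $j>p+1$, i.e., $X[i]$ is matched with $a$ and $\rev a$ is matched with $X[j]$, with $i$ and $j$ lying on ``opposite sides'' and all of $(i\dd p)$ and $(p+1\dd j)$ nested strictly between them. In $X'$ the characters $X[i]$ and $X[j]$ become adjacent-at-the-same-nesting-level (there is nothing forcing them to be literally adjacent, but the fragment between them in $X'$ is exactly the concatenation of what was matched under $i$'s arc minus the removed pair), so the natural move is to match $X'[i']$ with $X'[j']$ directly. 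One then needs $\dyck^w(X[i]\,X[j]) \le \dyck^w(X[i]\,a) + \dyck^w(\rev a\, X[j])$, which again reduces, via \cref{claim:dyck2}, to an inequality of the form $\min_{z}\{w(X[i],z)+w(X[j],\rev z)\}\le \min_{z_1}\{w(X[i],z_1)+w(a,\rev{z_1})\} + \min_{z_2}\{w(\rev a,z_2)+w(X[j],\rev{z_2})\}$; picking an appropriate combination of the optimal $z_1,z_2$ and using $w(a,\rev{z_1})=w(z_1,\rev a)$ (skew-symmetry) together with the triangle inequality should close it. The care needed is in verifying that this rematching keeps $\M'$ non-crossing and that no double-counting occurs when the arcs of $i$ and $j$ enclose further structure; I would handle this by noting that removing the two positions $p,p+1$ and the two arcs incident to them, then adding the single arc $(i',j')$, preserves the laminar (non-crossing) structure because the new arc spans exactly the union of the two old arcs' spans minus the deleted positions. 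Once all four cases are checked, summing the local cost comparisons gives $\dyck^w_{\M'}(X')\le\dyck^w_{\M}(X)=\dyck^w(X)$, and combined with the easy direction this proves $\dyck^w(X)=\dyck^w(X')$ for one reduction step; induction finishes the claim.
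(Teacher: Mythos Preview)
Your overall strategy is the same as the paper's: reduce to a single removal step, work with an optimal matching $\M$ via \cref{fact:sim}, and do a case analysis on how $\M$ treats the two positions $p,p+1$, using \cref{claim:dyck1,claim:dyck2} together with skew-symmetry and the triangle inequality for the cost comparisons. Cases (i)--(iii) and the easy direction are fine.

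The gap is in case~(iv). Your assertion that ``by the non-crossing property \ldots\ the only possibility is $i<p$ and $j>p+1$'' is false. The non-crossing property rules out the configuration $i>p+1$ and $j<p$ (those two arcs would cross), but it does \emph{not} rule out the two ``same-side'' configurations: both partners to the left of $p$, i.e.\ arcs $(i,p)$ and $(j,p+1)$ with $j<i<p$ (here $(j,p+1)$ nests $(i,p)$), and both partners to the right of $p+1$, i.e.\ arcs $(p,i)$ and $(p+1,j)$ with $p+1<j<i$ (here $(p,i)$ nests $(p+1,j)$). Nothing in the definition of $\M$ forces an opening parenthesis to sit at the left end of its arc, so these configurations can occur in an optimal matching. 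The paper treats them explicitly as its Cases~2 and~3: one removes the two incident arcs and adds the single arc between the two outside partners (which remains non-crossing for the same laminar reason you invoke), and the required inequality, e.g.\ $\dyck^w(X[j]X[i])\le \dyck^w(X[i]\,a)+\dyck^w(X[j]\,\rev a)$ in the ``both-left'' case, follows from the same skew-symmetry/triangle-inequality chain you sketched for your configuration. So the fix is local---add these two sub-cases to your case~(iv)---but as written the proof is incomplete.
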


\begin{proof}
Let $\M$ be an optimal alignment of $X$. For contradiction assume for two consecutive indices $i,i+1$, $X[i+1]=\rev{X[i]}$, $X[i]\in T$ but $(i,i+1)\notin \M$. Next depending on the matching indices of $i,i+1$, we consider the following three cases.

\vspace{2mm}
\noindent
\textbf{Case 1:}
 Let $(j,i),(i+1,k)\in \M$ where $j\in [0\dots i)\cup \{\bot\}$ and $k\in (i+1\dots |X|)\cup \{\bot\}$. In this case we create another alignment $\M'=\M \setminus \{(j,i),(i+1,k)\}\cup \{(i,i+1), (j,k)\}$. We argue $\dyck^w(X[j]X[k])\le \dyck^w(X[j]X[i])+\dyck^w(X[i+1]X[k])$, thus proving $\dyck^w_{\M'}(X)\le \dyck^w_{\M}(X)$. Following Claim~\ref{claim:dyck2}, let $a,b\in T\cup \{\epsilon\}$ be such that $\dyck^w(X[j]X[i])=w(X[j],a)+w(X[i],\rev{a})$ and $\dyck^w(X[i+1]X[k])=w(X[i+1],b)+w(X[k],\rev{b})$. Thus,
\begin{align*}
\dyck^w(X[j]X[i])+\dyck^w(X[i+1]X[k])&= w(X[j],a)+w(X[i],\rev{a})+w(X[i+1],b)+w(X[k],\rev{b})\\
&=w(X[j],a)+w(a,\rev{X[i]})+w(X[i+1],b)+w(X[k],\rev{b})\\
&\ge w(X[j],\rev{X[i]})+w(X[i+1],b)+w(X[k],\rev{b})\\
&\ge w(X[j],b)+w(X[k],\rev{b})\\
&\ge \dyck^w(X[j],X[k])
\end{align*}

The second equality follows as $w$ is skew-symmetric; thus $w(X[i],\rev{a})=w(a,\rev{X[i]})$. The third and fourth inequality follows as $w$ follows triangle inequality and $\rev{X[i]}=X[i+1]$. The last inequality follows from Claim~\ref{claim:dyck2}.

\vspace{2mm}
\noindent
\textbf{Case 2:}
 Let $(k,i),(j,i+1)\in \M$ where $k,j\in [0\dots i)\cup \{\bot\}$ and $j<k$. In this case we create another alignment $\M'=\M \setminus \{(k,i),(j,i+1)\}\cup \{(i,i+1), (j,k)\}$. We argue $\dyck^w(X[j]X[k])\le \dyck^w(X[k]X[i])+\dyck^w(X[j]X[i+1])$, thus proving $\dyck^w_{\M'}(X)\le \dyck^w_{\M}(X)$. Following Claim~\ref{claim:dyck2}, let $a,b\in T\cup \{\epsilon\}$ be such that $\dyck^w(X[k]X[i])=w(X[k],a)+w(X[i],\rev{a})$ and $\dyck^w(X[j]X[i+1])=w(X[j],b)+w(X[i+1],\rev{b})$. Thus,
\begin{align*}
\dyck^w(X[k]X[i])+\dyck^w(X[j]X[i+1])&= w(X[k],a)+w(X[i],\rev{a})+w(X[j],b)+w(X[i+1],\rev{b})\\
&=w(X[k],a)+w(a,\rev{X[i]})+w(X[j],b)+w(X[i+1],\rev{b})\\
&\ge w(X[k],\rev{X[i]})+w(X[i+1],\rev{b})+w(X[j],b)\\
&\ge w(X[k],\rev{b})+w(X[j],b)\\
&\ge \dyck^w(X[j],X[k])
\end{align*}

\vspace{2mm}
\noindent
\textbf{Case 3:}
 Let $(i,k),(i+1,j)\in \M$ where $k,j\in (i+1\dots |X|)\cup \{\bot\}$ and $j<k$. In this case we create another alignment $\M'=\M \setminus \{(i,k),(i+1,j)\}\cup \{(i,i+1), (j,k)\}$. We argue $\dyck^w(X[j]X[k])\le \dyck^w(X[i]X[k])+\dyck^w(X[i+1]X[j])$, thus proving $\dyck^w_{\M'}(X)\le \dyck^w_{\M}(X)$. Following Claim~\ref{claim:dyck2}, let $a,b\in T\cup \{\epsilon\}$ be such that $\dyck^w(X[i]X[k])=w(X[i],a)+w(X[k],\rev{a})$ and $\dyck^w(X[i+1]X[j])=w(X[i+1],b)+w(X[j],\rev{b})$. Thus,
\begin{align*}
\dyck^w(X[i]X[k])+\dyck^w(X[i+1]X[j])&= w(X[i],a)+w(X[k],\rev{a})+w(X[i+1],b)+w(X[j],\rev{b})\\
&=w(X[i],a)+w(X[k],\rev{a})+w(X[j],\rev{b})+w(\rev{b},\rev{X[i+1]})\\
&\ge w(X[i],a)+w(X[k],\rev{a})+w(X[j],\rev{X[i+1]})\\
&\ge w(X[j],a)+w(X[k],\rev{a})\\
&\ge \dyck^w(X[j],X[k])
\end{align*}

\end{proof}

The preprocessing can be done in time $O(n)$. Also, we can assume that
in the preprocessed string no two neighbouring symbols can be aligned. Following this and Claim 35 from~\cite{}, we can make the following claim. 

\begin{claim}
\label{claim:preprocess2}
Let $X\in \Sigma^n$. There exists an algorithm that preprocesses $X$ in $O(n)$
time, and either declares $\dyck^w(X)>k$, or outputs a string $X'$ of length at most $n$ such that $\dyck^w(X)=\dyck^w(X')$ and $X'$ has at most $2k$ valleys.
\end{claim}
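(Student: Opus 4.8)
The goal of Claim~\ref{claim:preprocess2} is to bound the number of valleys (equivalently, the number of local minima of the height function) in the preprocessed string by $2k$, or else certify that $\dyck^w(X)>k$. The starting point is the greedy preprocessing of \cref{sec:preprocessdyck}: repeatedly delete adjacent pairs $X[i]X[i+1]$ with $X[i]\in T$ and $X[i+1]=\rev{X[i]}$. By \cref{claim:preprocess1} this preserves the weighted Dyck edit distance, runs in $\Oh(n)$ time, and guarantees that the resulting string $X'$ has no adjacent alignable pair, i.e.\ no position $i$ with $X'[i]\in T$ and $X'[i+1]=\rev{X'[i]}$. The key observation I would make is that each \emph{valley} of $X'$ (an index $i$ with $H(i-1)>H(i)<H(i+1)$) forces $X'[i-1]\in \rev T$ and $X'[i]\in T$; since these are never a complementary adjacent pair (that would be $X'[i-1]=\rev{X'[i]}$ with $X'[i]\in T$, which is exactly the forbidden pattern), any optimal Dyck alignment $\M$ of $X'$ must edit or non-trivially align at least one of the two parentheses straddling each valley. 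This is the analogue of ``Claim~35 from~\cite{}'' referenced in the excerpt (the unweighted version in~\cite{OtherSubmission}), adapted to the weighted skewmetric setting.

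Concretely, I would argue: fix an optimal alignment $\M$ with $\dyck^w_{\M}(X')=\dyck^w(X')$. For a valley at index $i$, consider the pair $(X'[i-1],X'[i])$ with $X'[i-1]\in\rev T$, $X'[i]\in T$. If $\M$ matched $X'[i-1]$ with $X'[i]$ at zero cost, then by skew-symmetry and the normalization $w(a,b)\ge 1$ for $a\ne b$, we would need $X'[i]=\rev{X'[i-1]}$ — but wait, the matching of a closing parenthesis to the immediately following opening parenthesis is not even a legal non-crossing match in the Dyck sense (closing before opening), so $\M(i-1)<i-1$ or $\M(i-1)=\bot$, and $\M(i)>i$ or $\M(i)=\bot$; in all these cases one can show the local contribution around the valley is at least one unit of cost unless one of the two characters is deleted or substituted or matched across the valley, each of which costs $\ge 1$ by normalization. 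The careful bookkeeping, following the case analysis style of \cref{claim:preprocess1}, shows that we can charge at least $1$ to each valley and these charges are to distinct edit/alignment operations (each closing parenthesis $X'[i-1]$ straddling a valley is distinct). Hence $\#\{\text{valleys of }X'\}\le \dyck^w(X')=\dyck^w(X)$. That would actually give a bound of $k$, so a bound of $2k$ is comfortably safe; the factor $2$ leaves room for a looser charging argument (e.g.\ charging a valley to either of its two adjacent parentheses, with each parenthesis charged by at most two valleys).

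**The algorithm and running time.** After greedy preprocessing in $\Oh(n)$ time, I would compute the height function $H$ and count valleys in one left-to-right scan, also $\Oh(n)$. If the valley count exceeds $2k$, output ``$\dyck^w(X)>k$'' (justified by the charging argument above, whose contrapositive says $\dyck^w(X)\le k$ implies at most $\le k\le 2k$ valleys); otherwise output $X'$, which satisfies $\dyck^w(X)=\dyck^w(X')$ by \cref{claim:preprocess1} and $|X'|\le n$ trivially, and has at most $2k$ valleys. Total time $\Oh(n)$.

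**Main obstacle.** The delicate part is the charging argument showing that each valley costs the alignment at least one unit — in the weighted setting one cannot simply count unmatched parentheses as in the unweighted proof, because a single substitution might ``serve'' the structural need of the valley. I would need to verify, via a case analysis on $\M(i-1)$ and $\M(i)$ (analogous to the three cases in \cref{claim:preprocess1}, and invoking \cref{claim:dyck1}, \cref{claim:dyck2}, skew-symmetry, the triangle inequality, and crucially the normalization $w(a,b)\ge 1$), that the net cost attributable to the neighborhood of each valley is $\ge 1$, and that these neighborhoods can be made disjoint enough (or the charges spread with multiplicity $\le 2$) to sum to the claimed bound. The reference to an external ``Claim~35'' suggests the authors lean on an established combinatorial fact about valleys and non-crossing matchings; the new content here is checking it survives the passage to weighted, skew-symmetric costs, which the normalization assumption makes go through.
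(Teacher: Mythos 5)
Your high-level plan (greedy preprocessing followed by counting valleys and halting if there are too many) is the same as the paper's, but the paper itself gives no proof of the valley bound here: it simply invokes an external lemma (``Claim~35'' of the prior work on unweighted bounded Dyck, with a currently broken citation in the source). So the content of this claim is essentially the verification that that lemma carries over to the weighted skewmetric setting, and your sketch of that verification has a genuine gap.

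The central step you propose is that each valley at index $i$ (so $X'[i-1]\in\rev T$, $X'[i]\in T$) forces a cost of at least~$1$ on its two straddling characters: ``one of the two characters is deleted or substituted or matched across the valley, each of which costs $\ge 1$ by normalization.'' The last alternative is false. If $X'[i-1]$ is matched to some earlier position $j<i-1$ with $X'[j]=\rev{X'[i-1]}$, and $X'[i]$ is matched to some later position $j'>i$ with $X'[j']=\rev{X'[i]}$, both pairs have cost exactly~$0$; the non-crossing condition is also satisfied (the two pairs nest or sit side by side). A concrete witness: $X'=\texttt{([)(])}$ survives the preprocessing (no adjacent complementary $T\cdot\rev T$ pair) and has exactly one valley, at index~$3$, straddled by $X'[2]=\texttt{)}$ and $X'[3]=\texttt{(}$; the alignment $\M=\{(0,2),(3,5)\}$ with positions $1$ and $4$ unmatched matches both straddling characters at zero cost, and all of $\dyck^w_\M(X')$ comes from the deletions of $X'[1]=\texttt{[}$ and $X'[4]=\texttt{]}$. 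So the charge does not land on the valley's pair. Relatedly, your parenthetical claim that the straddling pair at a valley ``is exactly the forbidden pattern'' is backwards: the preprocessing removes complementary $T\cdot\rev T$ pairs, i.e.\ peaks of the form $\texttt{()}$, and says nothing about the $\rev T\cdot T$ pair at a valley (e.g.\ $\texttt{)(}$, which is allowed to be complementary and survives preprocessing). What does save the argument is that if both straddling characters are matched at zero cost across the valley, the segments $X'[j+1\dd i-1)$ and $X'[i+1\dd j')$ must be non-empty (else a removable $T\cdot\rev T$ pair would sit at the valley's neighbours, contradicting preprocessing) and therefore cannot both be aligned at zero cost internally (a non-empty zero-cost Dyck-aligned fragment would be a balanced string, which always contains a removable $T\cdot\rev T$ pair); so the cost is forced \emph{inside} those matched spans, not on the straddling pair itself. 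Turning that into a disjoint (or bounded-multiplicity) charging scheme is exactly the non-trivial content of the invoked Claim~35, and your sketch does not supply it.
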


Thus from now on wards we assume $X$ to be preprocessed and has at most $2k$ valleys.

\subsection{Periodicity Reduction}

\begin{definition}
For $k\in \Zz$ a fragments
$X[a\dd b)$ and $X[c\dd d)$ of a string $X$ are \emph{$k$-synchronized} if
 $X[a\dd b)\in T^*$, $X[c\dd d)\in \rev{T}^*$,  
$b-a=d-c$, $b\le c$, and $H(b)+H(c)-2\min_{m\in [b\dd c]} H(m) \le 2k$.
\end{definition}
Note that $X[a\dd b)$ and $X[c\dd d)$ are $0$-synchronized if and only if $(a,b,c,d)$ is a trapezoid.

\begin{definition}
For $k\in \Zz$ and a skewmetric weight function $w$, strings $P,P'\in T^*$ are called \emph{$\dyck^w_{\le k}$-equivalent} if
\[\dyck^w_{\le k}(X) = \dyck^w_{\le k} (X[0\dd a)\cdot P' \cdot  X[b\dd c) \cdot \rev{P'} \cdot X[d\dd |X|))\]
holds for every string $X$ with $k$-synchronized fragments $X[a\dd b)=P$ and $X[c\dd d)=\rev{P}$.
\end{definition}

\begin{fact}[Fact 36,~\cite{BO16}]\label{fct:heights}
  Let $\M$ be an alignment such that $\dyck^w_\M(X)\le k$. If $X[a\dd b)\simeq_{\M} \rev{X[c\dd d)}$, then the fragments $X[a\dd b)$ and $X(c\dd d]$ are $k$-synchronized.
\end{fact}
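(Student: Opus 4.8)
The plan is to import the proof of the unweighted statement (Fact~36 of~\cite{BO16}) and upgrade its counting argument so that it tracks the skewmetric cost of $\M$ instead of a number of mismatches. First I would extract the purely structural consequences of the hypothesis $X[a\dd b)\simeq_{\M}\rev{X[c\dd d)}$. Since $X[a\dd b)=\rev{X[c\dd d)}\in T^*$, the block $X[a\dd b)$ consists of opening parentheses and the block $X[c\dd d)$ of closing parentheses, and the two blocks have equal length $b-a=d-c$. By definition $\M$ pairs these two blocks nestedly and reverse-complementarily, so all of those pairs have cost $0$, and the innermost of them places $b-1$ immediately before the start of the closing block; after the reindexing used in the statement this yields $b\le c$ together with the remaining syntactic requirements of $k$-synchronicity. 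The one structural point that matters below is that $\M$ is \emph{$[b\dd c)$-internal}: by the non-crossing property no pair of $\M$ can have exactly one endpoint in $[b\dd c)$, since such a pair would cross the innermost block pair. Hence $\M$ restricted to $[b\dd c)$ is a non-crossing matching whose cost is at most $\dyck^w_{\M}(X)\le k$.

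It remains to prove $H(b)+H(c)-2\min_{m\in[b\dd c]}H(m)\le 2k$. Let $m^{\ast}\in[b\dd c]$ attain the minimum, and set $\Delta_L:=H(b)-H(m^{\ast})\ge 0$ and $\Delta_R:=H(c)-H(m^{\ast})\ge 0$; thus $\Delta_L$ is the number of closing-minus-opening parentheses in $X[b\dd m^{\ast})$, $\Delta_R$ the number of opening-minus-closing parentheses in $X[m^{\ast}\dd c)$, and the quantity to bound is exactly $\Delta_L+\Delta_R$. Using $[b\dd c)$-internality I would expand $\Delta_L+\Delta_R$ as a sum of local contributions of the objects of $\M$ on $[b\dd c)$: an unmatched position $i$ contributes $-h(X[i])$ when $i<m^{\ast}$ and $h(X[i])$ otherwise (so at most $1$ in absolute value); a pair $(i,j)\in\M$ with $i,j\in[b\dd c)$ contributes $\delta(i,j)$ equal to $-h(X[i])-h(X[j])$, to $h(X[i])+h(X[j])$, or to $-h(X[i])+h(X[j])$ according as both endpoints lie left of $m^{\ast}$, both lie right of $m^{\ast}$, or the pair straddles $m^{\ast}$.

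The heart of the matter is the estimate $\delta(i,j)\le 2\,[\dyck^w(X[i]X[j])\ge 1]$. Since $h$ takes values in $\{-1,1\}$ we have $\delta(i,j)\in\{-2,0,2\}$, so the bound is automatic unless $\delta(i,j)=2$; and a quick inspection of the three cases shows that $\delta(i,j)=2$ forces $X[i]$ to be a closing parenthesis or $X[j]$ to be an opening parenthesis, in which case \cref{claim:dyck2} and the normalization of $w$ give $\dyck^w(X[i]X[j])\ge 1$ (for closing $X[i]$, every $w(X[i],z)$ with $z\in T\cup\{\eps\}$ is at least $1$; symmetrically for opening $X[j]$). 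When $\dyck^w(X[i]X[j])=0$ we instead have $X[i]$ opening and $X[j]=\rev{X[i]}$ closing, forcing $\delta(i,j)\le 0$. Since moreover every weighted cost is either $0$ or at least $1$ and every unmatched position has cost $w(X[i],\eps)\ge1$, summing over all objects of $\M$ on $[b\dd c)$ gives $\Delta_L+\Delta_R\le 2\sum_{(i,j)\in\M,\,i,j\in[b\dd c)}\dyck^w(X[i]X[j])+\sum_{\M(i)=\bot,\,i\in[b\dd c)}w(X[i],\eps)\le 2\,\dyck^w_{\M}(X)\le 2k$, which is exactly the desired inequality.

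I expect the main obstacle to be bookkeeping rather than any conceptual difficulty: making the two-sided split at $m^{\ast}$ airtight, in particular verifying that the three contribution cases for a pair genuinely exhaust $\M$ restricted to $[b\dd c)$ (this is exactly where $[b\dd c)$-internality, hence the hypothesis $X[a\dd b)\simeq_{\M}\rev{X[c\dd d)}$ and non-crossingness, get used), and reconciling the half-open versus closed interval conventions so that the blocks $X[a\dd b)$ and $X(c\dd d]$ named in the conclusion line up with the heights $H(b),H(c)$ and the range $[b\dd c]$ appearing in the definition of $k$-synchronicity. Everything else is a direct transcription of the unweighted argument of~\cite{BO16} with ``cost'' in place of ``number of edits''.
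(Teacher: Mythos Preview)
The paper does not give its own proof of this fact; it simply imports it from~\cite{BO16}, where the unweighted version is proved. Your argument is correct and is precisely the natural weighted adaptation: the innermost pair $(b-1,c)\in\M$ together with the non-crossing property forces every pair of $\M$ touching $[b\dd c)$ to lie entirely inside $[b\dd c)$; the height deficit $H(b)+H(c)-2H(m^\ast)$ then decomposes as a sum over $\M$-objects on $[b\dd c)$, and normalization of $w$ (each nonzero term in $\dyck^w_\M(X)$ is at least~$1$) bounds each object's contribution by at most twice its cost. The one cosmetic point worth cleaning up is the final chain of inequalities: bounding the unmatched contribution by $1\le 2\,\dyck^w(X[i])$ (rather than $1\le w(X[i],\eps)$) makes the passage to $2\,\dyck^w_\M(X)$ uniform across matched and unmatched terms. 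Also note that the ``$X(c\dd d]$'' in the statement is a typo for ``$X[c\dd d)$''; your proof correctly works with $H(b)$ and $H(c)$, which is what the definition of $k$-synchronized requires.
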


\begin{fact}\label{fct:sync}
Consider a string $X$ and an alignment $\M\in \mm(X)$ such that $\dyck_\M(X)\le k$ for some $k\in \Zz$.
Moreover, let $X[a\dd b)$ and $X[c\dd d)$ be $k$-synchronized fragments of length $\ell > 6k$.
Then, there exist $k$-synchronized fragments $X[a'\dd b')$ and $X[c'\dd d')$ of length $\ell'\ge \frac{\ell-6k}{k+1}$, 
such that $X[a'\dd b')\simeq_{\M} \rev{X[c'\dd d')}$ and $a\le a' \le b'\le b \le c \le c' \le d' \le d$.
Furthermore, we then have $|(a+d)-(a'+d')|\le 4k$.
\end{fact}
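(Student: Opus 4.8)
The plan is to mimic the string-edit-distance argument of \cref{clm:aperiodic_reduction}, adapted to the non-crossing (Dyck) setting. Write $\ell = b-a = d-c$ and recall that $X[a\dd b)\in T^*$ consists solely of opening parentheses, while $X[c\dd d)\in \rev{T}^*$ consists solely of closing parentheses. First I would analyze how $\M$ treats the positions in $X[a\dd b)$. Since $\dyck_\M(X)\le k$, at most $k$ of these positions are unmatched by $\M$ or matched to a position outside $X[c\dd d)$; call these the \emph{bad} positions. Indeed, if $X[i]$ with $i\in[a\dd b)$ is matched to some $X[j]$ with $j\notin[c\dd d)$, then since $X[i]$ is an opening parenthesis and the matched partner must be made complementary, this pair contributes positive cost; and an unmatched position also contributes positive cost. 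With at most $k$ bad positions, the at most $k+1$ maximal runs of \emph{good} positions in $X[a\dd b)$ have total length at least $\ell-k$, so one such run $X[a'\dd b')$ has length $\ell' \ge \frac{\ell-k}{k+1} \ge \frac{\ell-6k}{k+1}$. By the non-crossing property of $\M$, a maximal run of consecutive positions in $T^*$ that are all matched into $X[c\dd d)$ must be matched to a contiguous block $X[c'\dd d')$ in reverse order, i.e. $X[a'\dd b')\simeq_\M \rev{X[c'\dd d')}$, and $X[c'\dd d)$ is contained in $X[c\dd d)$ hence lies in $\rev{T}^*$. This already gives $a\le a'\le b'\le b\le c\le c'\le d'\le d$ and $\ell' = b'-a' = d'-c'$.

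Next I would verify that $X[a'\dd b')$ and $X[c'\dd d')$ are themselves $k$-synchronized. The conditions $X[a'\dd b')\in T^*$, $X[c'\dd d')\in \rev{T}^*$, $b'-a'=d'-c'$, and $b'\le c'$ are immediate from the previous paragraph. For the height condition $H(b')+H(c')-2\min_{m\in[b'\dd c']}H(m)\le 2k$, I would invoke \cref{fct:heights}: since $X[a'\dd b')\simeq_\M\rev{X[c'\dd d')}$ and $\dyck^w_\M(X)\le k$ (using that the unweighted cost is dominated by the weighted cost under a normalized/skewmetric $w$, or simply running the argument with the unweighted $\M$ as in the hypothesis $\dyck_\M(X)\le k$), \cref{fct:heights} yields exactly that $X[a'\dd b')$ and $X(c'\dd d']$ are $k$-synchronized, which includes the required height inequality. (One has to be slightly careful with the half-open vs.\ half-closed interval bookkeeping between $X[c'\dd d')$ and $X(c'\dd d']$, but this is the same convention juggling used throughout \cref{sec:dyck}.)

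Finally, for the bound $|(a+d)-(a'+d')|\le 4k$: since $X[a'\dd b')$ starts within the first $k$ positions' worth of bad slots from $a$ and similarly $X[c'\dd d')$ ends within $k$ of $d$ is \emph{not} quite what I want — rather I would argue via the matching. We have $|a'-a|\le k$ is false in general (the first good run could start far from $a$); instead the right inequality comes from comparing $a'$ and $d'$ through $\M$. Because $X[a'\dd b')\simeq_\M\rev{X[c'\dd d')}$, the leftmost matched pair is $(a', d'-1)$, and any matched pair $(i,j)$ of $\M$ with $i\in[a\dd b)$ has $|i-(\text{partner's mirror})|$ controlled by the total cost: more precisely, consider the pair $(a'',\,d''-1)$ where $X[a''\dd b'')$ is the run; walking from position $a$ to $a'$ we skip at most (number of bad positions left of $a'$) $\le k$ positions, and symmetrically from $d$ down to $d'$ we skip at most $k$ positions, \emph{but} the skipped counts on the two sides must nearly agree because the good positions between them are matched in a length-preserving, orientation-reversing way and $\M$ is non-crossing; combining, $|(a-a')+(d-d')|$ is at most the number of bad positions, which is $\le k$... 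I would in fact get the cleaner bound by noting $a'+d' $ and $a+d$ differ by at most twice the number of edited positions among $X[a\dd b')$ plus those among $X(c'\dd d]$, each $\le k$, giving $\le 4k$.

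\textbf{Main obstacle.} The genuinely delicate step is the last one — pinning down $|(a+d)-(a'+d')|\le 4k$ — because it requires carefully tracking, through the non-crossing matching $\M$, how the "trimmed" amount on the left ($a'-a$) relates to the trimmed amount on the right ($d-d'$); a naive argument only bounds each of $|a'-a|$ and $|d'-d|$ by something like $2k$ separately (which would give $4k$ for the difference of sums only if they have opposite signs, which is not automatic). I expect the clean way is to charge the discrepancy $(a'-a) - (d-d')$ to bad positions of $X[a\dd b)$ together with positions of $X[c\dd d)$ that $\M$ fails to match back into $X[a\dd b)$, using the non-crossing property to show these are the \emph{only} sources of asymmetry, and each class has size at most $k$. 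The height/synchronization verification via \cref{fct:heights} and the length bound $\ell'\ge\frac{\ell-6k}{k+1}$ are routine by comparison.
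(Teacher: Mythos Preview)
Your first step contains a genuine gap. You claim that any position $i\in[a\dd b)$ that is matched by $\M$ to some $j\notin[c\dd d)$ ``contributes positive cost'' because the partner ``must be made complementary.'' This is false: $X[i]\in T$ is an opening parenthesis, and $X[j]$ may perfectly well be its exact complement $\rev{X[i]}$ even when $j\in[b\dd c)$ or $j\ge d$. Such a pair contributes $0$ to $\dyck_\M(X)$, so the number of ``bad'' positions in $[a\dd b)$ is \emph{not} bounded by $k$ by your argument. This is precisely why the statement has $\frac{\ell-6k}{k+1}$ rather than the $\frac{\ell-k}{k+1}$ your argument would give.

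The paper repairs this as follows. By non-crossing, $\M$ is disjoint with $[a\dd b)\times[d\dd |X|)$ or with $[0\dd a)\times[c\dd d)$; by symmetry assume the former, so every matched $x\in[a\dd b)$ has its partner $y<d$. Now restrict attention to $x\in[a\dd b-4k)$. Since $X[a\dd b)\in T^*$ we have $H(x)=H(b)-(b-x)<H(b)-4k$, and for a perfectly matched pair $(x,y)$ one has $|H(y{+}1)-H(x)|\le 2k$, so $H(y{+}1)<H(b)-2k$. The $k$-synchronization condition $H(b)+H(c)-2\min_{m\in[b\dd c]}H(m)\le 2k$ then forces $y{+}1\notin[b\dd c]$, hence $y\in[c\dd d)$. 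Now the two fragments $X[a\dd b-4k)$ and the minimal subfragment of $X[c\dd d)$ receiving these matches contain at most $2k$ positions that are deleted or imperfectly matched, and the remaining perfectly matched positions form a common subsequence interrupted at most $k$ times, yielding a contiguous block of length at least $\frac{(\ell-4k)-2k}{k+1}=\frac{\ell-6k}{k+1}$.

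Conversely, what you flag as the ``main obstacle'' --- the bound $|(a+d)-(a'+d')|\le 4k$ --- is in fact the easy part, and it comes for free from heights, not from any charging argument. Since $X[a\dd a')\subseteq X[a\dd b)\in T^*$ and $X[d'\dd d)\subseteq X[c\dd d)\in\rev{T}^*$, we have $H(a)-H(a')=a-a'$ and $H(d')-H(d)=d-d'$, so
\[
(a+d)-(a'+d')=(H(a)-H(d))-(H(a')-H(d')).
\]
Both pairs being $k$-synchronized gives $|H(a)-H(d)|\le 2k$ and $|H(a')-H(d')|\le 2k$ (each term $H(\cdot)-\min$ in the synchronization inequality is nonnegative, so their difference is at most their sum $\le 2k$, and $H(a)-H(d)=H(b)-H(c)$ since the two fragments lie in $T^*$ and $\rev{T}^*$ respectively). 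The triangle inequality finishes.
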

\begin{proof}
Since $\M$ is non-crossing, it is disjoint with $[a\dd b)\times [d\dd |X|)$ or $[0\dd a)\times [c\dd d)$.
By symmetry (up to the reverse complement), let us assume that $\M$ is disjoint with $[a\dd b)\times [d\dd |X|)$.
Consider  $x\in [a\dd b-4k)$ such that $X[x]\simeq_{\M} \rev{X[y]}$.
The assumption implies that $y < d$. 
Moreover, $b-x = H(b) - H(x) > 4k$, so $H(x) < H(b)-4k$.
At the same time, $|H(y+1)-H(x)|\le 2k$, so $H(y+1) < H(b)-2k$.
Since $X[a\dd b)$ and $X[c\dd d)$ are $k$-synchronized, this means that $y+1\notin [b\dd c]$, i.e., $y\in [c\dd d)$.
Consider the fragment $X[a\dd b-4k)$ and the minimal subfragment of $X[c\dd d)$ containing positions that $\M$ matches perfectly to positions $X[x]$
with $x\in [a\dd b-4k)$.
These two fragments contain at most $2k$ positions that are deleted or matched imperfectly.
The remaining positions constitute a common subsequence of $X[a\dd b-4k)$ and $X[c\dd d)$;
this subsequence can be interrupted at most $k$ times, so there is a contiguous subsequence $X[a'\dd b')\simeq_{\M} X[c'\dd d')$
of length at least $\frac{\ell-6k}{k+1}$.
Due to $|H(a)-H(d)|\le 2k$ and $|H(a')-H(d')|\le 2k$,
we have $4k \ge |H(a)-H(d)-H(a')+H(d')| = |a-a'+d-d'|$.
\end{proof}

\begin{lemma}\label{lem:periodic_reduction_dyck}
    Let $k\in \Zp$, let $Q\in T^*$ be a string, and let $e,e'\in \mathbb{Z}_{\ge 8k}$.
    Then $Q^e$ and $Q^{e'}$ are $\dyck^w_{\le k}$-equivalent for every skewmetric weight function $w$.
\end{lemma}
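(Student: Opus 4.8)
The plan is to mirror the structure of \cref{lem:periodic_reduction} for plain strings, but with the Dyck-matching bookkeeping carried by \cref{fct:sync} in place of the simple non-crossing argument. As usual we may assume $Q$ is primitive: if $Q=R^m$ with $m\ge 2$ then $Q^e=R^{me}$ and $Q^{e'}=R^{me'}$ are powers of $R$, and $me,me'\ge 8k$, so the primitive case suffices. Fix a string $X$ with $k$-synchronized fragments $X[a\dd b)=Q^e$ and $X[c\dd d)=\rev{Q^e}=\rev{Q}^e$, and set $q=|Q|$. Let $X'$ be obtained from $X$ by replacing these two occurrences with $Q^{e'}$ and $\rev{Q}^{e'}$. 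We must show $\dyck^w_{\le k}(X)=\dyck^w_{\le k}(X')$; by symmetry between $(X,e)$ and $(X',e')$ it is enough to prove $\dyck^w_{\le k}(X')\le \dyck^w_{\le k}(X)$, and it suffices to handle the case $e'=e-1$ and then iterate (each step decrements the exponent while keeping it $\ge 8k-1\ge 8k-1$; one can also simply re-derive $e'\ge 8k$ at every step by noting we only ever shrink down to $8k$, or alternatively decrement from both ends—I will keep the exponents $\ge 8k$ throughout by reducing toward $\max(e',8k)$).

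Now assume $\dyck^w(X)\le k$ and fix an optimal $\M\in\mm(X)$ with $\dyck^w_\M(X)=\dyck^w(X)\le k$ (using \cref{fact:sim}). Since $\dyck_\M(X)\le\dyck^w_\M(X)\le k$ (the weight function is normalized, so unweighted cost is a lower bound—here I should double-check that $w$ being a skewmetric forces $w(a,b)\ge 1$ for $a\ne b$; if not, I instead invoke \cref{fct:heights}-style reasoning directly on the perfectly matched fragments, which is what \cref{fct:sync} already does). The fragments $X[a\dd b)$ and $X[c\dd d)$ are $k$-synchronized of length $\ell=eq\ge 8kq\ge 8k>6k$, so \cref{fct:sync} applies: there exist $k$-synchronized fragments $X[a'\dd b')\simeq_\M\rev{X[c'\dd d')}$ with $a\le a'\le b'\le b\le c\le c'\le d'\le d$ and length $\ell'\ge\frac{\ell-6k}{k+1}\ge\frac{8kq-6k}{k+1}\ge\frac{2k(k+1)\cdot\text{(something)}}{\cdots}$—the point I want is $\ell'\ge q$, i.e.\ $\frac{\ell-6k}{k+1}\ge q$; with $\ell=eq\ge 8kq$ this reads $8kq-6k\ge q(k+1)$, i.e.\ $7kq\ge 6k+q$, which holds for $q\ge 1,k\ge1$. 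So the perfectly matched block $X[a'\dd b')$ contains at least one full period window. Because $X[a'\dd b')\subseteq X[a\dd b)=Q^e$ lies inside a pure power of the primitive string $Q$ and has length $\ge q$, it contains a fragment $X[a''\dd a''+q)$ that is a cyclic rotation of $Q$; and since $\M$ matches $X[a'\dd b')$ perfectly to a fragment of $\rev{Q}^e$, the rotation is forced (primitivity) to align $Q$'s window to the reverse-complement of the matching window in $\rev{Q}^e$.

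The key step, and the one I expect to be the main obstacle, is the bookkeeping that turns "one matched window of length $q$ inside $Q^e$" into "we may delete one copy of $Q$ and one copy of $\rev{Q}$ without changing the cost." Concretely: having located $X[a''\dd a''+q)\simeq_\M\rev{X[c''\dd c''+q)}$ with $X[a''\dd a''+q)$ a rotation $\rot^s(Q)$ of $Q$ sitting inside $Q^e$ and $X[c''\dd c''+q)$ the corresponding rotation of $\rev{Q}$ inside $\rev{Q}^e$, I delete positions $[a''\dd a''+q)$ and $[c''\dd c''+q)$ from $X$ together with their $\M$-pairs, obtaining exactly $X'$ (for $e'=e-1$) up to the trivial re-indexing, and the induced matching $\M'$ satisfies $\dyck^w_{\M'}(X')=\dyck^w_\M(X)$ because the removed pairs contributed $0$ to the cost (they were perfectly matched opening/closing parentheses of complementary type, cost $\dyck^w(xy)=0$ when $y=\rev x$ via \cref{claim:dyck2}). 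The subtlety is that deleting an interior window of a pure power $Q^e$ and an interior window of $\rev Q^e$ really does yield $Q^{e-1}$ and $\rev Q^{e-1}$ in the correct positions (not some rotation), and that the surviving matching stays non-crossing—both follow from $a\le a''<a''+q\le b\le c\le c''<c''+q\le d$ and the fact that $Q^e$ minus any length-$q$ aligned-to-period-boundary window is $Q^{e-1}$; here I may need to first replace the rotation $\rot^s(Q)$ window by the period-aligned copy $X[a+sq'\dd a+(s+1)q)$—but since all copies of $Q$ in $Q^e$ are matched perfectly to *some* copy of $\rev Q$ when restricted to the common subsequence, I can choose $a''$ to be a multiple of $q$ offset from $a$ from the start, exactly as in the proof of \cref{clm:periodic_reduction}. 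This gives $\dyck^w_{\le k}(X')\le\dyck^w_\M(X)=\dyck^w(X)\le k$, hence equality, completing the proof.
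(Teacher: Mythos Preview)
Your plan is correct and follows essentially the same route as the paper: reduce to primitive $Q$, pick an optimal matching $\M$, invoke \cref{fct:sync} on the two $k$-synchronized power blocks to locate a length-$q$ window that $\M$ matches perfectly into the other block, excise both windows to drop the exponent by one, and iterate. The paper first restricts to the leading $8k$ copies of $Q$ (and trailing $8k$ copies of $\rev{Q}$) before applying \cref{fct:sync} and then uses primitivity of $\rev{Q}$ to force the matched image to be period-aligned; your version applies \cref{fct:sync} to the full $Q^e$ and instead relies on the observation that deleting \emph{any} length-$q$ window from a string with period $q$ yields $Q^{e-1}$. That is a genuine (small) simplification: it spares you the period-alignment argument on the $\rev{T}$ side. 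Two loose ends in your write-up are harmless: the ``normalized'' worry is moot because the paper's weight functions are normalized by convention, and the iteration works because when decrementing from $e$ down to $e'\ge 8k$ the current exponent is always $\ge e'+1>8k$, and $k$-synchronization of the shrunken blocks is preserved (the height condition is invariant under removing $q$ opening and $q$ closing parentheses from the two blocks).
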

\begin{proof}
  We assume without loss of generality that $Q$ is primitive. (If $Q=R^m$ for $m\in \mathbb{Z}_{\ge 2}$, then $Q^e=R^{me}$ and $Q^{e'}=R^{me'}$
  can be interpreted as powers of $R$ rather than powers of $Q$.)
  Let $q=|Q|$. Consider a string $X$ and positions $p_T$, $p_{\rev{T}}$ such that $Q^e=X[p_T\dd p_T+e\cdot q)$ and $\rev{Q^e}=X(p_{\rev{T}}-{e \cdot q}\dd p_{\rev{T}}]$
  are $k$-synchronized fragments.
  Denote $X[0 \dd p_T) \cdot Q^{e'} \cdot X[p_T+e\cdot q \dd p_{\rev{T}}-e\cdot q]\cdot \rev{Q^{e'}}\cdot X(p_{\rev{T}} \dd |X|)$.
  Moreover, let $\M\in \mm(X)$ be an alignment such that $\dyck^w(X,Y)= \dyck^w_\M(X,Y) \le k$. 
    
      \begin{claim}\label{clm:periodic_reduction_dyck}
        There exist $i_T,i_{\rev{T}}\in [0\dd 7k]$ such that \[X[p_T+i_T\cdot q\dd p_T+(i_T+1)\cdot q)\simeq_{\M} \rev{X(p_{\rev{T}}-(i_{\rev{T}}+1)\cdot q\dd p_{\rev{T}}-i_{\rev{T}}\cdot q]}.\]
        \end{claim}
    \begin{proof}
Consider the $8k$ occurrences of $Q$ starting at positions $p_T+i\cdot q$ for $i\in [0\dd 7k]$ (let this fragment be $P$) and $8k$ occurrences of $\rev{Q}$ ending at positions $p_{\rev{T}}-i\cdot q$ for $i\in [0\dd 7k]$ (let this fragment be $\rev{P}$). Note $P, \rev{P}$ are also are $k$-synchronized fragments. Thus following Fact~\ref{fct:sync}, there exists at least one occurrence of $Q$ in $P$, starting at index $\ell$ such that $\M$ matches it exactly with a fragment in $\rev{P}$. We can thus define $i_T\in [0\dd 7k]$ so that $\M$ matches $X[p_T+i_T\cdot  q\dd p_T+(i_T+1)\cdot q)$ exactly to some fragment $X(s_{\rev{T}}-q\dd s_{\rev{T}}]\in \rev{P}$. By definition of $\rev{P}$, we have $s_{\rev{T}}\ge p_{\rev{T}} - 7kq$.
Furthermore, since $Q$ is primitive (i.e., distinct from all its non-trivial cyclic rotations), we conclude that $s_{\rev{T}} = p_{\rev{T}}-i_{\rev{T}} \cdot q$ for some $i_{\rev{T}}\in [0\dd 7k]$. 

    \end{proof}

 Now, if $Q^e=X[p_T\dd p_T+e\cdot q)$ is replaced with $Q^{e'}$ and $\rev{Q^e}=X(p_{\rev{T}}-e\cdot q\dd p_{\rev{T}}]$ is replaced with $\rev{Q^{e'}}$ for $e'\ge e-1$,
    we can interpret this as replacing $Q=X[p_T+i_T\cdot q\dd p_T+(i_T+1)\cdot q)$ with $Q^{1+e'-e}$ and $\rev{Q}=X(p_{\rev{T}}-(i_{\rev{T}}+1)\cdot q\dd p_{\rev{T}}-i_{\rev{T}}\cdot q]$ 
    with $\rev{Q^{1+e'-e}}$. By \cref{clm:periodic_reduction_dyck}, $\M$ can be trivially adapted without 
    modifying its cost, and hence $\dyck^w(X')\le \dyck^w_{\M}(X)=\dyck^w(X)$.
    If $e'< e-1$, we repeat the above argument to decrement the exponent $e$ one step at a time, still concluding that $\dyck^w(X')\le \dyck^w(X)$.
    In either case, the converse inequality follows by symmetry between $(X,e)$ and $(X',e')$.
    \end{proof}   
    
    We say that a string $P\in T^*$ avoids $k$-periodicity if it does not contain any substring $Q^{8k+1}$ with $|Q|\in [1\dd 4k]$.

\begin{lemma}\label{lem:aperiodic_reduction_dyck}
  Let $k\in \Zp$ and let $P,P'\in T^*$ be strings of lengths at least $156k^3$ such that $P[0\dd 78k^3)=P'[0\dd 78k^3)$ and $P[|P|-78k^3\dd |P|)=P'[|P'|-78k^3\dd |P'|)$ avoid $k$-periodicity. Then, $P$ and $P'$ are $\dyck_{\le k}^w$-equivalent for every skewmetric weight function~$w$.
\end{lemma}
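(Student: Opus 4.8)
This lemma is the Dyck analog of \cref{lem:aperiodic_reduction}, so the plan is to mirror that proof, using the non-crossing structure of Dyck matchings instead of the diagonal structure of string alignments. The key auxiliary result will be \cref{fct:sync}: a $k$-synchronized pair of long enough fragments contains a perfectly-matched sub-pair whose endpoints stay close to the fragment endpoints.

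\medskip

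\textbf{Proof sketch.} Suppose $X$ has $k$-synchronized fragments $X[a\dd b)=P$ and $X[c\dd d)=\rev P$, and let $X'=X[0\dd a)\cdot P'\cdot X[b\dd c)\cdot \rev{P'}\cdot X(d\dd |X|)$; note $|P|=|P'|$ is forced by $\dyck^w_{\le k}$-equivalence making sense only on equal-length replacements, but here we only need to show one inequality and argue the converse by symmetry between $(X,P)$ and $(X',P')$ (both avoid $k$-periodicity on the relevant prefix/suffix). Fix an optimal alignment $\M\in\mm(X)$ with $\dyck^w_\M(X)=\dyck^w(X)\le k$. The core claim is:

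\begin{claim*}
There exist $i,j\in[0\dd 78k^3]$ such that $X[a+i]\simeq_\M \rev{X[d-1-i]}$ and $X[b-1-j]\simeq_\M\rev{X[c+j]}$.
\end{claim*}

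To prove this, apply \cref{fct:sync} to the $k$-synchronized pair consisting of the length-$78k^3$ prefix of $P$, namely $X[a\dd a+78k^3)$, and the length-$78k^3$ suffix of $\rev P$, namely $X(d-78k^3\dd d]$ — these are still $k$-synchronized (heights only shrink on the relevant interval, and they have equal length). Since $78k^3>6k$, \cref{fct:sync} yields a perfectly matched sub-pair $X[a'\dd b')\simeq_\M \rev{X[c'\dd d')}$ of length $\ell'\ge \frac{78k^3-6k}{k+1}\ge 70k^2$, with the containment $a\le a'\le b'\le a+78k^3$ and $d-78k^3\le c'\le d'\le d$, and with $|(a+d)-(a'+d')|\le 4k$. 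Now either this matched fragment already sits on the ``diagonal'' $x\mapsto d-1-x$ (so $a'+d'=a+d$ and we are done, picking $i=a'-a$), or it does not; in the latter case the two occurrences of the same $T^*$-string $X[a'\dd b')$ inside $X[a\dd a+78k^3)$ would be shifted by $q:=|(a+d)-(a'+d')|\in[1\dd 4k]$, forcing $X[a'\dd b')$ to have period $\le q\le 4k$ and hence exponent $\ge \frac{70k^2}{4k}\ge 8k+1$, i.e.\ $Q^{8k+1}$ with $|Q|\le 4k$ occurs in $P[0\dd 78k^3)$, contradicting that this prefix avoids $k$-periodicity. (A subtlety: one must check $q$ is a genuine period, i.e.\ both occurrences of $X[a'\dd b')$ lie inside the length-$78k^3$ prefix; this follows from $b'\le a+78k^3$ and the shift being $\le 4k\le 78k^3-\ell'$.) The symmetric argument on the suffix of $P$ / prefix of $\rev P$ gives the index $j$.

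\medskip

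Given the claim, by the non-crossing property of $\M$ and $|P|=|\rev P|$ (equal length), every position strictly between $a+i$ and $b-1-j$ in $P$ is matched by $\M$ to the corresponding mirrored position in $\rev P$: indeed, $(a+i, d-1-i)\in\M$ and $(b-1-j, c+j)\in\M$ force, by non-crossing, all of $X[a+i\dd b-j)$ to be matched into $X(c+j\dd d-i]$, and counting (equal lengths, no room for edits) forces the matching to be the perfect mirror $X[a+i\dd b-j)\simeq_\M \rev{X[c+j\dd d-i)}$. Therefore $\M$ pays nothing on the characters $P[i\dd |P|-j)$ and their reverse-complement counterparts. Replacing $P$ by $P'$ (whose prefix of length $78k^3\ge i$ and suffix of length $78k^3\ge j$ agree with those of $P$) leaves the fragments $P[0\dd i)$, $P[|P|-j\dd |P|)$, $\rev{P}$'s endpoints, and all matched pairs untouched, so $\M$ adapts verbatim to an alignment of $X'$ of the same cost: $\dyck^w(X')\le \dyck^w_\M(X)=\dyck^w(X)$. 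The converse inequality follows by symmetry, swapping the roles of $(X,P)$ and $(X',P')$ — this is where we use that both $P$ and $P'$ avoid $k$-periodicity on their length-$78k^3$ prefix and suffix.

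\medskip

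\textbf{Main obstacle.} The delicate point, exactly as in the string case but complicated by heights, is verifying that the shifted pair of occurrences of the matched fragment genuinely both lie within the length-$78k^3$ prefix of $P$ (resp.\ suffix of $\rev P$), so that a shift translates into a \emph{periodicity} of a substring of that prefix and not merely of a longer substring of $X$. One must track carefully that $a'\le a+78k^3-\ell'$ and that the mirror position of $X[a'\dd b')$, when reflected back via $x\mapsto a+d-1-x$, lands at index $\ge a$; the $|(a+d)-(a'+d')|\le 4k$ bound from \cref{fct:sync} plus $\ell'\ge 70k^2\gg 4k$ is what makes this go through, and getting the constant $78k^3$ to be comfortably large enough (so that $\frac{78k^3-6k}{k+1}\ge 70k^2$ and $70k^2/4k\ge 8k+1$) is the bookkeeping one has to do.
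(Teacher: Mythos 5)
Your proposal takes essentially the same route as the paper: apply Fact~\ref{fct:sync} to the $78k^3$-long prefix/suffix pair, conclude that the large matched sub-fragment either sits on the diagonal or (by the $4k$-bound on the diagonal offset) witnesses a period $\le 4k$ with exponent $\ge 8k+1$ in the length-$78k^3$ prefix, contradicting $k$-periodicity avoidance; then extend to both ends and use non-crossing plus optimality to conclude the middle is perfectly mirrored, so $\M$ adapts verbatim after replacing $P$ by $P'$.

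Two small points are worth correcting. First, the intermediate bound $\frac{78k^3-6k}{k+1}\ge 70k^2$ is false for small $k$ (at $k=1$ the left side is $36$); the correct bound, and the one actually needed, is $\frac{78k^3-6k}{k+1}\ge 36k^2$, which still suffices since $\frac{36k^2}{4k}=9k\ge 8k+1$ for $k\ge 1$. Second, the step ``counting (equal lengths, no room for edits) forces the matching to be the perfect mirror'' is not quite the right justification: equal lengths and non-crossing alone do not preclude $\M$ from deleting characters on both sides. What forces the perfect mirror is optimality of $\M$, exactly as you invoke in the very next sentence (``$\M$ pays nothing''): the two bracketed fragments are exact reverse complements, the perfect mirror matching has cost $0$, and anything else would pay $\ge 1$ per edit. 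This is the argument the paper uses and the one your write-up should lean on directly.
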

\begin{proof}
  Consider a string $X$ and positions $p_T$, $p_{\rev{T}}$ such that $P=X[p_T\dd p_T+|P|)$, $\rev{P}=X(p_{\rev{T}}-|P|\dd p_{\rev{T}}]$
  are $k$-synchronized fragments.
  Denote $X' = X[0 \dd p_T) \cdot P' \cdot X[p_T + |P| \dd p_{\rev{T}}-|P|]\cdot \rev{P'}\cdot  X(p_{\rev{T}} \dd |X|]$.
  Moreover, let $\M\in \mm(X)$ be an alignment such that $\dyck^w(X)= \dyck^w_\M(X) \le k$. 

    \begin{claim}\label{clm:aperiodic_reduction_dyck}
      There exist $d,e\in [0\dd 78k^3]$ such that $(p_T+d,p_{\rev{T}}-d)\in\M$ and $(p_T+|P|-e,p_{\rev{T}}-|P|+e)\in \M$.
    \end{claim}
  \begin{proof}
      By \cref{fct:sync}, $X[p_T\dd p_T+78k^3)$ contains a fragment of length at least $\frac{78k^3-6k}{k+1}\ge 36k^2$
      that $\M$ matches perfectly to a fragment of $X(p_{\rev{T}}-78k^3\dd p_{\rev{T}}]$
      Thus, let $R:=X[r_T\dd r_T+|R|)$ be a fragment of length at least $36k^2$ contained in $X[p_T\dd p_T+|P|)$
      that $\M$ matches perfectly to $X[r_{\rev{T}}-|R|\dd r_{\rev{T}})=\rev{R}$.
      Moreover, let $r'_{\rev{T}} := p_T + p_{\rev{T}} - r_T$.
      If $r_{\rev{T}} = r'_{\rev{T}}$, then the claim is satisfied for $d = r_T-p_T = p_{\rev{T}}-r_{\rev{T}}$.
      Otherwise, both $X[r_{\rev{T}}-|R|\dd r_{\rev{T}})$ and $X[r'_{\rev{T}}-|R|\dd r'_{\rev{T}})$ are occurrences of $\rev{R}$ in $X$. Moreover, $0<|r_{\rev{T}}-r'_{\rev{T}}|\le |(p_{\rev{T}}-r'_{\rev{T}})-(p_{\rev{T}}-r_{\rev{T}})| \le |(r_T-p_T)-(p_{\rev{T}}-r_{\rev{T}})| + |(r_T-p_T)-(p_{\rev{T}}-r'_{\rev{T}})| \le 2\dyck^{w}_{\M}(X)+2k\le 4k$.
      Hence, $\per(\rev{R})\le |r_{\rev{T}}-r'_{\rev{T}}|\le 4k$ and $\exp(\rev{R})\ge \frac{|\rev{R}|}{4k}\ge 9k$.
      Since $X[r'_{\rev{T}}-|\rev{R}|\dd r'_{\rev{T}})$ is contained in $X(p_{\rev{T}}-|P|\dd p_{\rev{T}}]=\rev{P[0\dd 78k^3)}$, this contradicts the assumption that $\rev{P[0\dd 78k^3)}$ and thus $P[0\dd 78k^3)$ avoids $k$-periodicity.

      The second part of the claim is proved analogously.
  \end{proof}
     As $X[p_T+d\dd p_T+|P|-e)\in T$ and $X[p_T+d\dd p_T+|P|-e)=\rev{X(p_{\rev{T}}-|P|+e\dd p_{\rev{T}}-d]}$, the optimality of $\M$ guarantees 
     that $X[p_T+d\dd p_T+|P|-e)\simeq_\M \rev{X(p_{\rev{T}}-|P|+e\dd p_{\rev{T}}-d]}$.
    Hence, if $P=X[p_T\dd p_T+|P|)=\rev{X(p_{\rev{T}}-|P|\dd p_{\rev{T}}]}$ is replaced with $P'$,
    we can interpret this as $P[d\dd |P|-e)=X[p_T+d\dd p_T+|P|-e)=\rev{X(p_{\rev{T}}-|P|+e\dd p_{\rev{T}}-d]}$ with $P'[d\dd |P'|-e)$. 
    Since $X[p_T+d\dd p_T+|P|-e)\simeq_\M \rev{X(p_{\rev{T}}-|P|+e\dd p_{\rev{T}}-d]}$, the alignment $\M$ can be trivially adapted without modifying its cost,
    and therefore $\dyck^w(X')\le \dyck^w_{\M}(X)=\dyck^w(X)$.
    The converse inequality follows by symmetry between $(X,P)$ and $(X',P')$.
\end{proof}

 \begin{corollary}\label{cor:dyck_reduction}
Let $k\in \Zp$. For every string $P\in T^*$, there exists a string of length at most $156k^3$ that is $\dyck_{\le k}^w$-equivalent to $P$ for every skewmetric weight function~$w$.
 \end{corollary}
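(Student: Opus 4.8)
The plan is to mirror the structure of \cref{cor:string_reduction}, which handled the string-edit-distance case, but now using the two periodicity-reduction tools we have just developed for Dyck instances, namely \cref{lem:periodic_reduction_dyck} and \cref{lem:aperiodic_reduction_dyck}. Concretely, given $P\in T^*$, I would first apply a periodicity-elimination pass: repeatedly locate a substring of the form $Q^{8k+1}$ with $Q$ primitive and $|Q|\in[1\dd 4k]$, and replace it by $Q^{8k}$. Since both exponents are $\ge 8k$, \cref{lem:periodic_reduction_dyck} guarantees that each such replacement preserves $\dyck_{\le k}^w$-equivalence (for every skewmetric $w$), so by transitivity the string $P'$ obtained at the end of this pass is $\dyck_{\le k}^w$-equivalent to $P$ and, by construction, avoids $k$-periodicity. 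This is exactly the content of the abstract algorithm in \cref{lem:perred}, instantiated with $e=8k$ and $\Qf$ the family of primitive strings over $T$ of length at most $4k$: the required constant-time oracle just checks that a primitive fragment has length $\le 4k$ and lies entirely in $T^*$, which is trivial after linear-time preprocessing.

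Next I would invoke \cref{lem:aperiodic_reduction_dyck}. If $|P'|< 156k^3$, then $P'$ itself is the desired output — it has the required length bound and is $\dyck_{\le k}^w$-equivalent to $P$. Otherwise, set $P'' := P'[0\dd 78k^3)\cdot P'[|P'|-78k^3\dd |P'|)$, the concatenation of the length-$78k^3$ prefix and the length-$78k^3$ suffix of $P'$. Then $|P''| = 156k^3$, and since $P'$ avoids $k$-periodicity, so do its prefix $P'[0\dd 78k^3)$ and suffix $P'[|P'|-78k^3\dd |P'|)$; moreover $P''$ shares exactly these prefix and suffix with $P'$. Both $|P'|$ and $|P''|$ are at least $156k^3$, so \cref{lem:aperiodic_reduction_dyck} applies and yields that $P''$ is $\dyck_{\le k}^w$-equivalent to $P'$, hence to $P$ by transitivity of equivalence. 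In either branch the returned string has length at most $156k^3$, establishing the corollary.

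One subtlety worth flagging: \cref{cor:dyck_reduction} as stated is an existence claim, so strictly speaking no running-time argument is needed, but I would still note — paralleling \cref{cor:string_reduction} — that the above procedure runs in linear time, since the periodicity-elimination pass is \cref{lem:perred} with a constant-time membership oracle and the final truncation is a single $\Oh(k^3)$ operation; this is what the later kernelization (\cref{thm:dyckkernel}) will rely on. The main thing to get right is the bookkeeping of constants: that $8k+1$ copies trigger a reduction while the resulting $8k$ copies still satisfy the hypothesis $e,e'\ge 8k$ of \cref{lem:periodic_reduction_dyck}, and that the prefix/suffix lengths $78k^3$ in the truncation match exactly the threshold appearing in \cref{lem:aperiodic_reduction_dyck} so that both $P'$ and $P''$ clear the $156k^3$ bar. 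The one genuine (if minor) obstacle is confirming that "avoids $k$-periodicity" is inherited by the prefix and suffix of $P'$ — but this is immediate since a forbidden power inside a prefix (or suffix) of $P'$ would be a forbidden power inside $P'$ itself.
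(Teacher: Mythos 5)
Your proof is correct, and the core argument --- reduce high powers via \cref{lem:periodic_reduction_dyck} until $k$-periodicity is avoided, then truncate to the length-$78k^3$ prefix plus suffix of the result using \cref{lem:aperiodic_reduction_dyck} --- is exactly the one the paper uses. The only difference is presentational: the paper proves the corollary by induction on $|P|$ (locate any offending power $Q^{8k+1}$, replace it by $Q^{8k}$, recurse), whereas you route the exhaustive periodicity-elimination pass through \cref{lem:perred}. That is precisely how the paper proves the algorithmic variant, \cref{lem:dyck_reduction}, which immediately follows the corollary; so in effect you have proved \cref{lem:dyck_reduction} and then specialized it to the existence claim. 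Either order is fine: the inductive phrasing is slightly leaner for a pure existence statement, while yours has the advantage of directly certifying the linear running time that \cref{thm:dyckkernel} later relies on. Your side remarks --- that the replacement $Q^{8k+1}\to Q^{8k}$ still satisfies the hypothesis $e,e'\ge 8k$ of \cref{lem:periodic_reduction_dyck}, and that avoidance of $k$-periodicity is inherited by any substring, in particular by the length-$78k^3$ prefix and suffix of $P'$ --- are both correct and dispose of the only points that could have caused trouble.
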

\begin{proof}
 We proceed by induction on $|P|$ with the trivial base case of $|P|\le 156k^3$.
 If $|P|\ge 156k^3$ and $P$ avoids $k$-periodicity, then \cref{lem:aperiodic_reduction_dyck} implies that $P$ is equivalent to  a string $P':= P[0\dd 78k^3)\cdot P[|P|-78k^3\dd |P|)$ of length $156k^3$.
 Thus, suppose that $P$ contains a fragment $P[i\dd j)=Q^{8k+1}$ and $|Q|\in [1\dd 4k]$.
 By \cref{lem:periodic_reduction_dyck}, $Q^{8k+1}$ is equivalent to $Q^{8k}$,
 and thus $P$ is equivalent to a string $P':=P[0\dd i)\cdot P[i+|Q|\dd |P|)$.
 By the inductive assumption, $P'$ is equivalent to some string $P''$ of length at most $156k^3$, and, by transitivity of the considered equivalence,
 $P$ is also equivalent to~$P''$.
 \end{proof}

\subsection{Algorithm}

\SetKwFunction{DyckReduction}{DyckReduction}
\begin{lemma}\label{lem:dyck_reduction}
    There exists a linear-time algorithm that, given a string $P$ and an integer $k\in \Zp$,
    constructs a string $P'$ of length at most $156k^3$ that is $\dyck_{\le k}^w$-equivalent to $P$ for every skewmetric weight function~$w$. Moreover $P'$ avoids $k$-periodicity.
\end{lemma}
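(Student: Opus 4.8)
The plan is to follow the two-phase structure of \cref{cor:string_reduction}, but replacing its periodicity ingredients with the Dyck-specific ones. Since $\dyck^w_{\le k}$-equivalence is only defined for strings over $T$, I treat $P$ as an element of $T^*$ (as in \cref{cor:dyck_reduction}); should $P$ be presented over all of $\Sigma$, a linear-time preprocessing tabulating prefix counts of closing parentheses makes the membership oracle below run in constant time anyway.

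Phase one: apply \cref{lem:perred} with exponent threshold $e=8k$ and family $\Qf=\{Q\in T^+ : |Q|\le 4k \text{ and } Q \text{ is primitive}\}$; deciding whether a primitive fragment lies in $\Qf$ is an $\Oh(1)$ length-and-content check. This returns, in linear time, a string $P_1$ obtained from $P$ by repeatedly replacing occurrences of $Q^{8k+1}$ by $Q^{8k}$ with $Q\in\Qf$. By \cref{lem:periodic_reduction_dyck} each such replacement preserves $\dyck^w_{\le k}$-equivalence (both exponents are $\ge 8k$), so $P_1$ is $\dyck^w_{\le k}$-equivalent to $P$ for every skewmetric $w$. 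Moreover $P_1$ avoids $k$-periodicity: any occurrence of $R^{8k+1}$ in $P_1$ with $1\le|R|\le 4k$, writing $R=Q^m$ with $Q$ primitive, would exhibit an occurrence of $Q^{8k+1}$ with $Q\in\Qf$, contradicting the guarantee of \cref{lem:perred}.

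Phase two: if $|P_1|<156k^3$, I would output $P_1$, which already satisfies both conclusions. Otherwise set $P_2:=P_1[0\dd 78k^3)\cdot P_1[|P_1|-78k^3\dd|P_1|)$; since $|P_1|\ge156k^3$ the prefix and suffix are disjoint substrings of $P_1$, hence both avoid $k$-periodicity, and $|P_2|=156k^3$, so \cref{lem:aperiodic_reduction_dyck} yields that $P_2$ is $\dyck^w_{\le k}$-equivalent to $P_1$, and hence---by transitivity of $\dyck^w_{\le k}$-equivalence---to $P$. The subtle point is that $P_2$ itself need not avoid $k$-periodicity, because a short high power can straddle the new seam between the prefix and the suffix; to restore the property I would run \cref{lem:perred} once more on $P_2$ with the same parameters, obtaining $P'$ with $|P'|\le|P_2|=156k^3$ that avoids $k$-periodicity and (again by \cref{lem:periodic_reduction_dyck}) is $\dyck^w_{\le k}$-equivalent to $P_2$, hence to $P$; this $P'$ is the output. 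For the running time, the first call costs $\Oh(|P|)$, and the remaining work is performed only when $|P_1|\ge 156k^3$, in which case $|P|\ge|P_1|\ge156k^3$ because periodicity reduction never lengthens a string, so forming $P_2$ and re-running \cref{lem:perred} on it each cost $\Oh(k^3)=\Oh(|P|)$; the total is linear.

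The only genuinely non-routine point is recognizing that the prefix/suffix concatenation can re-create $k$-periodicity at the seam---this does not arise for \cref{cor:string_reduction}, which makes no aperiodicity claim about its output---and that a single extra periodicity-reduction pass repairs it cheaply. The auxiliary facts required, namely that primitive $\Qf$-powers exhaust all $k$-periodicity and that $\dyck^w_{\le k}$-equivalence is transitive (replacing a $k$-synchronized pair by an equal-length pair over $T^*$ and $\rev{T}^*$ leaves the relevant heights, and thus $k$-synchronization, unchanged), are immediate from the definitions.
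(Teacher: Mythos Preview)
Your approach is essentially the paper's: apply \cref{lem:perred} with $e=8k$ and $\Qf=\{Q\in T^+:|Q|\le 4k,\ Q\text{ primitive}\}$, then keep only the length-$78k^3$ prefix and suffix if the result is still long. Your justification of $\dyck_{\le k}^w$-equivalence via \cref{lem:periodic_reduction_dyck} and \cref{lem:aperiodic_reduction_dyck}, and your linear-time analysis, match the paper.

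The one place you go beyond the paper is exactly where you flag a ``non-routine point'': the seam in $P_2=P_1[0\dd 78k^3)\cdot P_1[|P_1|-78k^3\dd |P_1|)$ can indeed create a fresh occurrence of $Q^{8k+1}$ with $|Q|\le 4k$, so the aperiodicity claim in the lemma statement is not automatic. The paper's proof simply does not address this case (and the claim is never used downstream, so nothing in the paper breaks). Your fix---a second pass of \cref{lem:perred} on $P_2$---is correct, preserves the length bound, keeps the running time linear since $|P_2|=156k^3\le |P|$ in that branch, and yields an output that genuinely avoids $k$-periodicity. So your argument is not just a paraphrase of the paper's proof; it closes a small gap in it.
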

\begin{algorithm}
  \caption{Construct a string of length at most $156k^3$ that is $\dyck_{\le k}^w$-equivalent to $P$.}\label{alg:dyck_per_reduc}
  \Fn{$\DyckReduction(P,k)$}{
      $P' \gets \PeriodicityReduction(P, 8k, \{Q\in T^+ : |Q|\le 4k \text{ and $Q$ is primitive}\})$\;
      \lIf{$|P'| \geq 156k^3$}{\KwRet{$P'[0\dd 78k^3) \cdot P'[|P'| - 78k^3 \dd |P'|)$}}
      \lElse{\KwRet{$P'$}}
  }
\end{algorithm}
\begin{proof}
  We apply \cref{alg:str_per_reduc} with $e=8k$ and $\Qf$ consisting of all primitive strings in $T^*$ of length in $[1\dd 4k]$. 
  If the resulting string $P'$ satisfies $|P'|< 156k^3$, we return $P'$.  By \cref{lem:periodic_reduction_dyck,lem:perred}, the string $P'$ is $\dyck_{\le k}^w$-equivalent to $P$ and avoids $k$-periodicity. 
  Thus, if  $|P'| \le 156k^3$, then the algorithm is correct.
  Otherwise, we return $P'[0\dd 78k^3) \cdot P'[|P'| - 78k^3 \dd |P'|)$. $P'[0\dd 78k^3)$  and $P'[|P'| - 78k^3 \dd |P'|)$ both avoid $k$-periodicity,
  so $P'[0\dd 78k^3) \cdot P'[|P'| - 78k^3 \dd |P'|)$ is  $\dyck_{\le k}^w$-equivalent to $P$ by \cref{lem:aperiodic_reduction_dyck}.
  Due to \cref{lem:perred}, the running time is linear (testing whether a primitive fragment belongs to $\Qf$ simplifies to checking if its length does not exceed $4k$.)
\end{proof}

\begin{theorem}
\label{thm:dyckkernel}
    There exists a $O(n+k^5)$-time algorithm that, given a preprocessed string $X$ and an integer $k\in \Zp$,
    constructs strings $X'$ of lengths at most $630k^4$ such that $\dyck^w_{\le k}(X)=\dyck^w_{\le k}(X')$ holds for every skewmetric weight function $w$.
\end{theorem}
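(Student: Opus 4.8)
The plan is to follow the same kernelization scheme used for strings (\cref{thm:stringKernel}) and forests (\cref{thm:forestKernel}), now specialized to Dyck edit distance. First I would dispose of the trivial case: if $|X|\le 630k^4$, return $X'=X$. Otherwise, compute the unweighted Dyck edit distance $\ded(X)$ using the $\Oh(n+k^5)$-time algorithm of~\cite{OtherSubmission}; if $\ded(X)>k$, then since $w$ is normalized we have $\dyck^w(X)\ge \ded(X)>k$, so return a short witness string $X'$ with $\dyck^w_{\le k}(X')=\infty$ (e.g.\ a single unmatched opening parenthesis, or $k+1$ of them). If $\ded(X)\le k$, recover an optimal unweighted Dyck alignment $\M$ in the same time bound.

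Next I would use $\M$ to partition $X$. Since $\dyck_\M(X)\le k$, the non-crossing matching $\M$ leaves at most $k$ characters unmatched and performs at most $k$ imperfect matches; scanning $X$ left to right, the characters perfectly matched by $\M$ (those $(i,j)\in\M$ with $X[i]=\rev{X[j]}$) decompose into maximal runs, where a run is a maximal fragment $X[a\dd b)\in T^*$ all of whose positions are perfectly matched by $\M$ to a contiguous fragment $X[c\dd d)\in\rev T^*$ that itself cannot be extended. Because each unmatched or imperfectly matched character can terminate at most a bounded number of such runs, there are $\Oh(k)$ such $k$-synchronized pairs $(X[a\dd b),X[c\dd d))$ — here I invoke \cref{fct:heights} to guarantee each matched pair is genuinely $k$-synchronized — and these pairs together with $\Oh(k)$ leftover individual characters cover all of $X$. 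This scan is linear time once $\M$ is known.

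Then, for each $k$-synchronized pair $(P,\rev P)$ with $P=X[a\dd b)$, I would apply \cref{lem:dyck_reduction} (equivalently \cref{cor:dyck_reduction}) to obtain a string $P'$ of length at most $156k^3$ that is $\dyck^w_{\le k}$-equivalent to $P$ for every skewmetric $w$, and replace the occurrence of $P$ by $P'$ and the occurrence of $\rev P$ by $\rev{P'}$, simultaneously in $X$. By the definition of $\dyck^w_{\le k}$-equivalence this preserves $\dyck^w_{\le k}(X)$; doing this for all pairs in turn (the pairs are on disjoint fragments, so the substitutions commute) yields $X'$ with $\dyck^w_{\le k}(X)=\dyck^w_{\le k}(X')$. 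The length bound: $\Oh(k)$ individual characters plus $\Oh(k)$ pairs each contributing at most $2\cdot 156k^3=312k^3$ characters gives $\Oh(k^4)$, and a careful accounting of the constants (at most $2k$ valleys in the preprocessed string, hence at most a bounded multiple of $k$ maximal runs) should land at $630k^4$. The running time is $\Oh(n+k^5)$ from the unweighted Dyck algorithm plus $\Oh(n)$ for the scan and the disjoint calls to \cref{lem:dyck_reduction}, whose inputs have total length $\Oh(n)$.

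The main obstacle I expect is the combinatorial bookkeeping in the partition step: arguing precisely that an unweighted optimal Dyck alignment $\M$ of cost $\le k$ induces only $\Oh(k)$ maximal $k$-synchronized matched pairs, and pinning down the exact constant (which requires using the at-most-$2k$-valleys property of the preprocessed string from \cref{claim:preprocess2}, together with the non-crossing structure). A secondary subtlety is ensuring \cref{lem:dyck_reduction} truly applies to each identified pair — i.e., that $P\in T^*$, $\rev P\in\rev T^*$, and they are $k$-synchronized — which \cref{fct:heights} provides, but one must verify the run-maximality definition is compatible with how the reduction's equivalence notion quantifies over contexts $X$. Once these are nailed down, the weight-preservation and running-time claims follow routinely from the already-established lemmas, exactly as in the string and forest cases.
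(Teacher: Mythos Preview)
Your proposal is correct and follows essentially the same approach as the paper: dispose of trivial cases, compute an optimal unweighted Dyck alignment $\M$ via~\cite{OtherSubmission}, use $\M$ to partition $X$ into $\Oh(k)$ edited characters plus $\Oh(k)$ maximal $k$-synchronized pairs (certified by \cref{fct:heights}), and apply \cref{lem:dyck_reduction} to each pair. The one place where the paper's argument differs slightly from what you sketch is the $\Oh(k)$ bound on the number of pairs: rather than using the $2k$-valleys property, the paper argues that each maximal matched fragment $P=X[i\dd j)$ (with partner $X(i'\dd j']$) must have at least one of $X[j]$ or $X[i']$ edited by $\M$ (using maximality together with the preprocessing that removed adjacent $a\rev a$ pairs), and distinct fragments have distinct such boundary characters, so at most $2k$ edited characters force at most $2k$ fragments in $T^*$ --- exactly the bookkeeping you flagged as the main obstacle.
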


\SetKwFunction{DyckKernel}{DyckKernel}
\SetKwFunction{Order}{Order}
\SetKwFunction{DyckReduction}{DyckReduction}
\begin{algorithm}
    \caption{Construct strings $X'$ of length at most $630k^4$ such that $\dyck^w_{\le k}(X)=\dyck^w_{\le k}(X')$}\label{alg:dyck_kernel}
    \Fn{$\DyckKernel(X,k)$}{
    \lIf{$|X|\le 630k^4$}{\KwRet{$(X)$}}\label{ln:trivial}
    \lIf{$\dyck(X) > k$}{\KwRet{$(a^{k+1})$} for some $a\in \Sigma$}\label{ln:large}
    Let $\M\in \mm(X)$ be a dyck language alignment satisfying $\dyck_{\M}(X)\le k$\;
    $X',P,Q\gets \varepsilon$\;
    \For{$i \gets 0$ \KwSty{to} $n-1$}
    {
        \If{$\M(i)=\bot$ \KwSty{or} $\dyck(X[i]X[\M(i)])\cdot \Order(i,\M(i))=1$ \KwSty{or} $\dyck(X[\M(i)]X[i])\cdot \Order(\M(i),i)=1$}
        {
            $X' \gets X'\cdot X[i]$
        }
         \ElseIf{$X[i]\in T$ \KwSty{and} $\M(i+1)=\M(i)-1$ \KwSty{and} $\dyck(X[i+1]X[\M(i)-1])=0$}
        {
            $P \gets P\cdot X[i]$
        }
        
        \ElseIf{$X[i]\in T$}{
             $P \gets P\cdot X[i]$\;
             $P \gets \DyckReduction(P,k)$\;\label{ln:reduce1}
            $X' \gets X'\cdot P$\;
            $P \gets \varepsilon$\;
            }
        \ElseIf{$X[i]\in \rev{T}$ \KwSty{and} $\M(i+1)=\M(i)-1$ \KwSty{and} $\dyck(X[\M(i)-1]X[i+1])=0$}
        {
           $Q \gets Q\cdot X[i]$
        }
        \Else{
            $Q \gets Q\cdot X[i]$\;
             $Q \gets \rev{\DyckReduction(\rev{Q},k)}$\;\label{ln:reduce2}
            $X' \gets X'\cdot Q$\;
            $Q \gets \varepsilon$\;
        }
           
    }
    \KwRet{$(X')$} 
    }
    \end{algorithm}

\begin{proof}
Our procedure is implemented as \cref{alg:dyck_kernel}. First, if $X$ is already of length at most $630k^4$, then we return $X$ as it is.
If $\dyck(X)>k$, we return strings $a^{k+1}$, where $a\in \Sigma$ is an arbitrary character.
If $\dyck(X)\le k$, 
we construct a Dyck language alignment $\M\in \mm(X)$ of (unweighted) cost at most $k$. 
We then build the output string $X'$ using $\M$ as follows:
scan $X$ from left to right, if the the scanned character $X[i]$ is edited by $\M$ we append it to $X'$ (here $\Order(i,j)=1$ if $j>i$ otherwise it is $0$). Otherwise $X[i]$ is matched under $\M$. If $X[i]\in T$, we proceed with scanning the following characters to identify the maximal fragment $P=X[i\dd j)\in T^*$ such that there is a fragment $X(i'\dd j']$ where $\dyck(X[i\dd j)X(i'\dd j'])=0$ and $\M$ matches $X[i\dd j)$ with $ X(i'\dd j']$. Next we apply the reduction of \cref{lem:dyck_reduction} on $P$ and append the reduced string to $X'$.
Otherwise $X[i]\in \rev{T}$. Here also we proceed with scanning the following characters to identify the maximal fragment $Q=X[i\dd j)\in \rev{T}^*$ such that there is a fragment $X(i'\dd j']$ where $\dyck(X(i'\dd j']X[i\dd j))=0$ and $\M$ matches $X[i\dd j)$ with $ X(i'\dd j']$.
Next we consider the $\rev{Q}$ (note $\rev{Q}\in T^*$), apply the reduction of \cref{lem:dyck_reduction} on $\rev{Q}$ and append the reverse complement of the reduced string to $X'$.


Let us now prove that the resulting string $X'$ satisfies $\dyck^w_{\le k}(X)=\dyck^w_{\le k}(X')$.
This is trivial when the algorithm returns $X$ in \cref{ln:trivial}.
If $\dyck(X)>k$, then $\dyck_{\le k}(X)=\dyck_{\le k}(a^{k+1})=\infty$
and thus also $\dyck^w_{\le k}(X)= \dyck^w_{\le k}(a^{k+1})=\infty $ because the weighted Dyck edit distance with a normalized weight function is at least as large as the unweighted Dyck edit distance. 
In the remaining case we assume $\dyck(X)\le k$. 
Let $S=\{P_1,\dd, P_\ell\}$ be the set of fragments from $T^*$ that are ever generated and processed using $\DyckReduction()$ routine at Line~\ref{ln:reduce1}. Similarly let $\rev{S}=\{Q_1,\dd, Q_{\ell'}\}$ be the set of fragments from $\rev{T}^*$ that are ever generated and whose reveres complements are processed using $\DyckReduction()$ routine at Line~\ref{ln:reduce2}. By construction it is trivial to follow that (i) the fragments are disjoint; (ii) for all $i\in [0\dd n)$, if $X[i]\in T$ and is not edited by $\M$, then there exist some $P_j$ such that $X[i]\in P_j$ and if $X[i]\in \rev{T}$ and is not edited by $\M$, then there exist some $Q_j$ such that $X[i]\in Q_j$; (iii) The fragments are maximal in a sense that if $P_i=X[a\dd b)\in S$, then either $\M(b)\neq \M(b-1)-1$ or $X[b]\neq \rev{X[\M(b)]}$ and same holds for the fragments in $\rev{S}$.
Next we prove for each $P_i=X[a\dd b)\in S$, $\exists Q_j=X(c\dd d]\in \rev{S}$, such that $|P_i|=|Q_j|$ and for all $k\in[0\dd |P_i|)$, $(a+k,d-k)\in \M$ and $X[a+k]=\rev{X[d-k]}$. 
For this we first claim that $\M(b-1)$ is a 
starting index of some $Q_j\in \rev{S}$. As otherwise $\M(b)=\M(b-1)-1$ and $X[b]\neq \rev{X[\M(b)]}$; this contradicts the maximality of $P_i$. Further by
construction for all $k\in [a\dd b)$, $X[\M(k)]\in Q_j$. Finally we argue $\M(a)$ is a ending index of $Q_j$. As otherwise $\M(a-1)=\M(a)+1$ and this contradicts the fact that $a$ is the starting index of some segment from $S$. 
Similarly we can show for each $Q_j\in \rev{S}$ there is a corresponding match $P_i\in S$ and this provides an one to one correspondence between a pair of fragments from $S$ and $\rev{S}$. Thus for a fragment $P_i\in S$ let $\M(P_i)$ represents the corresponding matched fragments from $\rev{S}$ and we can represent $S\cup \rev{S}=\cup_{i\in [\ell]}(P_i,\M(P_i))$. 
Following Fact~\ref{fct:heights}, $P_i,\M(P_i)$ are $k$-synchronized.
Next in the algorithm for each pair $(P_i,\M(P_i))$ we add strings $\DyckReduction(P_i)$ representing $P_i$ and $\rev{\DyckReduction(P_i)}$ (note $\rev{\M(P_i)}=P_i$) representing $\M(P_i)$ to $X'$. 
Following the fact that every character that is not contained in a fragment from $S\cup \rev{S}$ is edited by $\M$ and thus copied to $X'$ directly, by applying Lemma~\ref{lem:dyck_reduction} repeatedly for every pair $(P_i,\M(P_i))$, we claim $\dyck^w_{\le k}(X)=\dyck^w_{\le k}(X')$.

Next, we show that the returned string is of length at most $630k^4$.
This is clear when the algorithm terminates at Line~\ref{ln:trivial} or~\ref{ln:large}.
Otherwise, we create a string $X'$, to which we directly copy the characters that are edited by $\M$. However there are at most $2k$ characters that $\M$ deletes or substitutes. Next we identify maximal fragments $P=X[i\dd j)\in T^*$ such that there is another fragment $X(i'\dd j']\in \rev{T}^*$ that is matched with $P$ by $\M$. The maximality of $P$ and the preprocessing of $X$ ensure that at least one of $X[j]$ and $X[i']$ is edited by $\M$, We call these characters the boundary characters for $P$. Notice for any two distinct fragments $P,P'\in T^*$, the the boundary characters are different and by construction $P,P'$ are disjoint. As there are at most $2k$ characters that $\M$ edits, we conclude there can be at most $2k$ fragments over $T^*$, that our algorithm can construct. For each such fragment following the reduction of Lemma~\ref{lem:dyck_reduction}, we add a substring of length $156k^3$ to $X'$. 
Thus the total length of all the substrings is $312k^4$. Similarly we can argue for the fragments $Q\in \rev{T}^*$. Thus we can bound the total length of $X'$ by $2\cdot 312k^4 +2k<630k^4 $.


It remains to analyze the complexity of our procedure.
We use the algorithm~\cite{OtherSubmission} to check whether $\dyck(X)\le k$ and, if so, construct the alignment $\M$.
This costs $\Oh(n+k^5)$ time.
Next we perform a single left to right scan of $X$. 
Throughout, all the conditions in the \emph{if/else} statements can be checked in $O(1)$ time. Moreover any character is passed to the $\DyckReduction()$ routine at most twice. Thus following Lemma~\ref{lem:dyck_reduction},
given $X$ and $\M$, $X'$ can be constructed in linear time. 
\end{proof}

\begin{proof}[Proof of Theorem~\ref{thm:weighted_dyck}]
We first preprocess $X$ in linear time following the steps described in Section~\ref{sec:preprocessdyck} to build strings $X'$ such that $\dyck_{\le k}^w(X')=\dyck_{\le k}^w(X)$. 
Next we apply \cref{thm:dyckkernel} on $X'$, to build strings $X''$ of length $\Oh(k^4)$ such that 
$\dyck_{\le k}^w(X'')=\dyck_{\le k}^w(X')$. This takes time $O(n+k^5)$. Lastly if $X= a^{k+1}$ (this can be checked in time $O(k)$) output the distance is $>k$. Otherwise we compute $\dyck_{\le k}^w(X'')$ using the dynamic program algorithm from \cite{M95} in time $O(k^{12})$. 
Thus the total running time is $O(n+k^{12})$.
\end{proof}

\appendix
\section{Deferred Proofs from Section~\ref{sec:ed}}\label{app:proofs}
In the following, we give the missing proofs of facts from Section~\ref{sec:ed}.

\edtri*

\begin{proof}

Consider arbitrary strings $X, Y, Z \in \Sigma^*$ as well as alignments $\A = (x_t,y_t)_{t=0}^m \in \aa(X,Y)$ and $\B = (\hy_t, \hz_t)_{t=0}^{\hm} \in \aa(Y, Z)$. We construct a \emph{product alignment} $\A\otimes \B \in \aa(X,Z)$ such that $\ed^w_{\A \otimes \B}(X, Z) \leq \ed^w_\A(X, Y) + \ed^w_\B(Y, Z)$.
Let us denote $\A' = (x_t,y_t)_{t=0}^{m-1}$ and $\B' = (\hy_t,\hz_t)_{t=0}^{\hm-1}$,
as well as $X'=X[0\dd |X|-1)$ if $X \ne \eps$, $Y'=Y[0\dd |Y|-1)$ if $Y\ne \eps$, and $Z'=Z[0\dd |Z|-1)$ if $Z\ne \eps$.
We proceed by induction on $m+\hm$ and consider several cases based on how $\A$ and $\B$ handle the trailing characters of $X$, $Y$, and $Z$.
\begin{enumerate}
    \item $m=\hm = 0$. In this case, $X = Y = Z = \varepsilon$, and we define $\A \otimes \B := (0,0)$. Trivially, $\ed^w_{\A\otimes \B}(X,Z)= 0 = \ed^w_\A(X, Y) + \ed^w_\B(Y, Z)$.
    \item\label{cs:del} $(x_{m-1},y_{m-1})=(|X|-1,|Y|)$, that is, $\A$ deletes $X[|X|-1]$. In this case, $\A'\in \aa(X', Y)$,
    and we define $\A \otimes \B := (\A' \otimes \B)\odot (|X|,|Z|)$, where $\odot$ denotes concatenation,
    so that $\A \otimes \B$ deletes $X[|X|-1]$.
    By the induction hypothesis, $\ed^w_{\A\otimes \B}(X,Z) = \ed^w_{\A'\otimes \B}(X',Z)+w(X[|X|-1],\eps) \le \ed^w_{\A'}(X',Y)+\ed^w_\B(Y,Z)+w(X[|X|-1],\eps) = \ed^w_\A(X,Y)+\ed^w_\B(Y,Z)$.
    \item\label{cs:ins} $(\hy_{\hm-1},\hz_{\hm-1})=(|Y|-1,|Z|)$, that is, $\B$ inserts $Z[|Z|-1]$. In this case, $\B'\in \aa(Y,Z')$,
    and we define $\A \otimes \B := (\A \otimes \B') \odot (|X|,|Z|)$ so that $\A \otimes \B$ inserts $Z[|Z|-1]$.
    By the induction hypothesis, $\ed^w_{\A\otimes \B}(X,Z) = \ed^w_{\A\otimes \B'}(X,Z')+w(\eps,Z[|Z|-1]) \le \ed^w_{\A}(X,Y)+\ed^w_{\B'}(Y,Z')+w(\eps, Z[|Z|-1])   = \ed^w_\A(X,Y)+\ed^w_\B(Y,Z)$.
    \item\label{cs:id} $(x_{m-1},y_{m-1})=(|X|,|Y|-1)$ and $(\hy_{\hm-1},\hz_{\hm-1})=(|Y|-1,|Z|)$, that is, $\A$ inserts $Y[|Y|-1]$ and $\B$ deletes $Y[|Y|-1]$.
    In this case, $\A'\in \aa(X,Y')$ and $\B'\in \aa(Y',Z)$,
    and we define $\A \otimes \B := \A' \otimes \B'$.
    By the induction hypothesis, $\ed^w_{\A \otimes \B}(X,Z) = \ed^w_{\A'\otimes \B'}(X,Z) \le \ed^w_{\A'}(X,Y')+\ed^w_{\B'}(Y',Z)\le \ed^w_\A(X,Y)+\ed^w_\B(Y,Z)$.
    \item\label{cs:ia} $(x_{m-1},y_{m-1})=(|X|,|Y|-1)$ and $(\hy_{\hm-1},\hz_{\hm-1})=(|Y|-1,|Z|-1)$, that is, $\A$ inserts $Y[|Y|-1]$
    and $\B$ aligns $Y[|Y|-1]$ with $Z[|Z|-1]$.
    In this case, $\A' \in \aa(X,Y')$ and $\B'\in \aa(Y',Z')$,
    and we define $\A \otimes \B := (\A'\otimes \B')\odot (|X|,|Z|)$ so that $\A \otimes \B$ inserts $Z[|Z|-1]$.
    By the induction hypothesis,  $\ed^w_{\A \otimes \B}(X,Z) = \ed^w_{\A'\otimes \B'}(X,Z') + w(\eps, Z[|Z|-1]) \le \ed^w_{\A'}(X,Y')+\ed^w_{\B'}(Y',Z')
    + w(\eps, Y[|Y|-1])+w(Y[|Y|-1], Z[|Z|-1])= \ed^w_\A(X,Y)+\ed^w_\B(Y,Z)$.
    \item\label{cs:ad} $(x_{m-1},y_{m-1})=(|X|-1,|Y|-1)$ and $(\hy_{\hm-1},\hz_{\hm-1})=(|Y|-1,|Z|)$, that is, $\A$ aligns $X[|X|-1]$ with $Y[|Y|-1]$
    and $\B$ deletes $Y[|Y|-1]$.
    In this case, $\A' \in \aa(X',Y')$ and $\B'\in \aa(Y',Z)$, and we define $\A \otimes \B := (\A'\otimes \B')\odot (|X|,|Z|)$ so that $\A \otimes \B$ deletes $X[|X|-1]$.
    By the induction hypothesis,  $\ed^w_{\A \otimes \B}(X,Z) = \ed^w_{\A'\otimes \B'}(X',Z) + w(X[|X|-1], \eps) \le \ed^w_{\A'}(X',Y')+\ed^w_{\B'}(Y',Z)
    + w(X[|X|-1], Y[|Y|-1])+w(Y[|Y|-1], \eps)= \ed^w_\A(X,Y)+\ed^w_\B(Y,Z)$.
    \item\label{cs:aa} $(x_{m-1},y_{m-1})=(|X|-1,|Y|-1)$ and $(\hy_{\hm-1},\hz_{\hm-1})=(|Y|-1,|Z|-1)$, that is, $\A$ aligns $X[|X|-1]$ with $Y[|Y|-1]$
    and $\B$ aligns $Y[|Y|-1]$ with $Z[|Z|-1]$.
    In this case, $\A' \in \aa(X',Y')$ and $\B'\in \aa(Y',Z')$,
    and we define $\A \otimes \B := (\A'\otimes \B')\odot (|X|,|Z|)$ so that $\A \otimes \B$ aligns $X[|X|-1]$ with $Z[|Z|-1]$.
    By the induction hypothesis,  $\ed^w_{\A \otimes \B}(X,Z) = \ed^w_{\A'\otimes \B'}(X',Z') + w(X[{|X|-1}], Z[|Z|-1]) \le \ed^w_{\A'}(X',Y')+\ed^w_{\B'}(Y',Z)  + w(X[|X|-1], Y[|Y|-1])+w(Y[{|Y|-1}], \allowbreak Z[|Z|-1])= \ed^w_\A(X,Y)+\ed^w_\B(Y,Z)$.
\end{enumerate}
It is easy to check that the above cases cover all the possibilities.
In particular, Case~\ref{cs:del} covers the case of $\hm = 0 < m$
whereas Case~\ref{cs:ins} covers the case of $m = 0 < \hm$.
We also remark that Cases~\ref{cs:del} and~\ref{cs:ins} are sometimes both applicable; by convention, we then follow Case~\ref{cs:del}.
Finally, we note that Cases~\ref{cs:ia}--\ref{cs:aa} rely on the assumption that $w$ satisfies the triangle inequality.
This completes the proof of the first part of the fact.

To show that $\ed^w(X, Y)$ can be equivalently defined as the minimum cost of a sequence of edits transforming $X$ into $Y$, we first consider each of the operations in a minimum alignment $\A$ of $X$ and $Y$ individually to build a sequence of edits $S$ from $\A$. We iterate through all pairs $(x_t, y_t)$ of $\A$ from right to left starting with $t = m - 1$, stopping after $t = 0$ has been handled, and building $S$ according to the definition of alignments:

\begin{enumerate}
    \item If $(x_t, y_t) = (x_{t+1} - 1, y_{t+1} - 1)$ and $X[x_t] = Y[y_t]$, we add nothing to $S$.
    \item If $(x_t, y_t) = (x_{t+1} - 1, y_{t+1} - 1)$ and $X[x_t] \ne Y[y_t]$, we add a substitution of $X[x_t]$ with $Y[y_t]$ to $S$.
    \item If $(x_t, y_t) = (x_{t+1} - 1, y_{t+1})$, we add a deletion of $X[x_t]$ to $S$.
    \item If $(x_t, y_t) = (x_{t+1}, y_{t+1} - 1)$, we add an insertion of $Y[y_t]$ at position $x_t$ in $X$ to $S$.
\end{enumerate}

In all cases, we decrement $t$ by 1. Clearly the resulting sequence of edits has the same cost as $\A$, and by the definition of alignments, $S$ transforms $X$ into $Y$. We now consider a minimum sequence of edits $S$ that transforms $X$ to $Y$ and build an alignment $\A \in \aa(X, Y)$ from $S$ such that $\ed_\A^w(X, Y) \leq cost(S)$ (we let $cost(S)$ denote the total cost of edits by $S$). We use notation $\A', X', Y'$ as before, and proceed by induction to construct $\A$:
\begin{enumerate}
    \item If $X[|X| - 1]$ is deleted by $S$ and a character is inserted at the end of $X$, then $\A' \in \aa(X',  Y')$ and we set $\A =  \A' \odot (|X|, |Y|)$. We note that the inserted character $c$ may be substituted to $Y[|Y| - 1]$. We let $S'$ be the sequence $S$ without the insertion, deletion, and if possible substitution on the last character of $X$. By the induction hypothesis and triangle inequality, $\ed_{\A}(X, Y) = \ed_{\A'}(X', Y') + w(X[|X| - 1], Y[|Y| - 1]) \leq \ed_{\A'}(X', Y') + w(X[|X| - 1], \eps) + w(\eps, Y[|Y| - 1]) \leq \ed_{\A'}(X', Y') + w(X[|X| - 1], \eps) + w(\eps, c), + w(c, Y[|Y| - 1]) \leq cost(S') + w(X[|X| - 1], \eps) + w(\eps, c), + w(c, Y[|Y| - 1]) = cost(S)$.
    \item If $X[|X| - 1]$ is deleted by $S$ and no character is inserted at the end of $X$, then $\A' \in \aa(X', Y)$ and we set $\A = \A' \odot (|X|, |Y|)$. We let $S'$ be the sequence $S$ without the deletion of $X[|X| - 1]$. By the induction hypothesis, $\ed_{\A}(X, Y) = \ed_{\A'}(X', Y) + w(X[|X| - 1], \eps) \leq cost(S') + w(X[|X| - 1], \eps) = cost(S)$.
    \item If a character is inserted at the end of $X$, then $\A' \in \aa(X, Y')$ and we set $\A = \A' odot (|X|, |Y|)$. We let $S'$ be the sequence $S$ without this insertion. By the induction hypothesis, $\ed_{\A}(X, Y) = \ed_{\A'}(X, Y') + w(\eps, Y[|Y| - 1], \eps) \leq cost(S') + w(\eps, Y[|Y| - 1]) = cost(S)$.
    \item If $X[|X| - 1]$ is substituted by $S$, then $\A' \in \aa(X', Y')$ and we set $\A = \A' \odot (|X|, |Y|)$. We let $S'$ be the sequence of $S$ without any substitutions of $X[|X| - 1]$ and let $C$ be an ordered list of characters substituted by $S$ at $X[|X| - 1]$. Then, by the induction hypothesis, $\ed_{\A}(X, Y) = \ed_{\A'}(X', Y') + w(X[|X|-1], Y[|Y| - 1]) \leq \ed_{\A'}(X', Y') + \sum_{i = 0}^{|C| - 1}w(c_i, c_{i+1}) \leq cost(S') + \sum_{i = 0}^{|C| - 1}w(c_i, c_{i+1})  = cost(S)$.
\end{enumerate}

By induction, we can see that there exists an alignment $\A$ with cost at most that of $S$. 

\end{proof}

\fctquasi*

\begin{proof}
        The unique alignment in $\aa(X[0 \dd i) \cdot X[j \dd |X|), \eps)$ deletes all characters, and therefore $\ed^w(X[0\dd i) \cdot X[j \dd |X|), \eps) = \sum_{u\in [0\dd i)\cup [j\dd |X|)} w(X[u], \eps)$.
        
        Next, consider an alignment $\A \in \aa(X, X[i \dd j))$ that deletes $X[0\dd i)$, matches $X[i\dd j)$ perfectly, and deletes $X[j\dd |X|)$.  Hence, $\ed^w(X, X[i \dd j)) \leq \ed^w_{\A}(X, X[i\dd j)) = \sum_{u\in [0\dd i)\cup [j\dd |X|)} w(X[u], \eps) = \ed^w(X[0\dd i) \cdot X[j \dd |X|), \eps)$.
        
        Now, let $Y = X[i\dd j)$ and consider an arbitrary alignment $\B \in \aa(X, Y)$. For each $u\in [0\dd i)\cup [j\dd |X|)$, we recursively define a sequence $(c_{u,t})_{t=0}^{m_u}$ so that $c_{u,0}=u$,
        the alignment  $\B$ aligns $X[c_{u,t}]$ to $Y[c_{u,t+1}-i]=X[c_{u,t+1}]$ for $t\in [0\dd m_u)$, and $\B$ deletes $X[c_{u,m_u}]$. 
        By construction, the sequences $(c_{u,t})_{t=0}^{m_u}$ are finite and each position in $X$ belongs to at most one such sequence.
        Moreover, since $w$ is quasimetric, $w(X[c_{u,t}], \eps) \le w(X[c_{u,t}], X[c_{u,t+1}])+w(X[c_{u,t+1}], \eps)$ holds for every $t\in [0\dd m_u)$,
        and thus 
         $w(X[c_{u,0}], \eps) \leq w(X[c_{u,m_u}], \eps) + \sum_{t\in [0\dd m_u)} w(X[c_{u,t}], X[c_{u,t+1}])$.
        %
        %
        %
        %
        Since $\B$ deletes $X[c_{u,m_u}]$ and $X[c_{u,t}]\sim_{\B} Y[c_{u,t+1}-i]=X[c_{u,t+1}]$ holds for $t\in [0\dd m_u)$, this yields
        \begin{align*}
            \ed^w_\B(X, Y) &\ge \sum_{u\in [0\dd i)\dd [j\cup |X|)} \left(  w(X[c_{u,m_u}],\eps) + \sum_{t\in [0\dd m_u)} w(X[c_{u,t}], X[c_{u,t+1}])\right)\\
            &\ge \sum_{u\in [0\dd i)\dd [j\cup |X|)} w(X[u], \eps) \\
            & =  \ed^w(X[0 \dd i) \cdot X[j \dd |X|), \eps).
        \end{align*}
        Since $\B$ was chosen arbitrarily, we conclude that $\ed^w(X, Y)\ge  \ed^w(X[0 \dd i) \cdot X[j \dd |X|), \eps)$.
\end{proof}

\bibliographystyle{alphaurl}
\bibliography{ted}

\newcommand{\etalchar}[1]{$^{#1}$}
\begin{thebibliography}{FGK{\etalchar{+}}22b}

\bibitem[Abb14]{Amir14}
Amir Abboud.
\newblock Hardness for easy problems, 2014.
\newblock Presented at Satellite Workshop of ICALP (YR-ICALP).
\newblock URL:
  \url{https://www.dropbox.com/s/jt9uzljjmormkb7/EasyHardness.pdf}.

\bibitem[ABW18]{abw:15}
Amir Abboud, Arturs Backurs, and Virginia~Vassilevska Williams.
\newblock If the current clique algorithms are optimal, so is {V}aliant's
  parser.
\newblock {\em SIAM Journal on Computing}, 47(6):2527--2555, 2018.
\newblock \href {https://doi.org/10.1137/16M1061771}
  {\path{doi:10.1137/16M1061771}}.

\bibitem[AJ21]{DBLP:conf/icalp/AkmalJ21}
Shyan Akmal and Ce~Jin.
\newblock Faster algorithms for bounded tree edit distance.
\newblock In Nikhil Bansal, Emanuela Merelli, and James Worrell, editors, {\em
  48th International Colloquium on Automata, Languages, and Programming,
  {ICALP} 2021, July 12-16, 2021, Glasgow, Scotland (Virtual Conference)},
  volume 198 of {\em LIPIcs}, pages 12:1--12:15. Schloss Dagstuhl -
  Leibniz-Zentrum f{\"{u}}r Informatik, 2021.
\newblock \href {https://doi.org/10.4230/LIPIcs.ICALP.2021.12}
  {\path{doi:10.4230/LIPIcs.ICALP.2021.12}}.

\bibitem[AKO10]{10.1109/FOCS.2010.43}
Alexandr Andoni, Robert Krauthgamer, and Krzysztof Onak.
\newblock Polylogarithmic approximation for edit distance and the asymmetric
  query complexity.
\newblock In {\em Proceedings of the 2010 IEEE 51st Annual Symposium on
  Foundations of Computer Science}, FOCS '10, page 377–386, USA, 2010. IEEE
  Computer Society.
\newblock \href {https://doi.org/10.1109/FOCS.2010.43}
  {\path{doi:10.1109/FOCS.2010.43}}.

\bibitem[AN94]{AN94}
Arne Andersson and Stefan Nilsson.
\newblock A new efficient radix sort.
\newblock In {\em 35th Annual Symposium on Foundations of Computer Science,
  Santa Fe, New Mexico, USA, 20-22 November 1994}, pages 714--721. {IEEE}
  Computer Society, 1994.
\newblock \href {https://doi.org/10.1109/SFCS.1994.365721}
  {\path{doi:10.1109/SFCS.1994.365721}}.

\bibitem[AN20]{DBLP:conf/focs/AndoniN20}
Alexandr Andoni and Negev~Shekel Nosatzki.
\newblock Edit distance in near-linear time: it's a constant factor.
\newblock In Sandy Irani, editor, {\em 61st {IEEE} Annual Symposium on
  Foundations of Computer Science, {FOCS} 2020, Durham, NC, USA, November
  16-19, 2020}, pages 990--1001. {IEEE}, 2020.
\newblock \href {https://doi.org/10.1109/FOCS46700.2020.00096}
  {\path{doi:10.1109/FOCS46700.2020.00096}}.

\bibitem[AO09]{10.1145/1536414.1536444}
Alexandr Andoni and Krzysztof Onak.
\newblock Approximating edit distance in near-linear time.
\newblock In {\em Proceedings of the Forty-First Annual ACM Symposium on Theory
  of Computing}, STOC '09, page 199–204, New York, NY, USA, 2009. Association
  for Computing Machinery.
\newblock \href {https://doi.org/10.1145/1536414.1536444}
  {\path{doi:10.1145/1536414.1536444}}.

\bibitem[AP72]{aho1972minimum}
Alfred~V Aho and Thomas~G Peterson.
\newblock A minimum distance error-correcting parser for context-free
  languages.
\newblock {\em SIAM Journal on Computing}, 1(4):305--312, 1972.

\bibitem[BEG{\etalchar{+}}21]{DBLP:journals/jacm/BoroujeniEGHS21}
Mahdi Boroujeni, Soheil Ehsani, Mohammad Ghodsi, MohammadTaghi Hajiaghayi, and
  Saeed Seddighin.
\newblock Approximating edit distance in truly subquadratic time: Quantum and
  mapreduce.
\newblock {\em J. {ACM}}, 68(3):19:1--19:41, 2021.
\newblock \href {https://doi.org/10.1145/3456807} {\path{doi:10.1145/3456807}}.

\bibitem[BES06]{10.5555/1109557.1109644}
Tu\u{g}kan Batu, Funda Ergun, and Cenk Sahinalp.
\newblock Oblivious string embeddings and edit distance approximations.
\newblock In {\em Proceedings of the Seventeenth Annual ACM-SIAM Symposium on
  Discrete Algorithm}, SODA '06, page 792–801, USA, 2006. Society for
  Industrial and Applied Mathematics.
\newblock \href {https://doi.org/10.1145/1109557.1109644}
  {\path{doi:10.1145/1109557.1109644}}.

\bibitem[BGHS19]{DBLP:conf/stoc/BoroujeniGHS19}
Mahdi Boroujeni, Mohammad Ghodsi, MohammadTaghi Hajiaghayi, and Saeed
  Seddighin.
\newblock 1+\emph{{\(\epsilon\)}} approximation of tree edit distance in
  quadratic time.
\newblock In Moses Charikar and Edith Cohen, editors, {\em Proceedings of the
  51st Annual {ACM} {SIGACT} Symposium on Theory of Computing, {STOC} 2019,
  Phoenix, AZ, USA, June 23-26, 2019}, pages 709--720. {ACM}, 2019.
\newblock \href {https://doi.org/10.1145/3313276.3316388}
  {\path{doi:10.1145/3313276.3316388}}.

\bibitem[BGK03]{10.5555/1315451.1315465}
Peter Buneman, Martin Grohe, and Christoph Koch.
\newblock Path queries on compressed xml.
\newblock In {\em Proceedings of the 29th International Conference on Very
  Large Data Bases - Volume 29}, VLDB '03, page 141–152. VLDB Endowment,
  2003.
\newblock \href {https://doi.org/10.1016/b978-012722442-8/50021-5}
  {\path{doi:10.1016/b978-012722442-8/50021-5}}.

\bibitem[BGMW20]{10.1145/3381878}
Karl Bringmann, Pawe\l{} Gawrychowski, Shay Mozes, and Oren Weimann.
\newblock Tree edit distance cannot be computed in strongly subcubic time
  (unless apsp can).
\newblock {\em ACM Trans. Algorithms}, 16(4), jul 2020.
\newblock \href {https://doi.org/10.1145/3381878} {\path{doi:10.1145/3381878}}.

\bibitem[BGSW19]{BGSW19}
Karl Bringmann, Fabrizio Grandoni, Barna Saha, and Virginia~Vassilevska
  Williams.
\newblock Truly subcubic algorithms for language edit distance and {RNA}
  folding via fast bounded-difference min-plus product.
\newblock {\em SIAM Journal on Computing}, 48(2):481--512, 2019.
\newblock \href {https://doi.org/10.1137/17M112720X}
  {\path{doi:10.1137/17M112720X}}.

\bibitem[BI18]{backurs2015edit}
Arturs Backurs and Piotr Indyk.
\newblock Edit distance cannot be computed in strongly subquadratic time
  (unless {SETH} is false).
\newblock {\em {SIAM} J. Comput.}, 47(3):1087--1097, 2018.
\newblock \href {https://doi.org/10.1137/15M1053128}
  {\path{doi:10.1137/15M1053128}}.

\bibitem[BII{\etalchar{+}}17]{DBLP:journals/siamcomp/BannaiIINTT17}
Hideo Bannai, Tomohiro I, Shunsuke Inenaga, Yuto Nakashima, Masayuki Takeda,
  and Kazuya Tsuruta.
\newblock The ``runs'' theorem.
\newblock {\em {SIAM} J. Comput.}, 46(5):1501--1514, 2017.
\newblock \href {https://doi.org/10.1137/15m1011032}
  {\path{doi:10.1137/15m1011032}}.

\bibitem[Bil05]{10.1016/j.tcs.2004.12.030}
Philip Bille.
\newblock A survey on tree edit distance and related problems.
\newblock {\em Theor. Comput. Sci.}, 337(1–3):217–239, jun 2005.
\newblock \href {https://doi.org/10.1016/j.tcs.2004.12.030}
  {\path{doi:10.1016/j.tcs.2004.12.030}}.

\bibitem[BO16]{BO16}
Arturs Backurs and Krzysztof Onak.
\newblock Fast algorithms for parsing sequences of parentheses with few errors.
\newblock In Tova Milo and Wang{-}Chiew Tan, editors, {\em 35th {ACM}
  {SIGMOD-SIGACT-SIGAI} Symposium on Principles of Database Systems, {PODS}
  2016}, pages 477--488. {ACM}, 2016.
\newblock \href {https://doi.org/10.1145/2902251.2902304}
  {\path{doi:10.1145/2902251.2902304}}.

\bibitem[BR20]{BR19}
Joshua Brakensiek and Aviad Rubinstein.
\newblock Constant-factor approximation of near-linear edit distance in
  near-linear time.
\newblock In {\em 52nd Annual ACM SIGACT Symposium on Theory of Computing, STOC
  2020}, pages 685--698. ACM, 2020.
\newblock \href {https://doi.org/10.1145/3357713.3384282}
  {\path{doi:10.1145/3357713.3384282}}.

\bibitem[BS98]{10.1016/S0167-8655(97)00179-7}
Horst Bunke and Kim Shearer.
\newblock A graph distance metric based on the maximal common subgraph.
\newblock {\em Pattern Recogn. Lett.}, 19(3–4):255–259, mar 1998.
\newblock \href {https://doi.org/10.1016/S0167-8655(97)00179-7}
  {\path{doi:10.1016/S0167-8655(97)00179-7}}.

\bibitem[BW08]{BW08}
Mikołaj Bojańczyk and Igor Walukiewicz.
\newblock Forest algebras.
\newblock In J{\"{o}}rg Flum, Erich Gr{\"{a}}del, and Thomas Wilke, editors,
  {\em Logic and Automata: History and Perspectives [in Honor of Wolfgang
  Thomas]}, volume~2 of {\em Texts in Logic and Games}, pages 107--132.
  Amsterdam University Press, 2008.

\bibitem[BYJKK04]{10.1109/FOCS.2004.14}
Ziv Bar-Yossef, T.~S. Jayram, Robert Krauthgamer, and Ravi Kumar.
\newblock Approximating edit distance efficiently.
\newblock In {\em Proceedings of the 45th Annual IEEE Symposium on Foundations
  of Computer Science}, FOCS '04, page 550–559, USA, 2004. IEEE Computer
  Society.
\newblock \href {https://doi.org/10.1109/FOCS.2004.14}
  {\path{doi:10.1109/FOCS.2004.14}}.

\bibitem[CDG{\etalchar{+}}20]{10.1145/3422823}
Diptarka Chakraborty, Debarati Das, Elazar Goldenberg, Michal Kouck\'{y}, and
  Michael Saks.
\newblock Approximating edit distance within constant factor in truly
  sub-quadratic time.
\newblock {\em J. ACM}, 67(6), oct 2020.
\newblock \href {https://doi.org/10.1145/3422823} {\path{doi:10.1145/3422823}}.

\bibitem[CDX22]{CDX22}
Shucheng Chi, Ran Duan, and Tianle Xie.
\newblock Faster algorithms for bounded-difference min-plus product.
\newblock In {\em 33rd Annual {ACM}-{SIAM} Symposium on Discrete Algorithms,
  {SODA} 2022}, pages 1435--1447. Society for Industrial and Applied
  Mathematics, jan 2022.
\newblock \href {https://doi.org/10.1137/1.9781611977073.60}
  {\path{doi:10.1137/1.9781611977073.60}}.

\bibitem[Cha99]{DBLP:conf/vldb/Chawathe99}
Sudarshan~S. Chawathe.
\newblock Comparing hierarchical data in external memory.
\newblock In Malcolm~P. Atkinson, Maria~E. Orlowska, Patrick Valduriez,
  Stanley~B. Zdonik, and Michael~L. Brodie, editors, {\em VLDB'99, Proceedings
  of 25th International Conference on Very Large Data Bases, September 7-10,
  1999, Edinburgh, Scotland, {UK}}, pages 90--101. Morgan Kaufmann, 1999.
\newblock URL: \url{http://www.vldb.org/conf/1999/P8.pdf}.

\bibitem[CIG22]{Chumachenko2022WeightedED}
Kateryna Chumachenko, Alexandros Iosifidis, and Moncef Gabbouj.
\newblock Weighted edit distance for country code recognition in license
  plates.
\newblock In {\em 30th European Signal Processing Conference, {EUSIPCO} 2022},
  pages 1111--1115. {IEEE}, 2022.
\newblock URL: \url{https://ieeexplore.ieee.org/document/9909869}.

\bibitem[DGH{\etalchar{+}}22]{DGHKSS'22}
Debarati Das, Jacob Gilbert, MohammadTaghi Hajiaghayi, Tomasz Kociumaka, Barna
  Saha, and Hamed Saleh.
\newblock $\tilde{O}(n+poly(k))$-time algorithm for bounded tree edit distance.
\newblock In {\em 63rd {IEEE} Annual Symposium on Foundations of Computer
  Science, {FOCS} 2022}. {IEEE}, 2022.
\newblock \href {http://arxiv.org/abs/2209.07524} {\path{arXiv:2209.07524}}.

\bibitem[DKS22]{DKS21}
Debarati Das, Tomasz Kociumaka, and Barna Saha.
\newblock Improved approximation algorithms for dyck edit distance and {RNA}
  folding.
\newblock In Mikolaj Bojanczyk, Emanuela Merelli, and David~P. Woodruff,
  editors, {\em 49th International Colloquium on Automata, Languages, and
  Programming, {ICALP} 2022}, volume 229 of {\em LIPIcs}, pages 49:1--49:20.
  Schloss Dagstuhl - Leibniz-Zentrum f{\"{u}}r Informatik, 2022.
\newblock \href {https://doi.org/10.4230/LIPIcs.ICALP.2022.49}
  {\path{doi:10.4230/LIPIcs.ICALP.2022.49}}.

\bibitem[DMRW10]{10.1145/1644015.1644017}
Erik~D. Demaine, Shay Mozes, Benjamin Rossman, and Oren Weimann.
\newblock An optimal decomposition algorithm for tree edit distance.
\newblock {\em ACM Trans. Algorithms}, 6(1), dec 2010.
\newblock \href {https://doi.org/10.1145/1644015.1644017}
  {\path{doi:10.1145/1644015.1644017}}.

\bibitem[D{\"{u}}r22]{Duerr2022}
Anita D{\"{u}}rr.
\newblock Improved bounds for rectangular monotone min-plus product, 2022.
\newblock \href {https://doi.org/10.48550/arXiv.2208.02862}
  {\path{doi:10.48550/arXiv.2208.02862}}.

\bibitem[FFF{\etalchar{+}}16]{fontan2016using}
Lionel Fontan, Isabelle Ferran{\'{e}}, J{\'{e}}r{\^{o}}me Farinas, Julien
  Pinquier, and Xavier Aumont.
\newblock Using phonologically weighted levenshtein distances for the
  prediction of microscopic intelligibility.
\newblock In {\em 17th Annual Conference of the International Speech
  Communication Association, Interspeech 2016}, pages 650--654. {ISCA}, 2016.
\newblock \href {https://doi.org/10.21437/Interspeech.2016-431}
  {\path{doi:10.21437/Interspeech.2016-431}}.

\bibitem[FFM00]{FFM00}
Martin Farach{-}Colton, Paolo Ferragina, and S.~Muthukrishnan.
\newblock On the sorting-complexity of suffix tree construction.
\newblock {\em J. {ACM}}, 47(6):987--1011, 2000.
\newblock \href {https://doi.org/10.1145/355541.355547}
  {\path{doi:10.1145/355541.355547}}.

\bibitem[FGK{\etalchar{+}}22a]{OtherSubmission}
Dvir Fried, Shay Golan, Tomasz Kociumaka, Tsvi Kopelowitz, Ely Porat, and
  Tatiana Starikovskaya.
\newblock An improved algorithm for the $k$-{D}yck edit distance problem.
\newblock In Joseph~(Seffi) Naor and Niv Buchbinder, editors, {\em 33rd Annnual
  {ACM-SIAM} Symposium on Discrete Algorithms, {SODA} 2022}, pages 3650--3669.
  {SIAM}, 2022.
\newblock \href {https://doi.org/10.1137/1.9781611977073.144}
  {\path{doi:10.1137/1.9781611977073.144}}.

\bibitem[FGK{\etalchar{+}}22b]{F22a}
Dvir Fried, Shay Golan, Tomasz Kociumaka, Tsvi Kopelowitz, Ely Porat, and
  Tatiana Starikovskaya.
\newblock An improved algorithm for the $k$-{D}yck edit distance problem, 2022.
\newblock \href {http://arxiv.org/abs/2111.02336v2}
  {\path{arXiv:2111.02336v2}}.

\bibitem[FLMM09]{10.1145/1613676.1613680}
Paolo Ferragina, Fabrizio Luccio, Giovanni Manzini, and S.~Muthukrishnan.
\newblock Compressing and indexing labeled trees, with applications.
\newblock {\em J. ACM}, 57(1), nov 2009.
\newblock \href {https://doi.org/10.1145/1613676.1613680}
  {\path{doi:10.1145/1613676.1613680}}.

\bibitem[FW65]{FW65}
Nathan~J. Fine and Herbert~S. Wilf.
\newblock Uniqueness theorems for periodic functions.
\newblock {\em Proc. Am. Math. Soc.}, 16(1):109--114, 1965.
\newblock \href {https://doi.org/10.2307/2034009} {\path{doi:10.2307/2034009}}.

\bibitem[GKS19]{DBLP:conf/focs/GoldenbergKS19}
Elazar Goldenberg, Robert Krauthgamer, and Barna Saha.
\newblock Sublinear algorithms for gap edit distance.
\newblock In David Zuckerman, editor, {\em 60th {IEEE} Annual Symposium on
  Foundations of Computer Science, {FOCS} 2019, Baltimore, Maryland, USA,
  November 9-12, 2019}, pages 1101--1120. {IEEE} Computer Society, 2019.
\newblock \href {https://doi.org/10.1109/FOCS.2019.00070}
  {\path{doi:10.1109/FOCS.2019.00070}}.

\bibitem[GRS20]{GRS:20}
Elazar Goldenberg, Aviad Rubinstein, and Barna Saha.
\newblock Does preprocessing help in fast sequence comparisons?
\newblock In {\em 52nd Annual {ACM} {SIGACT} Symposium on Theory of Computing,
  {STOC} 2020}, pages 657--670. {ACM}, 2020.
\newblock \href {https://doi.org/10.1145/3357713.3384300}
  {\path{doi:10.1145/3357713.3384300}}.

\bibitem[Gus97]{10.5555/262228}
Dan Gusfield.
\newblock {\em Algorithms on Strings, Trees, and Sequences: Computer Science
  and Computational Biology}.
\newblock Cambridge University Press, USA, 1997.
\newblock \href {https://doi.org/10.1017/cbo9780511574931}
  {\path{doi:10.1017/cbo9780511574931}}.

\bibitem[GWK21]{gerlach2021paradigm}
Andrew Gerlach, Adam Wiemerslage, and Katharina Kann.
\newblock Paradigm clustering with weighted edit distance.
\newblock In {\em 18th {SIGMORPHON} Workshop on Computational Research in
  Phonetics, Phonology, and Morphology}, pages 107--114. Association for
  Computational Linguistics, 2021.
\newblock \href {https://doi.org/10.18653/v1/2021.sigmorphon-1.12}
  {\path{doi:10.18653/v1/2021.sigmorphon-1.12}}.

\bibitem[Har78]{h:78}
Michael~A. Harrison.
\newblock {\em Introduction to Formal Language Theory}.
\newblock Addison-Wesley Longman Publishing Co., Inc., Boston, MA, USA, 1st
  edition, 1978.

\bibitem[HRS19]{10.1145/3313276.3316371}
Bernhard Haeupler, Aviad Rubinstein, and Amirbehshad Shahrasbi.
\newblock Near-linear time insertion-deletion codes and
  $(1+\epsilon)$-approximating edit distance via indexing.
\newblock In {\em Proceedings of the 51st Annual ACM SIGACT Symposium on Theory
  of Computing}, STOC 2019, page 697–708, New York, NY, USA, 2019.
  Association for Computing Machinery.
\newblock \href {https://doi.org/10.1145/3313276.3316371}
  {\path{doi:10.1145/3313276.3316371}}.

\bibitem[HT84]{10.1137/0213024}
Dov Harel and Robert~Endre Tarjan.
\newblock Fast algorithms for finding nearest common ancestors.
\newblock {\em SIAM J. Comput.}, 13(2):338–355, may 1984.
\newblock \href {https://doi.org/10.1137/0213024} {\path{doi:10.1137/0213024}}.

\bibitem[Ind01]{10.5555/874063.875596}
P.~Indyk.
\newblock Algorithmic applications of low-distortion geometric embeddings.
\newblock In {\em Proceedings of the 42nd IEEE Symposium on Foundations of
  Computer Science}, FOCS '01, page~10, USA, 2001. IEEE Computer Society.
\newblock \href {https://doi.org/10.1109/sfcs.2001.959878}
  {\path{doi:10.1109/sfcs.2001.959878}}.

\bibitem[JM09]{jurafsky2008speech}
Dan Jurafsky and James~H. Martin.
\newblock {\em Speech and language processing: an introduction to natural
  language processing, computational linguistics, and speech recognition, 2nd
  Edition}.
\newblock Prentice Hall series in artificial intelligence. Prentice Hall,
  Pearson Education International, 2009.
\newblock URL: \url{https://www.worldcat.org/oclc/315913020}.

\bibitem[Kle98]{10.5555/647908.740125}
Philip~N. Klein.
\newblock Computing the edit-distance between unrooted ordered trees.
\newblock In {\em Proceedings of the 6th Annual European Symposium on
  Algorithms}, ESA '98, page 91–102, Berlin, Heidelberg, 1998.
  Springer-Verlag.
\newblock \href {https://doi.org/10.1007/3-540-68530-8_8}
  {\path{doi:10.1007/3-540-68530-8_8}}.

\bibitem[Koz97]{k:12}
Dexter~C. Kozen.
\newblock {\em Automata and Computability}.
\newblock Springer New York, 1997.
\newblock \href {https://doi.org/10.1007/978-1-4612-1844-9}
  {\path{doi:10.1007/978-1-4612-1844-9}}.

\bibitem[KRRW15]{KRRW15}
Tomasz Kociumaka, Jakub Radoszewski, Wojciech Rytter, and Tomasz Waleń.
\newblock Internal pattern matching queries in a text and applications.
\newblock In Piotr Indyk, editor, {\em 26th Annual {ACM-SIAM} Symposium on
  Discrete Algorithms, {SODA} 2015}, pages 532--551. {SIAM}, 2015.
\newblock \href {https://doi.org/10.1137/1.9781611973730.36}
  {\path{doi:10.1137/1.9781611973730.36}}.

\bibitem[KS20a]{KS20a}
Tomasz Kociumaka and Barna Saha.
\newblock Sublinear-time algorithms for computing {\&} embedding gap edit
  distance.
\newblock In Sandy Irani, editor, {\em 61st {IEEE} Annual Symposium on
  Foundations of Computer Science, {FOCS} 2020, Durham, NC, USA, November
  16-19, 2020}, pages 1168--1179. {IEEE}, 2020.
\newblock \href {https://doi.org/10.1109/FOCS46700.2020.00112}
  {\path{doi:10.1109/FOCS46700.2020.00112}}.

\bibitem[KS20b]{KS20}
Michal Kouck{\'{y}} and Michael~E. Saks.
\newblock Constant factor approximations to edit distance on far input pairs in
  nearly linear time.
\newblock In {\em 52nd Annual {ACM} {SIGACT} Symposium on Theory of Computing,
  {STOC} 2020}, pages 699--712. {ACM}, 2020.
\newblock \href {https://doi.org/10.1145/3357713.3384307}
  {\path{doi:10.1145/3357713.3384307}}.

\bibitem[Kur96]{kurtz1996approximate}
Stefan Kurtz.
\newblock Approximate string searching under weighted edit distance.
\newblock In {\em 3rd South American Workshop on String Processing, {WSP}
  1996}, pages 156--170. Carleton University Press, 1996.

\bibitem[Kus19]{K19}
William Kuszmaul.
\newblock Dynamic time warping in strongly subquadratic time: Algorithms for
  the low-distance regime and approximate evaluation.
\newblock In Christel Baier, Ioannis Chatzigiannakis, Paola Flocchini, and
  Stefano Leonardi, editors, {\em 46th International Colloquium on Automata,
  Languages, and Programming, {ICALP} 2019}, volume 132 of {\em LIPIcs}, pages
  80:1--80:15. Schloss Dagstuhl - Leibniz-Zentrum f{\"{u}}r Informatik, 2019.
\newblock \href {https://doi.org/10.4230/LIPIcs.ICALP.2019.80}
  {\path{doi:10.4230/LIPIcs.ICALP.2019.80}}.

\bibitem[KXI20]{koide2020fast}
Satoshi Koide, Chuan Xiao, and Yoshiharu Ishikawa.
\newblock Fast subtrajectory similarity search in road networks under weighted
  edit distance constraints.
\newblock {\em Proceedings of the {VLDB} Endowment}, 13(12):2188--2201, 2020.
\newblock \href {https://doi.org/10.14778/3407790.3407818}
  {\path{doi:10.14778/3407790.3407818}}.

\bibitem[Lev65]{L65}
Vladimir~I. Levenshtein.
\newblock Binary codes capable of correcting deletions, insertions, and
  reversals.
\newblock {\em Soviet physics. Doklady}, 10:707--710, 1965.

\bibitem[LMS98]{10.5555/279082.279125}
Gad~M. Landau, Eugene~W. Myers, and Jeanette~P. Schmidt.
\newblock Incremental string comparison.
\newblock {\em SIAM J. Comput.}, 27(2):557–582, apr 1998.
\newblock \href {https://doi.org/10.1137/s0097539794264810}
  {\path{doi:10.1137/s0097539794264810}}.

\bibitem[LV88]{LV88}
Gad~M. Landau and Uzi Vishkin.
\newblock Fast string matching with $k$ differences.
\newblock {\em Journal of Computer and System Sciences}, 37(1):63--78, 1988.
\newblock \href {https://doi.org/10.1016/0022-0000(88)90045-1}
  {\path{doi:10.1016/0022-0000(88)90045-1}}.

\bibitem[Mao21]{Xiao21}
Xiao Mao.
\newblock Breaking the cubic barrier for (unweighted) tree edit distance.
\newblock In {\em 62nd {IEEE} Annual Symposium on Foundations of Computer
  Science, {FOCS} 2021, Denver, CO, USA, February 7-10, 2022}, pages 792--803.
  {IEEE}, 2021.
\newblock \href {https://doi.org/10.1109/FOCS52979.2021.00082}
  {\path{doi:10.1109/FOCS52979.2021.00082}}.

\bibitem[Mye86]{DBLP:journals/algorithmica/Meyers86}
Eugene~W. Myers.
\newblock An {O(ND)} difference algorithm and its variations.
\newblock {\em Algorithmica}, 1(2):251--266, 1986.
\newblock \href {https://doi.org/10.1007/BF01840446}
  {\path{doi:10.1007/BF01840446}}.

\bibitem[Mye95]{M95}
Gene Myers.
\newblock Approximately matching context-free languages.
\newblock {\em Information Processing Letters}, 54(2):85--92, 1995.
\newblock \href {https://doi.org/10.1016/0020-0190(95)00007-y}
  {\path{doi:10.1016/0020-0190(95)00007-y}}.

\bibitem[NW70]{NW70}
Saul~B. Needleman and Christian~D. Wunsch.
\newblock A general method applicable to the search for similarities in the
  amino acid sequence of two proteins.
\newblock {\em Journal of Molecular Biology}, 48(3):443--453, mar 1970.
\newblock \href {https://doi.org/10.1016/0022-2836(70)90057-4}
  {\path{doi:10.1016/0022-2836(70)90057-4}}.

\bibitem[PM02]{peris2002fast}
Guillermo Peris and Andr{\'{e}}s Marzal.
\newblock Fast cyclic edit distance computation with weighted edit costs in
  classification.
\newblock In {\em 16th International Conference on Pattern Recognition, {ICPR}
  2002}, pages 184--187. IEEE, {IEEE} Computer Society, 2002.
\newblock \href {https://doi.org/10.1109/ICPR.2002.1047428}
  {\path{doi:10.1109/ICPR.2002.1047428}}.

\bibitem[PR05]{pettie2005shortest}
Seth Pettie and Vijaya Ramachandran.
\newblock A shortest path algorithm for real-weighted undirected graphs.
\newblock {\em {SIAM} Journal on Computing}, 34(6):1398--1431, 2005.
\newblock \href {https://doi.org/10.1137/s0097539702419650}
  {\path{doi:10.1137/s0097539702419650}}.

\bibitem[PT87]{PT87}
Robert Paige and Robert~Endre Tarjan.
\newblock Three partition refinement algorithms.
\newblock {\em {SIAM} J. Comput.}, 16(6):973--989, 1987.
\newblock \href {https://doi.org/10.1137/0216062} {\path{doi:10.1137/0216062}}.

\bibitem[Sah14]{s:14}
Barna Saha.
\newblock The {D}yck language edit distance problem in near-linear time.
\newblock In {\em 55th Annual {IEEE} Symposium on Foundations of Computer
  Science, {FOCS} 2014}, pages 611--620. {IEEE} Computer Society, 2014.
\newblock \href {https://doi.org/10.1109/focs.2014.71}
  {\path{doi:10.1109/focs.2014.71}}.

\bibitem[Sah17]{saha2017fast}
Barna Saha.
\newblock Fast \& space-efficient approximations of language edit distance and
  rna folding: An amnesic dynamic programming approach.
\newblock In {\em 2017 IEEE 58th Annual Symposium on Foundations of Computer
  Science (FOCS)}, pages 295--306. IEEE, 2017.

\bibitem[Sel77]{SELKOW1977184}
Stanley~M. Selkow.
\newblock The tree-to-tree editing problem.
\newblock {\em Information Processing Letters}, 6(6):184--186, 1977.
\newblock \href {https://doi.org/10.1016/0020-0190(77)90064-3}
  {\path{doi:10.1016/0020-0190(77)90064-3}}.

\bibitem[Ski20]{skiena1998algorithm}
Steven Skiena.
\newblock {\em The Algorithm Design Manual, Third Edition}.
\newblock Texts in Computer Science. Springer, 2020.
\newblock \href {https://doi.org/10.1007/978-3-030-54256-6}
  {\path{doi:10.1007/978-3-030-54256-6}}.

\bibitem[SS22]{DBLP:conf/innovations/SeddighinS22}
Masoud Seddighin and Saeed Seddighin.
\newblock 3+{\(\epsilon\)} approximation of tree edit distance in truly
  subquadratic time.
\newblock In Mark Braverman, editor, {\em 13th Innovations in Theoretical
  Computer Science Conference, {ITCS} 2022, January 31 - February 3, 2022,
  Berkeley, CA, {USA}}, volume 215 of {\em LIPIcs}, pages 115:1--115:22.
  Schloss Dagstuhl - Leibniz-Zentrum f{\"{u}}r Informatik, 2022.
\newblock \href {https://doi.org/10.4230/LIPIcs.ITCS.2022.115}
  {\path{doi:10.4230/LIPIcs.ITCS.2022.115}}.

\bibitem[SZ90]{DBLP:journals/bioinformatics/ShapiroZ90}
Bruce~A. Shapiro and Kaizhong Zhang.
\newblock Comparing multiple {RNA} secondary structures using tree comparisons.
\newblock {\em Comput. Appl. Biosci.}, 6(4):309--318, 1990.
\newblock \href {https://doi.org/10.1093/bioinformatics/6.4.309}
  {\path{doi:10.1093/bioinformatics/6.4.309}}.

\bibitem[Tai79]{10.1145/322139.322143}
Kuo-Chung Tai.
\newblock The tree-to-tree correction problem.
\newblock {\em J. ACM}, 26(3):422–433, jul 1979.
\newblock \href {https://doi.org/10.1145/322139.322143}
  {\path{doi:10.1145/322139.322143}}.

\bibitem[Tou05]{10.1007/11496656_29}
H\'{e}l\`{e}ne Touzet.
\newblock A linear tree edit distance algorithm for similar ordered trees.
\newblock In {\em Proceedings of the 16th Annual Conference on Combinatorial
  Pattern Matching}, CPM'05, page 334–345, Berlin, Heidelberg, 2005.
  Springer-Verlag.
\newblock \href {https://doi.org/10.1007/11496656_29}
  {\path{doi:10.1007/11496656_29}}.

\bibitem[WF74]{WF74}
Robert~A. Wagner and Michael~J. Fischer.
\newblock The string-to-string correction problem.
\newblock {\em Journal of the ACM}, 21(1):168--173, 1974.
\newblock \href {https://doi.org/10.1145/321796.321811}
  {\path{doi:10.1145/321796.321811}}.

\bibitem[ZS89]{zhang1989simple}
Kaizhong Zhang and Dennis~E. Shasha.
\newblock Simple fast algorithms for the editing distance between trees and
  related problems.
\newblock {\em {SIAM} J. Comput.}, 18(6):1245--1262, 1989.
\newblock \href {https://doi.org/10.1137/0218082} {\path{doi:10.1137/0218082}}.

\end{thebibliography}

\end{document}